\let\hbar\undefined
\definecolor{myurlcolor}{rgb}{0,0,0.9}
\newcommand{\proj}[1]{\left| #1\right\rangle\!\left\langle #1 \right|}
\newcommand{\inner}[2]{\langle #1 , #2\rangle}
\newcommand{\iinner}[2]{\langle #1 | #2\rangle}
\DeclareMathOperator{\trace}{Tr}
\newcommand{\Ptr}[2]{\trace_{#1}\Pa{#2}}
\newcommand{\Tr}[1]{\Ptr{}{#1}}
\newcommand{\Pa}[1]{\left[#1\right]}
\newcommand{\norm}[1]{\left\lVert #1 \right\rVert}
\theoremstyle{plain}
\newtheorem{thm}{Theorem}
\newtheorem{lem}[thm]{Lemma}
\newtheorem{prop}[thm]{Proposition}
\newtheorem{cor}[thm]{Corollary}
\newtheorem{Def}[thm]{Definition}
\newtheorem{Rem}[thm]{Remark}
\newtheorem{Exam}[thm]{Example}
\newcommand*{\myproofname}{Proof}
\newenvironment{mproof}[1][\myproofname]{\begin{proof}[#1]}{\end{proof}}
\def\ot{\otimes}
\def\complex{\mathbb{C}}
\def\real{\mathbb{R}}
\def\CMM{\mathcal{M}}
\def\Z{\mathbb{Z}}
\DeclareMathAlphabet{\mathcal}{OMS}{cmsy}{m}{n}
\begin{document}

  \author{Kaifeng Bu}
 \email{bu.115@osu.edu}
 \affiliation{Department of Mathematics, The Ohio State University, Columbus, Ohio 43210, USA}
\affiliation{Department of Physics, Harvard University, Cambridge, Massachusetts 02138, USA}

  \author{Weichen Gu}
  \email{gu.1213@osu.edu}
   \affiliation{Department of Mathematics, The Ohio State University, Columbus, Ohio 43210, USA}
 \affiliation{Department of Mathematics and Statistics, University of New Hampshire, Durham, New Hampshire  03824, USA}

  \author{Arthur Jaffe}
  \email{Arthur\_Jaffe@harvard.edu}
  \affiliation{Department of Physics, Harvard University, Cambridge, Massachusetts 02138, USA}
 \affiliation{Department of Mathematics, Harvard University, Cambridge, Massachusetts 02138, USA}

 \title{ Stabilizer Testing and Magic Entropy via Quantum Fourier Analysis }

\begin{abstract}
\begin{center}
\textbf{\textit{Dedicated to Huzihiro Araki, whose extraordinary insights into\\ physics and mathematics inspired many works, including this one.}}
\end{center}

Quantum Fourier analysis is an important topic in mathematical physics. We introduce a systematic protocol for testing and measuring ``magic'' in quantum states and gates, using a quantum Fourier approach. Magic, as a quantum resource, is necessary to achieve a
quantum advantage in computation. Our protocols are based on quantum convolutions and swap tests, implemented via quantum circuits. We describe this for both qubit and qudit systems.
Our quantum Fourier approach offers a unified method to quantify magic, in stabilizer circuits, as well as in matchgate and bosonic Gaussian circuits.
\end{abstract}

\maketitle

\tableofcontents
\section{Introduction}

Fourier analysis is a powerful mathematical framework with diverse applications, that  span from information theory to computer science. A key example is its use in linearity testing for classical Boolean functions. This involves property testing, where the goal is to determine if a given function (or state or circuit) possesses a certain property, or if it is within an $\epsilon$-distance of having that property~\cite{goldreich2017introduction}.

In this work, we present a systematic approach for testing and measuring quantum magic by using quantum Fourier analysis. This method leverages the observation that stabilizer states can be viewed as quantum Gaussian states in the quantum Fourier approach, serving as fixed points under quantum convolution. Due to its universality, our approach can also be applied to test and quantify magic in other classically simulable circuits, such as matchgate circuits and bosonic Gaussian circuits.

The key idea in this work stems from the
stability of stabilizer states under quantum convolution.
The protocols we introduce here are inspired by linearity testing in computer science and the study of entanglement entropy in quantum physics.  (We summarize this diagrammatically in Figure \ref{Fig:sum}).
Quantum convolutions have various potential applications that are based on their robust properties and intrinsic relations to Fourier transforms. In this regard, we propose several applications of quantum convolutions, including stabilizer testing for states, Clifford testing for gates, and
magic entropy.

\begin{figure}[h]

\centering
\begin{minipage}{0.45\textwidth}
\begin{tcolorbox}[title=Separability Testing, colback=blue!10]
The reduced state $\rho_A$ of a bipartite pure state $\ket{\phi}_{AB}$
is pure iff $\ket{\phi}_{AB}$ is a separable state.

\begin{tcolorbox}[boxrule=0.3pt,colback=blue!5]
 $\inner{\rho_A}{\rho_A}=1$ iff $\ket{\phi}_{AB}$ is  separable.
\end{tcolorbox}
\end{tcolorbox}
\end{minipage}
\hfill
\begin{minipage}{0.45\textwidth}
\begin{tcolorbox}[title=Linearity Testing,colback=red!10]
The self-convolution $f*f$ of a Boolean function $f$  is a Boolean function
iff $f$ is an affine linear function.

 \begin{tcolorbox}[boxrule=0.3pt,colback=red!5]
$\inner{f*f}{f*f}=1$ iff $f$ is affine linear.
\end{tcolorbox}
\end{tcolorbox}
\end{minipage}

\vskip .1in
\hskip .45in
\begin{tikzpicture}[>->,>=stealth]
    \draw [line width=4pt, color=blue](-1.8,2) -- (0.2,0) node[midway, left] {Inner product $\inner{\ }{\ }$};
       \draw [line width=4pt,color=red](2.2,2) -- (0.2,0) node[midway, right] {~Quantum convolution $\boxtimes$};     
\end{tikzpicture}

\begin{tcolorbox}[title=Stabilizer Testing, colback=green!10]
 The self-convolution $\boxtimes\psi$ of a pure state $\psi$ is  pure, iff 
$\psi$ is a stabilizer state

\begin{tcolorbox}[boxrule=0.3pt,width=8cm,colback=green!5]
$\inner{\boxtimes\psi}{\,\boxtimes\psi}=1$,  iff $\psi$ is stabilizer state.
\end{tcolorbox}

\end{tcolorbox}

\caption{The stabilizer testing in this work is based on the purity invariance of stabilizer states under quantum convolution.
We use that convention that self-convolution $\boxtimes\psi$ means the 2-fold  $\psi\boxtimes\psi$ for qudits and 3-fold self-convolution $\boxtimes_3(\psi,\psi,\psi)$ for qubits.}
\label{Fig:sum}
\end{figure}

\begin{figure}[h]

\begin{minipage}{0.45\textwidth}
\begin{tcolorbox}
[width=10cm,height=3.2cm,title= Stabilizer Test \label{prop:test}]

\begin{enumerate}
    \item 
Perform the convolution  for the given $\psi$, yielding 2 copies of output states;
 
\item
Perform the swap test on  the 2 copies.
If the output is $0$, it passes the test; otherwise, it fails.
\end{enumerate}
\end{tcolorbox}
\end{minipage}
\hfill
\begin{minipage}{0.45\textwidth}
\center{\includegraphics[width=5.5cm]  {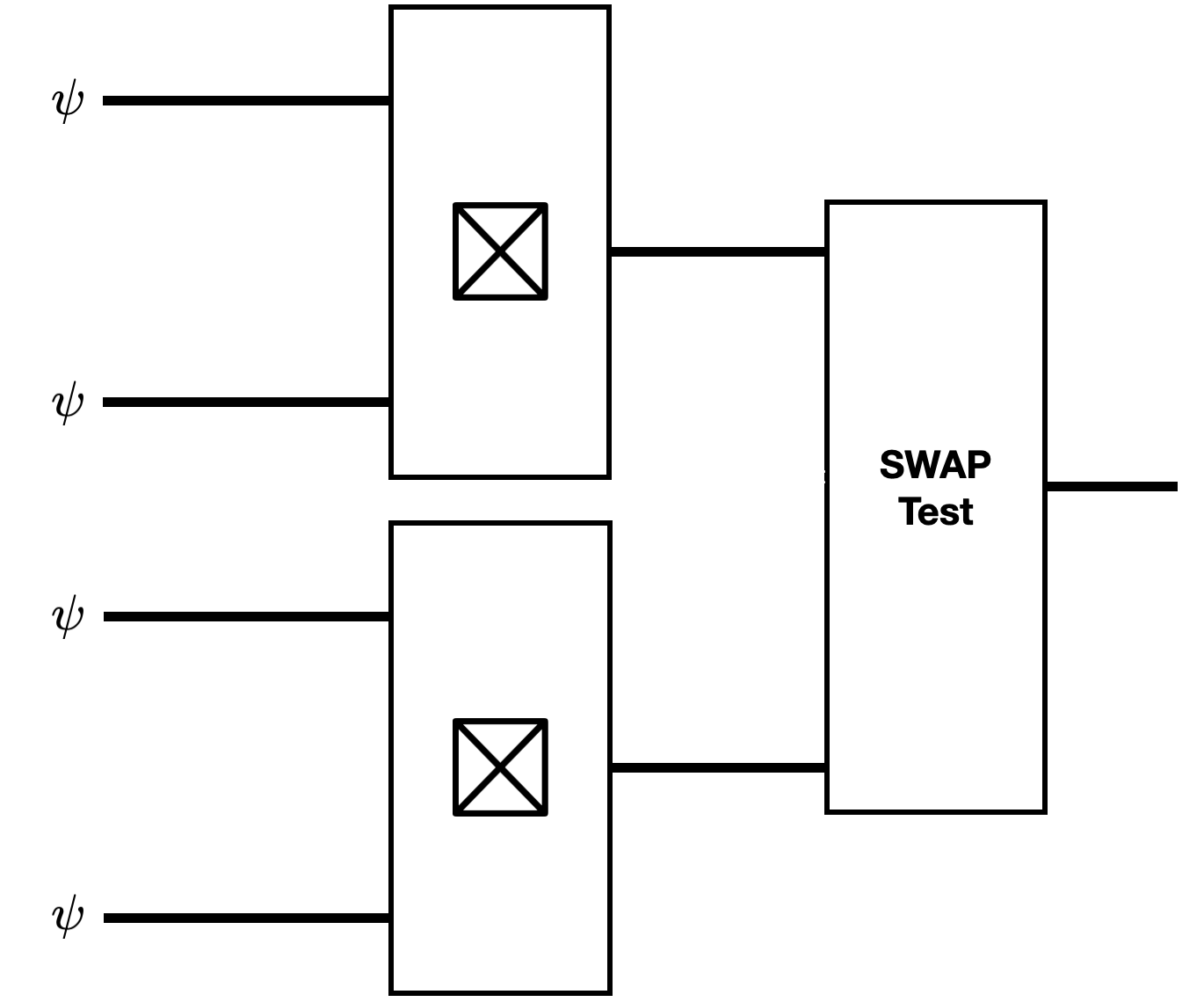}}     
\end{minipage}
\caption{The protocol for the stabilizer test, which we  explain in detail in \S\ref{sec:stab_T}.}

\label{fig:stab_test}
\end{figure}

\subsection{Background}

Blum-Luby-Rubinfeld~(BLR) testing~\cite{blum1990self} is a crucial tool in the field of theoretical computer science. Its importance stems from the use of this test to determine whether or not a function is linear with high probability, using a relatively small number of queries of the function.
BLR testing and its variants have many applications in code testing and cryptography.
For example, the BLR test plays a crucial role in determining whether a code (such as the Hadamard code~\cite{blum1990self} or Reed-Muller code~\cite{Alon05,bhattacharyya2010optimal}) is locally testable. In that case there exists an efficient method to determine with high probability whether a given vector is far from any codeword,  by only checking a small number of bits in the vector.
Locally testable codes can be used in the construction of probabilistically checkable proofs (PCP)~\cite{babai1991checking,babai1991non,feige1991approximating,arora1998proof,Arora98prob,Ben04STOC,Goldreich06,Moshkovitz08,DinurSIAM13}. This is a theoretical framework that deals with the verification of proofs using probabilistic tests. 
We provide a brief review of the BLR linearity testing of Boolean functions in~\S\ref{subsec:pre}.

In quantum physics, quantum state properties, such as entanglement and magic, are crucial for achieving a quantum advantage. This makes  their testing and measurement of significant interest.
For example, 
a pure bipartite state $\phi_{AB}$ is separable if it is a tensor product state  $\phi_{A}\otimes\phi_B$.
A state  which is not separable is called ``entangled.'' 
Entanglement plays a crucial role in quantum information processing and computation, so separability testing is  a key task.
It is well-known that  the reduced state $\rho_A$  of  a pure  bipartite  state $\phi_{AB}$ is pure, if and only if $\phi_{AB}$ is  separable.
Based on this fact, several protocols of separability testing have been proposed~\cite{Harrow13,v011a003,Beckey21,gs007, Buhrman01}. For example, Harrow and Montanaro gave a separability testing method using partial trace and swap tests, and also  discussed the connection with the linearity testing of Boolean functions~\cite{Harrow13}.
Pauli braiding testing and its variants, as a quantum generalization of BLR linearity testing, was proposed by 
Natarajan and Vidick~\cite{natarajan2017quantum,Natarajanfocs18} for robustly verifying entanglement, and plays a fundamental role in their analysis of $\text{MIP}^*=\text{RE}$~\cite{ji2022mipre}.

The relation between separability and purity of a reduced state  led to the use of entanglement entropy as a measure of entanglement in a quantum system. This  entropy is the von Neumann entropy of the reduced state $\rho_A$ of a bipartite pure state $\rho_{AB}$. Entanglement entropy also provides insights into the properties of many-body systems, such as  quantum phase transitions~\cite{VidalPRL03}, quantum field theory~\cite{Calabrese09}, and quantum gravity~\cite{Nishioka09}. Moreover, the entanglement entropy has been experimentally measured~\cite{Islam15} and has become an increasingly important tool for understanding and controlling these systems.

Besides entanglement, stabilizerness is also an important property of 
quantum states and circuits. 
Stabilizer states are the common eigenstates of an abelian subgroup of the qubit Pauli group; they were introduced by Gottesman to study quantum error correction~\cite{Gottesman97}.
From the Gottesman-Knill theorem one knows  that stabilizer circuits comprising Clifford
unitaries with stabilizer states and measurements can be simulated efficiently on a classical computer~\cite{gottesman1998heisenberg}. 
A state which is not a stabilizer is called magic. 
To test whether an unknown state is a stabilizer, several protocols have been 
proposed~\cite{LowPRA09,WangPRA11,Rocchetto18,montanaro2017learning,Gross21}, including the  Bell sampling (first introduced by Montanaro~\cite{montanaro2017learning}), and the Bell difference sampling introduced by Gross, Nezami and Walter~\cite{Gross21}. Further studies and applications have been developed, based on these protocols \cite{LaiIEEE22,grewalLIPICs23,grewal2023improved,Haug23}. Along with stabilizer testing, 
many other measures have been proposed to quantify the amount of magic in a given quantum state~\cite{Veitch12mag,Veitch14,HowardPRL17,BeverlandQST20,SeddonPRXQ21,BravyiPRL16,BravyiPRX16,bravyi2019simulation, Bu19,Bucomplexity22,BuPRA19_stat,RallPRA19,WangNJP19,LeonePRL22,haug2023efficient,Haug23,HaugPRB23, HL2023stabilizer,WangPRA23}. They have been applied in the classical simulation 
of quantum circuits \cite{SeddonPRXQ21,BravyiPRL16,BravyiPRX16,bravyi2019simulation,Bu19,Bucomplexity22,RallPRA19} and unitary  synthesis \cite{HowardPRL17,BeverlandQST20}.

\subsection{Summary of main results}

\begin{enumerate}[(1)] 
\item{}
We design quantum convolutions for 2-qubit-systems. We prove  mathematical properties of these convolutions on $n$-qubit systems, including commutativity with Clifford unitaries, majorization of the spectrum under convolution, and the purity invariance of stabilizer states under convolution.

\item{}
In \S\ref{sec:stab_T}, we propose a systematic framework for  stabilizer and Clifford testing  based on quantum convolutions and swap tests on both qudits and qubits (See Figure~\ref{fig:stab_test}  as an example).  
We use the Hadamard convolution for qudits from our previous work~\cite{BGJ23a,BGJ23b,BJ24a,BGJ24a}, and the quantum convolution constructed for qubits  in \S\ref{sec:con_qubit}.
 If the maximal overlap between the given state $\psi$ and stabilizer states is $1-\epsilon$, the
 probability of acceptance in the protocol is $1-\Theta(\epsilon)$. This bound is independent of the number $n$ of qudits/qubits 
and of  the local dimension $d$.  
We provide both an upper bound and a lower bound on the probability 
of acceptance; these bounds are close to each other, up to an error of order $\epsilon^2$.
By the Choi-Jamiołkowski isomorphism, we can also use this protocol to perform Clifford testing of quantum gates.

\item{}
Inspired by entanglement entropy, 
we introduce  "magic entropy" on both qudits and qubits in~\S\ref{sec:mag_en}. This is the von Neumann entropy (or quantum R\'enyi entropy) of
 the self-convolution $\boxtimes\psi$ for the given state $\psi$.  By the Choi-Jamiołkowski isomorphism, 
we can  generalize the concept of the magic entropy to quantum gates. 
We study the properties of magic entropy and show that it can be used as a measure of magic. 
\end{enumerate}

Aside from  these main results,  we also discuss some possible future directions to extend this work in~\S\ref{sec:disc}, including 
 experimental realization, the concept of magic spectrum, and potential connections with pseudo-random states, quantum error-correction codes, among other topics. We also have begun to  investigate how quantum higher-order Fourier analysis can quantify quantum complexity~\cite{BGJ25a}.

\begin{Rem}
In addition to stabilizer states and circuits, other classically simulable families exist, including fermionic Gaussian states/circuits (also known as matchgates)~\cite{valiant2001quantum,valiant2002quantum,bravyi2002fermionic,terhal2002classical,divincenzo2004fermionic} and bosonic Gaussian states/circuits~\cite{Bartlett02,Mari12,Veitch_2013}. It is worth noting that the method proposed here for testing stabilizer states can be extended to both fermionic and bosonic Gaussian states  through alternative choices  of quantum convolutions. (See follow-up works for details on fermionic Gaussian states~\cite{lyu2024fermionic,lyu2024fermionic_G} and bosonic Gaussian states~\cite{Bu2025CV}.) Therefore, the present work provides a universal quantum Fourier theoretical framework for characterizing Gaussianity in both discrete-variable and continuous-variable quantum systems.
\end{Rem}

\section{Preliminary}
\subsection{Pauli operators,  stabilizer states and Clifford unitaries}
An $n$-qudit system is a Hilbert space $\mathcal{H}^{\ot n}$, where $\mathcal{H}\simeq \complex^d$, and $d$ is prime. 
Let $L(\mathcal{H}^{\ot n})$ denote the set of all linear operators on $\mathcal{H}^{\ot n}$,
and let 
$D(\mathcal{H}^{\ot n})$ denote the set of  all quantum states on $\mathcal{H}^{\ot n}$.
We define one  orthonormal set in  the Hilbert space $\mathcal{H}$,  to be the computational basis and denote it by the  Dirac notation
$
\set{\ket{k}}_{k\in \mathbb{Z}_d}\;.
$
 The Pauli matrices $X$ and $Z$ are defined as
\[ X |k\rangle = |k+1\rangle\;, 
\quad 
Z |k\rangle = \chi(k) |k\rangle,\;\;\;\forall k\in \Z_d\;,
\]
where $\chi(k)=\omega^k_d$ and $\omega_d=\exp(2\pi i /d)$ is a $d$-th root of unity.
 If the local dimension $d$ is an odd prime number,  the Pauli operators (or Weyl operators)
are defined as 
\begin{eqnarray}\label{eqn:wpq}
w(p,q)=\chi(-2^{-1}pq)Z^pX^q\;.
\end{eqnarray}
Here $ 2^{-1}$ denotes the inverse of 2 in $\mathbb{Z}_d$.
If $d=2$, the Pauli operators are Hermitian and  defined as
\begin{eqnarray}
w(p,q)=i^{-pq}Z^pX^q \;.
\end{eqnarray}

The Pauli operators satisfy, 
\begin{eqnarray}\label{0106shi2}
w(p,q)\,w(p',q')=
 \chi(2^{-1}\inner{(p,q)}{(p',q')}_s)
 w(p+p',q+q') \; ,
\end{eqnarray}
if  $d>2$. When $d=2$, 
\begin{eqnarray}\label{0106shi2_2}
w(p,q)\,w(p',q')=
i^{\inner{(p,q)}{(p',q')}_s}\,
 w(p+p',q+q') \;.
\end{eqnarray}
In both cases, the symplectic inner product $\inner{(p,q)}{(p',q')}_s$ is defined as 
\begin{eqnarray}\label{SymplecticInnerProduct}
\inner{(p,q)}{(p',q')}_s
=pq'-qp' \; .
\end{eqnarray}
Let us denote 
$
V^n=\mathbb{Z}^n_d\times\mathbb{Z}^n_d \; 
$,
and for any $(\vec p, \vec q)\in V^n$, the Pauli operator $w(\vec{p}, \vec q)$ is defined as 
\begin{eqnarray*}
w(\vec p, \vec q)
=w(p_1, q_1)\ot...\ot w(p_n, q_n) \; ,
\end{eqnarray*}
with $\vec p=(p_1, p_2,..., p_n)\in \mathbb{Z}^n_d,\ \vec q=(q_1,..., q_n)\in \mathbb{Z}^n_d $.
The set of the Pauli operators $\set{w(\vec{p},\vec q)}_{(\vec p, \vec q)\in V^n}$ forms an orthonormal basis 
in $L(\mathcal{H}^{\ot n})$ with respect to the inner product 
$\inner{A}{B}=\frac{1}{d^n}\Tr{A^\dag B}$.

\begin{Def}[\bf Characteristic function]\label{def:charfn}
For any $n$-qudit state $\rho\in \mathcal{D}(\mathcal{H}^{\ot n})$, the characteristic function $\Xi_{\rho}:V^n\to\complex$ is defined
as 
\begin{eqnarray}\label{eq:charfn}
\Xi_{\rho}(\vec{p},\vec q)
=\Tr{\rho w(-\vec{p},-\vec q)} \; .
\end{eqnarray}
\end{Def}
Hence, the state $\rho$ can be written as a linear combination of the Pauli operators with characteristic function
\begin{eqnarray}\label{0109shi5}
\rho=\frac{1}{d^n}
\sum_{(\vec{p},\vec q)\in V^n}
\Xi_{\rho}(\vec{p},\vec q)w(\vec{p},\vec q) \; .
\end{eqnarray}
The process of taking characteristic functions is the quantum Fourier transform that we consider. 
The characteristic function has been used to study quantum Boolean functions~\cite{montanaro2010quantum}, and was later  applied to study quantum circuit complexity~\cite{Bucomplexity22}, and quantum scrambling~\cite{GBJPNAS23}.
(See also a more general framework of quantum Fourier analysis \cite{JaffePNAS20}.)

\begin{Def}[\bf Stabilizer state~\cite{Gottesman96, Gottesman97}]
A unit vector  $\ket{\psi}$ is a stabilizer vector if  there exists a maximal  abelian subgroup $S$ of the 
Pauli operators with $n$ generators $\set{w(\vec{p}_i, \vec q_i)}_{i\in[n]}$ such that $w(\vec{p}_i, \vec q_i)\ket{\psi}=\chi(x_i)\ket{\psi}$ with 
$x_i\in \mathbb{Z}_d$ for every $i\in[n]$. 
The corresponding state with density operator $\proj{\psi}$ is 
the projection onto the eigenstate. A general mixed stabilizer state $\rho$ is a  convex linear combination of pure stabilizer states.
\end{Def}

In general, every abelian subgroup $S$ of the Pauli operators has size $d^r$ with $r\in[n]$.
The operators in $S$ generate an abelian $C^*$-algebra $C^*(S)$.  The projections in $C^*(S)$ are called the \emph{stabilizer projections} associated with $S$.

\begin{Def}[\bf Minimal stabilizer-projection state]\label{def:MSPS}
Given an abelian subgroup of Pauli operators $S$,
a minimal projection in $C^*(S)$ is called a \emph{minimal stabilizer projection} associated with $S$.
A minimal stabilizer-projection state (MSPS) is a minimal stabilizer projection normalized by  dividing  by its dimension. 
\end{Def}

It is clear that if $P$ is a stabilizer projection associated with a subgroup $S$ of an abelian group $S'$,
then $P$ is also associated with $S'$.
In addition, when some stabilizer projection $P$ is given,
there is a unique minimal abelian subgroup $S$ associated with $P$, in the sense that for every  $S'$ associated with $P$, we have $S\subseteq S'$.
For example, let us consider the abelian group $S=\set{Z_1,...,Z_{n-1}}$ for an $n$-qudit system.
The states $\left\{\frac{1}{d}\proj{\vec j}\ot I \right\}_{\vec j\in\mathbb{Z}^{n-1}_d}$ are MSPS. 

\begin{Def}[\bf Clifford unitary]
An  $n$-qudit unitary $U$ is a Clifford unitary if conjugation by $U$ maps every Pauli operator to another Pauli operator, up to a phase.
\end{Def}
Clifford unitaries map stabilizer states to stabilizer states. We have the following definition of a stabilizer channel:
\begin{Def}[\bf Stabilizer channel]
A  quantum  channel is a stabilizer channel if it maps stabilizer states 
to stabilizer states.
\end{Def}

\subsection{The BLR linearity test for Boolean functions}\label{subsec:pre}
Functions $f:\set{0,1}^n\to \set{+1,-1}$ are called Boolean functions. Such a function $f$ is  linear, if $f(\vec x)f(\vec y)=f(\vec x+\vec y)$ for any $\vec x, \vec y\in\set{0,1}^n$. 
A famous test to decide
whether $f$ is linear was given by Blum-Luby-Rubinfeld~\cite{blum1990self}:

\begin{center}
  \begin{tcolorbox}[width=10cm,height=3.2cm,title=BLR Linearity Test]
 \begin{enumerate}
    \item Choose~~$\vec x, \vec y\in \set{0,1}^n $ ~to be uniformly random;
    \item Query $f(\vec x)$, $f(\vec y)$, and $f(\vec x+\vec y)$;
    \item Accept  if $f(\vec x)f(\vec y)=f(\vec x+ \vec y)$. Reject, otherwise.
\end{enumerate}
    \end{tcolorbox}
   \end{center}
The probability of acceptance is \begin{eqnarray}
\text{Pr}_{\text{accep}}(f)
=\frac{1}{2}\left[1+\mathbb{E}_{\vec x, \vec y}f(\vec x)f(\vec y)f(\vec x+\vec y)
\right],
\end{eqnarray}
where both $\mathbb{E}_{\vec x}$ and $\mathbb{E}_{\vec y}$   denote the expectation 
taken over the uniform distribution on $\set{0,1}^n$. 
This expression  can be rewritten  as
\begin{eqnarray}
\text{Pr}_{\text{accep}}(f)
=\frac{1}{2}\left[1+\inner{f}{f*f}
\right],
\end{eqnarray}
where the convolution $f*g$ of  Boolean functions $f$ and $g$ is  $f*g(\vec x)=\mathbb{E}_{\vec y}f(\vec y)g(\vec x+\vec y)$. The inner product between  Boolean functions $f$ and $g$ is  $\inner{f}{g}=\mathbb{E}_{\vec x}f(\vec x)g(\vec x)$.
\medskip

Related to this is the affine linearity test:
\begin{center}
  \begin{tcolorbox}[width=10cm,height=3.2cm,title=Affine Linearity Test]
 \begin{enumerate}
    \item Choose~~$\vec x, \vec y, \vec z\in \set{0,1}^n $ ~to be uniformly random;
    \item Query $f(\vec x)$, $f(\vec y)$, $f(\vec z)$, and $f(\vec x+\vec y+\vec z)$;
    \item Accept  if $f(\vec x)f(\vec y)f(\vec z)=f(\vec x+ \vec y+\vec z)$. Reject, otherwise.
\end{enumerate}
    \end{tcolorbox}
   \end{center}
The probability of acceptance 
is \begin{eqnarray}
\text{Pr}_{\text{accep}}(f)
=\frac{1}{2}\left[1+\mathbb{E}_{\vec x, \vec y,\vec z}f(\vec x)f(\vec y)f(\vec z)f(\vec x+\vec y+\vec z)
\right].
\end{eqnarray}
This probability  can be  rewritten as 
\begin{eqnarray}
\text{Pr}_{\text{accep}}(f)
=\frac{1}{2}\left[1+\inner{f*f}{f*f}
\right].
\end{eqnarray}
Thus $\text{Pr}_{\text{accep}}(f)=1$,   iff $f$ is an affine linear function. This implies that
$f*f$ is a Boolean function, iff $f$ is an affine linear function.
These tests provide one inspiration for the quantum tests that we explore here.

\section{Quantum convolutions on qubits}\label{sec:con_qubit}
We introduce the concept of quantum convolutions  for $K$ quantum systems, where each system contains $n$ qubits and $K=2N+1$ is odd. 
It is important to note that this differs from the convolution between two $n$-qubit systems introduced in \cite{BGJ23a,BGJ23b}. 
We choose $K$ to be odd, in order to make the  characteristic function of 
the convolution become multiplication of the characteristic functions of the input states.

In particular, we study the basic properties of quantum convolutions, including the multiplicative behavior 
characteristic functions under convolution, the commutativity of Clifford unitaries with 
convolutions,  and purity invariance of stabilizer states under convolution.
These results follow from the definition of the key unitary $V$ and its action on Pauli matrices.

\begin{Def}[\bf Key Unitary]\label{def:key_U}
The key unitary $V$ for  $K$ quantum systems, with each system containing $n$ qubits, is 
\begin{eqnarray}\label{eq:con_cir}
V:=U^{\ot n}
=U_{1,n+1,...,(K-1)n+1}\ot U_{2,n+2,...,(K-1)n+2}\ot ...\ot U_{n, 2n,..., Kn}.
\end{eqnarray}
Here  $U$ is a $K$-qubit unitary constructed using CNOT gates:
\begin{eqnarray}
U:=\left(\prod^K_{j=1}CNOT_{j\to 1}\right)\left(\prod^K_{i=1}CNOT_{1\to i}\right),
\end{eqnarray}
and 
$
CNOT_{2\to 1}\ket{x}\ket{y}=\ket{x+y}\ket{y}
$ for any $x,y\in\mathbb{Z}_2$. 
\end{Def}

\begin{Def}[\bf Convolution of multiple states]\label{Def:conv_qubit}
Given $K$ states $\rho_1,\rho_2,..., \rho_K$, each with $n$-qubits, the multiple convolution $\boxtimes_K$ 
of $\rho_1,\rho_2,..., \rho_K$ maps to an $n$-qubit state: 
\begin{eqnarray}
\boxtimes_{K}(\rho_1,\rho_2,...,\rho_K)=\boxtimes_K(\ot^K_{i=1}\rho_i)=\Ptr{1^c}{V\ot^K_{i=1}\rho_i V^\dag}\;.
\end{eqnarray}
Here $V$ is the key unitary in Definition~\ref{def:key_U}, and
$\Ptr{1^c}{\cdot}$ denotes the partial trace taken on the subsystem $2, 3..., K$, i.e., 
 $\Ptr{1^c}{\cdot}=\Ptr{2,3,...,K}{\cdot}$.
\end{Def}

This convolution $\boxtimes_{K}$ gives a quantum channel, which we also denote  as $\boxtimes_K$, i.e.,
$
\boxtimes_K(\cdot)
=\Ptr{1^c}{V \cdot V^\dag}
$.
Therefore, when we refer to $\boxtimes_K(\rho_1,\rho_2,...,\rho_K)$ or $\boxtimes_K(\rho_1\ot \cdots \ot \rho_K)$, we are referring to the action of the convolutional channel on the input states. In this work, we will use the terms "convolution" and "convolutional channel" interchangeably without distinction.
If the given $K$ states $\set{\rho_i}^K_{i=1}$ are the same, we denote their convolution as $\boxtimes_K\rho$ for simplicity. 

We use $\boxtimes_K$ to denote the convolution on $K$ input states, while  in our previous work~\cite{BGJ23a,BGJ23b} we used $\boxtimes_K$ to indicate the repeated $2$-fold convolution $K$ times. 
In fact, these two concepts are almost the same.  In Proposition~\ref{prop:3gen_all}, we show this  by repeating the $3$-fold convolution.
This is the reason that we use the notation $\boxtimes_K$ in this paper.

Since the key unitary consists of CNOT gates, its action on the computational basis 
can be expressed directly as follows.
\begin{lem}\label{230609lem1}
The action of the key unitary $V$ on the computational basis $\set{\ket{\vec x}}$ is
\begin{eqnarray}
V \left(\ot^{K}_{i=1}\ket{\vec x_i}\right)=\ket{\sum^K_{i=1}\vec x_i}\ot^K_{i=2}\ket{ \vec x_i+ \vec x_1},
\end{eqnarray}
and 
\begin{eqnarray}
V^\dag \left(\ot^{K}_{i=1}\ket{\vec x_i}\right)
=\ket{\sum^K_{i=1}\vec x_i}\ot^K_{i=2}
\ket{\sum_{j\neq i} 
\vec x_j}.
\end{eqnarray}
\end{lem}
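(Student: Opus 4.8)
The plan is to prove Lemma~\ref{230609lem1} by a direct computation, reducing the $n$-qubit statement to the single-qubit case because $V=U^{\otimes n}$ acts independently on each of the $n$ ``columns'' of qubits. First I would observe that the tensor factorization in Definition~\ref{def:key_U} means it suffices to track a single coordinate slot: if $\ket{\vec x_i} = \ket{x_{i,1}}\otimes\cdots\otimes\ket{x_{i,n}}$, then $V(\otimes_{i=1}^K \ket{\vec x_i})$ is obtained by applying $U$ to each $K$-tuple $(x_{1,\ell},\dots,x_{K,\ell})$ for $\ell\in[n]$ and reassembling. So the whole lemma follows from the single-qubit identities
\begin{eqnarray}
U\left(\ot_{i=1}^K \ket{x_i}\right) = \ket{\textstyle\sum_i x_i}\ot_{i=2}^K \ket{x_i + x_1},
\end{eqnarray}
and the analogous formula for $U^\dag$, with all arithmetic in $\mathbb{Z}_2$.

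Next I would establish the formula for $U$ by composing the two CNOT layers. Acting on $\ket{x_1}\cdots\ket{x_K}$, the layer $\prod_{i=1}^K CNOT_{1\to i}$ adds $x_1$ into every other register (the factor $i=1$ is the identity), producing $\ket{x_1}\ket{x_2+x_1}\cdots\ket{x_K+x_1}$. Then the layer $\prod_{j=1}^K CNOT_{j\to 1}$ adds the current contents of each register into register $1$; register $1$ holds $x_1$ and receives $\sum_{j=2}^K (x_j + x_1) = \sum_{j=2}^K x_j + (K-1)x_1$, and since $K$ is odd, $(K-1)x_1 = 0$ in $\mathbb{Z}_2$, so register $1$ becomes $x_1 + \sum_{j=2}^K x_j = \sum_{i=1}^K x_i$. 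Registers $2,\dots,K$ are unchanged by the second layer (they only act as controls), so they retain $x_i + x_1$. This is exactly the claimed formula, and here is where the oddness of $K$ is essential — this is the one place a sign/parity could go wrong, so I would flag it as the main point to get right, though for qubits it is a one-line check.

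For $V^\dag$, I would note $U^\dag = \left(\prod_{i=1}^K CNOT_{1\to i}\right)^\dag \left(\prod_{j=1}^K CNOT_{j\to 1}\right)^\dag$; since each $CNOT$ is self-inverse and the factors within each product layer commute (they share the same control or the same target in a way that commutes over $\mathbb{Z}_2$), $U^\dag = \left(\prod_i CNOT_{1\to i}\right)\left(\prod_j CNOT_{j\to 1}\right)$, i.e. the same two layers in the opposite order. Applying $\prod_j CNOT_{j\to 1}$ first sends $\ket{x_1}\cdots\ket{x_K} \mapsto \ket{\sum_i x_i}\ket{x_2}\cdots\ket{x_K}$, and then $\prod_i CNOT_{1\to i}$ adds $\sum_i x_i$ into register $m$ for each $m\geq 2$, giving $x_m + \sum_i x_i = \sum_{j\neq m} x_j$ in $\mathbb{Z}_2$. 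Register $1$ is untouched by the second layer, so it keeps $\sum_i x_i$. Reassembling the $n$ columns yields the stated $V^\dag$ formula. The only genuine obstacle is bookkeeping — making sure ``add $x_1$ everywhere, then collapse register 1'' is applied in the right order and that the mod-$2$ cancellations (using $K$ odd) are handled correctly — but no deeper idea is needed. Alternatively, one can verify the $V^\dag$ formula simply by checking that composing it with the $V$ formula returns the identity permutation on basis states, which is a short substitution.
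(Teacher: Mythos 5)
Your proof is correct and is exactly the direct CNOT computation the paper has in mind (the paper states the lemma without proof, remarking only that it follows directly since the key unitary consists of CNOT gates). The reduction to the single-$\ell$ case via $V=U^{\otimes n}$, the order of the two CNOT layers, the use of $K$ odd to cancel $(K-1)x_1$ mod $2$, and the self-inverse/commutation argument for $U^\dag$ are all handled correctly.
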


\begin{prop}\label{prop:action_key}
The action of the  key unitary $V$ acting on the Pauli operators satisfies:
\begin{eqnarray}
V\ot^K_{k=1}w(\vec p_k, \vec q_k)V^\dag
=(-1)^{N\sum^K_{j=1}\vec p_j\cdot\vec q_1}\,
w\left(\sum^K_{j=1}\vec p_j, \sum^K_{j=1}\vec q_j\right)
\ot^{K}_{k=2}
w\left(\sum^K_{j=1}\vec p_j-\vec p_k, \vec q_1-\vec q_k\right), \label{230609shi1}
\end{eqnarray}
and thus 
\begin{eqnarray}
V^\dag\ot^K_{k=1}w(\vec p_k, \vec q_k)V
=(-1)^{N\sum^K_{j=1}\vec p_1\vec q_j}w\left(\sum^K_{j=1}\vec p_j, \sum^K_{j=1}\vec q_j\right)
\ot^{K}_{k=2}
\,
w\left(\vec p_1-\vec p_k, \sum_{j=1}^K\vec q_j-\vec q_k\right),
\end{eqnarray}
for any $(\vec p_k, \vec q_k)\in V^n$.
\end{prop}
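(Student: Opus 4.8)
The plan is to derive \eqref{230609shi1} directly from the definition $V = U^{\otimes n}$ together with the structure of $U$ as a composition of CNOT gates, and then obtain the second identity from the first by a purely algebraic manipulation (applying \eqref{230609shi1} to the adjoint, or equivalently relabeling). Since $V$ is a tensor product over the $n$ ``columns'' and the Pauli operator $\otimes_{k=1}^K w(\vec p_k,\vec q_k)$ factorizes accordingly, it suffices to prove the single-qubit-per-system version: for the $K$-qubit unitary $U$,
\[
U\,\otimes_{k=1}^K w(p_k,q_k)\,U^\dagger
=(-1)^{N\sum_j p_j q_1}\,w\Bigl(\textstyle\sum_j p_j,\sum_j q_j\Bigr)\otimes_{k=2}^K w\Bigl(\textstyle\sum_j p_j-p_k,\;q_1-q_k\Bigr),
\]
and then tensor over the $n$ columns, noting that all the phases are additive in the column index so they combine into the stated global sign $(-1)^{N\sum_j \vec p_j\cdot\vec q_1}$.

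For the single-system computation, I would track how $X$- and $Z$-type Paulis transform under the two layers of CNOTs in $U=\bigl(\prod_{j=1}^K CNOT_{j\to 1}\bigr)\bigl(\prod_{i=1}^K CNOT_{1\to i}\bigr)$. It is cleanest to split $w(p_k,q_k)=i^{-p_kq_k}Z^{p_k}X^{q_k}$ and propagate $\otimes_k Z^{p_k}$ and $\otimes_k X^{q_k}$ separately through the Clifford $U$, using the standard rules $CNOT_{c\to t}: X_c\mapsto X_cX_t,\ Z_c\mapsto Z_c,\ X_t\mapsto X_t,\ Z_t\mapsto Z_cZ_t$. The first layer $\prod_i CNOT_{1\to i}$ copies the control-$1$ $X$ into all targets and pulls all target $Z$'s back onto qubit $1$; the second layer $\prod_j CNOT_{j\to 1}$ then does the dual. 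Carefully composing, the $X$-exponents become: qubit $1$ gets $\sum_j q_j$, qubit $k\ge 2$ gets $q_1-q_k$ (matching the claim), and dually the $Z$-exponents become $\sum_j p_j$ on qubit $1$ and $\sum_j p_j - p_k$ on qubit $k\ge 2$. This is most efficiently checked against Lemma~\ref{230609lem1}, which already gives the action of $U$ on computational basis states: from $U|\vec x\rangle$ one reads off $U X^{\vec q}U^\dagger$ immediately, and from $U^\dagger|\vec x\rangle$ one reads off $U Z^{\vec p}U^\dagger$ (since $Z^{\vec p}$ is diagonal, $U Z^{\vec p} U^\dagger$ acts on $|\vec x\rangle$ by the phase $\chi$ evaluated at the linear form appearing in $U^\dagger|\vec x\rangle$). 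So the permutation/linear-map content of \eqref{230609shi1} is essentially a corollary of Lemma~\ref{230609lem1}.

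The main obstacle is bookkeeping the phase $(-1)^{N\sum_j p_jq_1}$ correctly. Two sources contribute: the reordering of $Z$'s past $X$'s when recombining the propagated $X^{\vec q}$ and $Z^{\vec p}$ parts back into $w(\cdot,\cdot)$ form on each of the $K$ output qubits, and the renormalization factors $i^{-p_kq_k}$ versus the output factors $i^{-(\cdot)(\cdot)}$. One must verify these combine to exactly the claimed sign — in particular that the naive ``diagonal'' phases one might expect on qubits $k\ge 2$ cancel, leaving only a contribution proportional to $q_1$, and that the multiplicity $N=(K-1)/2$ arises from the $N$ pairs among the $K$ systems (this is precisely why $K$ is taken odd, as flagged before Definition~\ref{def:key_U}). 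I would pin this down by first doing $K=3$ explicitly to fix all conventions, then arguing the general $K$ case by the same propagation rules; the $X$/$Z$ exponents are manifestly linear in $K$, and the sign is quadratic, so only the quadratic part needs the careful count. Once \eqref{230609shi1} is established, the second identity follows by taking the adjoint of \eqref{230609shi1}, substituting $(\vec p_k,\vec q_k)\mapsto(-\vec p_k,-\vec q_k)$ and using $w(\vec p,\vec q)^\dagger = \pm w(-\vec p,-\vec q)$, and then relabeling — equivalently, conjugating by $V^\dagger$ interchanges the roles of the "sum" coordinate and the "difference" coordinates exactly as in the two displays of Lemma~\ref{230609lem1}.
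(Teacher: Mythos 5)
Your proposal is correct and follows essentially the same route as the paper: decompose each $w(\vec p_k,\vec q_k)$ into its $Z$ and $X$ parts, push them through $U$ using its CNOT structure (the paper packages this as Lemma~\ref{lem:weylact}, proved exactly via the basis action you cite), recombine, and compare $i$-exponents to extract the sign $(-1)^{N\sum_j\vec p_j\cdot\vec q_1}$ from $K-1=2N$. The only difference is organizational: the paper handles the phase in one stroke via the identity $(\sum_j\vec p_j)\cdot(\sum_j\vec q_j)+\sum_{k\ge2}(\sum_j\vec p_j-\vec p_k)\cdot(\vec q_1-\vec q_k)=\sum_j\vec p_j\cdot\vec q_j+(K-1)(\sum_j\vec p_j)\cdot\vec q_1$ rather than by your $K=3$-then-induct bookkeeping, but both are sound.
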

\begin{proof}
Based on Lemma~\ref{lem:weylact},  the left hand side of \eqref{230609shi1} is equal to 
\begin{eqnarray*}
V\ot^K_{k=1}w(\vec p_k, \vec q_k)V^\dag
=i^{-\sum_j\vec p_j\cdot \vec q_j}
\left( 
Z^{\sum^K_{j=1}\vec p_j}
\ot^K_{k=2}
Z^{\sum^K_{j=1}\vec p_j-\vec p_k}
\right)
\left(
X^{\sum^K_{j=1}\vec q_j}
\ot^K_{k=2}
X^{\vec q_1+\vec q_k}
\right).
\end{eqnarray*}
The right hand side of \eqref{230609shi1} has
\begin{align*}
&w\left(\sum^K_{j=1}\vec p_j, \sum^K_{j=1}\vec q_j\right)
\ot^{K}_{k=2}
w\left(\sum^K_{j=1}\vec p_j-\vec p_k, \vec q_1-\vec q_k\right)\\
=&i^{-(\sum^K_{j=1}\vec p_j)\cdot (\sum^K_{j=1}\vec q_j)-\sum^K_{k=2}(\sum^K_{j=1}\vec p_j-\vec p_k)\cdot (\vec q_1-\vec q_k)}
\left( 
Z^{\sum^K_{j=1}\vec p_j}
\ot^K_{k=2}
Z^{\sum^K_{j=1}\vec p_j-\vec p_k}
\right)
\left(
X^{\sum^K_{j=1}\vec q_j}
\ot^K_{k=2}
X^{\vec q_1+\vec q_k}
\right).
\end{align*}
It is easy to verify that 
\begin{eqnarray*}
\left(\sum^K_{j=1}\vec p_j\right)\cdot \left(\sum^K_{j=1}\vec q_j\right)+
\sum^K_{k=2}\left(\sum^K_{j=1}\vec p_j-\vec p_k\right)\cdot \left(\vec q_1-\vec q_k\right)
=\sum^K_{j=1}\vec p_j\cdot \vec q_j
+(K-1)\left(\sum^K_{j=1}\vec p_j\right)\cdot\vec q_1.
\end{eqnarray*}
Then we get the final result by $K=2N+1$.

\end{proof}

\begin{lem}\label{lem:weylact}
The following identities hold:
\begin{eqnarray}
UX_1U^\dag =\prod^K_{i=1} X_i\;,\qquad UX_iU^\dag =X_1  X_i,\quad \text{for}~ i\geq 2,
\end{eqnarray}
and 
\begin{eqnarray}
UZ_1U^\dag=\prod^K_{i=1}Z_i\;,\qquad
UZ_iU^\dag=Z_{i^c},\quad \text{for}~ i\geq 2,
\end{eqnarray}
where $Z_{i^c}:=Z_1  Z_2\cdots Z_{i-1} I_iZ_{i+1}\cdots Z_K$.
\end{lem}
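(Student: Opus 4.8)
The statement to prove is Lemma~\ref{lem:weylact}, which records the conjugation action of the building-block unitary $U = \left(\prod_{j=1}^K CNOT_{j\to 1}\right)\left(\prod_{i=1}^K CNOT_{1\to i}\right)$ on the single-qubit Pauli generators $X_i$ and $Z_i$ across the $K$ registers. Here is how I would organize the argument.

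\medskip

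\textbf{Plan.} The cleanest route is to factor $U = U_2 U_1$ with $U_1 := \prod_{i=1}^K CNOT_{1\to i}$ (fan-out from register $1$) and $U_2 := \prod_{j=1}^K CNOT_{j\to 1}$ (fan-in to register $1$), noting that inside $U_1$ all the $CNOT_{1\to i}$ for distinct $i$ commute (common control, disjoint targets), and likewise all the $CNOT_{j\to 1}$ inside $U_2$ commute (common target, disjoint controls), so the products are unambiguous. I would then recall the two elementary conjugation rules for a single $CNOT_{c\to t}\ket{x}\ket{y}=\ket{x}\ket{x+y}$ over $\mathbb{Z}_2$: it sends $X_t\mapsto X_t$, $X_c\mapsto X_cX_t$, $Z_c\mapsto Z_c$, $Z_t\mapsto Z_cZ_t$ (all other Paulis on those two wires, and everything on other wires, unchanged). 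These are standard, but I would state them explicitly since everything else is bookkeeping built on them.

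\medskip

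\textbf{Key steps.} First, compute the action of $U_1=\prod_{i\ge 2}CNOT_{1\to i}$ (the $i=1$ factor is trivial): conjugating $X_1$ by each $CNOT_{1\to i}$ tacks on one factor $X_i$, so $U_1 X_1 U_1^\dag = \prod_{i=1}^K X_i$; conjugating $X_i$ ($i\ge2$) leaves it fixed since it sits on a target wire; $Z_1$ is fixed (control wire); and $Z_i\mapsto Z_1 Z_i$ for $i\ge2$. Second, do the same for $U_2=\prod_{j\ge2}CNOT_{j\to1}$: now register $1$ is the common target, so $X_1\mapsto X_1$, $X_j\mapsto X_1 X_j$ for $j\ge2$ (control wires), $Z_j\mapsto Z_j$ for $j\ge2$, and $Z_1\mapsto \prod_{j=1}^K Z_j$. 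Third, compose: $U P U^\dag = U_2(U_1 P U_1^\dag)U_2^\dag$. For $X_1$: $U_1 X_1 U_1^\dag=\prod_{i=1}^K X_i$, then applying $U_2$ each $X_j$ with $j\ge2$ becomes $X_1X_j$, and the lone $X_1$ stays, giving $X_1\prod_{j\ge2}X_1X_j = X_1^{1+(K-1)}\prod_{j\ge2}X_j = \prod_{i=1}^K X_i$ using $K-1=2N$ even, so the extra $X_1$'s cancel mod $2$. For $X_i$ ($i\ge2$): unchanged by $U_1$, then $U_2$ gives $X_1 X_i$. For $Z_1$: unchanged by $U_1$, then $U_2$ gives $\prod_{j=1}^K Z_j$. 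For $Z_i$ ($i\ge2$): $U_1$ gives $Z_1 Z_i$, then under $U_2$ the factor $Z_1$ maps to $\prod_{j=1}^K Z_j$ and $Z_i$ is fixed, so we get $Z_i\prod_{j=1}^K Z_j = Z_1\cdots Z_{i-1}I_i Z_{i+1}\cdots Z_K = Z_{i^c}$, since $Z_i^2=I$. This matches all four claimed identities.

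\medskip

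\textbf{Main obstacle.} There is no deep obstacle; the only thing to be careful about is the order of composition and the parity cancellation in the $X_1$ case — one must use $K=2N+1$ precisely here, and a wrong ordering of $U_1,U_2$ or a sign/parity slip would corrupt the $X_1\mapsto\prod_i X_i$ identity. I would therefore present the $X_1$ computation in full and treat the other three cases as direct analogues. A secondary (trivial) point worth a sentence is justifying that the factors within $U_1$ and within $U_2$ commute, so that ``$\prod$'' is well-defined and the conjugations can be applied one factor at a time.
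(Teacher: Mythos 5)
Your proof is correct, and it takes a somewhat different route from the paper's. The paper does not factor $U$ at all: it writes each Pauli as a sum of computational-basis dyads, conjugates using the explicit basis action $U\bigl(\ot_i\ket{x_i}\bigr)=\ket{\sum_i x_i}\ot_{i=2}^K\ket{x_i+x_1}$, and relabels the summation variables to read off the answer. You instead split $U$ into its fan-out layer $U_1=\prod_i CNOT_{1\to i}$ and fan-in layer $U_2=\prod_j CNOT_{j\to 1}$, justify that the factors within each layer commute, and propagate each Pauli through the two layers using the standard single-$CNOT$ Heisenberg rules. Both arguments are elementary and complete; your version has the advantage of making the role of $K=2N+1$ odd fully explicit (it appears as the $X_1^{K}=X_1$ cancellation in the $X_1$ case, and the paper's basis-action formula itself already presupposes $K$ odd for register $1$ to carry $\sum_i x_i$), at the cost of having to verify the commutation of the factors and the composition order $U P U^\dag=U_2(U_1PU_1^\dag)U_2^\dag$, which you do. All four case computations check out, including the $Z_i\mapsto Z_i\prod_j Z_j=Z_{i^c}$ step via $Z_i^2=I$.
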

\begin{proof}
\begin{align*}
UX_1 U^\dag
=&\sum_{x_1,..x_K}
U\ket{x_1+1}\bra{x_1}\ot^K_{i=2}\proj{x_i}U^\dag\\
=&\sum_{x_1,..x_K}
\ket{\sum_ix_i+1}\bra{\sum_ix_i}
\ot^K_{i=2}\ket{x_1+x_i+1}\bra{x_1+x_i}\\
=&\sum_{y_1,..y_K}
\ot^K_{i}\ket{y_i+1}\bra{y_i}\\
=&X\ot X...\ot X=\prod_iX_i.
\end{align*}
If $i\geq 2$, for example $i=2$, 
\begin{align*}
    UX_2 U^\dag
    =&\sum_{x_1,..x_K}U\proj{x_1}\ot\ket{x_2+1}\bra{x_2}\ot^{K}_{i=3}\proj{x_i}U^\dag\\
    =&\sum_{x_1,..x_K}\ket{\sum_{i}x_i+1}\bra{\sum_ix_i}\ot\ket{x_1+x_2+1}\bra{x_1+x_2}\ot^{K}_{i=3}\proj{x_1+x_i}\\
    =&\sum_{y_1,...,y_K}\ket{y_1+1}\bra{y_1}\ot \ket{y_2+1}\bra{y_2}\ot^{K}_{i=3} \proj{y_i}\\
    =&X_1  X_2.
\end{align*}

Also,
\begin{align*}
    UZ_1U^\dag=&\sum_{x_1,..x_K}
U(-1)^{x_1}\ot^K_{i=1}\proj{x_i}U^\dag\\
=&\sum_{x_1,..x_K}
(-1)^{x_1}\proj{\sum_ix_i}\ot^K_{i=2}\proj{x_1+x_i}\\
=&\sum_{y_1,...,y_K}(-1)^{\sum_iy_i}
\ot^K_{i=1}\proj{y_i}\\
=&Z\ot Z...\ot Z=\prod_iZ_i.
\end{align*}
For $i\geq 2$, for example $i=2$, we have 
\begin{align*}
    UZ_2U^\dag=&\sum_{x_1,..x_K}
U(-1)^{x_2}\ot^K_{i=1}\proj{x_i}U^\dag\\
=&\sum_{x_1,..x_K}
(-1)^{x_2}\proj{\sum_ix_i}\ot^K_{i=2}\proj{x_1+x_i}\\
=&\sum_{y_1,...,y_K}(-1)^{\sum_{i\neq 2}y_i}
\ot^K_{i=1}\proj{y_i}\\
=&Z\ot I\ot Z...\ot Z.
\end{align*}
\end{proof}

\begin{prop}[\bf Reduction to the classical convolution]
Given $K$ states $\rho_1,\rho_2,..., \rho_K$, where each $\rho_i=\sum_{\vec x}p_i(\vec x)\proj{\vec x}$ is  diagonal  in the computational basis, then the convolution $\boxtimes_K\left(\otimes^K_{i=1}\rho_i\right)=\sum_{\vec x}q(\vec x)\proj{\vec x}$ is a diagonal state with 
\begin{eqnarray}
q(\vec x)
=p_1*p_2*...*p_K(\vec x).
\end{eqnarray}
Here $p*q(\vec x):=\sum_{\vec y\in \set{0,1}^n}p(\vec y)q(\vec y+\vec x)$ is the classical convolution of  two probability distributions $p$ and 
$q$ on $ \set{0,1}^n$.
\end{prop}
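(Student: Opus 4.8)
The plan is to evaluate $\boxtimes_K\big(\otimes^K_{i=1}\rho_i\big)$ directly in the computational basis, using only the action of the key unitary on basis vectors from Lemma~\ref{230609lem1}; the Pauli machinery of Proposition~\ref{prop:action_key} is not needed here, since diagonal inputs stay diagonal throughout.

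First I would write $\otimes^K_{i=1}\rho_i=\sum_{\vec x_1,\dots,\vec x_K}\big(\prod^K_{i=1}p_i(\vec x_i)\big)\,\otimes^K_{i=1}\proj{\vec x_i}$ and conjugate each term by $V$. By Lemma~\ref{230609lem1}, $V\big(\otimes^K_{i=1}\proj{\vec x_i}\big)V^\dag=\proj{\sum^K_{i=1}\vec x_i}\otimes\big(\otimes^K_{i=2}\proj{\vec x_i+\vec x_1}\big)$, which is diagonal on every one of the $K$ subsystems; hence the channel output is automatically diagonal and no off-diagonal terms can arise. Taking the partial trace over subsystems $2,\dots,K$ and using $\Tr{\proj{\vec x_i+\vec x_1}}=1$ collapses those factors, leaving $\boxtimes_K\big(\otimes^K_{i=1}\rho_i\big)=\sum_{\vec x_1,\dots,\vec x_K}\big(\prod^K_{i=1}p_i(\vec x_i)\big)\proj{\sum^K_{i=1}\vec x_i}$. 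Collecting terms by the value $\vec x=\sum^K_{i=1}\vec x_i\in\{0,1\}^n$ yields $q(\vec x)=\sum_{\vec x_1+\cdots+\vec x_K=\vec x}\prod^K_{i=1}p_i(\vec x_i)$.

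It then remains to identify this symmetric sum over ordered decompositions of $\vec x$ with the iterated classical convolution $p_1*p_2*\cdots*p_K$. For $K=2$, substituting $\vec x_1=\vec y$ and $\vec x_2=\vec y+\vec x$ (legitimate since addition in $\mathbb{Z}_2^n$ is an involution) rewrites $\sum_{\vec x_1+\vec x_2=\vec x}p_1(\vec x_1)p_2(\vec x_2)$ as $\sum_{\vec y}p_1(\vec y)p_2(\vec y+\vec x)=(p_1*p_2)(\vec x)$; the symmetric form also makes $*$ visibly commutative and associative, so a one-line induction on $K$ closes the argument. The computation is routine throughout; the only point that deserves a moment's care is matching the change of variables $\vec x=\sum_i\vec x_i$ to the pairwise definition of $*$ in the statement, and this is not a genuine obstacle. (Alternatively one could argue via characteristic functions: a diagonal state's characteristic function is supported on $\vec q=\vec 0$ and there reduces to the Walsh--Hadamard transform of $p_i$, and multiplicativity of characteristic functions under $\boxtimes_K$ turns the product of transforms back into the classical convolution.)
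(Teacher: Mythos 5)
Your proposal is correct and follows essentially the same route as the paper: apply Lemma~\ref{230609lem1} to the computational-basis decomposition, trace out subsystems $2,\dots,K$ to obtain $\sum_{\vec x_1,\dots,\vec x_K}\prod_i p_i(\vec x_i)\proj{\sum_i \vec x_i}$, and identify the resulting symmetric sum over decompositions $\sum_i\vec x_i=\vec x$ with the iterated classical convolution. The paper simply asserts that last identification, whereas you spell out the $K=2$ change of variables and the induction; this is a harmless elaboration, not a different argument.
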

\begin{proof}
Based on the definition of classical convolution $*$, $q(\vec x)
=p_1*p_2*...*p_K(\vec x)$ is equal to $q(\vec x)
=\sum_{\vec x_1,..., \vec x_K:\sum_i\vec x_i=\vec x}\prod_ip(\vec x_i)$.
And by Lemma \ref{230609lem1}, the output state of the quantum convolution $\boxtimes_K$ is
\begin{eqnarray}
\boxtimes_K(\otimes^K_{i=1}\rho_i)
=
\Ptr{1^c}{V\ot^K_{i=1}\rho_i V^\dag}
=\sum_{\vec x_1,...,\vec x_K}
\prod_ip_i(x_i)\proj{\sum_i\vec x_i}.
\end{eqnarray}
Hence, $\bra{\vec x}\boxtimes_K(\otimes^K_{i=1}\rho_i)\ket{\vec x}=q(\vec x)$.
\end{proof}

\begin{prop}[\bf Convolution-multiplication duality]\label{230504prop1}
Given $K$ states $\rho_1,...,\rho_K$, the characteristic function of 
their convolution $\boxtimes_K(\ot^K_{i=1}\rho_i)$
 satisfies
\begin{eqnarray}
\Xi_{ \boxtimes_K(\ot^K_{i=1}\rho_i)}(\vec p, \vec q)
=(-1)^{N \vec p\cdot \vec q} \prod_{j=1}^K \Xi_{\rho_j}(\vec p, \vec q)
, \quad \forall (\vec p, \vec q)\in V^n.
\end{eqnarray}
\end{prop}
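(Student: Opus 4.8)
The plan is to compute the characteristic function of $\boxtimes_K(\ot^K_{i=1}\rho_i)$ directly from its definition, using Proposition~\ref{prop:action_key} to convert the action of the key unitary $V$ on Pauli operators into a tractable form. First I would expand each input state in the Pauli basis via Definition~\ref{def:charfn} and Eq.~\eqref{0109shi5}, writing $\ot^K_{i=1}\rho_i = d^{-Kn}\sum \prod_j \Xi_{\rho_j}(\vec p_j,\vec q_j)\, \ot^K_{j=1} w(\vec p_j, \vec q_j)$, where the sum is over all $(\vec p_j, \vec q_j)\in V^n$.

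Next I would apply $V\,(\cdot)\,V^\dag$ term by term using Eq.~\eqref{230609shi1}, which gives $V\ot^K_{k=1}w(\vec p_k,\vec q_k)V^\dag$ as a phase $(-1)^{N\sum_j \vec p_j\cdot\vec q_1}$ times $w(\sum_j\vec p_j, \sum_j\vec q_j)$ on the first subsystem, tensored with Pauli operators $w(\sum_j \vec p_j - \vec p_k,\, \vec q_1 - \vec q_k)$ on subsystems $k=2,\dots,K$. Then I would take the partial trace $\Ptr{1^c}{\cdot}$; since $\Tr w(\vec p,\vec q) = d^n$ when $(\vec p,\vec q)=(0,0)$ and vanishes otherwise, the trace over subsystems $2,\dots,K$ forces the constraints $\sum_{j=1}^K\vec p_j = \vec p_k$ and $\vec q_1 = \vec q_k$ for every $k\ge 2$. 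Solving these: $\vec q_1 = \vec q_2 = \dots = \vec q_K =: \vec q$, and subtracting the relation for index $k$ from that for index $k'$ gives $\vec p_k = \vec p_{k'}$ for all $k,k'\ge 2$, while $\sum_j \vec p_j = \vec p_k$ for $k\ge2$ then forces $\vec p_1 = \vec p_3 - \vec p_2 = 0$ (comparing $k=2$ and $k=3$) — more carefully, $\sum_j\vec p_j=\vec p_2$ and $\sum_j\vec p_j=\vec p_3$ give $\vec p_2=\vec p_3$, and with all $\vec p_k$ equal for $k\ge 2$ the constraint reads $\vec p_1 + (K-1)\vec p_2 = \vec p_2$, i.e. $\vec p_1 = -(K-2)\vec p_2 = (K-2)\vec p_2$ over... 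I would instead carefully track that the surviving term has first-subsystem Pauli $w(\vec p_1 + \sum_{k\ge2}\vec p_k,\ K\vec q_1)$ hmm — at this point the cleanest route is to read off that the only surviving configuration makes the first-register operator equal to $w(\vec p, \vec q)$ for the sought argument, with all $\vec q_j=\vec q$ and the $\vec p_j$ summing appropriately, and then match $\Xi_{\boxtimes_K(\cdots)}(\vec p,\vec q) = \Tr{\boxtimes_K(\cdots)\, w(-\vec p,-\vec q)}$.

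Concretely, after the partial trace the output state is $d^{-n}\sum_{(\vec p,\vec q)} \big[(-1)^{N\vec p\cdot\vec q}\prod_j\Xi_{\rho_j}(\vec p,\vec q)\big] w(\vec p,\vec q)$ — one recovers the claimed formula by pairing against $w(-\vec p,-\vec q)$ and using orthonormality of the Pauli basis under $\inner{A}{B}=d^{-n}\Tr A^\dag B$. The phase: once the constraints set all $\vec q_j=\vec q$ and $\vec p_j$ constrained so that the first register carries total $(\vec p,\vec q)$, the phase $(-1)^{N\sum_j\vec p_j\cdot\vec q_1}$ becomes $(-1)^{N\vec p\cdot\vec q}$, and also $i^{-\sum_j\vec p_j\cdot\vec q_j}$-type factors from reassembling $w$ from $Z^pX^q$ must be reconciled against the single $w(\vec p,\vec q)$ normalization — I'd handle this by working throughout with the $w$ operators and their composition rule Eq.~\eqref{0106shi2_2} rather than with $Z,X$ separately, so these incidental phases never appear.

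I expect the main obstacle to be the bookkeeping of the linear constraints imposed by the partial trace: showing cleanly that the conditions ``$\sum_{j=1}^K\vec p_j - \vec p_k = 0$ and $\vec q_1 - \vec q_k = 0$ for all $k=2,\dots,K$'' have, modulo the free choice of the output label $(\vec p,\vec q)$, a \emph{unique} solution in the remaining variables, and that this solution feeds the first-register operator exactly the argument $(\vec p,\vec q)$ with the stated sign. This is where oddness of $K$ (equivalently $K=2N+1$) gets used — the constraint matrix relating $(\vec p_1,\dots,\vec p_K)$ to the single output $\vec p$ is invertible precisely because $K$ is odd, which is the reason the construction was set up this way. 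Once the constraint analysis is pinned down, collecting the surviving term and comparing with Eq.~\eqref{eq:charfn} is routine.
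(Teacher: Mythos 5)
Your strategy is sound and arrives at the right formula, but it is a genuinely different — and considerably more laborious — route than the paper's. The paper works in the Heisenberg picture: it writes $\Xi_{\boxtimes_K(\ot_i\rho_i)}(\vec p,\vec q)=\Tr{\boxtimes_K(\ot_i\rho_i)\,w(\vec p,\vec q)}=\Tr{(\ot_i\rho_i)\,\boxtimes_K^\dag(w(\vec p,\vec q))}$, notes that $\boxtimes_K^\dag(w(\vec p,\vec q))=V^\dag\left(w(\vec p,\vec q)\ot I\ot\cdots\ot I\right)V$, and applies the second identity of Proposition~\ref{prop:action_key} with $(\vec p_1,\vec q_1)=(\vec p,\vec q)$ and all other arguments zero, so the adjoint collapses to $(-1)^{N\vec p\cdot\vec q}\,w(\vec p,\vec q)^{\ot K}$ and the trace factorizes immediately — no Pauli expansion of the inputs, no partial-trace constraint analysis. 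Your Schr\"odinger-picture computation does close, and the bookkeeping you flag as the main obstacle resolves cleanly: the conditions $\vec q_1=\vec q_k$ and $\sum_j\vec p_j=\vec p_k$ for all $k\ge 2$ force, over $\mathbb{Z}_2$ with $K=2N+1$, that $\vec p_1=(2-K)\vec p_2=\vec p_2$ and hence all $(\vec p_j,\vec q_j)$ equal a common $(\vec p,\vec q)$ with $\sum_j\vec p_j=K\vec p=\vec p$; the surviving first-register operator is then $w(\vec p,\vec q)$ with phase $(-1)^{N\vec p\cdot\vec q}$, and the normalization $d^{-Kn}\cdot(d^n)^{K-1}=d^{-n}$ matches Eq.~\eqref{0109shi5}. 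What the paper's route buys is precisely the avoidance of this bookkeeping (the dual channel sees a single Pauli string, so the partial trace is trivial); what your route buys is an explicit Pauli-basis expression for the entire output state, of which the stated identity is the coefficient read-off. Two small cautions: your intermediate line claiming $\vec p_1=\vec p_3-\vec p_2=0$ is wrong as written (those constraints give $\vec p_2=\vec p_3$, as you then correct), and the incidental $i^{-\vec p\cdot\vec q}$ phases you worry about never arise if you use Eq.~\eqref{230609shi1} as stated, since it is already phrased in terms of the $w$ operators.
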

\begin{proof}
This is because $\boxtimes_K^\dag(w(\vec p,\vec q))=V^\dag\left( w(\vec p,\vec q)\ot I...\ot I\right)V$, where the action of the key unitary on $w(\vec p, \vec q)$ is given in Proposition \ref{prop:action_key}.
Hence
\begin{eqnarray*}
\Xi_{ \boxtimes_K(\ot^K_{i=1}\rho_i)}(\vec p, \vec q)
&=&\Tr{ \boxtimes_K(\ot^K_{i=1}\rho_i)w(\vec p, \vec q)}
=\Tr{ \ot^K_{i=1}\rho_i \boxtimes^\dag_K(w(\vec p, \vec q))}\\
&=&(-1)^{N\vec p \cdot \vec q } \prod^K_{i=1}\Tr{\rho_iw(\vec p, \vec q)}
=(-1)^{N \vec p\cdot \vec q} \prod_{j=1}^K \Xi_{\rho_j}(\vec p, \vec q).
\end{eqnarray*}

\end{proof}

Based on the characteristic function of the convolution $\boxtimes_K$, we find that any 
$K$-fold convolution $\boxtimes_K$ can be generated by $\boxtimes_3$. 

\begin{prop}[\bf $\boxtimes_3$ generates all $\boxtimes_K$ ]\label{prop:3gen_all}
Given $K$ $n$-qubit states $\rho_1,...,\rho_K$ with $K=2N+1$, the convolution $\boxtimes_K$ can be generated by 
repeating the 3-fold convolution $\boxtimes_3$ $N$ times, that is,
\begin{eqnarray}
\boxtimes_K\left(\ot^K_{i=1}\rho_i\right)
=\boxtimes_3\left(\rho_1\ot \rho_2\ot \boxtimes_{K-2}\left(\ot^K_{i=3}\rho_i\right)\right).
\end{eqnarray}
\end{prop}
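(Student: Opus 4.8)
The plan is to verify the identity at the level of characteristic functions, using the convolution--multiplication duality (Proposition~\ref{230504prop1}) together with the fact that a state on $\mathcal{H}^{\ot n}$ is uniquely determined by its characteristic function via the expansion \eqref{0109shi5}. Thus it suffices to show that the characteristic functions of the two sides agree at every $(\vec p,\vec q)\in V^n$.

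For the left-hand side, Proposition~\ref{230504prop1} applied with $K=2N+1$ gives
\begin{eqnarray*}
\Xi_{\boxtimes_K(\ot^K_{i=1}\rho_i)}(\vec p,\vec q)
=(-1)^{N\vec p\cdot\vec q}\prod_{j=1}^K\Xi_{\rho_j}(\vec p,\vec q).
\end{eqnarray*}
For the right-hand side, set $\sigma:=\boxtimes_{K-2}\left(\ot^K_{i=3}\rho_i\right)$, which is a bona fide $n$-qubit state because $\boxtimes_{K-2}$ is a quantum channel. Since $K-2=2(N-1)+1$ is again odd, Proposition~\ref{230504prop1} applies to $\sigma$ with parameter $N-1$, yielding $\Xi_\sigma(\vec p,\vec q)=(-1)^{(N-1)\vec p\cdot\vec q}\prod_{j=3}^K\Xi_{\rho_j}(\vec p,\vec q)$. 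Applying Proposition~\ref{230504prop1} once more to the $3$-fold convolution $\boxtimes_3(\rho_1\ot\rho_2\ot\sigma)$ (parameter $N=1$) and substituting the formula for $\Xi_\sigma$, the two sign factors combine as $(-1)^{\vec p\cdot\vec q}(-1)^{(N-1)\vec p\cdot\vec q}=(-1)^{N\vec p\cdot\vec q}$, and one recovers exactly the left-hand side. Since the characteristic functions coincide, the states coincide.

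The argument is essentially bookkeeping; the only points needing care are (i) that $K-2$ remains odd, so that the duality formula is legitimately applicable to the inner convolution, and (ii) the parity arithmetic of the phases, namely that the phase $(-1)^{N\vec p\cdot\vec q}$ picked up by a single $K$-fold convolution equals the accumulated phase $(-1)^{\vec p\cdot\vec q}$ from $\boxtimes_3$ times $(-1)^{(N-1)\vec p\cdot\vec q}$ from $\boxtimes_{K-2}$. Finally, iterating the displayed recursion $N$ times --- peeling off two input states at each stage, with base case $\boxtimes_1=\mathrm{id}$ (consistent with $N=0$ in Proposition~\ref{230504prop1}) --- expresses $\boxtimes_K$ as an $N$-fold composition of $\boxtimes_3$, which is the asserted ``generation.''
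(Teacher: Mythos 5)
Your proof is correct and follows exactly the route the paper intends: the paper offers no explicit proof of Proposition~\ref{prop:3gen_all}, stating only that it follows from the characteristic function of $\boxtimes_K$, and your argument via the convolution--multiplication duality of Proposition~\ref{230504prop1} (matching the sign $(-1)^{N\vec p\cdot\vec q}$ against $(-1)^{\vec p\cdot\vec q}(-1)^{(N-1)\vec p\cdot\vec q}$ and invoking uniqueness of the characteristic function) is precisely that omitted computation. No gaps.
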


Since every convolution $\boxtimes_K$
can  be generated by $\boxtimes_3$, we focus on the properties of 
$\boxtimes_3$. Let us first introduce some basic concepts, including the magic gap, majorization, and 
quantum R\'enyi entropy.

\begin{Def}[\cite{BGJ23a,BGJ23b}]
Given an $n$-qudit state $\rho$ for any integer $d$,  the mean state of $\rho$ is the 
operator $\CMM(\rho)$  with the characteristic function: 
\begin{align}\label{0109shi6}
\Xi_{\CMM(\rho)}(\vec p, \vec q) =
\left\{
\begin{aligned}
&\Xi_\rho ( \vec p, \vec q) \; , && |\Xi_\rho ( \vec p, \vec q)|=1 \; ,\\
& 0 \; , && |\Xi_\rho (  \vec p, \vec q)|<1 \; .
\end{aligned}
\right.
\end{align}
\end{Def}

\begin{Def}[\cite{BGJ23a,BGJ23b}]
Given an $n$-qudit state $\rho$,  the  magic gap  of $\rho$ is 
\begin{eqnarray*}
MG(\rho)=1-\max_{(\vec{p}, \vec q)\in  \text{Supp}(\Xi_{\rho}): |\Xi_{\rho}(\vec{p},\vec q)|\neq 1}|\Xi_{\rho}(\vec{p},\vec q)| \;.
\end{eqnarray*}
If $\set{(\vec{p},\vec q)\in  \text{Supp}(\Xi_{\rho}): |\Xi_{\rho}(\vec{p}, \vec q)|\neq 1}=\emptyset$, 
then $MG(\rho):=0$, i.e., there is no gap on the support.

\end{Def}

\begin{Def}[\bf Majorization \cite{MOA79}]\label{def:major}
Given two probability vectors $\vec p=\set{p_i}_{i\in[n]}$ and $\vec q=\set{q_i}_{i\in[n]}$, $\vec p$ is said to
be majorized by $\vec q$, denoted as  $\vec p\prec \vec q$,  if 
\begin{eqnarray*}
\sum^k_{i=1}p^{\downarrow}_i
&\leq& \sum^k_{i=1} q^{\downarrow}_i \;, ~~\forall 1\leq k\leq n-1 \;,\\
\sum^n_{i=1}p_i&=&\sum^n_{i=1} q_i=1 \;,
\end{eqnarray*}
where $p^{\downarrow}$ is the vector obtained by rearranging the components of $p$ in decreasing order, i.e., 
$p^{\downarrow}=(p^{\downarrow}_1,p^{\downarrow}_2,...,p^{\downarrow}_n)$ and $p^{\downarrow}_1\geq p^{\downarrow}_2\geq...\geq p^{\downarrow}_n$.
\end{Def}

\begin{Def}[\bf Quantum R\'{e}nyi entropy \cite{BHOW15}]\label{def:gen_ren_en}
For any  $\alpha\in[0,+\infty]$,
 the R\'enyi entropy $H_\alpha(\rho)$ for a quantum state $\rho$ is
\begin{eqnarray*}
S_{\alpha}(\rho)
=\frac{1}{1-\alpha}\log \Tr{\rho^\alpha} \;,
\end{eqnarray*}
\end{Def}

After introducing these basic concepts, we investigate the properties of 
quantum convolution and summarize its main properties in the following theorem.  We prove these properties  in  Appendix~\ref{appen:quan_con}.

\begin{thm}[\bf Properties of the quantum convolution]\label{thm:main_conv}
The  quantum convolution $\boxtimes_3$ satisfies: 

(1) \textbf{$\boxtimes_3$ is symmetric:} $\boxtimes_3\left(\ot^3_{i=1}\rho_i\right)=\boxtimes_3\left(\ot^3_{i=1}\rho_{\pi(i)}\right)$, for any permutation $\pi$ on 3 elements.

(2) \textbf{Convolutional stability for states:} If $\rho_1, \rho_2, \rho_3$ are all stabilizer states, then 
$\boxtimes\left(\ot^3_{i=1}\rho_i\right)$ is a stabilizer state.

(3) \textbf{Majorization under convolution:} Given $3$ $n$-qubit states $\rho_1,\rho_2,\rho_3$, with 
$\vec{\lambda}_{1}, \vec{\lambda}_2, \vec{\lambda}_3$ the vectors of  eigenvalues of $\rho_1, \rho_2, \rho_3$ respectively, we have 
\begin{eqnarray}\label{inedq:major_21}
\vec{\lambda}_{\boxtimes_3\left(\ot^3_{i=1}\rho_i\right)} 
\prec \vec{\lambda}_{\rho_j} \;,\quad j=1,2, 3.
\end{eqnarray}
This ensures that the quantum R\'enyi  entropy satisfies the following bound: 
\begin{eqnarray}\label{ineq:entopy_con}
S_{\alpha}\left(\boxtimes_3\left(\ot^3_{i=1}\rho_i\right)
\right)\geq \max_iS_{\alpha}(\rho_i), \quad\forall \alpha\geq 0.
\end{eqnarray}

(4) \textbf{Purity invariance of stabilizer states:} given a pure state $\psi$, the self-convolution $\boxtimes_3\psi$   is pure iff 
$\psi$ is a stabilizer state.

(5) \textbf{Commutativity with Clifford unitaries:} 
for any Clifford unitary $U$, there exists some Clifford unitary $U_1$ such that 
\begin{eqnarray}\label{230520shi1}
U_1\left(\boxtimes_3\left(\ot^3_{i=1}\rho_i\right)\right) U_1^\dag
=\boxtimes_3\left(\ot^3_{i=1} \left(U\rho_i U^\dag\right)\right) \;,\quad \forall \rho_1, \rho_2,\rho_3\;.
\end{eqnarray}

(6) \textbf{Mean state properties:}
Let $\rho_1,\rho_2,\rho_3$ be three $n$-qudit states. 
Then  
\begin{eqnarray}\label{0204shi6-1}
\CMM\left(\boxtimes_3\left(\ot^3_{i=1}\rho_i\right)\right)=
\boxtimes_3\left(\ot^3_{i=1}\CMM(\rho_i)\right)\;.
\end{eqnarray} 

(7) \textbf{Quantum central limit theorem:} Let $\rho$ be an $n$-qubit state and $K=2N+1$,
then
\begin{eqnarray}\label{230509shi22}
\norm{\boxtimes_{K}\rho-\mathcal{M}(\rho^\#)}_2
\leq (1-MG(\rho))^{K-1} \norm{\rho- \mathcal{M}(\rho)}_2 \;,
\end{eqnarray}
where $\rho^\# = \rho$ for  even $N$, and $\rho^\# = \rho^T$, i.e., the transpose of $\rho$, for  odd $N$.
\end{thm}

\section{Stabilizer testing via convolution-swap tests on qudits and qubits}\label{sec:stab_T}
We introduce a systematic way to perform  stabilizer testing for states, and Clifford testing for gates. 
These tests are based on quantum convolutions and swap tests.
The key idea of our tests  is  the invariance  of purity under quantum convolution, a property true both for  stabilizer states and for Clifford unitaries. (See Propositions \ref{prop:pure_stab} and \ref{prop:pure_Clif}.)

Swap tests are widely-used to compare quantum states. 
Given two input states $\rho$, $\sigma$ and another ancilla state $\proj{0}$, the test is performed by applying a Hadamard gate on the ancilla qubit, and then a controlled-SWAP gate, controlled by the
ancilla qubit; finally one applies a Hadmard gate on the ancilla qubit. One  takes a  measurement on the ancilla
qubit in the computational basis. The probability of getting an outcome "0" is
\begin{eqnarray}
\text{Pr}[0]=\frac{1}{2}
\left[
1+\Tr{\rho\sigma}
\right].
\end{eqnarray}
If $\rho=\sigma$, then 
output probability 
is equal to $\left(1+\Tr{\rho^2}\right)/2$, which 
is what we need in our protocol.

\subsection{Stabilizer testing for states \label{Sec:StabilizerTestStates}}

Now, let us start with  the stabilizer testing for $n$-qubit states by using the convolution
$\boxtimes_3$.  We will discuss the qudit case later.
Based on purity invariance of stabilizer states under quantum convolution, we propose the following  stabilizer testing protocol for qubits.

\begin{center}
\begin{tcolorbox}[width=10cm,height=3.5cm, title=Protocol 1: Stabilizer Test for $n$-qubit states]

\begin{enumerate}
    \item 
 Prepare 6 copies of $\psi$, and perform the convolution 
 for each 3 copies of $\psi$, and get 2 copies of $\boxtimes_3\psi$;
 
\item
Perform the swap test for the 2 copies of $\boxtimes_3\psi$.
If the output is $0$, it passes the test; otherwise, it fails.
\end{enumerate}
\end{tcolorbox}
\end{center}

The probability of acceptance is 
\begin{eqnarray}\label{eq:prop_acc_2}
\text{Pr}_{\text{accep}}[\psi]
=\frac{1}{2}
\left[1+\Tr{(\boxtimes_3\psi)^2}
\right].
\end{eqnarray}

\begin{thm}\label{thm:stab_test_qub}
Given an $n$-qubit pure state $\psi$, let 
$\max_{\phi\in STAB}|\iinner{\psi}{\phi}|^2=1-\epsilon$. Then the probability of acceptance
is bounded by
\begin{eqnarray}
\frac{1}{2}\left[1+(1-\epsilon)^6\right]\leq \mathrm{Pr}_{\mathrm{accep}}(\proj{\psi})\leq 1-3\epsilon+O(\epsilon^2).
\end{eqnarray}
\end{thm}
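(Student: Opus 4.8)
The plan is to mirror the proof of Theorem~\ref{230613thm1}, replacing the two-qudit Hadamard convolution by the $3$-fold qubit convolution $\boxtimes_3$, and to reduce the claim to the two-sided estimate
\begin{eqnarray*}
(1-\epsilon)^6\leq \Tr{(\boxtimes_3\psi)^2}\leq 1-6\epsilon+O(\epsilon^2),
\end{eqnarray*}
from which the stated bounds on $\mathrm{Pr}_{\mathrm{accep}}(\proj{\psi})=\frac{1}{2}[1+\Tr{(\boxtimes_3\psi)^2}]$ follow at once. First I would use the commutativity of $\boxtimes_3$ with Clifford unitaries (Theorem~\ref{thm:main_conv}(5)) to pick a Clifford $U$ taking a closest stabilizer state of $\psi$ to $\ket{\vec 0}$, set $\ket{\varphi}=U\ket{\psi}$ so that $\varphi(\vec 0):=|\iinner{\vec 0}{\varphi}|^2=1-\epsilon$, and note $\Tr{(\boxtimes_3\psi)^2}=\Tr{(\boxtimes_3\varphi)^2}$. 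Since every $\ket{\vec x}$ is a stabilizer state, optimality of $\ket{\vec 0}$ gives $\varphi(\vec x)\le\delta:=\min\{\epsilon,1-\epsilon\}$ for $\vec x\neq\vec 0$, and $\sum_{\vec x\neq\vec 0}\varphi(\vec x)=\epsilon$.

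For the lower bound I would use $\Tr{(\boxtimes_3\varphi)^2}\ge \bra{\vec 0}\boxtimes_3\varphi\ket{\vec 0}^2$ (Cauchy--Schwarz), and evaluate the diagonal matrix element through the dual channel $\boxtimes_3^\dag(\proj{\vec x})=V^\dag(\proj{\vec x}\ot I\ot I)V$ together with the explicit action of $V^\dag$ on the computational basis (Lemma~\ref{230609lem1}); this yields $\bra{\vec 0}\boxtimes_3\varphi\ket{\vec 0}=\sum_{\vec y,\vec z}\varphi(\vec y+\vec z)\varphi(\vec y)\varphi(\vec z)\ge \varphi(\vec 0)^3=(1-\epsilon)^3$ since all summands are nonnegative, hence $\Tr{(\boxtimes_3\varphi)^2}\ge(1-\epsilon)^6$.

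For the upper bound I would invoke the convolution--multiplication duality (Proposition~\ref{230504prop1}) with $K=3$, $N=1$, which gives $|\Xi_{\boxtimes_3\varphi}(\vec p,\vec q)|^2=|\Xi_\varphi(\vec p,\vec q)|^6$, so by Parseval $\Tr{(\boxtimes_3\varphi)^2}=\frac{1}{2^n}\sum_{(\vec p,\vec q)}|\Xi_\varphi(\vec p,\vec q)|^6$, and split this over the stabilizer group $G=\{(\vec p,\vec 0)\}$ of $\ket{\vec 0}$ and its complement. On $G$ one has $\Xi_\varphi(\vec p,\vec 0)=\sum_{\vec x}\varphi(\vec x)(-1)^{\vec x\cdot\vec p}=(1-\epsilon)+r_{\vec p}$ with $r_{\vec p}$ real, $|r_{\vec p}|\le\epsilon$, $\mathbb{E}_{\vec p}r_{\vec p}=0$, and $\mathbb{E}_{\vec p}r_{\vec p}^2=\sum_{\vec x\neq\vec 0}\varphi(\vec x)^2\le\delta\epsilon$; expanding $((1-\epsilon)+r_{\vec p})^6$ and averaging, the linear term vanishes and each remaining term is $O(\delta\epsilon^{k-1})$, so this part equals $(1-\epsilon)^6+O(\epsilon^2)=1-6\epsilon+O(\epsilon^2)$. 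On $G^c$ I would bound $|\Xi_\varphi(\vec p,\vec q)|\le 2\sqrt{(1-\epsilon)\epsilon}+\epsilon$ for $\vec q\neq\vec 0$ exactly as in Theorem~\ref{230613thm1} (triangle inequality plus Cauchy--Schwarz on $\sum_{\vec x}|\varphi_{\vec x}||\varphi_{\vec x+\vec q}|$, isolating the two terms containing $\varphi_{\vec 0}$), and use $\frac{1}{2^n}\sum_{G^c}|\Xi_\varphi|^2\le 1-(1-\epsilon)^2$, so that $\frac{1}{2^n}\sum_{G^c}|\Xi_\varphi|^6\le(2\sqrt{(1-\epsilon)\epsilon}+\epsilon)^4\,(1-(1-\epsilon)^2)=O(\epsilon^3)$. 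Summing the two contributions gives $\Tr{(\boxtimes_3\varphi)^2}\le 1-6\epsilon+O(\epsilon^2)$, and hence $\mathrm{Pr}_{\mathrm{accep}}(\proj{\psi})\le 1-3\epsilon+O(\epsilon^2)$.

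The main obstacle is the bookkeeping in the $G$-part: because we now have a sixth power rather than the fourth power of the qudit case, one must track several cross terms and confirm that, after using $\mathbb{E}_{\vec p}r_{\vec p}=0$, $|r_{\vec p}|\le\epsilon$, and $\mathbb{E}_{\vec p}r_{\vec p}^2\le\delta\epsilon$, they all contribute only at order $\epsilon^2$; a secondary point requiring care is that the off-group contribution, though naively $\epsilon\cdot\epsilon$, is in fact $O(\epsilon^3)$ because $|\Xi_\varphi(\vec p,\vec q)|^4$ is already $O(\epsilon^2)$ on $G^c$ — and it is precisely this that keeps the final bound tight at $1-3\epsilon+O(\epsilon^2)$, matching the lower bound $\frac{1}{2}[1+(1-\epsilon)^6]$ up to order $\epsilon^2$.
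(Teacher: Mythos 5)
Your proposal is correct and follows essentially the same route as the paper's proof: the same Clifford reduction to $\varphi$ with $\varphi(\vec 0)=1-\epsilon$, the same Cauchy--Schwarz lower bound via $\bra{\vec 0}\boxtimes_3\varphi\ket{\vec 0}=\sum_{\vec x,\vec y}\varphi(\vec x)\varphi(\vec y)\varphi(\vec x+\vec y)$, and the same Parseval split of $\frac{1}{2^n}\sum|\Xi_\varphi|^6$ over the stabilizer group $G$ of $\ket{\vec 0}$ and its complement, with the identical off-group bound $2\sqrt{(1-\epsilon)\epsilon}+\epsilon$ yielding an $O(\epsilon^3)$ contribution. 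The only difference is that you spell out the sixth-power binomial expansion on $G$ (using $\mathbb{E}_{\vec p}r_{\vec p}=0$ and $\mathbb{E}_{\vec p}r_{\vec p}^2\le\delta\epsilon$) where the paper simply quotes the bound $1-6\epsilon(1-\epsilon)^5$ by analogy with its qudit computation; your accounting is consistent with that bound.
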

\begin{proof}
For the qubit case, 
we need to prove that 
\begin{eqnarray*}
(1-\epsilon)^6\leq \Tr{(\boxtimes_3\psi)^2}
\leq 1-6\epsilon+O(\epsilon^2).
\end{eqnarray*}
Since $1-\epsilon=\max_{\phi\in STAB}|\iinner{\psi}{\phi}|^2$, then there exists a
Clifford unitary $U$ such that 
\begin{eqnarray*}
|\iinner{\vec 0}{U\psi}|^2=1-\epsilon,
\end{eqnarray*}
Let us take $\ket{\varphi}=U\ket{\psi}$, then  $\varphi(\vec 0)=1-\epsilon$ with $\varphi(\vec x)=|\iinner{\vec x}{\varphi}|^2$. 
Again we have $\varphi(\vec x)\le \delta=\min\{ \epsilon, 1-\epsilon  \}$ for every $\vec x$.
By the
commutativity of $\boxtimes_3$ with Clifford unitaries in Theorem~\ref{thm:main_conv},  we have
$
\Tr{(\boxtimes_3\psi)^2}
=\Tr{(\boxtimes_3\varphi)^2}
$. 

For the lower bound, we have
\begin{eqnarray*}
\Tr{(\boxtimes_3\varphi)^2}
\geq \bra{\vec 0}\boxtimes_3\varphi \ket{\vec 0}^2
=\left(\sum_{\vec x,\vec y}\varphi(\vec x)\varphi(\vec y)\varphi(\vec x+\vec y)\right)^2
\geq  \left(\varphi(\vec 0)\varphi(\vec 0)\varphi(\vec 0)\right)^2
\geq (1-\epsilon)^6,
\end{eqnarray*}
where the first inequality comes from the Cauchy-Schwarz inequality, and the second equality is because
\begin{eqnarray*}
\bra{\vec 0}\boxtimes_3\varphi \ket{\vec 0}
&=&\Tr{\boxtimes_3(\varphi\ot\varphi\ot\varphi )\proj{\vec 0}}\\
&=&\Tr{\varphi\ot\varphi\ot\varphi 
\boxtimes^\dag_3\left(\proj{\vec 0}\right)}\\
&=&\sum_{\vec x, \vec y}
\Tr{\varphi\ot\varphi\ot\varphi \proj{\vec x+\vec y}\ot\proj{\vec y}\ot\proj{\vec x}}\\
&=&\sum_{\vec x,\vec y}\varphi(\vec x)\varphi(\vec y)\varphi(\vec x+\vec y).
\end{eqnarray*}

For the upper bound, since 
$|\iinner{\vec 0}{\varphi}|^2=1-\epsilon$, then 
$\ket{\varphi}$ can be written as 
$\ket{\varphi}=\sqrt{1-\epsilon}\ket{\vec 0}+\sum_{\vec x\neq \vec 0}\varphi_{\vec x}\ket{\vec x}$,
where $\varphi_{\vec x}=\iinner{\vec x}{\varphi}$, $\varphi(\vec x)=|\varphi_{\vec x}|^2$ and 
$\sum_{\vec x\neq \vec 0}\varphi(\vec x)=\epsilon$. 
Let us take the stabilizer group $G$ of $\ket{\vec 0}$, which is
$G=\set{(\vec p, \vec 0):\vec p\in\mathbb{Z}^n_2}$. 
Hence, $(\vec p, \vec q)\notin G$ iff  $\vec q\neq \vec 0$.
And for any $\vec q\neq \vec 0$, we have 
\begin{eqnarray*}
|\Xi_{\varphi}(\vec p, \vec q)|
\leq \sum_{\vec x}|\varphi_{\vec x}||\varphi_{\vec x+\vec q}|
=2|\varphi_{\vec 0}||\varphi_{\vec q}|
+\sum_{\vec x\neq \vec 0, \vec q}
|\varphi_{\vec x}||\varphi_{\vec x+\vec q}|
\leq 2\sqrt{(1-\epsilon)\epsilon}
+\epsilon.
\end{eqnarray*}
That is, $\max_{(\vec p, \vec q)\notin G}| \Xi_{\varphi}(\vec p, \vec q)|\leq 2\sqrt{(1-\epsilon)\epsilon}
+\epsilon$.
Moreover, 
\begin{eqnarray*}
1=\Tr{\varphi^2}
=\frac{1}{2^n}
\sum_{(\vec p, \vec q)\in V^n}
|\Xi_{\varphi}(\vec p, \vec q)|^2
=\frac{1}{2^n}
\sum_{(\vec p, \vec q)\in G}
|\Xi_{\varphi}(\vec p, \vec q)|^2
+\frac{1}{2^n}
\sum_{(\vec p, \vec q)\notin G}
|\Xi_{\varphi}(\vec p, \vec q)|^2.
\end{eqnarray*}
Since
\begin{eqnarray*}
\frac{1}{2^n}
\sum_{(\vec p, \vec q)\in G}
|\Xi_{\varphi}(\vec p, \vec q)|^2
=\frac{1}{2^n}
\sum_{\vec p\in \mathbb{Z}^n_d}
\left|(1-\epsilon)+\sum_{\vec x\neq\vec 0}\varphi(\vec x)w^{\inner{\vec x}{\vec p}}_d\right|^2
=(1-\epsilon)^2+\sum_{\vec x\neq \vec 0}
\varphi(\vec x)^2,
\end{eqnarray*}
we have
\begin{eqnarray*}
\frac{1}{2^n}
\sum_{(\vec p, \vec q)\notin G}
|\Xi_{\varphi}(\vec p, \vec q)|^2
\leq 1-(1-\epsilon)^2.
\end{eqnarray*}
Now,
\begin{eqnarray*}
\Tr{(\boxtimes_3\varphi)^2}
=\frac{1}{2^n}
\sum_{(\vec p, \vec q)\in V^n}
|\Xi_{\varphi}(\vec p, \vec q)|^6
=\frac{1}{2^n}
\sum_{(\vec p, \vec q)\in G}
|\Xi_{\varphi}(\vec p, \vec q)|^6
+\frac{1}{2^n}
\sum_{(\vec p, \vec q)\notin G}
|\Xi_{\varphi}(\vec p, \vec q)|^6,
\end{eqnarray*}
where 
\begin{eqnarray*}
\frac{1}{2^n}
\sum_{(\vec p, \vec q)\in G}
|\Xi_{\varphi}(\vec p, \vec q)|^6
\leq 1-6\epsilon(1-\epsilon)^5,
\end{eqnarray*}
and 
\begin{eqnarray*}
\frac{1}{2^n}
\sum_{(\vec p, \vec q)\notin G}
|\Xi_{\varphi}(\vec p, \vec q)|^6
\leq \left[2\sqrt{(1-\epsilon)\epsilon}
+\epsilon\right]^4
\frac{1}{2^n}
\sum_{(\vec p, \vec q)\notin G}
|\Xi_{\varphi}(\vec p, \vec q)|^2
\leq \left[2\sqrt{(1-\epsilon)\epsilon}
+\epsilon\right]^4\cdot
\left[ 1-(1-\epsilon)^2\right]=32\epsilon^3+o(\epsilon^3).
\end{eqnarray*}
Hence 
\begin{eqnarray*}
\Tr{(\boxtimes_3\varphi)^2}
\leq 1-6\epsilon+30\epsilon^2+o(\epsilon^2)
=1-6\epsilon+O(\epsilon^2).
\end{eqnarray*}

\end{proof}

Besides the qubit case, let us also consider the stabilizer testing for states in $n$-qudit systems with $d$ an odd prime.   For qudit testing, we introduce the Hadamard convolution $\boxtimes_H$
of two $n$-qudit states. Note that, we can also use other convolutions proposed in~\cite{BGJ23b} to implement the stabilizer testing if the local dimension $d$ satisfies some special requirement. For example, discrete beam splitter convolution can be applied to perform stabilizer testing in case $d\geq 7$.

\begin{Def}[\bf Hadamard Convolution, \cite{BGJ23b}]\label{Def:Had_Conv}
The Hadamard convolution  of two $n$-qudit states $\rho$ and $\sigma$ is
\begin{eqnarray}
\rho\boxtimes_H\sigma=\Ptr{B}{V_H\rho\ot\sigma V^\dag_H},
\end{eqnarray}
where  $V_H=U^{\ot n}_H:=U^{(1,n+1)}_H\ot U^{(2,n+2)}_H\ot...\ot U^{(n,2n)}_H$, where  
$U_H$ is the $2$-qudit unitary
\begin{eqnarray}
U_H=\sum_{x,y\in\mathbb{Z}_d}
\ket{x}\bra{ x+ y}
\ot \ket{y}\bra{ x- y}\;,
\end{eqnarray}
and where $U^{(i,n+j)}_H$ denotes the action of $U_H$ on the $i^{\rm th}$ qudit of the first state and the $j^{\rm th}$ qudit of the second one. 
The corresponding convolutional channel $\mathcal{E}_H$ is 
\begin{eqnarray}
\mathcal{E}_H(\ \cdot\ )
=\Ptr{B}{V_H \ \cdot\  V^\dag_H}\;.
\end{eqnarray}
\end{Def}

\begin{lem}[\cite{BGJ23b}]\label{lem:prop_HCon}
The Hadamard convolution  for any odd, prime d, satisfies the following properties: 

(1) \textbf{Hadamard convolution is abelian:} $\rho\boxtimes_H\sigma=\sigma\boxtimes_H\rho$, for any $n$-qudit states $\rho$ and $\sigma$.

(2) \textbf{Convolution-multiplication duality:} $\Xi_{ \rho \boxtimes_H \sigma} (\vec p, \vec q) = \Xi_\rho (2^{-1}\vec p, \vec q) \Xi_\sigma (2^{-1}\vec p, \vec q)$,
 $\forall (\vec p, \vec q)\in V^n$.

(3) \textbf{Purity invariance of stabilizer states:} given a pure state $\psi$, the self-convolution $\psi\boxtimes_H\psi$   is pure iff 
$\psi$ is a stabilizer state.

(4) \textbf{Commutativity with Clifford unitaries:} 
for any Clifford unitary $U$, there exists some Clifford unitary $U_1$ such that 
\begin{eqnarray}\label{230520shi1-2}
U_1\left(\rho\boxtimes_{H}\sigma\right) U_1^\dag
=(U\rho U^\dag) \boxtimes_H(U\sigma U^\dag) \;,\quad \forall \rho,\sigma\;.
\end{eqnarray}

(5) \textbf{Mean state properties:}
Let $\rho$ and $\sigma$ be two $n$-qudit states with the same mean state $\CMM(\rho) = \CMM(\sigma)$. 
Then  
\begin{eqnarray}\label{0204shi6-2}
\CMM(\rho\boxtimes_H\sigma)=\CMM(\rho)\boxtimes_H\sigma
=\rho\boxtimes_H\CMM(\sigma)=\CMM(\rho)\boxtimes_H\CMM(\sigma)\;.
\end{eqnarray} 
\end{lem}

 Based on purity invariance of stabilizer states under quantum convolution, we propose the following  stabilizer testing protocol for qudits.

\begin{center}
\begin{tcolorbox}
[width=10cm,height=3.5cm,title=Protocol \texttt{2}: Stabilizer Test for $n$-qudit states \label{prop:test2}]

\begin{enumerate}
    \item 

 Prepare 4 copies of the state $\psi$, and perform the convolution 
 to obtain 2 copies of  $\psi\boxtimes_H \psi$;

\item
Perform the swap test for the 2 copies of $\psi\boxtimes_H \psi$.
If the output is $0$, it passes the test; otherwise, it fails.
\end{enumerate}
\end{tcolorbox}
\end{center}

The probability of acceptance is 
\begin{eqnarray}
\text{Pr}_{\text{accep}}[\psi]
=\frac{1}{2}
\left[1+\Tr{(\psi\boxtimes_H\psi)^2}
\right].
\end{eqnarray}

\begin{thm}\label{230613thm1}
Given an $n$-qudit pure state $\psi$ with $d$ an odd prime, let 
$\max_{\phi\in STAB}|\iinner{\psi}{\phi}|^2=1-\epsilon$. Then the probability of acceptance
is bounded by:
\begin{eqnarray}
\frac{1}{2}\left[1+(1-\epsilon)^4\right]\leq \mathrm{Pr}_{\mathrm{accep}}[\psi]
\leq 1-2\epsilon+O(\epsilon^2).
\end{eqnarray}
\end{thm}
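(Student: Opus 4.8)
The plan is to collapse the acceptance probability into a single Fourier‑analytic quantity and then bound that quantity above and below by essentially independent arguments. Since the protocol gives $\mathrm{Pr}_{\mathrm{accep}}[\psi]=\tfrac12\bigl(1+\Tr{(\psi\boxtimes_H\psi)^2}\bigr)$, and for any density operator $\sigma$ the Weyl‑basis expansion~\eqref{0109shi5} together with orthonormality of the Weyl operators yields the Parseval identity $\Tr{\sigma^2}=d^{-n}\sum_{(\vec p,\vec q)\in V^n}\abs{\Xi_\sigma(\vec p,\vec q)}^2$, the convolution--multiplication duality (Lemma~\ref{lem:prop_HCon}(2)) gives $\Xi_{\psi\boxtimes_H\psi}(\vec p,\vec q)=\Xi_\psi(2^{-1}\vec p,\vec q)^2$. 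Because $2$ is invertible mod $d$, the substitution $\vec p\mapsto 2^{-1}\vec p$ is a bijection of $V^n$, so
\[
\mathrm{Pr}_{\mathrm{accep}}[\psi]=\tfrac12\bigl(1+A_4(\psi)\bigr),\qquad A_4(\psi):=\frac1{d^n}\sum_{(\vec p,\vec q)\in V^n}\abs{\Xi_\psi(\vec p,\vec q)}^4 .
\]
Using $\Xi_\psi(\vec 0,\vec 0)=1$ and $d^{-n}\sum\abs{\Xi_\psi}^2=\Tr{\psi^2}=1$ (purity of $\psi$), the theorem reduces to showing $(1-\epsilon)^4\le A_4(\psi)$ and $A_4(\psi)\le 1-4\epsilon+O(\epsilon^2)$.

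For the lower bound I would fix an optimal stabilizer state $\phi$, with $\abs{\iinner{\psi}{\phi}}^2=1-\epsilon$ and stabilizer group $S$ (of size $d^n$), and write $\phi=d^{-n}\sum_{g\in S}\mu_g\,w(g)$ with $\abs{\mu_g}=1$. Then $1-\epsilon=\Tr{\psi\phi}=d^{-n}\sum_{g\in S}\mu_g\Tr{\psi w(g)}$ is a nonnegative real, so the triangle inequality gives $d^{-n}\sum_{g\in S}\abs{\Xi_\psi(g)}\ge 1-\epsilon$. Applying the Cauchy--Schwarz (power‑mean) inequality over the $d^n$ group elements twice, $d^{-n}\sum_{g\in S}\abs{\Xi_\psi(g)}^4\ge\bigl(d^{-n}\sum_{g\in S}\abs{\Xi_\psi(g)}^2\bigr)^2\ge\bigl(d^{-n}\sum_{g\in S}\abs{\Xi_\psi(g)}\bigr)^4\ge(1-\epsilon)^4$, and discarding the remaining nonnegative terms of $A_4(\psi)$ gives $A_4(\psi)\ge(1-\epsilon)^4$. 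This bound holds with no restriction on $\epsilon$, which is why it appears as the exact left endpoint in the statement.

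For the upper bound I would write $\ket{\psi}=\sqrt{1-\epsilon}\,\ket{\phi}+\sqrt{\epsilon}\,\ket{\phi^\perp}$ with $\iinner{\phi}{\phi^\perp}=0$, and for each $w=w(-\vec p,-\vec q)$ abbreviate $a=\bra{\phi}w\ket{\phi}$, $b=\bra{\phi^\perp}w\ket{\phi^\perp}$, $c=\bra{\phi}w\ket{\phi^\perp}$, $c'=\bra{\phi^\perp}w\ket{\phi}$, so that $\Xi_\psi=(1-\epsilon)a+\sqrt{\epsilon(1-\epsilon)}\,(c+c')+\epsilon b$, and then split $V^n=S\sqcup S^c$. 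On $S$, since $w(g)\ket{\phi}=a_g\ket{\phi}$ with $\abs{a_g}=1$, the cross terms vanish, $c_g=c'_g=0$, so $\Xi_\psi(g)=(1-\epsilon)a_g+\epsilon b_g$ \emph{exactly}; expanding $\abs{\Xi_\psi(g)}^4$ and summing, the only possible $O(\epsilon)$ correction is proportional to $\sum_{g\in S}a_g\overline{b_g}$, which vanishes because $\overline{a_g}=a_{-g}=\mu_g$ and hence $\sum_{g\in S}a_g\overline{b_g}=\bra{\phi^\perp}\bigl(\sum_{g\in S}\overline{a_g}w(g)\bigr)\ket{\phi^\perp}=d^n\abs{\iinner{\phi}{\phi^\perp}}^2=0$; thus $d^{-n}\sum_{g\in S}\abs{\Xi_\psi(g)}^4=(1-\epsilon)^4+O(\epsilon^2)$, where the $O(\epsilon^2)$ uses only $d^{-n}\sum_{g\in S}\abs{b_g}^2\le\Tr{(\phi^\perp)^2}=1$. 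On $S^c$ we have $a_h=\Xi_\phi(h)=0$, hence $\abs{\Xi_\psi(h)}=O(\sqrt{\epsilon})$ pointwise; combining this with $d^{-n}\sum_{h\in S^c}\abs{\Xi_\psi(h)}^2=1-d^{-n}\sum_{g\in S}\abs{\Xi_\psi(g)}^2=2\epsilon+O(\epsilon^2)$ gives $d^{-n}\sum_{h\in S^c}\abs{\Xi_\psi(h)}^4=O(\epsilon)\cdot(2\epsilon+O(\epsilon^2))=O(\epsilon^2)$. Adding the two contributions, $A_4(\psi)=1-4\epsilon+O(\epsilon^2)$, i.e.\ $\mathrm{Pr}_{\mathrm{accep}}[\psi]=1-2\epsilon+O(\epsilon^2)$, which yields the claimed upper bound.

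The main obstacle is the upper bound, and in particular making it independent of $d$ and $n$: the naive estimate $\abs{\Xi_\psi(h)}^4\le\abs{\Xi_\psi(h)}^2$ on $S^c$ only yields $A_4\le 1-2\epsilon$, so one must genuinely use that on $S^c$ the characteristic function is already of size $O(\sqrt{\epsilon})$ pointwise while carrying total $\ell^2$‑weight only $2\epsilon+O(\epsilon^2)$; and on $S$ the first‑order term would in general spoil the $-4\epsilon$ coefficient, so the exact cancellation $\sum_{g\in S}\bra{\phi}w(g)\ket{\phi}\,\overline{\bra{\phi^\perp}w(g)\ket{\phi^\perp}}=0$ — a consequence of $\phi^\perp\perp\phi$ together with the stabilizer‑group expansion of $\phi$ — is exactly what makes the upper and lower bounds agree at order $\epsilon$. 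Verifying these two points, and checking that all the implicit constants in the $O(\epsilon^2)$ terms are absolute, is the part that requires care; everything else is routine expansion.
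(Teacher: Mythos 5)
Your proof is correct, and its skeleton matches the paper's: reduce $\mathrm{Pr}_{\mathrm{accep}}$ to the fourth moment $d^{-n}\sum_{(\vec p,\vec q)}\abs{\Xi_\psi(\vec p,\vec q)}^4$ via Parseval and the convolution--multiplication duality, split $V^n$ into the stabilizer group and its complement, use the pointwise bound $2\sqrt{(1-\epsilon)\epsilon}+\epsilon$ together with the residual $\ell^2$ weight $\le 1-(1-\epsilon)^2$ off the group, and expand order by order on the group. Two of your implementations genuinely differ. The paper first conjugates by a Clifford unitary so the optimal stabilizer state becomes $\ket{\vec 0}$ (invoking transitivity and the commutation of $\boxtimes_H$ with Clifford unitaries) and then gets the first-order cancellation on the group from $\mathbb{E}_{\vec p}\,\omega_d^{\inner{\vec x}{\vec p}}=0$ for $\vec x\neq\vec 0$; you keep a general optimal $\phi$, decompose $\ket{\psi}=\sqrt{1-\epsilon}\ket{\phi}+\sqrt{\epsilon}\ket{\phi^\perp}$, and obtain the same cancellation from $\sum_{g\in S}\overline{a_g}b_g=d^n\abs{\iinner{\phi}{\phi^\perp}}^2=0$ via the projection formula $\proj{\phi}=d^{-n}\sum_{g\in S}\overline{a_g}w(g)$ --- a coordinate-free version of the same fact that avoids the Clifford rotation at the cost of carrying the eigenvalue phases $a_g$. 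Your lower bound is the more substantive departure: the paper bounds $\Tr{(\varphi\boxtimes_H\varphi)^2}$ below by the single diagonal element $\bra{\vec 0}\varphi\boxtimes_H\varphi\ket{\vec 0}^2$, which requires computing $\mathcal{E}_H^\dag(\proj{\vec x})$, whereas you stay entirely in Fourier space, deriving $d^{-n}\sum_{g\in S}\abs{\Xi_\psi(g)}\ge 1-\epsilon$ from $\Tr{\psi\phi}=1-\epsilon$ and applying the power-mean inequality twice over the $d^n$ group elements; both routes give exactly $(1-\epsilon)^4$. All your implicit constants are absolute (using only $\abs{b_g}\le 1$ and $d^{-n}\sum_g\abs{b_g}^2\le 1$), so the dimension-independence of the bound is preserved.
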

\begin{proof}
We only need to prove that 
\begin{eqnarray*}
(1-\epsilon)^4\leq \Tr{(\psi\boxtimes_H\psi)^2}
\leq 1-4\epsilon+O(\epsilon^2).
\end{eqnarray*}
Since $1-\epsilon=\max_{\phi\in STAB}|\iinner{\psi}{\phi}|^2$,  there exists a 
Clifford unitary $U$ such that 
\begin{eqnarray*}
|\iinner{\vec 0}{U\psi}|^2=1-\epsilon.
\end{eqnarray*}
Let us take $\ket{\varphi}=U\ket{\psi}$, then  $\varphi(\vec 0)=1-\epsilon$ with $\varphi(\vec x)=|\iinner{\vec x}{\varphi}|^2$. By the
commutativity of $\boxtimes_H$ with Clifford unitaries in Lemma~\ref{lem:prop_HCon},  we have
$
\Tr{(\psi\boxtimes_H\psi)^2}
=\Tr{(\varphi\boxtimes_H\varphi)^2}
$. 

For the lower bound, we have
\begin{eqnarray*}
\Tr{(\varphi\boxtimes_H\varphi)^2}
\geq \bra{\vec 0}\varphi\boxtimes_H\varphi \ket{\vec 0}^2
= \left(\sum_{\vec y}\varphi(\vec y)\varphi(-\vec y)\right)^2
\geq  \left(\varphi(\vec 0)\varphi(\vec 0)\right)^2
\geq (1-\epsilon)^4,
\end{eqnarray*}
where the first inequality comes from the Cauchy-Schwarz inequality, and the equality is because for every $\vec x$ we have
\begin{eqnarray*}
\bra{\vec x}\varphi\boxtimes_H\varphi\ket{\vec x}
&=&\Tr{\mathcal{E}_H(\varphi\ot\varphi)\proj{\vec x}}
=\Tr{\varphi\ot\varphi \mathcal{E}^\dag_H(\proj{\vec x})}
=\sum_{\vec y}\Tr{\varphi\ot\varphi \proj{\vec x+\vec y}\ot \proj{\vec x-\vec y}}\\
&=&\sum_{\vec y}\varphi(\vec x+\vec y)\varphi(\vec x-\vec y).
\end{eqnarray*}

For the upper bound, 
since 
$|\iinner{\vec 0}{\varphi}|^2=1-\epsilon$,  
$\ket{\varphi}$ can be written as 
$$\ket{\varphi}=\sqrt{1-\epsilon}\ket{\vec 0}+\sum_{\vec x\neq \vec 0}\varphi_{\vec x}\ket{\vec x},$$
where $\varphi_{\vec x}=\iinner{\vec x}{\varphi}$, $\varphi(\vec x)=|\varphi_{\vec x}|^2$ and 
$\sum_{\vec x\neq \vec 0}\varphi(\vec x)=\epsilon$. 
Since $1-\epsilon=\max_{\phi\in STAB}|\iinner{\varphi}{\phi}|^2$,
we have $$\varphi(\vec x)\le \delta=\min\{ \epsilon, 1-\epsilon  \}$$ for any $\vec x$.
Let us take the stabilizer group $G$ of $\ket{\vec 0}$, which is
$G=\set{(\vec p, \vec 0):\vec p\in\mathbb{Z}^n_d}$. 
Hence $(\vec p, \vec q)\notin G$ iff  $\vec q\neq \vec 0$.
And for any $\vec q\neq \vec 0$, we have 
\begin{eqnarray*}
\left|\Xi_{\varphi}(\vec p, \vec q)\right|
&=&\left|\Tr{\proj{\varphi}Z^{-\vec p}X^{-\vec q}}\right|
=\sum_{\vec x}\left|\bra{\varphi}Z^{-\vec p}X^{-\vec q}\proj{\vec x} \ket{\varphi}\right|\\
&\leq& \sum_{\vec x}|\varphi_{\vec x}||\varphi_{\vec x+\vec q}|
=|\varphi_{\vec 0}||\varphi_{\vec q}|+|\varphi_{\vec 0}||\varphi_{-\vec q}|
+\sum_{\vec x\neq \vec 0, -\vec q}
|\varphi_{\vec x}||\varphi_{\vec x+\vec q}|\\
&\leq& 2\sqrt{(1-\epsilon)\epsilon}
+\epsilon.
\end{eqnarray*}
That is, $$\max_{(\vec p,\vec q)\notin G}|\Xi_{\varphi}(\vec p,\vec q)|\leq 2\sqrt{(1-\epsilon)\epsilon}
+\epsilon.$$
Moreover, 
\begin{eqnarray*}
1=\Tr{\varphi^2}
=\frac{1}{d^n}
\sum_{(\vec p,\vec q)\in V^n}
\left|\Xi_{\varphi}(\vec p,\vec q)\right|^2
=\frac{1}{d^n}
\sum_{(\vec p,\vec q)\in G}
\left|\Xi_{\varphi}(\vec p,\vec q)\right|^2
+\frac{1}{d^n}
\sum_{(\vec p,\vec q)\notin G}
\left|\Xi_{\varphi}(\vec p,\vec q)\right|^2.
\end{eqnarray*}
Since
\begin{eqnarray*}
\frac{1}{d^n}
\sum_{(\vec p,\vec q)\in G}
\left|\Xi_{\varphi}(\vec p,\vec q)\right|^2
=\frac{1}{d^n}
\sum_{\vec p\in \mathbb{Z}^n_d}
\left|(1-\epsilon)+ \sum_{\vec x\neq\vec 0}\varphi(\vec x)w^{\inner{\vec x}{\vec p}}_d\right|^2
=(1-\epsilon)^2+\sum_{\vec x\neq \vec 0}
\varphi(\vec x)^2,
\end{eqnarray*}
we have
\begin{eqnarray*}
\frac{1}{d^n}
\sum_{(\vec p,\vec q)\notin G}
\left|\Xi_{\varphi}(\vec p,\vec q)\right|^2
\leq 1-(1-\epsilon)^2.
\end{eqnarray*}
Then 
\begin{eqnarray*}
\Tr{(\varphi\boxtimes_H\varphi)^2}
=\frac{1}{d^n}\sum_{\vec p, \vec q}
|\Xi_{\varphi\boxtimes_H\varphi}(\vec p, \vec q)|^2
=\frac{1}{d^n}\sum_{\vec p, \vec q}
|\Xi_{\varphi}(2^{-1}\vec p, \vec q)|^4
=\frac{1}{d^n}
\sum_{(\vec p, \vec q)\in G}
\left|\Xi_{\varphi}(\vec p, \vec q)\right|^4
+\frac{1}{d^n}
\sum_{(\vec p, \vec q)\notin G}
\left|\Xi_{\varphi}(\vec p, \vec q)\right|^4.
\end{eqnarray*}
For the first part, 
\begin{eqnarray*}
&&\frac{1}{d^n}
\sum_{(\vec p, \vec q)\in G}
\left|\Xi_{\varphi}(\vec p, \vec q)\right|^4\\
&=&\frac{1}{d^n}
\sum_{\vec p\in \mathbb{Z}^n_d}
\left|(1-\epsilon)+\sum_{\vec x\neq\vec 0}\varphi(\vec x)w^{\inner{\vec x}{\vec p}}_d\right|^4\\
&= &(1-\epsilon)^4
+4(1-\epsilon)^3\mathbb{E}_{\vec p\in \mathbb{Z}^n_d}\left(\sum_{\vec x\neq\vec 0}\varphi(\vec x)w^{\inner{\vec x}{\vec p}}_d\right) 
\\
&&+4(1-\epsilon)^2\mathbb{E}_{\vec p\in \mathbb{Z}^n_d}\left|\sum_{\vec x\neq\vec 0}\varphi(\vec x)w^{\inner{\vec x}{\vec p}}_d\right|^2
+ (1-\epsilon)^2\mathbb{E}_{\vec p\in \mathbb{Z}^n_d}\left(\sum_{\vec x\neq\vec 0}\varphi(\vec x)w^{\inner{\vec x}{\vec p}}_d\right)^2 +(1-\epsilon)^2\mathbb{E}_{\vec p\in \mathbb{Z}^n_d}\left(\sum_{\vec x\neq\vec 0}\varphi(\vec x)w^{\inner{-\vec x}{\vec p}}_d\right)^2
\\
&&+2(1-\epsilon) \mathbb{E}_{\vec p\in \mathbb{Z}^n_d} \left(\sum_{\vec x\neq\vec 0}\varphi(\vec x)w^{\inner{\vec x}{\vec p}}_d\right)^2 \left(\sum_{\vec x\neq\vec 0}\varphi(\vec x)w^{\inner{-\vec x}{\vec p}}_d\right)
+
2(1-\epsilon) \mathbb{E}_{\vec p\in \mathbb{Z}^n_d} \left(\sum_{\vec x\neq\vec 0}\varphi(\vec x)w^{\inner{\vec x}{\vec p}}_d\right) \left(\sum_{\vec x\neq\vec 0}\varphi(\vec x)w^{\inner{-\vec x}{\vec p}}_d\right)^2
\\
&&+ 
\mathbb{E}_{\vec p\in \mathbb{Z}^n_d} \left|\sum_{\vec y\neq\vec 0}\varphi(\vec x)w^{\inner{\vec x}{\vec p}}_d\right|^4
\\
&\le& (1-\epsilon)^4+0
+6(1-\epsilon)^2 \delta\epsilon
+4(1-\epsilon) \delta \epsilon^2 +\delta\epsilon^3\\
&\le&  1-4\epsilon(1-\epsilon)^3 ,
 \end{eqnarray*}
where the second to the last inequality comes from the fact that 
$ \sum_{\vec p\in \mathbb{Z}^n_d}\left(\sum_{\vec x\neq\vec 0}\varphi(\vec x)w^{\inner{\vec x}{\vec p}}_d\right)=0$, $\left|\sum_{\vec x\neq\vec 0}\varphi(\vec x)w^{\inner{\vec x}{\vec p}}_d\right|\leq \epsilon$,
and
\begin{eqnarray*}
\mathbb{E}_{\vec p\in \mathbb{Z}^n_d}\left|\sum_{\vec x\neq\vec 0}\varphi(\vec x)w^{\inner{\vec x}{\vec p}}_d\right|^2 = 
\mathbb{E}_{\vec p\in \mathbb{Z}^n_d} \sum_{\vec x,\vec y\neq\vec 0}\varphi(\vec x) \varphi(\vec y)w^{\inner{\vec x-\vec y}{\vec p}}_d  =  \sum_{\vec x \neq\vec 0}\varphi(\vec x)^2 \le \delta \sum_{\vec x \neq\vec 0}\varphi(\vec x) \le \delta \epsilon,
\end{eqnarray*}
and the similar bounds on higher order terms.

For the second part,
\begin{eqnarray*}
\frac{1}{d^n}
\sum_{(\vec p, \vec q)\notin G}
\left|\Xi_{\varphi}(\vec p, \vec q)\right|^4
\leq \left[2\sqrt{(1-\epsilon)\epsilon}
+\epsilon\right]^2
\frac{1}{d^n}
\sum_{(\vec p, \vec q)\notin G}
\left|\Xi_{\varphi}(\vec p, \vec q)\right|^2
\leq \left[2\sqrt{(1-\epsilon)\epsilon}
+\epsilon\right]^2\cdot
\left[ 1-(1-\epsilon)^2\right]=8\epsilon^2+o(\epsilon^2).
\end{eqnarray*}
Hence 
\begin{eqnarray*}
\Tr{(\varphi\boxtimes_H\varphi)^2}
\leq 1-4\epsilon+20\epsilon^2+o(\epsilon^2)
=1-4\epsilon+O(\epsilon^2).
\end{eqnarray*}

\end{proof}
 The bound obtained in our analysis is almost optimal, as the lower bound  and upper bound
 are both equal to $1-2\epsilon$ up to some higher-order terms of $\epsilon$.

\begin{Rem}
Note that the upper bound on the probability of acceptance in the Bell difference sampling is dimension-dependent, given by $1-\frac{\epsilon}{8d^2}$ for any odd prime $d$~\cite{Gross21}. However, using the method proposed in our work, we can also provide a lower bound of $1-2\epsilon+O(\epsilon^2)$ and a dimension-independent upper bound  of
$1-2\epsilon+O(\epsilon^2)$
on the  probability of acceptance  in Bell difference sampling.
 This improved upper bound is dimension-independent and close to the lower bound up to some higher-order terms of $\epsilon$.

\end{Rem}

\subsection{Clifford testing for gates}
To test whether the given unitary is a Clifford  or is $\epsilon$ far away  from the Clifford group, 
we can generate the Choi state $J_U$ of the unitary and apply the 
stabilizer testing for the Choi state.
Hence, the  probability of acceptance is 
\begin{eqnarray}
\text{Pr}_{\text{accep}}[U]=
\Tr{(J_U\boxtimes_H J_U)^2},
\end{eqnarray}
for any odd prime d, and 
\begin{eqnarray}
\text{Pr}_{\text{accep}}[U]=
\Tr{(\boxtimes_3 J_U)^2},
\end{eqnarray}
for $d=2$.

Here, let us consider the maximal overlap
between a given unitary $U$ and the set of Clifford unitaries,
$
\max_{V\in Cl_n}
|\inner{V}{U}|^2
$,
where $\inner{V}{U}=\frac{1}{d^n}\Tr{V^\dag U}$.
Then we have the following results on the success probability 
on the testing for both qubits and qudits.

\begin{thm}\label{thm:Clif_test_2}
Given an $n$-qubit gate $U$, let 
$\max_{V\in Cl_n}|\inner{V}{U}|^2=1-\epsilon$. Then the  probability of acceptance
is bounded by:
\begin{eqnarray}
\frac{1}{2}\left[1+(1-\epsilon)^6\right]\leq \text{Pr}_{\text{accep}}[U]\leq 
1-3\epsilon+O(\epsilon^2).
\end{eqnarray}
\end{thm}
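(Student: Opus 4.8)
The plan is to reduce Clifford testing of $U$ to stabilizer testing of its Choi state and then apply Theorem~\ref{thm:stab_test_qub}. Write $\ket{\Omega}=2^{-n/2}\sum_{\vec x}\ket{\vec x}\ket{\vec x}$ for the maximally entangled state of the $2n$-qubit system and $J_U=(\I\ot U)\proj{\Omega}(\I\ot U)^{\dag}$ for the Choi state, so that $J_U$ is a pure state and the protocol, applied to $\psi=J_U$, accepts with probability $\mathrm{Pr}_{\mathrm{accep}}[U]=\tfrac12\big[1+\Tr{(\boxtimes_3 J_U)^2}\big]$ exactly as in Protocol~2. Since $(\I\ot V)\ket{\Omega}$ is a stabilizer vector for every Clifford $V$, and $|\iinner{(\I\ot V)\Omega}{(\I\ot U)\Omega}|^{2}=|\inner{V}{U}|^{2}$, the stabilizer fidelity $f:=\max_{\phi\in STAB}\Tr{J_U\proj{\phi}}$ of $J_U$ immediately satisfies $f\ge\max_{V\in Cl_n}|\inner{V}{U}|^{2}=1-\epsilon$.

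The heart of the argument is the matching upper bound $f\le 1-\epsilon$. Here I would invoke two structural facts about stabilizer states: (i) every $2n$-qubit pure stabilizer state $\ket{\phi}$ has a flat entanglement spectrum across the bipartition into its two $n$-qubit halves, so its Schmidt rank $r$ is a power of two, $r\in\{1,2,\dots,2^{n}\}$; and (ii) a maximally entangled stabilizer state of two $n$-qubit systems is necessarily of the form $(\I\ot W)\ket{\Omega}$ with $W$ a Clifford unitary, hence is the Choi state of $W$. Because $J_U$ is maximally entangled (its marginals are $\I/2^{n}$), the Schmidt decomposition gives $\Tr{J_U\proj{\phi}}\le r/2^{n}$ for any $\ket{\phi}$ of Schmidt rank $r$; thus any stabilizer $\ket{\phi}$ with $\Tr{J_U\proj{\phi}}>\tfrac12$ must have $r=2^{n}$, so by (ii) it is the Choi state of some Clifford $W$ and $\Tr{J_U\proj{\phi}}=|\inner{W}{U}|^{2}\le 1-\epsilon$. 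In the regime $\epsilon<\tfrac12$ — the only one in which the asserted bound is informative — this yields $f=1-\epsilon$.

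With the stabilizer fidelity of the pure $2n$-qubit state $J_U$ pinned to $1-\epsilon$, I would then quote Theorem~\ref{thm:stab_test_qub} directly (its bound does not depend on the number of qubits) to obtain $(1-\epsilon)^{6}\le\Tr{(\boxtimes_3 J_U)^2}\le 1-6\epsilon+O(\epsilon^{2})$, and substitute into $\mathrm{Pr}_{\mathrm{accep}}[U]=\tfrac12[1+\Tr{(\boxtimes_3 J_U)^2}]$ to get the stated two-sided bound; the qudit version, Theorem~\ref{thm:Clif_test_d}, follows in the same way from Theorem~\ref{230613thm1} using the Hadamard convolution. The step I expect to demand the most care is the equality $f=1-\epsilon$: one must state cleanly the flatness of the stabilizer entanglement spectrum and the classification of maximally entangled stabilizer states as Clifford Choi states, and note that the ``$\epsilon<1/2$'' restriction is harmless since the theorem's bounds are only meaningful for small $\epsilon$. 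Everything downstream of that lemma is a black-box application of the qubit stabilizer-testing theorem.
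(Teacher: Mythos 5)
Your proposal is correct, but it takes a genuinely different route from the paper. The paper does not pass through the stabilizer fidelity of $J_U$ at all: it re-runs the entire characteristic-function argument of Theorem~\ref{thm:stab_test_qub} in the Bell basis $\set{\ket{w(\vec p,\vec q)}=(w(\vec p,\vec q)\ot I)\ket{\text{Bell}}}$, using only the overlap of $J_{V^\dag U}$ with the single stabilizer state $\ket{w(\vec 0,\vec 0)}=\ket{\text{Bell}}$ (which equals $|\inner{V}{U}|^2=1-\epsilon$ for an optimal Clifford $V$), together with a dedicated technical lemma (Lemma~\ref{lem:qubit_uni}) showing that $\boxtimes_3$ acts as a classical convolution on the Bell-basis diagonal $\rho(\vec p,\vec q)$. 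Your reduction instead pins down $\max_{\phi\in STAB}\Tr{J_U\proj{\phi}}=1-\epsilon$ and then invokes Theorem~\ref{thm:stab_test_qub} as a black box; the logic is sound, the Cauchy--Schwarz bound $\Tr{J_U\proj{\phi}}\leq r/2^n$ is correct, and the two structural inputs you need — flatness of the entanglement spectrum of a stabilizer state across any bipartition (so the Schmidt rank is a power of two), and the fact that a maximally entangled $2n$-qubit stabilizer state is the Choi state of a Clifford unitary — are both true and standard, though neither is proved in the paper, so a complete writeup would have to supply proofs or citations (the second follows from the observation that the stabilizer group of such a state contains no nontrivial element supported on one side, so it induces a symplectic correspondence between the two local Pauli groups). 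What your route buys is modularity and an extra structural statement of independent interest (Clifford fidelity of $U$ equals stabilizer fidelity of $J_U$ whenever either exceeds $1/2$); what the paper's route buys is self-containedness — it needs no classification of stabilizer states and no Schmidt-rank argument, only the overlap with one explicit stabilizer state — at the cost of repeating two pages of computation. Your handling of the $\epsilon<1/2$ restriction is also right: the $O(\epsilon^2)$ bound is vacuous otherwise, and the lower bound $\frac{1}{2}[1+(1-\epsilon)^6]$ needs only $f\geq 1-\epsilon$, which holds unconditionally.
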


\begin{thm}\label{thm:Clif_test_d}
Given an $n$-qudit gate $U$ with $d$ an odd prime number, let 
$\max_{V\in Cl_n}|\inner{V}{U}|^2=1-\epsilon$. Then the probability of acceptance
is bounded by:
\begin{eqnarray}
\frac{1}{2}\left[1+(1-\epsilon)^4\right]\leq \text{Pr}_{\text{accep}}[U]
\leq 1-2\epsilon+O(\epsilon^2).
\end{eqnarray}
\end{thm}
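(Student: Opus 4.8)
The plan is to deduce Theorem~\ref{thm:Clif_test_d} from the state result, Theorem~\ref{230613thm1}, via the Choi--Jamiołkowski isomorphism. Let $\ket{\Omega}=d^{-n/2}\sum_{i}\ket{i}\ket{i}$ denote the maximally entangled vector on the doubled $2n$-qudit system $A\otimes B$, let $\ket{\psi_U}=(U\ot I)\ket{\Omega}$, and set $J_U=\proj{\psi_U}$. First I would record three facts. (i) $J_U$ is a \emph{pure} state, since $U$ is unitary. (ii) $\ket{\Omega}$ is a stabilizer state: it is a common $+1$-eigenstate of the abelian Pauli group generated by $\set{X_k\ot X_{n+k}}_{k}$ and $\set{Z_k\ot Z^{-1}_{n+k}}_{k}$, which has size $d^{2n}$; hence for any Clifford $V$ the unitary $V\ot I$ is Clifford on $A\otimes B$ and $\ket{\psi_V}$ is a pure stabilizer state. (iii) Since $\bra{\Omega}(A\ot I)\ket{\Omega}=d^{-n}\Tr{A}$, we get $\iinner{\psi_U}{\psi_V}=d^{-n}\Tr{U^\dag V}$, so $|\iinner{\psi_U}{\psi_V}|^2=|\inner{V}{U}|^2$ for all unitaries $V$.

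From (ii) and (iii), $\max_{\phi\in STAB}|\iinner{\psi_U}{\phi}|^2\ge\max_{V\in Cl_n}|\inner{V}{U}|^2=1-\epsilon$. For the reverse inequality I would use the flat entanglement spectrum of stabilizer states: every pure stabilizer state $\ket{\phi}$ on $A\otimes B$ has equal Schmidt coefficients across the $A|B$ cut, so its Schmidt rank is $d^{k}$ for some $0\le k\le n$; writing out that Schmidt decomposition and applying Cauchy--Schwarz gives $|\iinner{\psi_U}{\phi}|^2\le d^{k}/d^{n}$. Hence, as long as $1-\epsilon>1/d$, the optimal $\phi$ must be maximally entangled, so $\phi=\proj{\psi_W}$ for some unitary $W$; and a maximally entangled \emph{stabilizer} state forces $W$ to map Paulis to Paulis, i.e.\ $W\in Cl_n$. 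Therefore $\max_{\phi\in STAB}|\iinner{\psi_U}{\phi}|^2=\max_{V\in Cl_n}|\inner{V}{U}|^2=1-\epsilon$ in the regime $\epsilon<(d-1)/d$ relevant to the theorem.

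Given this identification, I would apply Theorem~\ref{230613thm1} to the pure state $\psi_U$ (the statement is for $n$ qudits, but the argument is dimension-agnostic, so use it for $2n$ qudits). This yields $(1-\epsilon)^4\le\Tr{(J_U\boxtimes_H J_U)^2}\le 1-4\epsilon+O(\epsilon^2)$, and since the swap test produces $\text{Pr}_{\text{accep}}[U]=\frac{1}{2}\big[1+\Tr{(J_U\boxtimes_H J_U)^2}\big]$, the claimed bounds $\frac{1}{2}[1+(1-\epsilon)^4]\le\text{Pr}_{\text{accep}}[U]\le 1-2\epsilon+O(\epsilon^2)$ follow. Concretely one re-runs the proof of Theorem~\ref{230613thm1}: pick a Clifford $W$ on $A\otimes B$ with $\varphi:=W J_U W^\dag$ satisfying $\bra{\vec 0}\varphi\ket{\vec 0}=1-\epsilon$, use commutativity of $\boxtimes_H$ with Clifford unitaries (Lemma~\ref{lem:prop_HCon}), then get the lower bound from $\Tr{(\varphi\boxtimes_H\varphi)^2}\ge\bra{\vec 0}\varphi\boxtimes_H\varphi\ket{\vec 0}^2\ge\varphi(\vec 0)^4$ by Cauchy--Schwarz, and the upper bound by splitting $\sum_{(\vec p,\vec q)}|\Xi_\varphi(\vec p,\vec q)|^4$ over the stabilizer group $G=\set{(\vec p,\vec 0):\vec p\in\mathbb{Z}^{2n}_d}$ of $\ket{\vec 0}$ and its complement.

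The only genuinely new ingredient --- everything else being a transcription of the proof of Theorem~\ref{230613thm1} --- is the equality $\max_{\phi\in STAB}|\iinner{\psi_U}{\phi}|^2=\max_{V\in Cl_n}|\inner{V}{U}|^2$, i.e.\ that a stabilizer state nearest to the Choi state of $U$ may be taken to be the Choi state of a Clifford unitary, and I expect this to be the main obstacle. It rests on the flat Schmidt spectrum of stabilizer states and the characterization of maximally entangled stabilizer states, and is valid for $\epsilon<(d-1)/d$. For the upper bound alone it can even be bypassed, since the proof of Theorem~\ref{230613thm1} uses only $\bra{\vec 0}\varphi\ket{\vec 0}=1-\epsilon$, which already follows from $\Tr{J_U J_{V_0}}=1-\epsilon$ for the optimal Clifford $V_0$, together with $\sum_{\vec x\ne\vec 0}\varphi(\vec x)=\epsilon$.
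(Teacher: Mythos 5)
Your proposal is correct, but its key step is genuinely different from the paper's. The paper never establishes (or needs) the identity $\max_{\phi\in STAB}|\iinner{\psi_U}{\phi}|^2=\max_{V\in Cl_n}|\inner{V}{U}|^2$ that you single out as the main obstacle. Instead it only uses the existence of an optimal Clifford $V$ with $|\inner{V}{U}|^2=1-\epsilon$, observes that this equals $|\iinner{w(\vec 0,\vec 0)}{V^\dag U}|^2$ in the Bell basis $\ket{w(\vec p,\vec q)}=(w(\vec p,\vec q)\ot I)\ket{\mathrm{Bell}}$, and then re-runs the entire argument of Theorem~\ref{230613thm1} directly in that basis, with the stabilizer group $G=\set{((\vec p,\vec q),(-\vec p,\vec q))}$ of $\ket{\mathrm{Bell}}$ playing the role of the group of $\ket{\vec 0}$; the price is a new technical ingredient, Lemma~\ref{lem:unitray_Bell}, which computes the Bell-basis diagonal of $\rho\boxtimes_H\sigma$ as a classical convolution over the phase-space labels. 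Your route buys a cleaner reduction (a literal invocation of the state theorem after a Clifford rotation to the computational basis) at the cost of the extra lemma that a near-optimal stabilizer approximant of a Choi state may be taken to be the Choi state of a Clifford; your sketch of that lemma (flat Schmidt spectrum of stabilizer states gives $|\iinner{\psi_U}{\phi}|^2\le d^{k-n}$, forcing maximal entanglement when $1-\epsilon>1/d$, and a maximally entangled stabilizer state is $J_W$ with $W$ Clifford) is sound, and you correctly flag both the resulting restriction $\epsilon<1-1/d$ and the fact that it can be bypassed: the lower bound needs only $\max_{\phi\in STAB}|\iinner{\psi_U}{\phi}|^2\ge 1-\epsilon$ together with monotonicity of $(1-\epsilon)^4$, and the upper bound needs only the overlap with the single optimal Clifford Choi state. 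With that bypass your argument covers the full statement and is, in substance, the paper's proof organized through a Clifford change of basis rather than through Lemma~\ref{lem:unitray_Bell}.
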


The proof of Theorems \ref{thm:Clif_test_2}  and \ref{thm:Clif_test_d}  follows a similar approach as the state testing, but with some 
additional technique lemmas.  For the sake of completeness, we present the detailed proof in Appendix~\ref{appen:stab_test}.

\begin{Rem}
We briefly discuss the sample complexity and circuit depth that one would require to realize the convolution-swap test for qubit systems. The qudit case can be analyzed similarly.

Based on the results in the swap tests~\cite{barenco1997stabilization,BuhrmanPRL01,de2019quantum},  
estimating the success probability in \eqref{eq:prop_acc_2} within additive error $\varepsilon$ requires $O(1/\varepsilon^2)$ copies of the state $\boxtimes_3\psi$. For each state $\boxtimes_3\psi$, we need $3$ copies of the input state $\psi$. Hence,  estimation within additive error $\varepsilon$ requires a total number of state copies  $O(1/\varepsilon^2)$.

The circuit depth to realize the swap test is $O(1)$,  if one uses  parallelized controlled-SWAP gates.
Similarly,   the quantum convolution $\boxtimes_3$ also achieves
$O(1)$ circuit depth by employing parallelized CNOT gates, as detailed in Definition \ref{def:key_U}.
Hence, for each run of the convolution-swap test, the circuit complexity is $O(1)$ for using the CNOT gates and controlled-SWAP gates.

Finally, both the experiments for using swap-tests to 
measure entanglement entropy by Greiner and his group~\cite{Islam15},  and the high-fidelity realization of entangling gates by the Harvard-MIT-QuEra collaboration~\cite{evered2023highfidelity} suggest promising avenues for the potential experimental realization of our convolution-swap test. This could be a compelling direction for future work.
\end{Rem}

\section{Magic entropy on qudits and qubits}\label{sec:mag_en}
We introduce "magic entropy" as a measure of magic based on quantum convolutions.

\begin{Def}[\bf Magic entropy on qubits]\label{Def:mag_ent}
Given an $n$-qubit pure state $\psi$, the magic entropy is 
\begin{eqnarray}
\text{ME}(\psi)
=S\left(\boxtimes_{3}\psi\right).
\end{eqnarray}
\end{Def}

\begin{Def}[\bf Higher-order, R\'enyi magic entropy on qubits]
Given an $n$-qubit pure state $\psi$ and an integer $N\geq 1$ and $\alpha\in [0, +\infty]$, the order-$N$, $\alpha$-R\'enyi magic entropy is 
\begin{eqnarray}
\text{ME}^{(N)}_{\alpha}(\psi)
=S_{\alpha}\left(\boxtimes_{2N+1}\psi\right).
\end{eqnarray}
\end{Def}
For the case of $N=1$ and $\alpha=1$, it reduces to magic entropy in Definition~\ref{Def:mag_ent}.

\begin{prop}\label{prop:magi_ent}
The order-$N$, $\alpha$-R\'enyi magic entropy $\text{ME}^{(N)}_{\alpha}(\psi)$
satisfies:

(1) $0\leq \text{ME}^{(N)}_{\alpha}(\psi)\leq n$, and $\text{ME}^{(N)}_{\alpha}(\psi)=0$ iff $\psi$ is a stabilizer state.

(2) The $\text{ME}^{(N)}_{\alpha}$ is invariant under a Clifford unitary acting on states.

(3) $ME^{(N)}_{\alpha}(\psi)\leq ME^{(N+1)}_{\alpha}(\psi)$ for any $\alpha\in [0,\infty]$, and integer $N\geq 1$.

(4) For any integer $N\geq 1$, 
$
ME^{(N)}_{\alpha}(\psi)\geq ME^{(N)}_{\beta}(\psi), \forall \alpha\leq \beta.
$

(5) $\text{ME}^{(N)}_{\alpha}(\psi_1\ot \psi_2)=\text{ME}^{(N)}_{\alpha}(\psi_1)+\text{ME}^{(N)}_{\alpha}(\psi_2)$.
\end{prop}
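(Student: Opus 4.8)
\textbf{Proof proposal for Proposition~\ref{prop:magi_ent}.}

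The plan is to establish the five properties by reducing each to a known structural fact about the convolution $\boxtimes_{2N+1}$ from Theorem~\ref{thm:main_conv} (via Proposition~\ref{prop:3gen_all}, which lets us treat any $\boxtimes_K$ as an iterate of $\boxtimes_3$) together with standard properties of the quantum R\'enyi entropy. For part~(1), I would note that $\boxtimes_{2N+1}\psi$ is a state on $n$ qubits, so $0\le S_\alpha(\boxtimes_{2N+1}\psi)\le n\log 2$ (in the paper's normalization $\le n$); the vanishing characterization is exactly the purity-invariance statement, property~(4) of Theorem~\ref{thm:main_conv} (and the qudit analogue, Lemma~\ref{lem:prop_HCon}(3)): $S_\alpha$ of a state equals $0$ iff the state is pure, and $\boxtimes_{2N+1}\psi$ is pure iff $\psi$ is a stabilizer state. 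For part~(2), I would invoke the commutativity with Clifford unitaries (Theorem~\ref{thm:main_conv}(5)): $\boxtimes_{2N+1}(U\psi U^\dagger) = U_1(\boxtimes_{2N+1}\psi)U_1^\dagger$ for some Clifford $U_1$, and $S_\alpha$ is unitarily invariant, so $\mathrm{ME}^{(N)}_\alpha$ is unchanged.

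For part~(3), the monotonicity in $N$, the key is the majorization statement~\eqref{inedq:major_2}: writing $\boxtimes_{2(N+1)+1}\psi = \boxtimes_3(\psi\otimes\psi\otimes\boxtimes_{2N+1}\psi)$ via Proposition~\ref{prop:3gen_all}, majorization under convolution gives $\vec\lambda_{\boxtimes_{2N+3}\psi}\prec\vec\lambda_{\boxtimes_{2N+1}\psi}$, and since R\'enyi entropy is Schur-concave (it is a symmetric concave function of the eigenvalue vector), $\vec\lambda\prec\vec\mu$ implies $S_\alpha(\vec\lambda)\ge S_\alpha(\vec\mu)$, yielding $\mathrm{ME}^{(N)}_\alpha(\psi)\le\mathrm{ME}^{(N+1)}_\alpha(\psi)$. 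I should be slightly careful that the Schur-concavity of $S_\alpha$ holds for all $\alpha\in[0,\infty]$ including the endpoints $\alpha=0$ (log of rank) and $\alpha=\infty$ (min-entropy) — both are Schur-concave, so this is fine. Part~(4), monotonicity in $\alpha$, is simply the standard fact that $\alpha\mapsto S_\alpha(\rho)$ is non-increasing for any fixed state $\rho$ (here $\rho=\boxtimes_{2N+1}\psi$); this is classical and I would just cite it (or prove it in one line from the power-mean inequality applied to the eigenvalues).

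For part~(5), additivity under tensor products, I would first argue that the convolution factorizes across the tensor decomposition: $\boxtimes_{2N+1}(\psi_1\otimes\psi_2) = (\boxtimes_{2N+1}\psi_1)\otimes(\boxtimes_{2N+1}\psi_2)$. This follows from the convolution-multiplication duality (Proposition~\ref{230504prop1}): the characteristic function of a product state factorizes over the two tensor factors, and the phase $(-1)^{N\vec p\cdot\vec q}$ likewise splits as $(-1)^{N\vec p_1\cdot\vec q_1}(-1)^{N\vec p_2\cdot\vec q_2}$, so $\Xi_{\boxtimes_{2N+1}(\psi_1\otimes\psi_2)}$ equals the product of the two individual convolved characteristic functions, which uniquely identifies the state as the tensor product. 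Then $S_\alpha$ is additive on tensor products, giving the claim. The main obstacle — really the only point needing genuine care — is part~(3): I must make sure the majorization relation from Theorem~\ref{thm:main_conv}(3) is applied to the correct iterate and that Schur-concavity of $S_\alpha$ is invoked with the right direction of the inequality (majorized vector has larger entropy); everything else is a routine assembly of quoted facts.
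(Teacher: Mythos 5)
Your proposal is correct and follows essentially the same route as the paper: bounds and the vanishing characterization from purity invariance for (1), Clifford commutativity for (2), the decomposition $\boxtimes_{2N+3}\psi=\boxtimes_3(\psi\otimes\psi\otimes\boxtimes_{2N+1}\psi)$ combined with the majorization/entropy inequality of Theorem~\ref{thm:main_conv} for (3), standard monotonicity of $S_\alpha$ in $\alpha$ for (4), and tensor factorization of the convolution (which the paper simply asserts and you justify via convolution–multiplication duality) for (5). The only difference is that you supply slightly more detail than the paper at the steps it declares "direct."
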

\begin{proof}

(1)  $0\leq \text{ME}^{(N)}_{\alpha}(\psi)\leq n$ comes directly from the definition. $\text{ME}^{(N)}_{\alpha}(\psi)=0$ iff $\boxtimes_3\psi$ is a pure state,
iff $\psi$ is stabilizer state by the purity invariance of stabilizer states in Theorem~\ref{thm:main_conv}. 

(2) comes directly the commutativity of convolution with Clifford unitaries in Theorem~\ref{thm:main_conv}.

(3) comes directly from  the entropy inequality~\eqref{ineq:entopy_con} in Theorem~\ref{thm:main_conv}.

(4) comes directly from the monotonicity of quantum R\'enyi entropy.

(5) holds because 
$
\boxtimes_3(\psi_1\ot \psi_2,\psi_1\ot \psi_2,\psi_1\ot \psi_2) 
=\boxtimes_3(\psi_1\ot\psi_1\ot\psi_1)
\ot \boxtimes_3(\psi_2\ot\psi_2\ot\psi_2).
$
\end{proof}

\begin{Rem}
Similar to quantum R\'enyi entropy,
 we can also 
use the quantum $\alpha$-Tsallis entropy $$T_{\alpha}(\rho)=\frac{\Tr{\rho^\alpha}-1}{1-\alpha}$$
(or other Schur concave functions\footnote{A function $f:\real^n_+\to \real$ is 
Schur concave if $\vec p\prec \vec q$ implies that $f(\vec p)\geq f(\vec q)$.}) to quantify magic. For example, 
we define the order-$N$ $\alpha$-Tsallis magic entropy as 
$MT^{(N)}_{\alpha}(\psi)=T_{\alpha}(\boxtimes_{2N+1}\psi)$. It is easy to verify that 
it also satisfies $(1)-(4)$ for $\alpha\geq 1$ in Proposition~\ref{prop:magi_ent}.
Moreover, for $N=1$ and $\alpha=2$, the corresponding Tsallis magic entropy
$MT^{(1)}_2(\psi)$
is equivalent to $2(1-\text{Pr}_{\text{accep}}[\psi])$, where
$\text{Pr}_{\text{accep}}[\psi]$ is the probability 
of acceptance in the stabilizer testing  as given
in \eqref{eq:prop_acc_2}.

\end{Rem}

\begin{Exam}
Let us consider the $T$ state $\ket{T}=T\ket{+}=\frac{1}{\sqrt{2}}(\ket{0}+e^{i\pi/4}\ket{1})$.
Then the magic entropy of $\ket{T}$ is
\begin{eqnarray}
ME(\proj{T})=h(1/4),
\end{eqnarray}
where $h(x)=-x\log_2x-(1-x)\log_2(1-x)$ is the binary entropy.
In general, the order-$N$, $\alpha$-R\'enyi magic entropy is
\begin{eqnarray*}
ME^{(N)}_{\alpha}(\proj{T})=h_{\alpha}\left(\frac{1}{2}\left(1-2^{-N}\right)\right),
\end{eqnarray*}
where $h_{\alpha}(x)=\frac{1}{1-\alpha}\log[x^{\alpha}+(1-x)^{\alpha}]$ is the binary $\alpha$-R\'enyi entropy.
Hence, for $m$ copies of the $T$ state
\begin{eqnarray*}
ME^{(N)}_{\alpha}(\proj{T}^{\ot m})
=mh_{\alpha}\left(\frac{1}{2}\left(1-2^{-N}\right)\right).
\end{eqnarray*}
\end{Exam}

\begin{Exam}
Let us consider another magic state $\ket{H}$ with 
$\proj{H}=\frac{1}{2}\left(I+\frac{1}{\sqrt{3}}X+\frac{1}{\sqrt{3}}Y+\frac{1}{\sqrt{3}}Z\right)$. 
Then the magic entropy of $\ket{H}$ is 
\begin{eqnarray}
ME(\proj{H})=h(1/3).
\end{eqnarray}
In general, the order-$N$, $\alpha$-R\'enyi magic entropy is
\begin{eqnarray*}
ME^{(N)}_{\alpha}(\proj{H})=h_{\alpha}\left(\frac{1}{2}\left(1-3^{-N}\right)\right).
\end{eqnarray*}
Hence, for $m$ copies of the $H$ state
\begin{eqnarray*}
ME^{(N)}_{\alpha}(\proj{H}^{\ot m})
=mh_{\alpha}\left(\frac{1}{2}\left(1-3^{-N}\right)\right).
\end{eqnarray*}

\end{Exam}

\begin{Rem}
Note that in the theory of entanglement, the entanglement entropy depends on the spectrum of the reduced state, which is also known as the entanglement spectrum. Similarly, we can introduce the concept of the magic spectrum, which is defined as the spectrum of the self-convolution $\boxtimes\psi$. For instance, the magic spectrum of the state $\ket{T}$ is given by $(3/4, 1/4)$.  Further exploration of the properties and applications of the magic spectrum is left for future work.
\end{Rem}

\begin{prop}
Given an  $n$-qubit input state $\ket{\psi}$ and a quantum circuit $C_t$ consisting of Clifford unitaries and $t$ 1-qubit non-Clifford gates, 
the magic entropy of the output state $C_t\ket{\psi}$ satisfies
\begin{eqnarray}
ME\left(C_t\proj{\psi} C^\dag_t\right)\leq ME\left(\proj{\psi}\right)+2t.
\end{eqnarray}
\end{prop}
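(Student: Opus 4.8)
The plan is to bound the change in magic entropy under a single $1$-qubit non-Clifford gate by $2$, and then iterate $t$ times. The two key ingredients are part (2) of Proposition~\ref{prop:magi_ent} (invariance of $\mathrm{ME}$ under Clifford unitaries) and the subadditivity/continuity behavior of von Neumann entropy under the self-convolution channel. First I would write $C_t = U_{t+1} G_t U_t G_{t-1}\cdots G_1 U_1$, where each $U_j$ is Clifford and each $G_j$ is a $1$-qubit non-Clifford gate acting on some qubit. Using the Clifford-invariance of $\mathrm{ME}$, I can absorb all the Clifford layers and reduce to understanding how $\mathrm{ME}$ changes when we apply a single arbitrary $1$-qubit unitary $G$ to one qubit of a state: it suffices to prove
\begin{eqnarray*}
ME\left(G\proj{\psi}G^\dag\right)\leq ME\left(\proj{\psi}\right)+2 \;.
\end{eqnarray*}

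For this single-step bound, the natural approach is to control the von Neumann entropy of $\boxtimes_3(G\psi)$ in terms of that of $\boxtimes_3(\psi)$. Let $G$ act on qubit $1$, and write the self-convolution channel as acting on the three copies. A clean way is to note that $\boxtimes_3\psi = \Ptr{1^c}{V\,\psi^{\ot 3}\,V^\dag}$, and track how $V$ conjugates $G_1\ot G_1\ot G_1$. From Lemma~\ref{lem:weylact} the key unitary spreads a single-qubit operator on copy $1$ onto a product across the three copies; the point is that $G^{\ot 3}$ on qubit $1$ of all three copies, after conjugation by $V$, still acts on a bounded number of qubits — effectively the three ``qubit-$1$'' registers. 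Hence $\boxtimes_3(G\psi)$ is obtained from a state unitarily equivalent to one built from $\boxtimes_3\psi$ by a partial isometry/channel supported on at most $3$ qubits relative to the output; a dimension count then gives that the entropies differ by at most $\log_2 2^{?}$. The cleanest bookkeeping is: $S(\boxtimes_3(G\psi))$ and $S(\boxtimes_3\psi)$ differ by at most $2$ because the local ``defect'' introduced by the non-Clifford gate lives in a $2$-dimensional (single-qubit) factor entering multiplicatively through the convolution-multiplication duality (Proposition~\ref{230504prop1}), and a single such factor can change the spectrum of $\boxtimes_3\psi$ only up to tensoring with / mixing in a qubit's worth of entropy on each of two sides.

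An alternative, and perhaps more robust, route avoids the circuit structure and uses the stabilizer-R\'enyi / stabilizer-nullity type submultiplicativity of characteristic functions directly: by Proposition~\ref{230504prop1}, $\Xi_{\boxtimes_3\psi}(\vec p,\vec q) = (-1)^{N\vec p\cdot\vec q}\,\Xi_\psi(\vec p,\vec q)^3$, so $\boxtimes_3\psi$ is determined by the pointwise cube of $\Xi_\psi$; applying a single-qubit non-Clifford gate $G$ on qubit $j$ changes $\Xi_\psi(\vec p,\vec q)$ only through a $2\times 2$ block reorganization in the $j$-th tensor slot, which can increase the Shannon entropy of the distribution $\{|\Xi_\psi|^2/2^n\}$ (hence, after cubing and normalizing, of $\boxtimes_3\psi$) by at most $2\log_2 2 = 2$. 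Then iterating over the $t$ gates and invoking Clifford-invariance for the intermediate Clifford layers yields $ME(C_t\proj{\psi}C_t^\dag)\le ME(\proj{\psi}) + 2t$.

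The main obstacle I anticipate is making the ``a single $1$-qubit non-Clifford gate changes $\mathrm{ME}$ by at most $2$'' step fully rigorous: one must verify that the interaction of $G$ on qubit $j$ with the convolution $\boxtimes_3$ really is confined to a constant-dimensional subsystem, keeping careful track of the action of $V$ on the single-qubit factor across the three copies (Lemma~\ref{230609lem1}, Lemma~\ref{lem:weylact}) and then invoking a standard entropy-perturbation bound (e.g. $|S(\rho) - S(\sigma)| \le \log_2 d_{\mathrm{loc}}$ when $\rho,\sigma$ agree outside a $d_{\mathrm{loc}}$-dimensional factor, combined with concavity/subadditivity). If that confinement gives a $3$-qubit defect rather than the claimed $2$, one needs the sharper observation that only two of the three convolved copies survive the partial trace in a way that affects the output entropy — which is where the explicit form $\bra{\vec x}\boxtimes_3\varphi\ket{\vec x} = \sum_{\vec y}\varphi(\vec x+\vec y)\varphi(\vec y)\varphi(\vec x)$-type identity from the proof of Theorem~\ref{thm:stab_test_qub} should be used to pin down the exact dimensional cost as $2$.
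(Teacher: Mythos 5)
Your reduction matches the paper's: by Clifford invariance of $\mathrm{ME}$ it suffices to show that a single $1$-qubit non-Clifford gate $g$, acting on (say) qubit $1$, changes the magic entropy by at most $2$. The gap is in your justification of that single-gate step. The mechanism is not that the ``defect'' of $g$ is confined to a constant-dimensional subsystem in some unspecified spectral sense, and it is certainly not about the distribution $\set{|\Xi_\psi(\vec p,\vec q)|^2/2^n}$ --- the Shannon entropy of that distribution is a stabilizer-R\'enyi-type quantity, not the von Neumann entropy of $\boxtimes_3\psi$, so your ``alternative route'' conflates two different entropies and does not go through. The precise fact you need, and never state, is this: because the key unitary factorizes qubit-wise, $V=U^{\ot n}$, the reduced states of $\boxtimes_3\left(g_1\proj{\psi}g_1^\dag\right)$ and of $\boxtimes_3\proj{\psi}$ on qubits $2,\dots,n$ are \emph{identical} --- tracing out qubit $1$ of the output amounts to a full trace over the three copies of qubit $1$, which annihilates the unitary $g^{\ot 3}$ acting there.

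Once that is in hand, the bound of $2$ follows from a two-sided entropy estimate applied to each state separately: for a state $\rho_{AB}$ with $A$ a single qubit, the Araki--Lieb inequality gives $S(\rho_B)-S(\rho_{AB})\le S(\rho_A)\le 1$ and subadditivity gives $S(\rho_{AB})-S(\rho_B)\le S(\rho_A)\le 1$, hence $|S(\rho_{AB})-S(\rho_B)|\le 1$; applying this to both $\boxtimes_3(g_1\proj{\psi}g_1^\dag)$ and $\boxtimes_3\proj{\psi}$ and using that their $B$-marginals coincide yields a difference of at most $2$. Note that your quoted perturbation bound, $|S(\rho)-S(\sigma)|\le\log_2 d_{\mathrm{loc}}$ when the states agree outside a $d_{\mathrm{loc}}$-dimensional factor, is off by a factor of two: the correct constant is $2\log_2 d_{\mathrm{loc}}$, and that factor of two is exactly where the $2t$ in the statement comes from (one unit from Araki--Lieb, one from subadditivity), not from ``two of the three convolved copies surviving the partial trace,'' nor from the diagonal-entry identity you cite, which carries no spectral information.
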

\begin{proof}
Since magic entropy is invariant under the action of a Clifford unitary, we only need to prove the effect on entropy of the action of a single  1-qubit,
non-Clifford gate $g$. Without loss
of generality, let us assume that the 1-qubit gate $g$ acts on the first qubit, denoted as $g_1$.
We only need to show
\begin{eqnarray*}
ME(g_1\proj{\psi} g^\dag_1)\leq
ME(\proj{\psi})+2.
\end{eqnarray*}
Since the reduced states of $\boxtimes_3g_1\proj{\psi} g^\dag_1$ and 
$\proj{\psi}$ are the same on the $(n-1)$-qubit system, then 
$S\left(\Ptr{1}{\boxtimes_3g_1\proj{\psi} g^\dag_1}\right)=S\left(\Ptr{1}{\boxtimes_3\proj{\psi}}\right)$. 
By the Araki-Lieb triangle inequality~\cite{araki1970entropy}, i.e., $S(\rho_{AB})\geq |S(\rho_A)-S(\rho_B)|$ and subadditivity of entropy, i.e., $S(\rho_{AB})\leq S(\rho_A)+S(\rho_B)$ , we have 
\begin{eqnarray*}
\left|S\left(\boxtimes_3g_1\proj{\psi} g^\dag_1\right)-S\left(\Ptr{1}{\boxtimes_3g_1\proj{\psi} g^\dag_1}\right)\right|
\leq S\left(\Ptr{1^c}{\boxtimes_3g_1\proj{\psi} g^\dag_1}\right)
\leq 1,
\end{eqnarray*}
and
\begin{eqnarray*}
\left|S\left(\boxtimes_3\proj{\psi} \right)-S\left(\Ptr{1}{\boxtimes_3\proj{\psi} }\right)\right|
\leq S\left(\Ptr{1^c}{\boxtimes_3\proj{\psi} }\right)
\leq 1.
\end{eqnarray*}
Hence, we have
$\left|S\left(\boxtimes_3g_1\proj{\psi} g^\dag_1\right)-S\left(\boxtimes_3\proj{\psi} \right)\right|\leq 2$.
\end{proof}

The relative entropy of magic was introduced 
in \cite{Veitch14} as the $\min_{\sigma\in \text{STAB}} D(\rho||\sigma)$
where $STAB$ is the set of quantum states, which can be written as the convex combination 
of pure stabilizer states.
In this subsection, we provide a modification of (R\'enyi) relative entropy by 
replacing the set $\text{STAB}$ by the set $\text{MSPS}$ in Defintion~\ref{def:MSPS}.

\begin{Def}[\bf Quantum  R\'enyi relative entropy \cite{Hiai11,Martin13}]
Given two quantum states $\rho$ and $\sigma$, the quantum R\'enyi relative entropy of $\rho$ with respect to $\sigma$ is 
\begin{eqnarray*}
    D_{\alpha}(\rho||\sigma)=\frac{1}{\alpha-1}\log\Tr{\left(\sigma^{\frac{1-\alpha}{2\alpha}}\rho\sigma^{\frac{1-\alpha}{2\alpha}}\right)^{\alpha}} \;,
\end{eqnarray*}
where $\alpha\in[0,+\infty]$.
\end{Def}
For example, $\lim_{\alpha\to 1}D_{\alpha}(\rho||\sigma)=D(\rho||\sigma)=\Tr{\rho\log \rho}-\Tr{\rho\log \sigma}$\footnote{A nice formula for the quantum relative entropy on von Neuman algebra was obtained in \cite{araki1976relative} using modular operator.}, and 
$\lim_{\alpha\to\infty}D_{\alpha}(\rho||\sigma)=D_{\infty}(\rho||\sigma)=\min\set{\lambda:\rho\leq 2^{\lambda}\sigma}$.
Note that quantum R\'enyi entropy $D_{\alpha}$ is additive under tensor product, and is monotone under quantum channels for $\alpha\geq 1/2$ \cite{Tomamichel2015quantum}. 

\begin{Def}[Modified R\'enyi relative entropy of magic]\label{Def:MRM}
Given a quantum state $\rho$ and $\alpha\in [0,+\infty]$, the modified R\'enyi relative entropy of magic is
\begin{eqnarray}
MRM_{\alpha}(\rho)
=\min_{\sigma\in MSPS}
D_{\alpha}(\rho||\sigma).
\end{eqnarray}

\end{Def}

\begin{prop}
The modified R\'enyi relative entropy of magic
satisfies the following properties:

(1) $0\leq MRM_{\alpha}(\rho)\leq n$, $MRM_{\alpha}(\rho)=0$ iff $\rho$ is an MSPS;

(2) $MRM_{\alpha}$ is invariant under Clifford unitaries;

(3) For $\alpha=1$ or $+\infty$, $MRM_{\alpha}$ is nonincreasing under stabilizer measurement $\set{I\ot \proj{\vec x}}_{\vec x}$, that is, 
\begin{eqnarray}
\sum_{\vec x}p_{\vec x}MRM_{\alpha}(\rho_{\vec x}) 
\leq MRM_{\alpha}(\rho),
\end{eqnarray}
where $p_{\vec x}=\Tr{\rho I\ot \proj{\vec x}}$, and 
$\rho_{\vec x}=I\ot \proj{\vec x}\rho I\ot \proj{\vec x}/p_{\vec x}$.

\end{prop}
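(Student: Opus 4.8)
The plan is to obtain (1) and (2) directly from structural properties of the sandwiched R\'enyi divergence $D_\alpha$ and of the set $\mathrm{MSPS}$, and to obtain (3) as a data-processing statement for a suitable measurement channel. For (1): I will use that $D_\alpha(\rho\|\sigma)\ge 0$ whenever $\rho,\sigma$ are states, with $D_\alpha(\rho\|\sigma)=0$ iff $\rho=\sigma$ (for $\alpha>0$); minimizing over $\sigma\in\mathrm{MSPS}$ then gives $MRM_\alpha(\rho)\ge 0$, with equality precisely when the minimizing MSPS equals $\rho$, i.e.\ when $\rho$ is itself an MSPS. For the upper bound I will test the definition against the single MSPS $I/d^n$ (the normalized minimal projection of the trivial subgroup $S=\{w(\vec 0,\vec 0)\}$, whose algebra $C^*(S)=\mathbb{C}I$ has $I$ as its minimal projection); since $(I/d^n)^{t}$ is a scalar multiple of the identity for every $t$, substitution collapses $D_\alpha(\rho\|I/d^n)$ to $\log d^n-S_\alpha(\rho)\le\log d^n$ ($=n$ for qubits), which gives $MRM_\alpha(\rho)\le n$. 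For (2): conjugation by a Clifford $U$ maps Pauli operators to Pauli operators, hence maps $C^*(S)$ onto $C^*(USU^\dagger)$, carrying minimal stabilizer projections to minimal stabilizer projections and preserving their dimensions; thus $U(\cdot)U^\dagger$ permutes $\mathrm{MSPS}$, and combined with the unitary invariance $D_\alpha(U\rho U^\dagger\|U\sigma U^\dagger)=D_\alpha(\rho\|\sigma)$ this yields $MRM_\alpha(U\rho U^\dagger)=MRM_\alpha(\rho)$.

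For (3), let $\sigma^\star\in\mathrm{MSPS}$ attain $MRM_\alpha(\rho)=D_\alpha(\rho\|\sigma^\star)$, and write $\Pi_{\vec x}=I\otimes\proj{\vec x}$, $q_{\vec x}=\Tr{\sigma^\star\Pi_{\vec x}}$, $\sigma^\star_{\vec x}=\Pi_{\vec x}\sigma^\star\Pi_{\vec x}/q_{\vec x}$. The strategy is: (i) show each $\sigma^\star_{\vec x}$ is again an MSPS, so it is a feasible reference and $MRM_\alpha(\rho_{\vec x})\le D_\alpha(\rho_{\vec x}\|\sigma^\star_{\vec x})$; and (ii) show $\sum_{\vec x}p_{\vec x}D_\alpha(\rho_{\vec x}\|\sigma^\star_{\vec x})\le D_\alpha(\rho\|\sigma^\star)$. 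For (ii) with $\alpha=1$, append a classical register recording the outcome, so that the measurement becomes the channel $\mathcal{N}(\tau)=\sum_{\vec x}\Pi_{\vec x}\tau\Pi_{\vec x}\otimes\proj{\vec x}$; the chain rule for classical-quantum states gives $D(\mathcal{N}(\rho)\|\mathcal{N}(\sigma^\star))=D(\{p_{\vec x}\}\|\{q_{\vec x}\})+\sum_{\vec x}p_{\vec x}D(\rho_{\vec x}\|\sigma^\star_{\vec x})\ge\sum_{\vec x}p_{\vec x}D(\rho_{\vec x}\|\sigma^\star_{\vec x})$, and monotonicity of $D$ under $\mathcal{N}$ bounds the left side by $D(\rho\|\sigma^\star)$. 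For (ii) with $\alpha=\infty$, from $\rho\le 2^{\lambda}\sigma^\star$ with $\lambda=D_\infty(\rho\|\sigma^\star)$ one gets $p_{\vec x}\rho_{\vec x}\le 2^{\lambda}q_{\vec x}\sigma^\star_{\vec x}$, hence $D_\infty(\rho_{\vec x}\|\sigma^\star_{\vec x})\le\lambda+\log(q_{\vec x}/p_{\vec x})$, and averaging with weights $p_{\vec x}$ gives $\sum_{\vec x}p_{\vec x}D_\infty(\rho_{\vec x}\|\sigma^\star_{\vec x})\le\lambda-D(\{p_{\vec x}\}\|\{q_{\vec x}\})\le\lambda$.

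The main obstacle is step (i): verifying that the stabilizer measurement $\{I\otimes\proj{\vec x}\}_{\vec x}$ sends an MSPS to an MSPS. I plan to argue this in the stabilizer formalism. Write $\sigma^\star=P/d^{\,n-s}$, with $P$ the minimal projection of an abelian Pauli group $S$ having $s$ independent generators, and let $S_0\subseteq S$ be the subgroup of elements with trivial $X$-support on the measured qudits (equivalently, those commuting with every $Z_i$ in the measured register). Since measuring these $Z_i$'s annihilates the off-diagonal contributions coming from $S\setminus S_0$, a direct computation should identify $\Pi_{\vec x}P\Pi_{\vec x}/\Tr{\Pi_{\vec x}P}$ with the normalized minimal projection of the abelian Pauli group generated by $S_0$ together with the measured $Z_i$'s (with $\pm$ signs fixed by $\vec x$); this group is abelian because $S_0$ commutes with those $Z_i$'s, so its normalized minimal projections are genuine MSPS. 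Two loose ends I will also tidy up: if $p_{\vec x}>0$ while $q_{\vec x}=0$ then $D_\alpha(\rho_{\vec x}\|\sigma^\star_{\vec x})=+\infty$ and the bound is vacuous (or one perturbs $\sigma^\star$ towards the full-rank $I/d^n$ and takes a limit), and for $\alpha=1$ finiteness of $D(\rho\|\sigma^\star)$ at the optimum guarantees $\mathrm{supp}(\rho_{\vec x})\subseteq\mathrm{supp}(\sigma^\star_{\vec x})$. I expect the rank- and character-bookkeeping in the stabilizer-formalism step to be the only real work; the divergence inequalities are standard facts about $D_1$ and $D_\infty$ already recalled in the excerpt.
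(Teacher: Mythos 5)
Your proposal is correct and, for part (3), structurally parallel to the paper's argument: both reduce the claim to (a) the fact that the post-measurement versions of the reference MSPS are again feasible MSPSs, and (b) the inequality $\sum_{\vec x}p_{\vec x}D_\alpha(\rho_{\vec x}\|\sigma_{\vec x})\le D_\alpha(\rho\|\sigma)$ for $\alpha\in\{1,\infty\}$. The difference is in how the ingredients are supplied. The paper uses Lemma~\ref{lem:rel_mag} to take the optimizer to be the mean state $\mathcal{M}(\rho)$ and cites \cite{VedralPRA98,DattaIEEE09} for (b), leaving (a) implicit; you work with an arbitrary optimizer $\sigma^\star$, prove (b) from scratch (chain rule for classical--quantum states plus data processing at $\alpha=1$; the operator inequality $p_{\vec x}\rho_{\vec x}\le 2^{\lambda}q_{\vec x}\sigma^\star_{\vec x}$ at $\alpha=\infty$ — both computations are correct, including the $-D(\{p_{\vec x}\}\|\{q_{\vec x}\})\le 0$ slack), and you rightly flag (a) as the only step requiring real work. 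Your stabilizer-formalism sketch of (a) is sound: writing the MSPS as $\frac{1}{d^n}\sum_{g\in S}\overline{\lambda(g)}\,g$, conjugation by $I\ot\proj{\vec x}$ annihilates every $g$ whose restriction to the measured register has nontrivial $X$-part, and what survives is, after normalization, the minimal projection of the abelian group generated by the surviving subgroup $S_0$ and the measured $Z$ operators with signs fixed by $\vec x$. Parts (1) and (2) are the paper's "directly from the definition" with the details filled in; the only caveats, shared with the paper, are that the "iff" in (1) needs $\alpha$ in a range where $D_\alpha$ is positive definite, and that your upper bound reads $n\log d$ in the qudit case.
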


\begin{proof} 
One infers 
(1) and (2) directly from Definition~\ref{Def:MRM}. 
To prove (3), note that from Lemma \ref{lem:rel_mag} one has, for any $\alpha= 1 ~\text{or} ~+\infty$, that
$
MRM_{\alpha}(\rho)
=D_{\alpha}(\rho||\mathcal{M}(\rho))
=S_{\alpha}(\mathcal{M}(\rho))-S_{\alpha}(\rho)
$.
Hence
\begin{eqnarray*}
\sum_{\vec x}p_{\vec x}MRM_{\alpha}(\rho_{\vec x})
\leq \sum_{\vec x}
p_{\vec x}D_{\alpha}(\rho_{\vec x}||\mathcal{M}(\rho)_{\vec x})
\leq D_{\alpha}(\rho||\mathcal{M}(\rho)),
\end{eqnarray*}
where the last inequality comes from the result in \cite{VedralPRA98,DattaIEEE09}  that 
$
\sum_{\vec x}p_{\vec x}D_{\alpha}(\rho_{\vec x}||\mathcal{M}(\rho)_{\vec x})
\leq D_\alpha(\rho||\mathcal{M}(\rho))
$ when $\alpha=1$ or $ +\infty$.
\end{proof}

\begin{lem}[\cite{BGJ23a,BGJ23b}]\label{lem:rel_mag}
Given an $n$-qudit state $\rho$ for any integer $d\geq 2$ and $\alpha\in [1,+\infty]$, one has  
\begin{align*}
MRM_{\alpha}(\rho)=\min_{\sigma\in MSPS}
D_{\alpha}(\rho||\sigma)
=D_{\alpha}(\rho||\mathcal{M}(\rho))=S_{\alpha}(\mathcal{M}(\rho))-S_{\alpha}(\rho).
\end{align*}
That is, $\mathcal{M}(\rho)$ is closest MSPS to the given state $\rho$ w.r.t. R\'enyi relative entropy $D_{\alpha}$
for any $\alpha\geq 1$. 
\end{lem}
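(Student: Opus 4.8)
The plan is to show that the mean state $\mathcal{M}(\rho)$ is simultaneously a member of $\mathrm{MSPS}$ and the unique minimizer of $D_\alpha(\rho\,\|\,\sigma)$ over that set, and then to evaluate $D_\alpha(\rho\,\|\,\mathcal{M}(\rho))$ explicitly. First I would establish the structural fact that $\mathcal{M}(\rho)$ is an MSPS: by the definition of the mean state, $\Xi_{\mathcal{M}(\rho)}$ is supported on the set $T=\{(\vec p,\vec q): |\Xi_\rho(\vec p,\vec q)|=1\}$, and on that set $|\Xi_{\mathcal{M}(\rho)}|=1$. One checks that $T$ is a subgroup of $V^n$ on which the characteristic function has modulus one, which forces $\mathcal{M}(\rho)$ to be (a multiple of) a projection onto a common eigenspace of the corresponding abelian Pauli subgroup — i.e. a minimal stabilizer projection normalized by its dimension. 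This is essentially the content of the mean-state analysis in \cite{BGJ23a,BGJ23b}, so I would cite it, but I would spell out why the support is a group (it follows from $|\Xi_\rho(\vec p,\vec q)|=1$ meaning $w(\vec p,\vec q)$ stabilizes the support projector up to phase, and products of such operators again have this property by \eqref{0106shi2}-\eqref{0106shi2_2}).

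Next I would compute $D_\alpha(\rho\,\|\,\mathcal{M}(\rho))$. The key algebraic observation is that $\mathcal{M}(\rho)$ commutes with $\rho$ — indeed, writing both in the Pauli basis, $\mathcal{M}(\rho)$ is a uniform mixture over an abelian group of Pauli operators that all commute with $\rho$ (since they stabilize it up to phase), and moreover $\mathcal{M}(\rho)$ acts as a scalar on the support of $\rho$. Concretely, if $P$ is the minimal stabilizer projection with $\mathcal{M}(\rho) = P/\mathrm{rank}(P)$, then $\rho = P\rho P$, so $\rho \le \mathrm{rank}(P)\,\mathcal{M}(\rho)$ with equality "in direction." Because $\rho$ and $\mathcal{M}(\rho)$ commute, all the sandwiched R\'enyi divergences collapse to the classical Renyi divergence of their joint eigenvalues, and a direct computation gives $D_\alpha(\rho\,\|\,\mathcal{M}(\rho)) = S_\alpha(\mathcal{M}(\rho)) - S_\alpha(\rho)$ for every $\alpha\ge 1$; note $S_\alpha(\mathcal{M}(\rho)) = \log\mathrm{rank}(P)$ is $\alpha$-independent. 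The tensor-power formula for $\ket{T}$ in the examples is consistent with this.

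The remaining and genuinely substantive step — the main obstacle — is the optimality: that no other $\sigma\in\mathrm{MSPS}$ does better. For $\alpha=1$ I would use the standard decomposition argument: for any MSPS $\sigma = Q/\mathrm{rank}(Q)$ with associated abelian group $S_\sigma$, one has $D(\rho\,\|\,\sigma) = D(\rho\,\|\,\mathcal{M}_{S_\sigma}(\rho)) + D(\mathcal{M}_{S_\sigma}(\rho)\,\|\,\sigma)$ where $\mathcal{M}_{S_\sigma}$ is the pinching onto the $S_\sigma$-blocks, together with the fact that $D(\rho\,\|\,\mathcal{M}_{S_\sigma}(\rho)) \ge D(\rho\,\|\,\mathcal{M}(\rho))$ because $\mathcal{M}(\rho)$ corresponds to the finest such pinching (if $\sigma$ is to have $D(\rho\|\sigma)<\infty$ its support must contain that of $\rho$, which constrains $S_\sigma$). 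For general $\alpha\ge 1$ the clean additive Pythagorean identity fails, so instead I would invoke monotonicity of $D_\alpha$ under the pinching channel $\mathcal{P}_{S_\sigma}$ (valid for $\alpha\ge 1/2$, \cite{Tomamichel2015quantum}): $D_\alpha(\rho\,\|\,\sigma) = D_\alpha(\mathcal{P}_{S_\sigma}(\rho)\,\|\,\sigma) \ge$ ... reducing to the commuting case and then to a finite-dimensional optimization over which minimal projection to use, which is won by $\mathcal{M}(\rho)$ since it is the projection of smallest rank whose range still contains $\mathrm{supp}(\rho)$. I would present this reduction carefully, as it is where the "$\alpha\ge 1$" hypothesis and the minimality built into the definition of MSPS both get used.
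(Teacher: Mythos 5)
The paper does not actually prove this lemma --- it is imported verbatim from \cite{BGJ23a,BGJ23b} --- so there is no in-paper argument to compare against; I am judging your proposal on its own. Your first two steps are sound: the set $T=\{(\vec p,\vec q):|\Xi_\rho(\vec p,\vec q)|=1\}$ is an abelian subgroup because $|\Tr{\rho\, w}|=1$ forces $w\rho=\chi(\theta_w)\rho$, these relations compose, and the phases must be consistent; hence $\mathcal{M}(\rho)=P/\mathrm{rank}(P)$ for the minimal stabilizer projection $P$ associated with $T$, with $\mathrm{supp}(\rho)\subseteq\mathrm{range}(P)$. The evaluation $D_\alpha(\rho\|\mathcal{M}(\rho))=\log\mathrm{rank}(P)-S_\alpha(\rho)=S_\alpha(\mathcal{M}(\rho))-S_\alpha(\rho)$ then follows from $\sigma^{\frac{1-\alpha}{2\alpha}}\rho\,\sigma^{\frac{1-\alpha}{2\alpha}}=\mathrm{rank}(P)^{\frac{\alpha-1}{\alpha}}\rho$ exactly as you say.

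The flaw is in your optimality step. The pinching identity you invoke, $D_\alpha(\rho\|\sigma)=D_\alpha(\mathcal{P}_{S_\sigma}(\rho)\|\sigma)$, is not an identity for the sandwiched divergence; data processing only gives $D_\alpha(\rho\|\sigma)\ge D_\alpha(\mathcal{P}_{S_\sigma}(\rho)\|\sigma)$, and since $S_\alpha(\mathcal{P}_{S_\sigma}(\rho))\ge S_\alpha(\rho)$ the resulting lower bound $\log\mathrm{rank}(Q)-S_\alpha(\mathcal{P}_{S_\sigma}(\rho))$ can fall \emph{below} the target $\log\mathrm{rank}(Q)-S_\alpha(\rho)$, so the chain does not close. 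The repair is to drop the pinching entirely: every competitor $\sigma\in\mathrm{MSPS}$ is itself a normalized projection $Q/\mathrm{rank}(Q)$, and for $\alpha\ge 1$ you may discard all $\sigma$ with $\mathrm{supp}(\rho)\not\subseteq\mathrm{range}(Q)$ since those give $D_\alpha=+\infty$. For the remaining ones $Q\rho Q=\rho$, so $\sigma$ commutes with $\rho$ and the \emph{same} computation as in your step two yields $D_\alpha(\rho\|\sigma)=\log\mathrm{rank}(Q)-S_\alpha(\rho)$ with no monotonicity argument needed. The minimization thus reduces to minimizing $\mathrm{rank}(Q)$ over admissible stabilizer projections, and since $Q\rho Q=\rho$ forces the abelian group $S_Q$ associated with $Q$ to satisfy $S_Q\subseteq T$, one gets $\mathrm{rank}(Q)=d^n/|S_Q|\ge d^n/|T|=\mathrm{rank}(P)$, which is exactly the minimality of $\mathcal{M}(\rho)$ you wanted. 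With that replacement your proof is complete and correct.
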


\begin{prop}
Given an $n$-qubit pure state $\psi$ and $\alpha\in [1,+\infty]$, 
\begin{eqnarray}\label{ineq:mon}
ME^{(N)}_{\alpha}(\psi)\leq MRM_{\alpha}(\psi),
\end{eqnarray}
and 
\begin{eqnarray}\label{ineq:lim}
ME^{(N)}_{\alpha}(\psi)\xrightarrow{N\to\infty} MRM_{\alpha}(\psi).
\end{eqnarray}
\end{prop}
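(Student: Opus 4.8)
The plan is to reduce both inequalities to a single statement about $S_\alpha(\boxtimes_{2N+1}\psi)$ and then control that quantity using the entropy monotonicity and the quantum central limit theorem of Theorem~\ref{thm:main_conv}. First I would apply Lemma~\ref{lem:rel_mag}: since $\psi$ is pure, $S_\alpha(\psi)=0$, so for $\alpha\in[1,+\infty]$ one has $MRM_\alpha(\psi)=D_\alpha(\psi\,\|\,\mathcal{M}(\psi))=S_\alpha(\mathcal{M}(\psi))$. Thus \eqref{ineq:mon} is equivalent to $S_\alpha(\boxtimes_{2N+1}\psi)\le S_\alpha(\mathcal{M}(\psi))$ and \eqref{ineq:lim} to $S_\alpha(\boxtimes_{2N+1}\psi)\to S_\alpha(\mathcal{M}(\psi))$ as $N\to\infty$.

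Next I would establish monotonicity in $N$. By Proposition~\ref{prop:3gen_all}, $\boxtimes_{2N+3}\psi=\boxtimes_3(\psi\otimes\psi\otimes\boxtimes_{2N+1}\psi)$, so the entropy inequality \eqref{ineq:entopy_con} of Theorem~\ref{thm:main_conv} gives $S_\alpha(\boxtimes_{2N+3}\psi)\ge S_\alpha(\boxtimes_{2N+1}\psi)$. Hence $N\mapsto ME^{(N)}_\alpha(\psi)=S_\alpha(\boxtimes_{2N+1}\psi)$ is nondecreasing and bounded above by $n$ (property~(1) of Proposition~\ref{prop:magi_ent}), so $\lim_{N\to\infty}ME^{(N)}_\alpha(\psi)$ exists. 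Once this limit is identified with $MRM_\alpha(\psi)$, inequality \eqref{ineq:mon} follows immediately, since a nondecreasing sequence lies below its limit; so the remaining work is the limit \eqref{ineq:lim}.

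For the limit I would invoke the quantum central limit theorem \eqref{230509shi2} with $K=2N+1$:
\begin{equation*}
\norm{\boxtimes_{2N+1}\psi-\mathcal{M}(\psi^{\#})}_2\le \bigl(1-MG(\psi)\bigr)^{2N}\,\norm{\psi-\mathcal{M}(\psi)}_2 .
\end{equation*}
If $\psi$ is not a stabilizer state, the set $\{(\vec p,\vec q):0<|\Xi_\psi(\vec p,\vec q)|<1\}$ is nonempty and, because $V^n$ is finite, $MG(\psi)>0$, so the right-hand side tends to $0$; if $\psi$ is a pure stabilizer state, then $MG(\psi)=0$ but $\mathcal{M}(\psi)=\psi$, so the right-hand side is already $0$. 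Either way $\boxtimes_{2N+1}\psi\to\mathcal{M}(\psi^{\#})$ in Hilbert--Schmidt norm. To sidestep the transpose in $\psi^{\#}$, I would pass to the subsequence of even $N=2M$, for which $\psi^{\#}=\psi$, obtaining $\boxtimes_{4M+1}\psi\to\mathcal{M}(\psi)$. Eigenvalues of Hermitian operators are continuous with respect to the Hilbert--Schmidt norm (Weyl's inequality), and for every $\alpha\in[0,+\infty]$ the R\'enyi entropy is a continuous function of the eigenvalue vector (for $\alpha=+\infty$, $S_\infty(\rho)=-\log\lambda_{\max}(\rho)$), so $S_\alpha(\boxtimes_{4M+1}\psi)\to S_\alpha(\mathcal{M}(\psi))$. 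Since the full sequence $(ME^{(N)}_\alpha(\psi))_N$ converges by the previous paragraph, its limit equals that of this subsequence, namely $S_\alpha(\mathcal{M}(\psi))=MRM_\alpha(\psi)$, which proves \eqref{ineq:lim}; then \eqref{ineq:mon} follows as explained.

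The difficulty here is bookkeeping rather than a deep point. The two things to watch are (i) the applicability of the quantum central limit theorem, i.e.\ that $MG(\psi)>0$ whenever $\psi$ is not a pure stabilizer state, which is exactly the finiteness-of-$V^n$ observation above, with the stabilizer case disposed of by $\mathcal{M}(\psi)=\psi$; and (ii) the alternating transpose $\psi^{\#}$ in property~(7) of Theorem~\ref{thm:main_conv}, which is removed for free by restricting to even $N$ together with the fact that the full sequence of entropies already converges by monotonicity. One should also verify continuity of $S_\alpha$ as a function of the spectrum at the endpoints $\alpha=1$ and $\alpha=+\infty$ (it holds in both cases).
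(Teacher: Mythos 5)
Your argument is correct, but it routes the inequality \eqref{ineq:mon} differently from the paper. The paper proves \eqref{ineq:mon} directly for each fixed $N$: applying Lemma~\ref{lem:rel_mag} to the state $\boxtimes_{2N+1}\psi$ itself gives $0\le D_{\alpha}(\boxtimes_{2N+1}\psi\,\|\,\mathcal{M}(\boxtimes_{2N+1}\psi))=S_{\alpha}(\mathcal{M}(\boxtimes_{2N+1}\psi))-S_{\alpha}(\boxtimes_{2N+1}\psi)$, and then Lemma~\ref{lem:mean_con} identifies $\mathcal{M}(\boxtimes_{2N+1}\psi)=\mathcal{M}(\psi^{\#})$, whose spectrum (hence $S_\alpha$) agrees with that of $\mathcal{M}(\psi)$; since $S_\alpha(\psi)=0$ this is exactly $ME^{(N)}_{\alpha}(\psi)\le S_\alpha(\mathcal{M}(\psi))=MRM_\alpha(\psi)$, with no limit involved. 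You instead obtain \eqref{ineq:mon} as a corollary of monotonicity in $N$ plus the limit \eqref{ineq:lim}, which is valid but makes the inequality depend on the central limit theorem; the paper's route is more economical and keeps the two claims logically independent. For the limit both proofs use the quantum CLT, and here your write-up is actually more complete than the paper's one-line citation: you verify that $MG(\psi)>0$ for non-stabilizer pure states (correctly, via the Cauchy--Schwarz characterization of $|\Xi_\psi|=1$ and finiteness of $V^n$), dispose of the stabilizer case separately, and handle the alternating transpose $\psi^{\#}$ by passing to a subsequence — all points the paper glosses over. The only cosmetic simplification available is that the subsequence trick is unnecessary: $\mathcal{M}(\psi^T)=\mathcal{M}(\psi)^T$ has the same spectrum as $\mathcal{M}(\psi)$, so $S_\alpha(\mathcal{M}(\psi^{\#}))=S_\alpha(\mathcal{M}(\psi))$ for all $N$ and the full sequence converges directly.
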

\begin{proof}
By Lemma \ref{lem:rel_mag}, we have $0\leq D_{\alpha}(\boxtimes_{2N+1}\psi||\mathcal{M}(\boxtimes_{2N+1}\psi))
=S_{\alpha}(\mathcal{M}(\boxtimes_{2N+1}\psi))-S_{\alpha}(\boxtimes_{2N+1}\psi)
$. By Lemma \ref{lem:mean_con}, we have $S_{\alpha}(\mathcal{M}(\boxtimes_{2N+1}\psi))=S_{\alpha}(\mathcal{M}(\psi))$.
Hence, we obtain the inequality \eqref{ineq:mon}.
 The limit \eqref{ineq:lim} comes from the quantum central limit theorem in Theorem~\ref{thm:main_conv} and the continuity of quantum R\'enyi divergence for $\alpha\geq 1$ \cite{Martin13}.
 \end{proof}

\begin{prop}
Given an  input state $\rho$ and a quantum circuit $C_t$,  consisting of Clifford unitaries and $t$ magic T gates, 
the $MRM_{\alpha}(\rho)$ with $\alpha\geq 1$ of the output state $C_t\rho C^\dag_t$ satisfies, 
\begin{eqnarray}
MRM_{\alpha}(C_t\rho C^\dag_t)\leq MRM_{\alpha}(\rho)+t.
\end{eqnarray}
\end{prop}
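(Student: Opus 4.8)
The plan is to reduce everything to the action of a single $T$ gate and then invoke the closed form $MRM_{\alpha}(\rho)=S_{\alpha}(\mathcal{M}(\rho))-S_{\alpha}(\rho)$ supplied by Lemma~\ref{lem:rel_mag} for $\alpha\in[1,+\infty]$. Write $C_{t}$ as an alternating product of Clifford unitaries and single-qubit $T$ gates. Since $MRM_{\alpha}$ is invariant under Clifford unitaries, the bound will follow by induction on $t$ once we show that conjugating by one $T$ gate---say $g_{1}=T\otimes I$ on the first qubit, every other placement being reduced to this one by a Clifford relabeling (a SWAP)---obeys $MRM_{\alpha}(g_{1}\rho g_{1}^{\dag})\le MRM_{\alpha}(\rho)+1$. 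Set $\sigma=g_{1}\rho g_{1}^{\dag}$; because $g_{1}$ is unitary, $S_{\alpha}(\sigma)=S_{\alpha}(\rho)$, so by Lemma~\ref{lem:rel_mag} it is enough to prove $S_{\alpha}(\mathcal{M}(\sigma))\le S_{\alpha}(\mathcal{M}(\rho))+1$.

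The key observation is that $T$ commutes with $Z$, so for every $(\vec p,\vec q)\in V^{n}$ with trivial or $Z$ component on the first qubit---i.e. $q_{1}=0$---one has $g_{1}^{\dag}w(-\vec p,-\vec q)g_{1}=w(-\vec p,-\vec q)$, hence $\Xi_{\sigma}(\vec p,\vec q)=\Xi_{\rho}(\vec p,\vec q)$. Consequently the modulus-one supports $L_{\rho}:=\{(\vec p,\vec q):|\Xi_{\rho}(\vec p,\vec q)|=1\}$ and $L_{\sigma}$ satisfy $L_{\sigma}\supseteq L_{\rho}\cap H$, where $H=\{(\vec p,\vec q)\in V^{n}:q_{1}=0\}$ is a subgroup of index $2$. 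By Lemma~\ref{lem:rel_mag} the mean state $\mathcal{M}(\rho)$ is an MSPS, and by the structure theory of mean states from~\cite{BGJ23a,BGJ23b} its characteristic function is supported precisely on the abelian group $L_{\rho}$, with $\mathcal{M}(\rho)$ equal to the normalized minimal stabilizer projection of rank $2^{n}/|L_{\rho}|$; thus $S_{\alpha}(\mathcal{M}(\rho))=n-\log_{2}|L_{\rho}|$ for all $\alpha$, and likewise $S_{\alpha}(\mathcal{M}(\sigma))=n-\log_{2}|L_{\sigma}|$. Since $L_{\rho}\cap H$ has index at most $2$ in $L_{\rho}$, we get $|L_{\sigma}|\ge|L_{\rho}|/2$, hence $S_{\alpha}(\mathcal{M}(\sigma))\le n-\log_{2}|L_{\rho}|+1=S_{\alpha}(\mathcal{M}(\rho))+1$, which is exactly what was needed; feeding one $T$ gate at a time into the induction (Cliffords costing $0$) then yields $MRM_{\alpha}(C_{t}\rho C_{t}^{\dag})\le MRM_{\alpha}(\rho)+t$.

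The only point requiring care is the identification $S_{\alpha}(\mathcal{M}(\rho))=n-\log_{2}|L_{\rho}|$ together with the fact that $L_{\rho}$ is genuinely a subgroup of $V^{n}$; both are part of the MSPS description of the mean state recalled in Lemma~\ref{lem:rel_mag}, so no new work is needed there. A more conceptual alternative would run through the $T$-gadget: $g_{1}\rho g_{1}^{\dag}$ is obtained from $\rho\otimes\proj{T}$ by Clifford operations followed by a stabilizer measurement with feed-forward, so combining $MRM_{\alpha}(\proj{T})\le 1$, subadditivity of $MRM_{\alpha}$ under tensor products (immediate from additivity of $D_{\alpha}$ and closure of MSPS under tensor products), and monotonicity of $MRM_{\alpha}$ under stabilizer measurement would give the same bound---but since the monotonicity statement proved above is only for $\alpha=1,+\infty$, I would present the direct argument above as the main proof.
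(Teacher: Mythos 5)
Your proof is correct and follows essentially the same route as the paper's: reduce to a single $T$ gate via Clifford invariance, use Lemma~\ref{lem:rel_mag} to convert the claim into $S_{\alpha}(\mathcal{M}(T_1\rho T_1^\dag))\le S_{\alpha}(\mathcal{M}(\rho))+1$, and then show the modulus-one support of the characteristic function shrinks by at most a factor of $2$ because the Paulis acting as $I$ or $Z$ on the first qubit commute with $T$. The only (cosmetic) difference is that you obtain the factor-of-$2$ bound from the index-$2$ subgroup $H=\{(\vec p,\vec q):q_1=0\}$ rather than the paper's explicit four-coset decomposition of $G_\rho$, which is a slightly cleaner way to run the same counting argument.
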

\begin{proof}
Without loss of generality, we may assume the single-qubit $T$ gate acts on the first qubit, i.e., $T_1$.
We only need to show
\begin{eqnarray*}
MRM_{\alpha}(T_1\rho T^\dag_1)\leq
MRM_{\alpha}(\rho)+1.
\end{eqnarray*}

By Lemma~\ref{lem:rel_mag},  $MRM_{\alpha}(\rho)=S_{\alpha}(\mathcal{M}(\rho))-S_{\alpha}(\rho)$, 
and  $MRM_{\alpha}(T_1\rho T^\dag_1)=S_{\alpha}(\mathcal{M}(T_1\rho T^\dag_1))-S_{\alpha}(T_1\rho T^\dag_1)$.
Hence, we only need to prove 
\begin{eqnarray}\label{230611shi1}
S_{\alpha}(\mathcal{M}(T_1\rho T^\dag_1))\leq
S_{\alpha}(\mathcal{M}(\rho))+1.
\end{eqnarray}
Let us denote the abelian groups $G_{\rho}:=\set{\vec x: |\Xi_{\rho}[\vec x]|=1}$ and 
$G_{T_1\rho T^\dag_1}:=\set{\vec x: |\Xi_{T\rho T^\dag}[\vec x]|=1}$.
Since $\mathcal{M}(\rho)$ and $\mathcal{M}(T_1\rho T^\dag_1)$ are projectional states,
to prove \eqref{230611shi1} one only need to show that
\begin{eqnarray*}
\left|G_{T_1\rho T^\dag_1}\right|
\geq \frac{1}{2}|G_{\rho}|.
\end{eqnarray*}
Consider
the subgroup $G_{\rho, 0}=\set{\vec x\in G: x_1=(0,0)}$,
then 
$G_\rho$ has the following coset decomposition 
\begin{eqnarray*}
G_{\rho}
=G_{\rho, 0}\cup (\vec v_X+G_{\rho, 0})
\cup (\vec v_Y+G_{\rho, 0})
\cup (\vec v_Z+G_{\rho, 0}),
\end{eqnarray*}
where each of $\vec v_X+G_{\rho, 0}$, $\vec v_Y+G_{\rho, 0}$ and $\vec v_Z+G_{\rho, 0}$ has size either  $0$ or $|G_{\rho, 0}|$,
and once $|\vec v_X+G_{\rho, 0}|=|\vec v_Y+G_{\rho, 0}| = |G_{\rho, 0}|$ then we have $|\vec v_Z+G_{\rho, 0}|= |G_{\rho, 0}|$.
The sets $G_{\rho,0}$ and $\vec v_X+G_{\rho, 0}$ are invariant under the action of 
$T_1$, that is, 
for any $\vec u\in G_{\rho,0}$ or $\vec u\in \vec v_Z+G_{\rho, 0}$ , we have $\vec u\in G_{T_1\rho T^\dag_1} $.
Hence
\begin{eqnarray*}
\left|G_{T_1\rho T^\dag_1}\right|\geq \frac{1}{2}\left|G_{\rho}\right|.
\end{eqnarray*}

\end{proof}
In \cite{arunachalam2022parameterized}, the minimal number of $T$ gates to generate the $W$ state
$\ket{W_n}=\frac{1}{\sqrt{n}}\sum_{\vec{x}:|\vec{x}|=1}\ket{\vec{x}}$ is $\Omega(n)$ under the assumption 
of the classical exponential time hypothesis (ETH) (i.e., solving the classical k-SAT problem requires exponential time \cite{impagliazzo2001complexity}).
Here, we can provide an unconditional proof of this statement by the above lemma, as 
$MRM_{\alpha}(W_n)=n$.
\begin{cor}
Any Clifford +T circuit, which starts from the all-zero state and  outputs the state $\ket{W_n}\ot\ket{\text{Junk}}$, 
must contain $\Omega(n)$ T gates.
\end{cor}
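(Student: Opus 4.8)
The plan is to combine three facts already established: the all-zero input has vanishing modified R\'enyi relative entropy of magic, each $T$-gate raises $MRM_\alpha$ by at most $1$, and the required output already carries $MRM_\alpha\ge n$. Fix $\alpha=1$ throughout (any $\alpha\in[1,\infty]$ works equally well since we only invoke Lemma~\ref{lem:rel_mag} and the $T$-count bound), write the circuit on $m\ge n$ qubits as $C_t$, a product of Clifford unitaries and $t$ single-qubit $T$-gates, and suppose $C_t\proj{\vec 0}C_t^\dag=\proj{W_n}\ot\proj{\mathrm{Junk}}$, where $\ket{\vec 0}$ is the all-zero state on $m$ qubits and $\ket{\mathrm{Junk}}$ lives on the remaining $m-n$ qubits.

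First I would pin down the endpoints of the chain. The input $\ket{\vec 0}$ is a pure stabilizer state, hence an MSPS, so $MRM_\alpha(\proj{\vec 0})=0$; applying the propagation bound $MRM_\alpha(C_t\rho C_t^\dag)\le MRM_\alpha(\rho)+t$ with $\rho=\proj{\vec 0}$ gives $MRM_\alpha(\proj{W_n}\ot\proj{\mathrm{Junk}})\le t$. Then I would lower-bound the left-hand side using that the output is \emph{pure}: by Lemma~\ref{lem:rel_mag}, $MRM_\alpha$ of any pure state $\phi$ equals $S_\alpha(\mathcal{M}(\phi))$ since $S_\alpha(\phi)=0$. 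The mean state is multiplicative across a tensor product — the characteristic function factorizes, so $|\Xi_{\rho\ot\sigma}|=1$ exactly where $|\Xi_\rho|=1$ and $|\Xi_\sigma|=1$, whence $\mathcal{M}(\proj{W_n}\ot\proj{\mathrm{Junk}})=\mathcal{M}(W_n)\ot\mathcal{M}(\mathrm{Junk})$ — and $S_\alpha$ is additive under tensor products. Therefore $MRM_\alpha(\proj{W_n}\ot\proj{\mathrm{Junk}})=S_\alpha(\mathcal{M}(W_n))+S_\alpha(\mathcal{M}(\mathrm{Junk}))\ge S_\alpha(\mathcal{M}(W_n))=MRM_\alpha(\proj{W_n})$, where the inequality uses that $S_\alpha$ of a state is nonnegative and the last equality is Lemma~\ref{lem:rel_mag} applied once more to the pure state $\proj{W_n}$. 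Chaining the two bounds yields $t\ge MRM_\alpha(\proj{W_n})$.

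The only step with its own content is evaluating $MRM_\alpha(\proj{W_n})=S_\alpha(\mathcal{M}(W_n))$, equivalently determining the size of $\{(\vec p,\vec q):|\Xi_{W_n}(\vec p,\vec q)|=1\}$: one checks which Pauli operators fix $\ket{W_n}$ up to a global phase, which forces the $X$-part to vanish and the $Z$-part to be constant, leaving only $I$ and $Z^{\otimes n}$; hence $\mathcal{M}(W_n)$ is the maximally mixed state on the $2^{n-1}$-dimensional eigenspace of $Z^{\otimes n}$, giving $MRM_\alpha(\proj{W_n})=n-1$ (the text records the bound $n$, and for the corollary $\Omega(n)$ is all that is needed). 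Thus $t\ge\Omega(n)$, which is the assertion. The one pitfall to watch is the ancilla register $\ket{\mathrm{Junk}}$: it is disposed of cleanly precisely because the entire output is pure, which lets us route through $\mathcal{M}$ and its multiplicativity rather than attempting a monotonicity of $MRM_\alpha$ under discarding subsystems, which need not hold in general.
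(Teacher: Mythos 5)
Your proposal is correct and follows the same route the paper sketches: start from $MRM_\alpha(\proj{\vec 0})=0$, apply the propagation bound $MRM_\alpha(C_t\rho C_t^\dag)\le MRM_\alpha(\rho)+t$, and lower-bound the output's $MRM_\alpha$ via $S_\alpha(\mathcal{M}(\cdot))$; your extra care with the $\ket{\mathrm{Junk}}$ register (using purity of the full output, multiplicativity of $\mathcal{M}$ over tensor products, and additivity of $S_\alpha$) is a detail the paper leaves implicit but is exactly right. Your computation $MRM_\alpha(\proj{W_n})=n-1$ (stabilizer group $\{I,\pm Z^{\otimes n}\}$, so $\mathcal{M}(W_n)$ is maximally mixed on a $2^{n-1}$-dimensional eigenspace) is actually the correct value, whereas the paper records $n$; the discrepancy is immaterial for the $\Omega(n)$ conclusion.
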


\begin{Rem}
Note that for a pure state $\psi$, $MRM_{\alpha}(\psi)$ for $\alpha \geq 1$ is equivalent to  stabilizer nullity \cite{BeverlandQST20} or
stabilizer dimension \cite{grewal2023improved}.
Besides, for $\alpha=1/2$, $MRM_{\alpha}(\psi)$ is equivalent to stabilizer fidelity~\cite{bravyi2019simulation}.
Moreover, $ME^{(N)}_{2}(\psi)$  is equivalent to stabilizer R\'enyi entropy \cite{LeonePRL22} or quantum R\'enyi Fourier  entropy~\cite{Bucomplexity22}.

\end{Rem}

Let us also consider the $n$-qudit system, with $d$ being odd prime. 

 \begin{Def}[Magic entropy on qudit-systems]
Given an $n$-qudit pure state $\psi$, the magic 
entropy $ME(\psi)$ is 
\begin{eqnarray}
\text{ME}(\psi)
:=S(\psi\boxtimes_H\psi).
\end{eqnarray}
The $\alpha$-R\'enyi magic entropy $ME_{\alpha}(\psi)$ is 
\begin{eqnarray}
\text{ME}_{\alpha}(\psi)
:=S_{\alpha}(\psi\boxtimes_H\psi).
\end{eqnarray}

\end{Def}

\begin{prop}
The  $\alpha$-R\'enyi magic entropy $\text{ME}_{\alpha}(\psi)$
satisfies:

(1) $0\leq \text{ME}_{\alpha}(\psi)\leq n\log d$, and $\text{ME}_{\alpha}(\psi)=0$ iff $\psi$ is a stabilizer state;

(2) The $\text{ME}_{\alpha}$ is invariant under a Clifford unitary acting on states;

(3) $\text{ME}_{\alpha}(\psi_1\ot \psi_2)=\text{ME}_{\alpha}(\psi_1)+\text{ME}_{\alpha}(\psi_2)$.

(4) For any  $\alpha\in [1,+\infty]$, 
\begin{eqnarray}
ME_{\alpha}(\psi)\leq MRM_{\alpha}(\psi).
\end{eqnarray}

\end{prop}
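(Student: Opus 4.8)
The plan is to adapt the proof of Proposition~\ref{prop:magi_ent} to the Hadamard convolution, using Lemma~\ref{lem:prop_HCon} wherever the qubit argument used Theorem~\ref{thm:main_conv}, and Lemma~\ref{lem:rel_mag} exactly as before. For (1), since $\psi\boxtimes_H\psi$ is a state on an $n$-qudit Hilbert space, $0\le S_\alpha(\psi\boxtimes_H\psi)\le \log d^{\,n}=n\log d$ by the standard bounds on R\'enyi entropy, with value $0$ attained precisely when $\psi\boxtimes_H\psi$ is pure; by the purity invariance of stabilizer states, Lemma~\ref{lem:prop_HCon}(3), this holds iff $\psi$ is a stabilizer state. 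For (2), apply the commutativity Lemma~\ref{lem:prop_HCon}(4) with $\rho=\sigma=\proj{\psi}$: for any Clifford $U$ there is a Clifford $U_1$ with $U_1(\psi\boxtimes_H\psi)U_1^\dag=(U\psi U^\dag)\boxtimes_H(U\psi U^\dag)$, and since $S_\alpha$ is invariant under unitary conjugation, $\text{ME}_\alpha(U\psi U^\dag)=\text{ME}_\alpha(\psi)$. For (3), note that the key unitary $V_H$ couples only the $i$-th qudit of the first register to the $i$-th qudit of the second; so after the obvious regrouping of the $2(n_1+n_2)$ qudits of $(\psi_1\ot\psi_2)^{\ot 2}$ the unitary $V_H$ factorizes as $V_H^{[1]}\ot V_H^{[2]}$ and the partial trace factorizes along with it, giving $(\psi_1\ot\psi_2)\boxtimes_H(\psi_1\ot\psi_2)=(\psi_1\boxtimes_H\psi_1)\ot(\psi_2\boxtimes_H\psi_2)$; additivity of $S_\alpha$ under tensor products then yields (3), the $\boxtimes_H$-analogue of Proposition~\ref{prop:magi_ent}(5).

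The substantive part is (4). Since $\psi$ is pure, Lemma~\ref{lem:rel_mag} gives $\text{MRM}_\alpha(\psi)=S_\alpha(\mathcal{M}(\psi))-S_\alpha(\psi)=S_\alpha(\mathcal{M}(\psi))$ for $\alpha\ge 1$. Applying Lemma~\ref{lem:rel_mag} instead to the state $\rho:=\psi\boxtimes_H\psi$, one has $0\le D_\alpha(\rho\|\mathcal{M}(\rho))=S_\alpha(\mathcal{M}(\rho))-S_\alpha(\rho)$, hence $\text{ME}_\alpha(\psi)=S_\alpha(\psi\boxtimes_H\psi)\le S_\alpha(\mathcal{M}(\psi\boxtimes_H\psi))$. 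It then remains to identify $S_\alpha(\mathcal{M}(\psi\boxtimes_H\psi))$ with $S_\alpha(\mathcal{M}(\psi))$, the $\boxtimes_H$-analogue of Lemma~\ref{lem:mean_con}. By the mean state property Lemma~\ref{lem:prop_HCon}(5), which applies since trivially $\mathcal{M}(\psi)=\mathcal{M}(\psi)$, we get $\mathcal{M}(\psi\boxtimes_H\psi)=\mathcal{M}(\psi)\boxtimes_H\mathcal{M}(\psi)$. By the convolution--multiplication duality Lemma~\ref{lem:prop_HCon}(2), the characteristic function of this state is supported on the image of the support group $G_\psi=\{(\vec p,\vec q)\in V^n:|\Xi_\psi(\vec p,\vec q)|=1\}$ of $\mathcal{M}(\psi)$ under the linear automorphism $(\vec p,\vec q)\mapsto(2\vec p,\vec q)$ of $V^n$ (invertible since $d$ is odd), and it has unit modulus there; hence $\mathcal{M}(\psi\boxtimes_H\psi)$ is again a normalized stabilizer projection of rank $d^{\,n}/|G_\psi|$, the same rank as $\mathcal{M}(\psi)$, so it has the same spectrum and the same R\'enyi entropy. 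Combining the inequalities yields $\text{ME}_\alpha(\psi)\le S_\alpha(\mathcal{M}(\psi))=\text{MRM}_\alpha(\psi)$.

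I expect the only real obstacle to be this last identification $S_\alpha(\mathcal{M}(\psi\boxtimes_H\psi))=S_\alpha(\mathcal{M}(\psi))$; the cleanest route is the structural observation above, namely that the mean state of a pure state is a normalized stabilizer projection whose spectrum is governed only by the cardinality of its support group, and that self-convolution by $\boxtimes_H$ acts on that support group by an invertible linear map and so leaves the cardinality, hence the spectrum, unchanged. Everything else is bookkeeping that parallels the qubit computation in Proposition~\ref{prop:magi_ent}.
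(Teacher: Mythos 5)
Your proof is correct and follows the same route as the paper, which disposes of this proposition in one line by deferring to the properties of the Hadamard convolution in Lemma~\ref{lem:prop_HCon} (in parallel with the qubit case, Proposition~\ref{prop:magi_ent} and the subsequent $ME^{(N)}_{\alpha}\leq MRM_{\alpha}$ argument via Lemmas~\ref{lem:rel_mag} and~\ref{lem:mean_con}). The only step the paper leaves implicit is the $\boxtimes_H$-analogue of Lemma~\ref{lem:mean_con}, namely $S_{\alpha}(\mathcal{M}(\psi\boxtimes_H\psi))=S_{\alpha}(\mathcal{M}(\psi))$, and your derivation of it --- the support group of $\mathcal{M}(\psi)$ is carried by the invertible map $(\vec p,\vec q)\mapsto(2\vec p,\vec q)$ to a group of the same cardinality, hence the same MSPS spectrum --- is exactly the right justification.
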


\begin{proof}
These properties come directly from the properties of Hadamard convolution in Lemma \ref{lem:prop_HCon}.
\end{proof}

\begin{Rem}

The magic entropy can be also defined on quantum gates by the  Choi-Jamiołkowski isomorphism \cite{Choi75,Jamio72}.
That is, we can consider the magic entropy of Choi states
\begin{eqnarray}
\text{ME}^{(N)}_{\alpha}(\Lambda)
:=\text{ME}^{(N)}_{\alpha}(J_\Lambda),
\end{eqnarray}
where $J_\Lambda$ is the Choi state of a given quantum channel $\Lambda$. 
A similar method also works for $n$-qudit channels.
Based on the results on states, $\text{ME}^{(N)}_{\alpha}(\Lambda)$ can also serve as a measure of magic for 
quantum gates. 
For example, for a given unitary $U$,
the magic entropy of $U$  vanishes iff $U$ is a Clifford unitary.
For $\alpha=2$, $\text{ME}^{(N)}_{\alpha}(\Lambda)$ is equivalent to the
$(p, q)$ group norm defined in~\cite{BuPRA19_stat}.

\end{Rem}
\section{Discussion and future directions}\label{sec:disc}

In this work, we have introduced a series of quantum convolutions on qubits/qudits. 
Based on these quantum convolutions, we provide a systematic framework to implement the
stabilizer testing for states, and Clifford testing for gates.
We also introduce a magic measure named ``magic entropy''
by quantum convolutions on qubits/qudits.
Moreover, due to the universality of 
Gaussian states and quantum Fourier analysis, the results in this work
can also extended to 
other framework such as matchgate and bosonic Gaussian circuits.
In addition,
 there are still several open questions that require further investigation and research, and we outline four such  problems:

(1) The quantum convolution in this work can be realized by CNOT gates and swap tests, both of which can be implemented experimentally. Thus it is possible
to perform the stabilizer testing and measure the magic entropy of many-body quantum systems. This would proceed  in a fashion similar to the successful measurement of entanglement entropy by Greiner and his group~\cite{Islam15} and high-fidelity realization of entangling gates by the Harvard-MIT-QuEra collaboration~\cite{evered2023highfidelity}.

(2)
Inspired by the concept of entanglement spectrum, we  introduce the notion of magic spectrum. We define this as 
the spectrum of the self-convolution $\boxtimes\psi$ for a given state $\psi$. The properties and characteristics of the magic spectrum, especially in the context of many-body quantum systems, warrant further investigation and study.

(3) Linearity testing of Boolean functions
plays an important role in classical error correction, cryptography, and complexity theory.
It would be interesting to find the application of stabilizer testing and its generalization
in quantum error correction codes, and in quantum complexity theory. For example, can such testing give insight to obtain  a better understanding of locally-testable quantum codes~\cite{AharonovQLTC15,Eldar17},  or the quantum-PCP conjecture~\cite{AharonovQPCP13}.

(4) One should investigate the  central limit theorem with random quantum states, and to study  its use in this case.  If each quantum state $\rho_i$ is chosen randomly from some given ensemble $\mathcal{E}$, 
the $\boxtimes_K(\ot_i \rho_i)$ is a random state. A natural question arises: what is the distribution of the output state
$\boxtimes_K(\ot_i \rho_i)$? Is it close to a Haar random state?

\section{Acknowledgments}
We thank Chi-Ning Chou, Roy Garcia, Markus Greiner, Yichen Hu and Yves Hon Kwan for helpful discussion.
This work was supported in part by ARO Grant W911NF-19-1-0302, ARO MURI Grant W911NF-20-1-0082, and NSF Eager Grant 2037687.

\section{Appendix}

\subsection{Quantum convolution of states}\label{appen:quan_con}

\begin{prop}[\bf $\boxtimes_K$ is symmetric]
Given $K$  states $\rho_1,\rho_2,..., \rho_K$,  the convolution $\boxtimes_K$ of 
$\rho_1,\rho_2,..., \rho_K$ is  invariant under permutation, that is, 
\begin{eqnarray}
\boxtimes_K\left(\otimes^K_{i=1}\rho_i\right)
=\boxtimes_K\left(\otimes^K_{i=1}\rho_{\pi (i)}\right),
\end{eqnarray}
for any permutation $\pi$ on $K$ elements. 

\end{prop}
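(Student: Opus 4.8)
The plan is to prove permutation-invariance of $\boxtimes_K$ by reducing it to the corresponding symmetry of the key unitary $V$, or — more robustly — by using the convolution-multiplication duality already established in Proposition~\ref{230504prop1}. Since the characteristic function $\Xi_\rho$ determines the state $\rho$ uniquely via the expansion~\eqref{0109shi5}, it suffices to show that $\Xi_{\boxtimes_K(\ot^K_{i=1}\rho_{\pi(i)})} = \Xi_{\boxtimes_K(\ot^K_{i=1}\rho_i)}$ for every permutation $\pi$ and every $(\vec p, \vec q) \in V^n$. By Proposition~\ref{230504prop1}, the left-hand side equals $(-1)^{N\vec p\cdot\vec q}\prod_{j=1}^K \Xi_{\rho_{\pi(j)}}(\vec p, \vec q)$, and since ordinary multiplication of the scalars $\Xi_{\rho_j}(\vec p,\vec q)$ is commutative, this product is unchanged under reindexing by $\pi$. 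Hence the two characteristic functions agree, and therefore the two states agree.

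\textbf{Steps in order.} First I would recall that the map $\rho \mapsto \Xi_\rho$ is a bijection between states and their characteristic functions — this is immediate from~\eqref{0109shi5} together with the orthonormality of the Pauli basis $\{w(\vec p,\vec q)\}$. Second, I would invoke Proposition~\ref{230504prop1} to write $\Xi_{\boxtimes_K(\ot^K_{i=1}\rho_i)}(\vec p,\vec q) = (-1)^{N\vec p\cdot\vec q}\prod_{j=1}^K\Xi_{\rho_j}(\vec p,\vec q)$. Third, I would observe that the phase factor $(-1)^{N\vec p\cdot\vec q}$ depends only on $(\vec p,\vec q)$, $N$, and the dimension $n$ — not on the order of the inputs — while the scalar product $\prod_{j=1}^K \Xi_{\rho_j}(\vec p,\vec q)$ is manifestly symmetric in $\rho_1,\dots,\rho_K$. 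Fourth, conclude $\Xi_{\boxtimes_K(\ot^K_{i=1}\rho_{\pi(i)})} = \Xi_{\boxtimes_K(\ot^K_{i=1}\rho_i)}$ as functions on $V^n$, and apply the bijection to get equality of the states.

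\textbf{Main obstacle.} There is essentially no hard step here — the argument is a one-line consequence of a result already proved. The only point requiring a sentence of care is justifying that equality of characteristic functions implies equality of states; but that follows from the reconstruction formula~\eqref{0109shi5}, so it is routine. An alternative, more hands-on route would be to exhibit for each transposition of adjacent input systems a permutation of the output/traced-out registers that conjugates $V$ appropriately, using Lemma~\ref{230609lem1}; this works but is more cumbersome than the Fourier-side argument, and the only subtlety there is tracking how the partial trace $\Ptr{1^c}{\cdot}$ over systems $2,\dots,K$ interacts with the relabeling. I would therefore present the characteristic-function proof as the main argument.
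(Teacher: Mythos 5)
Your proof is correct and is exactly the paper's argument: the paper's proof simply cites the convolution--multiplication duality (Proposition~\ref{230504prop1}), and your write-up is the natural fleshing-out of that one-line citation via commutativity of the scalar product of characteristic functions and the reconstruction formula~\eqref{0109shi5}. No issues.
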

\begin{proof}
This comes directly from Proposition~\ref{230504prop1}.
\end{proof}

\begin{prop}[\bf Convolutional stability for states]
If $\rho_1, \rho_2, \rho_3$ are all stabilizer states, then 
$\boxtimes_3(\ot^3_{i=1}\rho_i)$ is a stabilizer state.
\end{prop}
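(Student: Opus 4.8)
The plan is to prove the statement by showing that $\boxtimes_3$, regarded as the channel $\sigma\mapsto\Ptr{1^c}{V\sigma V^\dag}$ of Definition~\ref{Def:conv_qubit}, is a stabilizer channel on the $3n$-qubit system. The key unitary $V$ of Definition~\ref{def:key_U} is a product of $CNOT$ gates; since $CNOT$ is a Clifford unitary and products of Clifford unitaries are Clifford, $V$ is itself a Clifford unitary. Using the fact (recorded after the definition of Clifford unitaries) that Clifford unitaries send stabilizer states to stabilizer states, it then suffices to establish two elementary closure properties: (i) the tensor product $\rho_1\ot\rho_2\ot\rho_3$ of $n$-qubit stabilizer states is a stabilizer state on the $3n$-qubit system, and (ii) the partial trace of a stabilizer state is a stabilizer state. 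Granting these, $\boxtimes_3\left(\ot^3_{i=1}\rho_i\right)=\Ptr{1^c}{V\left(\ot^3_{i=1}\rho_i\right)V^\dag}$ is obtained from a stabilizer state by Clifford conjugation and then partial trace, hence is a stabilizer state.

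Both (i) and (ii) reduce to the pure case, because $\sigma\mapsto V\sigma V^\dag$ and the partial trace are linear, and by definition a general stabilizer state is a convex combination of pure stabilizer states. For (i), if $\rho_i=\proj{\psi_i}$ with abelian stabilizer group $S_i$ of size $2^n$, then $\ot_i\psi_i$ is the common eigenstate of the group generated by $\bigcup_i S_i$ (each $S_i$ acting on its own tensor factor), which is abelian of size $2^{3n}$, so $\ot_i\psi_i$ is a pure stabilizer state; the general case follows by expanding each $\rho_i$ into pure stabilizers and distributing the tensor product over the convex combinations. For (ii), the reduced state of a pure stabilizer state is, up to normalization, a projection onto a stabilizer code, hence a uniform mixture over a stabilizer basis of that code, i.e.\ a convex combination of pure stabilizer states; linearity then handles mixed inputs. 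I do not expect any serious obstacle here: the only content is the standard closure of the class of stabilizer states under Clifford conjugation, tensor products, and partial traces.

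An alternative route, closer in spirit to the surrounding development, works directly with characteristic functions: Proposition~\ref{230504prop1} gives $\Xi_{\boxtimes_3\left(\ot^3_{i=1}\rho_i\right)}(\vec p,\vec q)=(-1)^{\vec p\cdot\vec q}\prod_{i=1}^3\Xi_{\rho_i}(\vec p,\vec q)$; when each $\rho_i$ is a pure stabilizer state, $\Xi_{\rho_i}$ is supported on an isotropic subgroup $M_i\subseteq V^n$ of size $2^n$ with modulus $1$ there, so the product is supported on the subgroup $M=M_1\cap M_2\cap M_3$ with modulus $1$ on $M$, and one identifies the resulting operator as a normalized stabilizer-code projection (a convex combination of pure stabilizer states, cf.\ Definition~\ref{def:MSPS}). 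The delicate point on this route is the phase bookkeeping --- checking that the modulus-$1$ function on $M$ together with the sign $(-1)^{\vec p\cdot\vec q}$ really assembles into a genuine projection, which is exactly where the isotropy of $M$ and the Pauli commutation relation \eqref{0106shi2_2} are used. Since the first route avoids this entirely, I would present it as the main argument and mention the characteristic-function computation as a remark.
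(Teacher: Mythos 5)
Your main argument is exactly the paper's proof: the paper simply observes that $\boxtimes_3$ is a stabilizer channel because the key unitary $V$ consists only of CNOT gates, leaving the closure of stabilizer states under Clifford conjugation, tensor products, and partial traces implicit, whereas you spell those standard facts out. The characteristic-function route you sketch as a remark is a valid alternative but is not what the paper uses here.
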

\begin{proof}
This comes from the fact that the $\boxtimes_3$ is a stabilizer channel, as 
the key unitary only consists of CNOT gates.
\end{proof}

\begin{prop}[\bf{Mean state property}]
Let $\rho_1,\rho_2,\rho_3$ be three $n$-qudit states. 
Then  
\begin{eqnarray}\label{0204shi6-3}
\CMM\left(\boxtimes_3\left(\ot^3_{i=1}\rho_i\right)\right)=
\boxtimes_3\left(\ot^3_{i=1}\CMM(\rho_i)\right)\;.
\end{eqnarray} 
\end{prop}
\begin{proof}
This comes from the definition of mean states and Lemma~\ref{230504prop1}.
\end{proof}

\begin{lem}[ \cite{Appleby05,Gross06,ZhuPRA17,DB13}]\label{230522lem1}
For any prime $d$ and any integer $n$, the following holds:

(1) For each $n$-qudit Clifford unitary $U$, there is a sympletic matrix $M$ and 
a function $f:\mathbb{Z}^{2n}_d\to \mathbb{Z}_d$ such that 
\begin{eqnarray}
Uw(\vec x)U^\dag 
=\omega^{f(\vec x)}_dw(M\vec x).
\end{eqnarray}

(2) 
Conversely, for each symplectic matrix $M$, there is a Clifford unitary $U$ and
a phase function 
$f:\mathbb{Z}^{2n}_d\to \mathbb{Z}_d$  such that the above equation holds. 
If $d$ is odd, one can choose $U$ such that $f\equiv 0 \mod d$.

\end{lem}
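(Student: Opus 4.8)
The plan is to prove the two directions separately; this is the standard structure theorem identifying the Clifford group (modulo phases) with the symplectic group, so the argument is essentially bookkeeping with the Pauli commutation relations plus one citation for the phase-freeness.

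For part (1) I would start from the definition: conjugation by a Clifford $U$ sends each $w(\vec x)$ to a phase times a Pauli operator, so $U w(\vec x) U^\dag = c(\vec x)\,w(\sigma(\vec x))$ for some phase $c(\vec x)$ and some bijection $\sigma:\Z_d^{2n}\to\Z_d^{2n}$ (conjugation being an automorphism of the Pauli basis). The first step is to show $\sigma$ is $\Z_d$-linear, hence given by a matrix $M$: conjugate the multiplication rule \eqref{0106shi2} (or \eqref{0106shi2_2} when $d=2$) by $U$, re-apply the same rule to the image operators, and use that $\{w(\vec y)\}_{\vec y\in V^n}$ is linearly independent in $L(\cH^{\ot n})$; this forces $\sigma(\vec x+\vec y)=\sigma(\vec x)+\sigma(\vec y)$, and additivity over the prime field $\Z_d$ promotes to linearity. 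The second step is that $M$ is symplectic: conjugation by $U$ preserves the group commutator $w(\vec x)w(\vec y)w(\vec x)^{-1}w(\vec y)^{-1}=\omega_d^{\inner{\vec x}{\vec y}_s}$, hence $\inner{M\vec x}{M\vec y}_s=\inner{\vec x}{\vec y}_s$ for all $\vec x,\vec y$, i.e. $M$ preserves the symplectic form. Writing $c(\vec x)=\omega_d^{f(\vec x)}$ (for $d=2$ one first absorbs the extra powers of $i$ from \eqref{0106shi2_2} in the usual way) gives the claimed form.

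For part (2) the plan is constructive: use that $Sp(2n,\Z_d)$ is generated by a short list of elementary symplectic matrices — the embedded $GL(n,\Z_d)$ acting by $\vec p\mapsto A\vec p$, $\vec q\mapsto (A^{\mathsf T})^{-1}\vec q$; the shears fixing the $\vec q$-block with $\vec p\mapsto\vec p+B\vec q$ for $B$ symmetric; and the partial Fourier transforms interchanging $\vec p$- and $\vec q$-coordinates — and exhibit an explicit Clifford circuit for each generator (respectively a $CNOT$/permutation circuit, a diagonal circuit of generalized phase gates, and the qudit quantum Fourier transform). Composing realizes an arbitrary symplectic $M$, which is part (2) up to a phase function $f$. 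For the refinement $f\equiv 0\bmod d$ when $d$ is odd, I would invoke the Weil (metaplectic) representation: for odd $d$ the normalization $w(p,q)=\chi(-2^{-1}pq)Z^pX^q$ is chosen precisely so that $w(\vec x)w(\vec y)=\chi(2^{-1}\inner{\vec x}{\vec y}_s)w(\vec x+\vec y)$, the projective $2$-cocycle obstructing an honest action of $Sp(2n,\Z_d)$ on the $w(\vec x)$ is trivial, and each generator above can be normalized so that $U w(\vec x) U^\dag=w(M\vec x)$ on the nose with the lifts composing correctly.

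The main obstacle is the phase bookkeeping in part (2): realizing each symplectic generator by some Clifford is routine, but showing the lifts can be chosen to compose without picking up a nontrivial phase — equivalently, that $M\mapsto U_M$ is projectively trivial for odd $d$ — is the substantive point, and the cleanest route is to cite the explicit constructions of \cite{Appleby05,Gross06,ZhuPRA17,DB13} rather than redo the cocycle computation. The restriction to odd $d$ is genuine: for $d=2$ such a phase-free lift need not exist — for instance any Clifford implementing the one-qubit Hadamard symplectic matrix flips the sign of $w(1,1)=Y$ — which is why the last sentence of the statement is restricted to odd $d$.
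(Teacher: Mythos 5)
The paper does not actually prove this lemma: it is imported as a known structure theorem with the citation \cite{Appleby05,Gross06,ZhuPRA17,DB13}, so there is no in-paper argument to compare against. Your outline is the standard proof from exactly those references and is correct in its essentials. For part (1): linearity of the induced phase-space map follows, as you say, from conjugating the Weyl composition rule and using that distinct $w(\vec x)$ are not proportional (and additivity over the prime field $\Z_d$ is automatically $\Z_d$-linearity); symplecticity follows from invariance of the group commutator $w(\vec x)w(\vec y)w(\vec x)^{-1}w(\vec y)^{-1}=\omega_d^{\inner{\vec x}{\vec y}_s}$ (with $\omega_d$ replaced by $(-1)$ for $d=2$); and the phase is a $d$-th root of unity because for odd $d$ one has $w(\vec x)^d=I$, forcing $c(\vec x)^d=1$, while for $d=2$ the Hermiticity of the $w(\vec x)$ forces $c(\vec x)=\pm 1$ — this last observation is cleaner than your ``absorb the extra powers of $i$'' remark and closes the only vague spot in your part (1). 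For part (2), generating $Sp(2n,\Z_d)$ by the $GL(n)$ block, symmetric shears, and partial Fourier transforms and realizing each by an explicit Clifford circuit is standard; the one substantive point, triviality of the phase cocycle for odd $d$, you defer to the same references the paper cites, which is acceptable given that the lemma is itself a citation. Your Hadamard example ($HYH^\dag=-Y$, unfixable by composing with Paulis) correctly shows that the restriction to odd $d$ in the final sentence is genuinely necessary. No gap.
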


\begin{thm}[\bf  Commutativity with Clifford Unitary]\label{thm:commu_clif}
Let $U$ be any Clifford unitary,
then there exists another Clifford unitary $U_1$ such that 
\begin{align}\label{230507shi1}
\boxtimes_K\left(\ot^K_{i=1}\left(U\rho_i U^\dag\right)\right)= U_1\boxtimes_K\left(\ot^K_{i=1}\rho_i\right) U^\dag_1\;, \quad \forall \rho_1,...,\rho_K \in D(\mathcal{H}^{\ot n}).
\end{align}
Moreover, if $K=4N+1$, then $U_1=U$.
\end{thm}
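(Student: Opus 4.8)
The plan is to pass to characteristic functions and combine the convolution–multiplication duality of Proposition~\ref{230504prop1} with the symplectic description of Clifford conjugation from Lemma~\ref{230522lem1} (the $d=2$ case). Write $K=2m+1$, so that Proposition~\ref{230504prop1} reads
\[
\Xi_{\boxtimes_K(\ot^K_{i=1}\rho_i)}(\vec p,\vec q)=(-1)^{m\,\vec p\cdot\vec q}\prod_{i=1}^K\Xi_{\rho_i}(\vec p,\vec q),
\]
and the oddness of $K$ will be used repeatedly to collapse $K$-th powers of $\pm1$ signs.

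First I would record how Clifford conjugation acts on characteristic functions. If $U w(\vec x)U^\dag=(-1)^{h(\vec x)}w(M\vec x)$ with $M$ a symplectic matrix over $\mathbb{Z}_2$ and $h(\vec 0)=0$, then from $\Xi_{U\rho U^\dag}(\vec x)=\Tr{\rho\,U^\dag w(-\vec x)U}$ one gets $\Xi_{U\rho U^\dag}(\vec x)=\sigma_U(\vec x)\,\Xi_\rho(M^{-1}\vec x)$, where $\sigma_U(\vec x)=(-1)^{h(M^{-1}\vec x)}\in\{\pm1\}$ (over $\mathbb{Z}_2$ one has $-\vec x=\vec x$, so inverting the argument is harmless). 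I would also note the elementary fact that replacing $U$ by $w(\vec a)U$ leaves $M$ unchanged and multiplies $\sigma_U$ by the character $\vec x\mapsto(-1)^{\inner{\vec a}{\vec x}_s}$: indeed $w(\vec a)w(M\vec x)w(\vec a)^\dag=(-1)^{\inner{\vec a}{M\vec x}_s}w(M\vec x)$ for $d=2$, and $\inner{\vec a}{\,\cdot\,}_s$ runs over all linear functionals of $\mathbb{Z}_2^{2n}$ as $\vec a$ varies, since the symplectic form is nondegenerate.

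Now compare the two sides. By multiplicativity and $K$ odd,
\[
\Xi_{\boxtimes_K(\ot^K_{i=1}U\rho_i U^\dag)}(\vec x)=(-1)^{m\,Q(\vec x)}\,\sigma_U(\vec x)^K\prod_{i=1}^K\Xi_{\rho_i}(M^{-1}\vec x)=(-1)^{m\,Q(\vec x)}\,\sigma_U(\vec x)\prod_{i=1}^K\Xi_{\rho_i}(M^{-1}\vec x),
\]
with $Q(\vec x)=\vec p\cdot\vec q$ for $\vec x=(\vec p,\vec q)$, whereas for a Clifford $U_1$ with symplectic matrix $M$ and sign $\sigma_{U_1}$,
\[
\Xi_{U_1\boxtimes_K(\ot^K_{i=1}\rho_i)U_1^\dag}(\vec x)=\sigma_{U_1}(\vec x)\,(-1)^{m\,Q(M^{-1}\vec x)}\prod_{i=1}^K\Xi_{\rho_i}(M^{-1}\vec x).
\]
Since the $\rho_i$ are arbitrary and the $w(\vec x)$ are a basis of operators, it suffices to produce $U_1$ with symplectic matrix $M$ and $\sigma_{U_1}(\vec x)=(-1)^{m[Q(\vec x)+Q(M^{-1}\vec x)]}\,\sigma_U(\vec x)$. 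The one substantive step is to check that $Q+Q\circ M^{-1}$ is a \emph{linear} function over $\mathbb{Z}_2$: the symmetric bilinear form polarizing $Q$, namely $(\vec x,\vec y)\mapsto\vec p\cdot\vec q'+\vec p'\cdot\vec q$, equals the symplectic form $\inner{\cdot}{\cdot}_s$ modulo $2$, and $M$ preserves it, so the form polarizing $Q+Q\circ M^{-1}$ is $2\inner{\cdot}{\cdot}_s=0$; a $\mathbb{Z}_2$-valued quadratic function with zero polarization and zero constant term (here $Q(\vec 0)=0$) is linear. Hence $(-1)^{m[Q+Q\circ M^{-1}]}$ is a character, so by the previous paragraph one may take $U_1=w(\vec a)U$ with $\vec a$ the unique solution of $\inner{\vec a}{\vec x}_s=m[Q(\vec x)+Q(M^{-1}\vec x)]$.

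For the ``moreover'' clause, $K=4N+1$ means exactly that $m=2N$ is even, so $(-1)^{m\,Q}=1$ on both sides and the correcting character is trivial; thus $\sigma_{U_1}=\sigma_U$ and $U_1=U$ works. I expect the linearity of $Q+Q\circ M^{-1}$ over $\mathbb{Z}_2$, together with careful bookkeeping of the sign functions $\sigma_U$, to be the only real obstacle; the remainder is routine manipulation of Proposition~\ref{230504prop1} and Lemma~\ref{230522lem1}. (Equivalently, one can argue in the Heisenberg picture, starting from $\boxtimes_K^\dag(w(\vec p,\vec q))=(-1)^{m\,\vec p\cdot\vec q}\,w(\vec p,\vec q)^{\ot K}$, read off from Proposition~\ref{prop:action_key}, and then composing with the conjugation channels; this leads to the identical phase-matching condition.)
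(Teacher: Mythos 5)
Your proposal is correct and follows essentially the same route as the paper: both reduce the claim to a phase-matching identity on Pauli operators via the convolution--multiplication duality, observe that the discrepancy $(-1)^{m[\vec p\cdot\vec q+\vec p\,'\cdot\vec q\,']}$ with $(\vec p\,',\vec q\,')=M(\vec p,\vec q)$ is a linear character, and absorb it by replacing $U$ with $w(\vec a)U$ (the paper's $X^{\vec a}Z^{\vec b}U$), with $U_1=U$ when $m$ is even. The only difference is cosmetic: you establish the linearity of $Q+Q\circ M^{-1}$ abstractly via the polarization identity and $M$ preserving the symplectic form, whereas the paper computes it explicitly from the block relations $A^TC+C^TA\equiv 0$, $B^TD+D^TB\equiv 0$, $A^TD+C^TB\equiv I \bmod 2$, reading off $\vec a,\vec b$ from the diagonals of $A^TC$ and $B^TD$.
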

\begin{proof}
It follows from Lemma \ref{230522lem1} that 
there exists a function $f(\vec p,\vec q)$ and a symplectic matrix $M$ such that
$U^\dag w(\vec p,\vec q) U = (-1)^{f(\vec p,\vec q)}w(M (\vec p,\vec q))$ for any $(\vec p,\vec q)\in V^n$.
Hence, 
\begin{align*}
(U^\dag)^{\otimes K} \boxtimes_K^\dag(w(\vec p,\vec q)) U^{\otimes K} 
=&  (-1)^{N \vec p\cdot \vec q} (U^\dag)^{\otimes K} w(\vec p,\vec q)^{\otimes K}  U ^{\otimes K}\\
=& (-1)^{N \vec p\cdot \vec q+ Kf(\vec p, \vec q)}  w(M (\vec p,\vec q))^{\otimes K}\\
=& (-1)^{N \vec p\cdot \vec q+ f(\vec p, \vec q)}  w(M (\vec p,\vec q))^{\otimes K},
\end{align*}
where the last equality comes from the fact that $K=2N+1$ is odd. Let us consider the problem in two cases:
(a) $N$ is even; (b) $N$ is odd.

(a) $N$ is even, then $(U^\dag)^{\otimes K} \boxtimes_K^\dag(w(\vec p,\vec q)) U^{\otimes K} = (-1)^{ f(\vec p, \vec q)}  w(M (\vec p,\vec q))^{\otimes K}$.
Let us take $U_1=U$ and denote $(\vec p' ,\vec q')=M(\vec p, \vec q)$, we have 
\begin{align*}
\boxtimes_K^\dag\left( U^\dag  w(\vec p,\vec q) U \right)
=&\boxtimes_K^\dag\left( (-1)^{f(\vec p, \vec q)}  w(M (\vec p,\vec q))\right) \\
=& (-1)^{N\vec p'\cdot\vec q'} (-1)^{ f(\vec p, \vec q) }  w(M (\vec p,\vec q))^{\otimes K} \\
=&(-1)^{ f(\vec p, \vec q) }  w(M (\vec p,\vec q))^{\otimes K}.
\end{align*}
Therefore, \eqref{230507shi1} holds.

(b) $N$ is odd,  $(U^\dag)^{\otimes K} \boxtimes_K^\dag(w(\vec p,\vec q)) U^{\otimes K} = (-1)^{\vec p\cdot \vec q+ f(\vec p, \vec q)}  w(M (\vec p,\vec q))^{\otimes K}$.
Since $M=\left[
\begin{array}{cc}
A&B\\
C&D
\end{array}\right]
$ 
is symplectic, that is $M^T\left[
\begin{array}{cc}
0&I\\
I&0
\end{array}\right]
M=\left[
\begin{array}{cc}
0&I\\
I&0
\end{array}\right]
$,
 then the matrices $A,B,C,D$ satisfies the following properties
$
A^TC+C^TA\equiv 0$,
$B^TD+D^TB\equiv 0$,
$A^TD+C^TB\equiv I
\mod 2$. 
Hence, both $A^TC$ and $B^TD$ are symmetric, i.e., $[A^TC]_{ij}=[A^TC]_{ji}$ and $[B^TD]_{ij}=[B^TD]_{ji}$. 
For $(\vec p' ,\vec q')=M(\vec p, \vec q)=(A\vec p+B\vec q, C\vec p+D\vec q)$, we have 
\begin{eqnarray*}
\vec p'\cdot \vec q'&=&
(A\vec p+B\vec q)^T
(C\vec p+D\vec q)\\
&=&\vec p^T A^TC\vec p
+\vec q^T B^TD\vec q
+\vec p^T(A^TD+C^TB)\vec q\\
&\equiv&
\sum_i[A^TC]_{ii}p_i
+\sum_i[B^TD]_{ii}q_i
+\vec p\cdot \vec q \mod 2,
\end{eqnarray*}
where  the last $\equiv$ comes from the 
fact that  both $A^TC$ and $B^TD$ are symmetric, $p^2_i=p_i$ for $p_i\in\set{0,1}$ and
 $A^TD+C^TB\equiv I$.
 Let us denote $\vec a=(a_1,..,a_n)$ with $a_i=[A^TC]_{ii}$, and 
 $\vec b=(b_1,..,b_n)$ with $b_i=[B^TD]_{ii}$. Then 
 $\vec p'\cdot \vec q'\equiv \vec a\cdot\vec p +\vec b\cdot\vec q+\vec p\cdot \vec q \mod 2$.
Let us define $U_1=X^{\vec a}Z^{\vec b}U$, then 
\begin{align*}
&\boxtimes_K^\dag\left( U^\dag_1  w(\vec p,\vec q) U_1 \right)\\
=&\boxtimes_K^\dag\left( (-1)^{f(\vec p, \vec q)+\vec a\cdot\vec p+\vec b\cdot\vec q}  w(M (\vec p,\vec q))\right) \\
=& (-1)^{\vec p'\cdot\vec q'} (-1)^{ \vec a\cdot\vec p+\vec b\cdot\vec q+f(\vec p, \vec q) }  w(M (\vec p,\vec q))^{\otimes K}\\
=&(-1)^{\vec p\cdot\vec q+ f(\vec p, \vec q) }  w(M (\vec p,\vec q))^{\otimes K}.
\end{align*}
Therefore, \eqref{230507shi1} holds.

\end{proof}

 Based on the characteristic function of the convolution 
$\boxtimes_3$ in Lemma~\ref{230504prop1}, we have the following statement directly.

\begin{cor}\label{lem:conv_iden}
Let  $\rho_1,\rho_2,\rho_3$ be $n$-qubit states where at least one of them is $I_n/2^n$, then
\begin{align*}
  \boxtimes_3\left( \rho_1, \rho_2,\rho_3\right) = \frac{I_n}{2^n} \;.
\end{align*}
\end{cor}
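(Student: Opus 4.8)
The plan is to compute the characteristic function of $\boxtimes_3(\rho_1,\rho_2,\rho_3)$ directly using the convolution-multiplication duality (Proposition~\ref{230504prop1}) and observe that the factor coming from the maximally mixed input kills all non-trivial Pauli components. First I would recall that for $K=3$ we have $N=1$, so Proposition~\ref{230504prop1} gives
\begin{eqnarray*}
\Xi_{\boxtimes_3(\rho_1\ot\rho_2\ot\rho_3)}(\vec p,\vec q)
=(-1)^{\vec p\cdot\vec q}\,\Xi_{\rho_1}(\vec p,\vec q)\,\Xi_{\rho_2}(\vec p,\vec q)\,\Xi_{\rho_3}(\vec p,\vec q),
\qquad\forall (\vec p,\vec q)\in V^n.
\end{eqnarray*}

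Next I would compute the characteristic function of the maximally mixed state $I_n/2^n$. Since the Pauli operators $w(\vec p,\vec q)$ for $(\vec p,\vec q)\neq(\vec 0,\vec 0)$ are traceless while $w(\vec 0,\vec 0)=I_n$, Definition~\ref{def:charfn} gives $\Xi_{I_n/2^n}(\vec p,\vec q)=\delta_{(\vec p,\vec q),(\vec 0,\vec 0)}$. Substituting this into the displayed product (say WLOG $\rho_3=I_n/2^n$), every term with $(\vec p,\vec q)\neq(\vec 0,\vec 0)$ vanishes, and at $(\vec 0,\vec 0)$ the remaining factors are $\Xi_{\rho_1}(\vec 0,\vec 0)=\Xi_{\rho_2}(\vec 0,\vec 0)=1$ (trace normalization) together with the sign $(-1)^{\vec 0\cdot\vec 0}=1$. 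Hence $\Xi_{\boxtimes_3(\rho_1,\rho_2,\rho_3)}(\vec p,\vec q)=\delta_{(\vec p,\vec q),(\vec 0,\vec 0)}$, which by the Pauli expansion \eqref{0109shi5} means $\boxtimes_3(\rho_1,\rho_2,\rho_3)=I_n/2^n$.

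There is essentially no obstacle here; the only point requiring a moment's care is the appeal to symmetry of $\boxtimes_3$ (Theorem~\ref{thm:main_conv}(1)) to reduce to the case where the maximally mixed factor sits in a fixed slot, and the trivial observation that the overall sign and the surviving factors are all equal to $1$ at the origin. The whole argument is a two-line character computation.
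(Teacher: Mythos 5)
Your proof is correct and is exactly the argument the paper intends: the corollary is stated as a direct consequence of the convolution-multiplication duality (Proposition~\ref{230504prop1}), and you have simply filled in the routine details that $\Xi_{I_n/2^n}$ is the Kronecker delta at the origin and that all surviving factors there equal $1$. Nothing further is needed.
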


\begin{Def}[\bf Schur concavity \cite{MOA79}]
A function $f:\real^n_+\to \real$ is 
Schur concave if $\vec p\prec \vec q$ implies that $f(\vec p)\geq f(\vec q)$.
A function is 
strictly Schur concave if $\vec p\prec \vec q$ implies that $f(\vec p)> f(\vec q)$ except for $\vec p= \vec q$.
\end{Def}

Let us consider two well-known examples of Schur-concave functions: the subentropy and the generalized quantum R\'enyi entropy.

\begin{thm}[\bf Majorization under quantum convolution]\label{thm:entropy}
Given $3$ $n$-qubit states $\rho_1,\rho_2,\rho_3$, with 
$\vec{\lambda}_{1}, \vec{\lambda}_2, \vec{\lambda}_3$ the vectors of  eigenvalues of $\rho_1, \rho_2, \rho_3$. 
Then we have 

\begin{eqnarray}\label{inedq:major_2}
\vec{\lambda}_{\boxtimes_3(\rho_1, \rho_2,\rho_3)} 
\prec \vec{\lambda}_{\rho_i} \;,\quad i=1,2, 3.
\end{eqnarray}
Thus for any Schur-concave function $f$, 
\begin{eqnarray}
f(\boxtimes_3(\rho_1, \rho_2, \rho_3))
\ge f(\rho_i)\;,\quad i=1,2, 3.
\end{eqnarray}
\end{thm}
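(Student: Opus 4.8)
The plan is to establish the majorization $\vec\lambda_{\boxtimes_3(\rho_1,\rho_2,\rho_3)}\prec\vec\lambda_{\rho_1}$ and then to deduce the remaining cases $i=2,3$ from the permutation symmetry of $\boxtimes_3$. The crucial observation is that, with $\rho_2$ and $\rho_3$ held fixed, the map
\[
\Phi:\rho\longmapsto\boxtimes_3(\rho,\rho_2,\rho_3)=\Ptr{1^c}{V\left(\rho\otimes\rho_2\otimes\rho_3\right)V^\dag}
\]
is a \emph{unital} quantum channel on the $n$-qubit system. Complete positivity and trace preservation are immediate, since $\Phi$ is the composition of appending the fixed state $\rho_2\otimes\rho_3$, conjugating by the unitary $V$, and tracing out systems $2$ and $3$. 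Unitality, $\Phi(I_n/2^n)=I_n/2^n$, is exactly the content of Corollary~\ref{lem:conv_iden}: if one of the three arguments of $\boxtimes_3$ is the maximally mixed state, then so is the output.

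Granting that $\Phi$ is a unital channel acting between spaces of the same dimension $2^n$, the bound $\Phi(\rho_1)\prec\rho_1$ is the standard fact that unital (doubly stochastic) quantum maps contract the spectrum in the majorization order; equivalently, by Uhlmann's theorem, $\Phi(\rho_1)$ lies in the convex hull of the unitary orbit $\{U\rho_1U^\dag\}$ (see~\cite{MOA79}). This yields \eqref{inedq:major_2} for $i=1$; applying the same argument to the permuted triples $(\rho_2,\rho_1,\rho_3)$ and $(\rho_3,\rho_2,\rho_1)$ and using $\boxtimes_3(\rho_1,\rho_2,\rho_3)=\boxtimes_3(\rho_{\pi(1)},\rho_{\pi(2)},\rho_{\pi(3)})$ gives the cases $i=2,3$. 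The last assertion then follows by unwinding definitions: for Schur-concave $f$, the relation $\vec\lambda_{\boxtimes_3(\rho_1,\rho_2,\rho_3)}\prec\vec\lambda_{\rho_i}$ gives $f(\boxtimes_3(\rho_1,\rho_2,\rho_3))\ge f(\rho_i)$, with $f(\rho)$ understood as $f$ evaluated on the spectrum of $\rho$.

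I do not expect a genuine obstacle here: once Corollary~\ref{lem:conv_iden} is in hand, the proof is essentially a citation of the unital-channel/majorization dictionary. The two points that deserve a moment's care are (i) that the input and output of $\boxtimes_3$ both live on $(\complex^2)^{\otimes n}$, so the eigenvalue vectors being compared have equal length $2^n$ and no zero-padding convention is needed, and (ii) the precise version of the majorization theorem to quote — one only needs $\Phi$ to be positive, trace preserving and unital, all of which hold. A more hands-on alternative would be to diagonalize $\rho_2=\sum_j\mu_j\proj{\psi_j}$ and $\rho_3=\sum_k\nu_k\proj{\phi_k}$, write $\boxtimes_3(\rho_1,\rho_2,\rho_3)=\sum_{j,k}\mu_j\nu_k\,\Phi_{jk}(\rho_1)$ with each $\Phi_{jk}(\rho)=\Ptr{1^c}{V(\rho\otimes\proj{\psi_j}\otimes\proj{\phi_k})V^\dag}$ unital, and then invoke convexity of $\{\sigma:\sigma\prec\rho_1\}$; the one-line argument above is cleaner.
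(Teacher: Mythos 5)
Your proof is correct and takes essentially the same route as the paper's: both reduce the claim to the observation that the map $\rho\mapsto\boxtimes_3(\rho,\rho_2,\rho_3)$ is trace preserving and unital (unitality being exactly Corollary~\ref{lem:conv_iden}), hence spectrum-majorization-decreasing, with the permutation symmetry handling $i=2,3$. The only difference is presentational: the paper proves the doubly-stochastic fact inline by constructing the explicit doubly stochastic matrix $m_{ij}=\langle\xi_i|\tau_j|\xi_i\rangle$ from the spectral decompositions of the input and output, whereas you cite the standard unital-channel/Uhlmann majorization theorem as a black box.
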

\begin{proof}
Here, we  prove $\vec{\lambda}_{\boxtimes_3(\rho_1, \rho_2,\rho_3)} 
\prec \vec{\lambda}_{\rho_1}$; the other cases can be proved in the same way.
Consider the spectral decompositions of the states $\rho_1$ and $\rho=\boxtimes_3(\rho_1,\rho_2, \rho_3)$ as 
$
\rho_1=  \sum_{j=1}^{2^n} \lambda_j |\psi_j\rangle \langle \psi_j|$,
and $
\rho  = \sum_{j=1}^{2^n} \mu_j |\xi_j\rangle \langle \xi_ j| 
$.
We need to prove that
$
(\mu_1,...,\mu_{2^n}) \prec(\lambda_1,...,\lambda_{2^n})\;
$.
Let 
$
\tau_j
:=\boxtimes_3\left(
|\psi_j\rangle \langle \psi_j| ,\rho_{2}, \rho_3\right)\;.
$
Then   $\tau_j$ is a quantum state.
Moreover,
\begin{align}\label{0102shi1}
\sum_{j=1}^{2^n} \lambda_j \tau_j =  \sum_{j=1}^{2^n}  \lambda_j \; \boxtimes_3\left(
|\psi_j\rangle \langle \psi_j| ,\rho_{2}, \rho_3\right) = \boxtimes_3\left(\rho_1,\rho_2,\rho_3\right) =\rho \;,
\end{align}
and
\begin{align}\label{0102shi3}
\sum_{j=1}^{2^n} \tau_j = \sum_{j=1}^{2^n} \boxtimes_3\left(
|\psi_j\rangle \langle \psi_j| ,\rho_{2},\rho_3\right)=
\boxtimes_3\left(
\left(\sum^{2^n}_{j=1} \proj{\psi_j}\right) ,\rho_{2}, \rho_3\right) =   I \;,
\end{align}
where the last equality follows from Lemma \ref{lem:conv_iden}.
Consider the $2^n\times 2^n$ matrix $M=\left( m_{\imath j}\right)_{\imath,j=1}^{2^n}$, where each entry $m_{\imath j}$ is defined as
$
m_{\imath j} = \langle \xi_\imath| \tau_j  | \xi_\imath \rangle \;.
$
By definition, each $m_{\imath j}\ge 0$, and
\begin{align}
\sum_{\imath } m_{\imath j} =& \sum_{\imath }  \langle \xi_\imath| \tau_j  | \xi_\imath \rangle = \Tr{\tau_j}= 1 \;,\\
\sum_{j } m_{\imath j} =&\sum_{j } \langle \xi_k| \tau_j  | \xi_k \rangle = \langle \xi_k| I  | \xi_k \rangle =1 \;, \label{0102shi2}
\end{align}
where (\ref{0102shi2}) comes from (\ref{0102shi3}).
Thus $M$ is a doubly stochastic matrix.
Moreover, by (\ref{0102shi1}),
\begin{align}\label{0102shi4}
\mu_i= \langle \xi_\imath| \rho   | \xi_\imath \rangle = \sum_{j=1}^{2^n} \lambda_j \langle \xi_\imath| \tau_j  | \xi_\imath \rangle=  \sum_{j=1}^{2^n} \lambda_j m_{ij}\;.
\end{align}
That is, 
$
(\mu_1,...,\mu_{2^n})^T = M (\lambda_1,...,\lambda_{2^n}) ^T \;.
$
Based on Proposition 1.A.3 in \cite{MOA79}, $(\mu_1,...,\mu_{2^n}) \prec(\lambda_1,...,\lambda_{2^n})\;$.

\end{proof}

Hence we have the following corollary on the self-convolution $\boxtimes_3\psi$ of a pure state. 

\begin{prop}[\bf Stabilizer purity preservation under convolution]\label{prop:pure_stab}
Given a pure $n$-qubit state $\psi$, its self-convolution $\boxtimes_3\psi$ is pure iff 
$\psi$ is a stabilizer state.
\end{prop}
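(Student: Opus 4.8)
The plan is to establish the two directions of the biconditional using, respectively, the convolutional stability of stabilizer states and the majorization property proved just above.

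\medskip

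\textbf{Forward direction ($\psi$ stabilizer $\Rightarrow \boxtimes_3\psi$ pure).}
First I would invoke Proposition~\ref{prop:pure_stab}'s companion statement, namely the \emph{convolutional stability for states} from Theorem~\ref{thm:main_conv}(2): if $\psi$ is a pure stabilizer state, then $\boxtimes_3\psi = \boxtimes_3(\proj{\psi}\ot\proj{\psi}\ot\proj{\psi})$ is again a stabilizer state. The key point is that it is a \emph{pure} stabilizer state rather than a mixed one. One way to see this: since $\boxtimes_3$ commutes with Clifford unitaries (Theorem~\ref{thm:main_conv}(5)), it suffices to treat the case $\psi = \proj{\vec 0}$, where a direct computation using Lemma~\ref{230609lem1} gives $\boxtimes_3(\proj{\vec 0}^{\ot 3}) = \Ptr{1^c}{V\proj{\vec 0}^{\ot 3}V^\dag} = \proj{\vec 0}$, which is pure. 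Alternatively, one can argue at the level of characteristic functions: by Proposition~\ref{230504prop1}, $\Xi_{\boxtimes_3\psi}(\vec p,\vec q) = (-1)^{\vec p\cdot\vec q}\Xi_\psi(\vec p,\vec q)^3$, and for a pure stabilizer state every $|\Xi_\psi(\vec p,\vec q)|$ is $0$ or $1$, so $\sum_{(\vec p,\vec q)}|\Xi_{\boxtimes_3\psi}(\vec p,\vec q)|^2 = \sum_{(\vec p,\vec q)}|\Xi_\psi(\vec p,\vec q)|^2 = 2^n$, i.e.\ $\Tr{(\boxtimes_3\psi)^2} = 1$, so $\boxtimes_3\psi$ is pure.

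\medskip

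\textbf{Reverse direction ($\boxtimes_3\psi$ pure $\Rightarrow \psi$ stabilizer).}
Here I would use the majorization result Theorem~\ref{thm:entropy}: taking $\rho_1 = \rho_2 = \rho_3 = \proj{\psi}$, we have $\vec\lambda_{\boxtimes_3\psi} \prec \vec\lambda_\psi = (1,0,\dots,0)$. If $\boxtimes_3\psi$ is pure, its eigenvalue vector is also $(1,0,\dots,0)$, so equality holds in the majorization. The strategy is to trace through the proof of Theorem~\ref{thm:entropy} in this equality case and extract rigidity. Concretely, from $\mu_i = \sum_j \lambda_j m_{ij}$ with $M$ doubly stochastic and $\vec\lambda = (1,0,\dots,0)$, purity of $\boxtimes_3\psi$ forces the decomposition $\boxtimes_3\psi = \sum_j \lambda_j \tau_j = \tau_1$ to itself already be pure, i.e.\ $\tau_1 = \boxtimes_3(\proj{\psi}^{\ot 3})$ is pure — which is the hypothesis, so no new info there. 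Instead, the cleaner route is again via characteristic functions: $\Tr{(\boxtimes_3\psi)^2} = \frac{1}{2^n}\sum_{(\vec p,\vec q)}|\Xi_\psi(\vec p,\vec q)|^6 = 1$, while purity of $\psi$ gives $\frac{1}{2^n}\sum_{(\vec p,\vec q)}|\Xi_\psi(\vec p,\vec q)|^2 = 1$. Since $0 \le |\Xi_\psi(\vec p,\vec q)| \le 1$ always, $|\Xi_\psi|^6 \le |\Xi_\psi|^2$ pointwise, and equality of the sums forces $|\Xi_\psi(\vec p,\vec q)| \in \{0,1\}$ for every $(\vec p,\vec q)$. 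A pure state whose characteristic function takes only the values $0$ and $1$ in modulus is a stabilizer state (the set $\{w(\vec p,\vec q): |\Xi_\psi(\vec p,\vec q)|=1\}$ generates, up to phases, a maximal abelian subgroup stabilizing $\psi$); this is a standard fact I would cite or quickly verify.

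\medskip

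\textbf{Main obstacle.}
The only subtle point is the last implication — that a pure state with $|\Xi_\psi| \in \{0,1\}$ everywhere is genuinely a stabilizer state. One must check that the $2^n$ Pauli operators with $|\Xi_\psi(\vec p,\vec q)| = 1$ form (up to phase) an abelian group of the right size; abelianness follows because any two of them commute when acting on the common eigenvector $\psi$ (a non-commuting pair would, via the Pauli commutation relations \eqref{0106shi2_2}, force one of the characteristic-function values to vanish), and the count $2^n$ comes from $\sum |\Xi_\psi|^2 = 2^n$. Everything else reduces to the pointwise inequality $|\Xi_\psi|^6 \le |\Xi_\psi|^2$ together with Proposition~\ref{230504prop1} and Parseval, so the proof is short.
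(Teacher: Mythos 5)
Your proof is correct, but for the harder direction it takes a more direct route than the paper. The paper presents this proposition as a corollary of the majorization theorem (Theorem~\ref{thm:entropy}), with the actual rigidity supplied later by the equality-case analysis (Lemma~\ref{lem:eq_entr} and Theorem~\ref{0112thm1}): there one shows via the doubly stochastic matrix $M$ that $S_\alpha(\boxtimes_3(\ot_i\rho_i))=S_\alpha(\rho_k)$ forces each spectral projection of $\rho_k$ to be a stabilizer projection associated with the group $G_k$, and for a pure $\psi$ this means $\psi$ is a pure MSPS, i.e.\ a stabilizer state. You correctly observe that specializing the majorization argument naively to $\vec\lambda_\psi=(1,0,\dots,0)$ gives nothing (everything is majorized by a pure spectrum), and instead you run the $\alpha=2$ Parseval computation: $\Tr{(\boxtimes_3\psi)^2}=\frac{1}{2^n}\sum\abs{\Xi_\psi}^6$ versus $\Tr{\psi^2}=\frac{1}{2^n}\sum\abs{\Xi_\psi}^2=1$, and the pointwise bound $\abs{\Xi_\psi}\le 1$ forces $\abs{\Xi_\psi}\in\{0,1\}$ everywhere. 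This is exactly the identity the paper itself uses in the proof of Theorem~\ref{thm:stab_test_qub}, so the technique is native to the paper, but your deployment of it here yields a shorter, self-contained proof that avoids the doubly-stochastic machinery entirely. The final step you flag as the ``main obstacle'' --- that a pure state with $\abs{\Xi_\psi}\in\{0,1\}$ is a stabilizer state --- is handled correctly in your sketch: $\abs{\bra{\psi}w\ket{\psi}}=1$ with $\norm{w\ket{\psi}}=1$ forces $w\ket{\psi}=\lambda_w\ket{\psi}$ by the Cauchy--Schwarz equality case, a shared eigenvector rules out anticommuting pairs, closure under multiplication up to phase is immediate, and Parseval gives $\abs{G}=2^n$; the rescaled operators $\lambda_w^{-1}w$ then form an abelian stabilizer group of order $2^n$ for $\psi$. (In the paper this fact is absorbed into the MSPS formalism rather than proved separately.) Your forward direction, via either the Clifford-covariance reduction to $\proj{\vec 0}$ or the characteristic-function identity, matches the paper's convolutional-stability argument and is fine.
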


\begin{prop}\label{prop:entropy}
Given $3$ $n$-qubit states $\rho_1,\rho_2,\rho_3$. For 
$\alpha\in[0,+\infty]$, 
\begin{eqnarray}
S_{\alpha}\left(\boxtimes_3(\rho_1, \rho_2, \rho_3)\right)
\geq \max\set{S_{\alpha}(\rho_1),S_{\alpha}(\rho_2),S_{\alpha}(\rho_3) }.
\end{eqnarray}
\end{prop}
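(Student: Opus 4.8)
The plan is to obtain this as an immediate consequence of the majorization relation in Theorem~\ref{thm:entropy} together with the Schur concavity of the R\'enyi entropy as a function of the spectrum. First I would note that $S_{\alpha}(\rho)$ depends only on the vector $\vec\lambda_\rho$ of eigenvalues of $\rho$: setting $g_\alpha(\vec p)=\tfrac{1}{1-\alpha}\log\sum_i p_i^\alpha$ for a probability vector $\vec p$, one has $S_{\alpha}(\rho)=g_\alpha(\vec\lambda_\rho)$ for $\alpha\in(0,1)\cup(1,\infty)$, with the usual limiting conventions $S_0(\rho)=\log\mathrm{rank}(\rho)$, $S_1(\rho)=-\Tr{\rho\log\rho}$, and $S_\infty(\rho)=-\log\lambda_{\max}(\rho)$.

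Next I would check that $g_\alpha$, and each of the three limiting functionals, is Schur concave on the probability simplex. For $\alpha>1$ the map $t\mapsto t^\alpha$ is convex, so $\vec p\prec\vec q$ gives $\sum_i p_i^\alpha\le\sum_i q_i^\alpha$; since the prefactor $\tfrac{1}{1-\alpha}$ is negative, this yields $g_\alpha(\vec p)\ge g_\alpha(\vec q)$. For $0<\alpha<1$ the map $t\mapsto t^\alpha$ is concave, so $\vec p\prec\vec q$ gives $\sum_i p_i^\alpha\ge\sum_i q_i^\alpha$, and now the positive prefactor again gives $g_\alpha(\vec p)\ge g_\alpha(\vec q)$. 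The boundary cases are classical facts about majorization: $-\sum_i p_i\log p_i$ is Schur concave ($\alpha=1$); the number of nonzero entries is nonincreasing under majorization ($\alpha=0$); and $p^{\downarrow}_1$ is nondecreasing under majorization, hence $-\log p^{\downarrow}_1$ is nonincreasing ($\alpha=\infty$). All of this is summarized by saying $S_{\alpha}$ is a Schur-concave function of the spectrum for every $\alpha\in[0,\infty]$.

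Finally, Theorem~\ref{thm:entropy} supplies $\vec\lambda_{\boxtimes_3(\rho_1,\rho_2,\rho_3)}\prec\vec\lambda_{\rho_i}$ for $i=1,2,3$, so combining with the Schur concavity just established gives
\[
S_{\alpha}\!\left(\boxtimes_3(\rho_1,\rho_2,\rho_3)\right)=g_\alpha\!\left(\vec\lambda_{\boxtimes_3(\rho_1,\rho_2,\rho_3)}\right)\ \ge\ g_\alpha\!\left(\vec\lambda_{\rho_i}\right)=S_{\alpha}(\rho_i),\qquad i=1,2,3,
\]
and taking the maximum over $i$ completes the argument. Since the substantive work is already carried in Theorem~\ref{thm:entropy}, there is no real obstacle here; the only mildly delicate point is confirming the Schur-concavity of $S_\alpha$ at the boundary values $\alpha\in\{0,1,\infty\}$, which reduces to standard majorization inequalities.
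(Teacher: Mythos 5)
Your proof follows exactly the paper's route: the paper also deduces this proposition immediately from the majorization relation of Theorem~\ref{thm:entropy} together with the Schur concavity of $S_\alpha$. The only difference is that you spell out the verification of Schur concavity for each range of $\alpha$ (including the boundary cases $\alpha\in\{0,1,\infty\}$), which the paper takes as known; this is a correct and slightly more self-contained rendering of the same argument.
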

\begin{proof}
This is because $S_{\alpha}$ is  Schur concave.
\end{proof}

\begin{Def}[\bf Quantum Tsallis entropy~\cite{tsallis1988possible}]
Given a quantum state $\rho$,
the Tsallis entropy with parameter $\alpha\geq 1$ is 
\begin{eqnarray}
T_{\alpha}(\rho)
:=\frac{1}{1-\alpha}\left(\Tr{\rho^\alpha}-1\right).
\end{eqnarray}
\end{Def}

\begin{prop}
Given $3$ $n$-qubit states $\rho_1,\rho_2,\rho_3$, for any $\alpha\geq 1$ we have 
\begin{eqnarray}
T_{\alpha}(\boxtimes_3(\rho_1,\rho_2,\rho_3))\geq 
\max\set{T_{\alpha}(\rho_1), T_{\alpha}(\rho_2), T_{\alpha}(\rho_3)}\;.
\end{eqnarray}
\end{prop}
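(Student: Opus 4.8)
The plan is to reduce this statement to the majorization result already established in Theorem~\ref{thm:entropy}, exactly in the spirit of the proof of Proposition~\ref{prop:entropy} for the Rényi entropy. The only new ingredient needed is that, for $\alpha\geq 1$, the Tsallis entropy $T_\alpha$ is a Schur-concave function of the eigenvalue vector.

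First I would record that $T_\alpha(\rho)$ depends only on the spectrum $\vec\lambda_\rho=(\lambda_1,\dots,\lambda_{2^n})$ of $\rho$, through $T_\alpha(\rho)=\tfrac{1}{1-\alpha}\bigl(\sum_i\lambda_i^\alpha-1\bigr)$. For $\alpha\geq 1$ the map $t\mapsto t^\alpha$ is convex on $[0,1]$, so the symmetric sum $\vec\lambda\mapsto\sum_i\lambda_i^\alpha$ is Schur-convex. Multiplying by the nonpositive constant $\tfrac{1}{1-\alpha}$ and then adding a constant turns this into a Schur-concave function, so $T_\alpha$ is Schur-concave on probability vectors for every $\alpha\geq 1$; the boundary case $\alpha=1$ is the von Neumann entropy, which is Schur-concave, and follows by continuity.

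Next I would invoke Theorem~\ref{thm:entropy}: for the convolution one has $\vec\lambda_{\boxtimes_3(\rho_1,\rho_2,\rho_3)}\prec\vec\lambda_{\rho_i}$ for $i=1,2,3$. Applying Schur-concavity of $T_\alpha$ to each of these three majorization relations gives $T_\alpha\bigl(\boxtimes_3(\rho_1,\rho_2,\rho_3)\bigr)\geq T_\alpha(\rho_i)$ for $i=1,2,3$, and taking the maximum over $i$ yields the claimed inequality.

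There is essentially no serious obstacle here; the only point deserving a moment of care is the sign bookkeeping in the Schur-concavity check, since $1-\alpha<0$ for $\alpha>1$, so one must correctly convert the Schur-convexity of $\sum_i\lambda_i^\alpha$ into Schur-concavity of $T_\alpha$ after multiplying by the negative prefactor. Everything else is a direct citation of Theorem~\ref{thm:entropy}.
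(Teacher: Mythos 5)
Your proposal is correct and follows exactly the paper's route: the paper's proof consists of the single observation that $T_\alpha$ is Schur concave for $\alpha\geq 1$, combined with the majorization result $\vec{\lambda}_{\boxtimes_3(\rho_1,\rho_2,\rho_3)}\prec\vec{\lambda}_{\rho_i}$ from Theorem~\ref{thm:entropy}. Your only addition is to spell out why $T_\alpha$ is Schur concave (convexity of $t\mapsto t^\alpha$ plus the sign of the prefactor $1/(1-\alpha)$), which the paper takes for granted.
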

\begin{proof}
This is because the Tsallis entropy is  Schur concave for $\alpha\geq 1$.
\end{proof}

\begin{Def}[\bf Subentropy \cite{datta2014suben}]\label{def:suben}
Given a quantum state $\rho$ with eigenvalues $\set{\lambda_i}^d_{i=1}$, the subentropy of
$\rho$ is 
\begin{eqnarray}
Q(\rho):=\sum^d_{i=1}\frac{\lambda^d_i}{\Pi_{j\neq i}(\lambda_j-\lambda_i)}\log\lambda_i.
\end{eqnarray}
\end{Def}

\begin{prop}
Given $3$ $n$-qubit states $\rho_1,\rho_2,\rho_3$, 
we have 
\begin{eqnarray}
Q(\boxtimes_3(\rho_1,\rho_2,\rho_3))\geq 
\max\left\{Q(\rho_1), Q(\rho_2), Q(\rho_3)\right\}.
\end{eqnarray}
\end{prop}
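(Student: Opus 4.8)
The plan is to read this off immediately from the majorization statement already proved in Theorem~\ref{thm:entropy}, combined with the Schur concavity of the subentropy — exactly the pattern used in the preceding propositions for the Rényi and Tsallis entropies. Recall from Definition~\ref{def:suben} that $Q(\rho)$ depends only on the eigenvalue vector $\vec\lambda_\rho$ and is a symmetric function of its entries, and, as already noted in the text, $Q$ is one of the standard examples of a Schur-concave function. So the only substantive ingredient beyond Theorem~\ref{thm:entropy} is the implication $\vec p \prec \vec q \Rightarrow Q(\vec p) \ge Q(\vec q)$.

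First I would recall why $Q$ is Schur concave. The closed form in Definition~\ref{def:suben} extends, once the removable singularities at coincident eigenvalues are resolved by continuity, to a symmetric (indeed analytic) function on the probability simplex, and it is concave there — this is one of the basic properties of subentropy established in~\cite{datta2014suben}. A symmetric concave function on $\real^n$ is automatically Schur concave (see~\cite{MOA79}): if $\vec p \prec \vec q$ then $\vec p$ lies in the convex hull of the coordinate permutations of $\vec q$, so concavity together with permutation invariance gives $Q(\vec p) \ge Q(\vec q)$.

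Then I would apply Theorem~\ref{thm:entropy}, which asserts $\vec\lambda_{\boxtimes_3(\rho_1,\rho_2,\rho_3)} \prec \vec\lambda_{\rho_i}$ for $i=1,2,3$. Combining this with Schur concavity of $Q$ gives $Q(\boxtimes_3(\rho_1,\rho_2,\rho_3)) \ge Q(\rho_i)$ for each $i$, and taking the maximum over $i$ yields the claim. The only nontrivial point — which I would flag as the main obstacle if one insists on a self-contained argument rather than citing~\cite{datta2014suben} — is the concavity of the subentropy itself, since the divided-difference form in Definition~\ref{def:suben} does not make concavity manifest; the cleanest route is to invoke the known integral/kernel representation of $Q$ as a positive average of $-t\log t$-type quantities, whence concavity follows from concavity of $-t\log t$. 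Everything else is routine once Theorem~\ref{thm:entropy} is in hand.
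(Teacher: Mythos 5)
Your proposal is correct and follows the same route as the paper: the paper's proof is exactly the combination of the majorization in Theorem~\ref{thm:entropy} with the Schur concavity of the subentropy $Q$. The additional justification you give for why $Q$ is Schur concave (symmetry plus concavity, citing \cite{datta2014suben}) is a reasonable elaboration of a fact the paper simply asserts.
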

\begin{proof}
This is because the subentropy $Q$ is  Schur concave.
\end{proof}

\begin{lem}\label{lem:eq_entr}
Let $\rho_1,\rho_2, \rho_3$ be  $n$-qubit states, 
and let
$\rho_k =  \sum_{j=1}^{T} \mu_j P_j $ be the spectral decomposition of the quantum state $\rho_k$,
where $P_j$ is the projection to the eigenspace corresponding to the eigenvalue $\mu_j$, and $\mu_\imath\neq\mu_j$ for any $\imath\neq j$.
For any $\alpha\not\in \{-\infty,0,\infty\}$,
the equality
$$S_\alpha\left( \boxtimes_3\left(\ot^3_{i=1}\rho_i\right)\right) = S_\alpha(\rho_k)$$
holds iff each $Q_j : =\boxtimes_K(P_j\ot_{i\neq k}\rho_i)$ is a projection of the same rank as $P_j$  and $Q_{j_1}\perp Q_{j_2}$ whenever $j_1\neq j_2$.
\end{lem}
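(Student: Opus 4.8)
The plan is to recycle the doubly stochastic matrix built in the proof of Theorem~\ref{thm:entropy}. Fix a rank-one refinement $\rho_k=\sum_{a}\lambda_a\proj{\psi_a}$ of the spectral decomposition, so $P_j=\sum_{a\in I_j}\proj{\psi_a}$ with $I_j=\set{a:\lambda_a=\mu_j}$ and $r_j:=|I_j|=\mathrm{rank}\,P_j$. Set $\tau_a:=\boxtimes_3(\proj{\psi_a}\ot_{i\neq k}\rho_i)$; each $\tau_a$ is a state, $\rho:=\boxtimes_3(\ot_{i=1}^3\rho_i)=\sum_a\lambda_a\tau_a$, $Q_j=\sum_{a\in I_j}\tau_a$, and $\sum_a\tau_a=\boxtimes_3(I\ot_{i\neq k}\rho_i)=I$ by Corollary~\ref{lem:conv_iden}; hence $\sum_jQ_j=I$ and $\sum_jr_j=2^n$. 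Writing $\rho=\sum_b\nu_b\proj{\xi_b}$ for the spectral decomposition, the matrix $M_{ba}:=\bra{\xi_b}\tau_a\ket{\xi_b}$ is doubly stochastic with $\vec\nu=M\vec\lambda$, exactly as in Theorem~\ref{thm:entropy}.

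The ``if'' direction is immediate: if every $Q_j$ is a rank-$r_j$ projection and the $Q_j$ are mutually orthogonal, then $\rho=\boxtimes_3((\sum_j\mu_jP_j)\ot_{i\neq k}\rho_i)=\sum_j\mu_jQ_j$ with $\set{Q_j}$ a complete family of orthogonal projections and the $\mu_j$ distinct, so this is a genuine spectral decomposition of $\rho$; thus $\rho$ and $\rho_k$ have the same spectrum with multiplicities, and $S_\alpha(\rho)=S_\alpha(\rho_k)$ for every $\alpha$.

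For the ``only if'' direction, assume $S_\alpha(\rho)=S_\alpha(\rho_k)$ for some $\alpha\notin\set{-\infty,0,\infty}$, i.e. an $\alpha$ for which $S_\alpha$ is strictly Schur concave. Equivalently, $\Tr{g(\rho)}=\Tr{g(\rho_k)}$ for the strictly convex $g$ governing $S_\alpha$ ($g(x)=x^\alpha$ for $\alpha>1$, $g(x)=-x^\alpha$ for $0<\alpha<1$, $g(x)=x\log x$ for $\alpha=1$). Applying Jensen's inequality to $\nu_b=\sum_aM_{ba}\lambda_a$ row by row and summing over $b$ (using that the columns of $M$ also sum to $1$) yields $\Tr{g(\rho)}\le\Tr{g(\rho_k)}$, so the hypothesis forces equality in every row; strict convexity then gives $M_{ba}>0\Rightarrow\lambda_a=\nu_b$. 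In particular $\vec\nu$ is a permutation of $\vec\lambda$, so the $\mu_j$-eigenspace $V_j$ of $\rho$ has $\dim V_j=r_j$.

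The remaining step---upgrading this to the operator statement, which I expect to be the only real work---uses positivity of the $\tau_a$. If $\nu_b=\mu_j$ then for every $a\notin I_j$ we get $\bra{\xi_b}\tau_a\ket{\xi_b}=M_{ba}=0$, hence $\tau_a\ket{\xi_b}=0$ since $\tau_a\ge0$; and if $a\in I_j$ while $\nu_b\neq\mu_j$ then $\lambda_a=\mu_j\neq\nu_b$ forces $M_{ba}=0$ and again $\tau_a\ket{\xi_b}=0$. Therefore $\mathrm{supp}\,\tau_a\subseteq V_j$ for $a\in I_j$ and $V_j\subseteq\ker\tau_a$ for $a\notin I_j$. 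Restricting the identity $\sum_a\tau_a=I$ to $V_j$ shows $Q_j$ acts as the identity on $V_j$, while $\mathrm{supp}\,Q_j\subseteq V_j$; hence $Q_j$ is exactly the orthogonal projection onto $V_j$, of rank $r_j=\mathrm{rank}\,P_j$, and the $Q_j$ are mutually orthogonal because the $V_j$ are distinct eigenspaces of $\rho$. This establishes both implications; the main obstacle is the bookkeeping that converts the scalar equality conditions on $M$ into these support/kernel containments, which becomes routine once positivity of the $\tau_a$ is invoked.
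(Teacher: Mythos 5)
Your proof is correct, and its skeleton is the paper's: the doubly stochastic matrix $M$ from the proof of Theorem~\ref{thm:entropy}, followed by the positivity trick $\langle \xi_b|\tau_a|\xi_b\rangle=0\Rightarrow\tau_a|\xi_b\rangle=0$ to upgrade vanishing diagonal entries of $M$ to operator identities. Two sub-steps differ, both in your favor. First, the paper deduces $\vec\nu=\vec\lambda$ by asserting that entropy equality forces equality throughout the majorization chain (i.e.\ strict Schur concavity of $S_\alpha$, stated without proof), then analyzes only the top eigenvalue block via the relation $\mu_l=\sum_j m_{lj}\mu_j$ and says ``repeat this process'' for the remaining eigenvalues; your row-by-row Jensen argument on $\nu_b=\sum_a M_{ba}\lambda_a$ proves the needed strictness from scratch and delivers the full block structure of $M$ (hence all eigenspaces) in one pass, with no iteration. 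Second, to show $Q_j$ is exactly the spectral projection, the paper saturates the Hilbert--Schmidt bound $\|\Phi\|_2^2\le 1/L$ from the $\alpha=2$ case of Theorem~\ref{thm:entropy} to kill off-diagonal entries; you instead restrict the resolution of identity $\sum_a\tau_a=I$ (Corollary~\ref{lem:conv_iden}) to the eigenspace $V_j$, which gives $Q_j|_{V_j}=I_{V_j}$ together with $\mathrm{supp}\,Q_j\subseteq V_j$ directly --- cleaner and no norm computation. The only step you leave implicit is the assertion that $\vec\nu$ is a permutation of $\vec\lambda$ with matching multiplicities: this needs the short counting argument that each column $a\in I_j$ of $M$ is supported on the rows $B_j=\{b:\nu_b=\mu_j\}$, so $|B_j|\ge r_j$, and $\sum_j|B_j|=\sum_j r_j=2^n$ forces equality; that is routine but should be recorded.
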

\begin{proof}
Without loss of generality, let us assume that $k=1$.
We follow the notations in the proof of Theorem \ref{thm:entropy}.
We have  the spectral decompositions of the states $\rho_1$ and $\rho=\boxtimes_3(\rho_1,\rho_2, \rho_3)$ as 
$
\rho_1=  \sum_{j=1}^{2^n} \lambda_j |\psi_j\rangle \langle \psi_j|$,
and $
\rho  = \sum_{j=1}^{2^n} \mu_j |\xi_j\rangle \langle \xi_ j| 
$, and assume $\{\lambda_j\}$ and $\{\mu_j\}$ are listed in non-increasing order.

Since ("$\Leftarrow$") is trivial, we only need to prove 
("$\Rightarrow$"). Recall that we are assuming $\alpha\not\in \{-\infty,0,\infty\}$.
Since $
(\mu_1,...,\mu_{2^n}) \prec(\lambda_1,...,\lambda_{2^n})$
and we have
\begin{align}\label{0109shi2}
\sum_{j=1}^J \mu_j \le \sum_{j=1}^J \lambda_j,\quad J=1,2,3,..., 2^n \;.
\end{align}
Moreover, 
$S_\alpha(\boxtimes_3(\rho_1,\rho_2,\rho_3)) = S_\alpha(\rho_1)$ iff every equality in (\ref{0109shi2}) holds, which means $ (\mu_1,...,\mu_{2^n}) = (\lambda_1,...,\lambda_{2^n})$.

Assume $L$ is the largest number such that $\mu_1=\mu_L$.
For any $l\le L$,
\begin{align}\label{0116shi1}
\mu_l = \sum_{j}m_{lj} \lambda_j = \sum_{j}m_{lj} \mu_j = \left(\sum_{j=1}^L m_{lj}\right) \mu_l + \sum_{j>L} m_{lj}\mu_j\;.
\end{align}
If there exists $m_{lj}>0$ for some $l\le L$ and $j>L$, then
\begin{align*}
(\ref{0116shi1}) \,< \left(\sum_{j=1}^L m_{lj}\right) \mu_l + \sum_{j>L} m_{lj} \mu_L= \mu_L \;,
\end{align*}
a contradiction.
Hence  $m_{lj}=0$ whenever $l\le L$ and $j>L$,
and therefore for every $l\le L$ we have
\begin{align*}
\sum_{j=1}^L m_{lj} =1 \;.
\end{align*}
Note that $M$ is a doubly stochastic matrix, so we also have $m_{lj}=0$ when $l>L$ and $j\le L$.
By the definition of $m_{lj}$,
we have $\langle \xi_l| \tau_j  | \xi_l \rangle=0$ whenever $l\le L< j$ or $j\le L<l$.
Denote $\sigma_l= | \xi_l \rangle\langle \xi_l|$,
then $\tau_j \sigma_l=0$ when $l\le L< j$ or when $j\le L<l$.
Therefore $\tau_j \sum_{l=1}^L \sigma_l=0$ when $j>L$,
and $\tau_j \sum_{l=L+1}^{2^n} \sigma_l=0$ when $j\le L$.

Denote $\psi_j=|\psi_j\rangle \langle \psi_j|$,
then by definition $\tau_j = \boxtimes_3( \psi_j ,\rho_{2}, \rho_3)$.
When $l\le L$,
\begin{align*}
\langle \xi_l| \sum_{j=1}^L \tau_j | \xi_l \rangle = \sum_{j=1}^L m_{lj} =1 \;,
\end{align*}
that is
\begin{align}\label{0109shi4}
\langle \xi_l|\boxtimes_3\left(\left(\frac 1L\sum_{j=1}^L \psi_j\right) ,\rho_{2}, \rho_3\right)  | \xi_l \rangle =\frac 1L,\quad \forall l\le L \;.
\end{align}
Let's denote $\Phi=   \frac 1L\sum_{j=1}^L \tau_j = \boxtimes_3\left( \frac 1L\sum_{j=1}^L  \psi_j ,\rho_{2}, \rho_3\right)$.
Taking $\alpha=2$ in Theorem \ref{thm:entropy}, we have
\begin{align*}
 \left\|\Phi \right\|_2^2 \le \left\| \frac 1L \sum_{j=1}^L \psi_j \right\|_2^2 = \frac{1}{ L} \;.
\end{align*}
Expanding the matrix $\Phi$ under the basis $\{\xi_l\}$, we have
\begin{align*}
\sum_{l,l'}\left|\langle \xi_l|\Phi| \xi_{l'} \rangle\right|^2 \le \frac{1}{L} \;,
\end{align*}
and compared with (\ref{0109shi4}) we have that
\begin{align*}
\boxtimes_3\left(\frac 1L\sum_{j=1}^L  \psi_j ,\rho_{2}, \rho_3\right) = \Phi = \frac{1}{L}\sum_{l=1}^L | \xi_{l} \rangle\langle \xi_l| \;. 
\end{align*}
That is
\begin{eqnarray*}
\boxtimes_3( P_1,\rho_{2}, \rho_3)  =Q_1 \;,
\end{eqnarray*}
where $P_1$ is the spectral projection of $\rho_1$ corresponding to the eigenvalue $\lambda_1$, and
$Q_1$ is the spectral projection of $ \boxtimes_3(\rho_1,\rho_2, \rho_3)$ corresponding to $\mu_1$.
Repeat this process (by replacing $\rho_1$ by $(\rho_1 - \lambda_1 P_1)/\trace[\rho_1 -\lambda_1 P_1]$ )
and we obtain that, for each spectral projection $P_j$ of $\rho_k$, $Q_j:= \boxtimes_3(P_j,\rho_{2}, \rho_3)$ is a projection of same rank as $P_j$, and $Q_j$ 
is a spectral projection of $\boxtimes_3(\rho_1, \rho_2, \rho_3)$.
\end{proof}

\begin{thm}[\bf The case of equality]\label{0112thm1}
Let $\rho_1,\rho_2, \rho_3$ be  $n$-qubit states,
 $\alpha\notin  \{-\infty, 0, +\infty\}$,
and
\begin{align}\label{0214shi1}
G_k=\{ w( \vec p,  \vec q): |\Xi_{\rho_j} (\vec p, \vec q)|=1 \;\text{ for }\; j\neq k\}.
\end{align}
The equality
\begin{align}\label{0110shi8}
S_\alpha\left(\boxtimes_3\left(\ot^3_{i=1}\rho_i\right)\right) =  S_\alpha(\rho_k)
\end{align}
holds for some $1\le k\le 3$ iff $\rho_k$ is in the abelian C*-algebra generated by $G_k$, i.e.,
$\rho_k$ is a convex sum of MSPSs associated with $G_k$.
\end{thm}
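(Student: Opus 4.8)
The plan is to deduce the statement from Lemma~\ref{lem:eq_entr}, using the convolution--multiplication duality of Proposition~\ref{230504prop1} to translate its conditions into the language of characteristic functions. By the permutation symmetry of $\boxtimes_3$ (Theorem~\ref{thm:main_conv}(1)) it suffices to treat $k=1$. I would first set $\widetilde G_1=\{(\vec p,\vec q)\in V^n:|\Xi_{\rho_2}(\vec p,\vec q)|=|\Xi_{\rho_3}(\vec p,\vec q)|=1\}$, so that $G_1=\{w(\vec p,\vec q):(\vec p,\vec q)\in\widetilde G_1\}$, and record the preliminary facts that $\widetilde G_1$ is a subgroup of $V^n$ and that $G_1$ is a commuting family of Pauli operators with $C^*(G_1)=\mathrm{span}\{w(\vec p,\vec q):(\vec p,\vec q)\in\widetilde G_1\}$. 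These follow from the standard observation that $|\Xi_\rho(\vec p,\vec q)|=1$ iff $w(\vec p,\vec q)$ acts as a scalar on the support of $\rho$: the index sets for $\rho_2$ and for $\rho_3$ are then each closed under addition, hence so is their intersection $\widetilde G_1$; and two Paulis acting as scalars on a common nonzero subspace must commute, which collapses the phase in \eqref{0106shi2_2} to $\pm1$, so products of elements of $G_1$ stay in $\pm G_1$. A consequence worth isolating: a Hermitian operator $A$ lies in $C^*(G_1)$ iff $\Xi_A$ is supported on $\widetilde G_1$, and a state in $C^*(G_1)$ is exactly a convex combination of normalized minimal projections of $C^*(G_1)$, i.e.\ of MSPSs associated with $G_1$; thus ``$\rho_1\in C^*(G_1)$'' is precisely the right-hand side of the theorem.

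For the direction ``$\Rightarrow$'', I would assume the entropy equality and let $\rho_1=\sum_j\mu_j P_j$ be the spectral decomposition. Lemma~\ref{lem:eq_entr} (whose hypothesis $\alpha\notin\{-\infty,0,+\infty\}$ is ours) says that each $Q_j:=\boxtimes_3(P_j\ot\rho_2\ot\rho_3)$ is a projection. Since $\boxtimes_3$ is trace preserving, $\Tr{Q_j}=\Tr{P_j}$, so $\Tr{Q_j^2}=\Tr{Q_j}=\Tr{P_j^2}$; by the Plancherel identity $\Tr{A^2}=2^{-n}\sum_{(\vec p,\vec q)\in V^n}|\Xi_A(\vec p,\vec q)|^2$ this gives $\sum_{\vec p,\vec q}|\Xi_{Q_j}(\vec p,\vec q)|^2=\sum_{\vec p,\vec q}|\Xi_{P_j}(\vec p,\vec q)|^2$. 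On the other hand Proposition~\ref{230504prop1} with $K=3$ yields $|\Xi_{Q_j}(\vec p,\vec q)|=|\Xi_{P_j}(\vec p,\vec q)|\,|\Xi_{\rho_2}(\vec p,\vec q)|\,|\Xi_{\rho_3}(\vec p,\vec q)|\le|\Xi_{P_j}(\vec p,\vec q)|$, because $|\Xi_\sigma(\vec p,\vec q)|=|\Tr{\sigma w(\vec p,\vec q)}|\le1$ for every state $\sigma$. Since $0\le|\Xi_{Q_j}|\le|\Xi_{P_j}|$ pointwise while the sums of squares agree, we get $|\Xi_{Q_j}|=|\Xi_{P_j}|$ at every $(\vec p,\vec q)$; hence $\Xi_{P_j}(\vec p,\vec q)\neq0$ forces $|\Xi_{\rho_2}(\vec p,\vec q)|=|\Xi_{\rho_3}(\vec p,\vec q)|=1$, i.e.\ $\mathrm{supp}(\Xi_{P_j})\subseteq\widetilde G_1$. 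By the preliminary fact $P_j\in C^*(G_1)$ for every $j$, so $\rho_1=\sum_j\mu_j P_j\in C^*(G_1)$.

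For ``$\Leftarrow$'', I would assume $\rho_1\in C^*(G_1)$, write $\rho_1=\sum_j\mu_j P_j$, and note that each spectral projection $P_j$ lies in $C^*(G_1)$ (functional calculus), so $\mathrm{supp}(\Xi_{P_j})\subseteq\widetilde G_1$. With $Q_j:=\boxtimes_3(P_j\ot\rho_2\ot\rho_3)$, Proposition~\ref{230504prop1} gives $\Xi_{Q_j}(\vec p,\vec q)=(-1)^{\vec p\cdot\vec q}\Xi_{P_j}(\vec p,\vec q)\Xi_{\rho_2}(\vec p,\vec q)\Xi_{\rho_3}(\vec p,\vec q)$; since $|\Xi_{\rho_2}|=|\Xi_{\rho_3}|=1$ on $\widetilde G_1$ and $\Xi_{P_j}$ vanishes off $\widetilde G_1$, it follows that $|\Xi_{Q_j}|=|\Xi_{P_j}|$, and then (Plancherel and trace preservation, as above) $\Tr{Q_j^2}=\Tr{Q_j}=\Tr{P_j}$, so $Q_j$ is a projection of the same rank as $P_j$. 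For $j_1\neq j_2$ the same formula gives $\overline{\Xi_{Q_{j_1}}}\,\Xi_{Q_{j_2}}=\overline{\Xi_{P_{j_1}}}\,\Xi_{P_{j_2}}$ as functions on $V^n$ (the phase $(-1)^{\vec p\cdot\vec q}$ cancels, and the product is supported inside $\widetilde G_1$, where $|\Xi_{\rho_2}|^2|\Xi_{\rho_3}|^2=1$), hence $\Tr{Q_{j_1}Q_{j_2}}=\Tr{P_{j_1}P_{j_2}}=0$. These are exactly the hypotheses of the ``$\Leftarrow$'' half of Lemma~\ref{lem:eq_entr}, so the entropy equality holds. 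The delicate point, I expect, is the forward direction: extracting from ``$Q_j$ is a projection of matching rank'' the pointwise identity $|\Xi_{Q_j}|=|\Xi_{P_j}|$ and recognizing the support condition $\mathrm{supp}(\Xi_{P_j})\subseteq\widetilde G_1$ as membership in $C^*(G_1)$; once the characteristic-function dictionary of the first paragraph is in place, each remaining step is a one-line computation.
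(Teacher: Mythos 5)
Your setup and your proof of the forward direction are essentially the paper's own argument: the paper likewise feeds the entropy equality into Lemma~\ref{lem:eq_entr}, compares $\|\tfrac1r P\|_2$ with $\|\tfrac1r Q\|_2$ (your $\Tr{Q_j^2}=\Tr{P_j^2}$ via Plancherel), and uses the pointwise domination $|\Xi_{Q_j}|\le |\Xi_{P_j}|$ coming from Proposition~\ref{230504prop1} to force $\mathrm{supp}(\Xi_{P_j})\subseteq \widetilde G_1$ and hence $P_j\in C^*(G_1)$. Where you diverge is the backward direction: the paper shows directly that $\sigma\mapsto \boxtimes_3(\sigma\ot\rho_2\ot\rho_3)$ carries MSPSs associated with $G_1$ bijectively to MSPSs associated with $G_1$, so that $\boxtimes_3(\rho_1\ot\rho_2\ot\rho_3)$ is a convex combination of mutually orthogonal MSPSs with the same coefficients as $\rho_1$ and therefore has the same spectrum; you instead try to verify the hypotheses of the converse half of Lemma~\ref{lem:eq_entr}.

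That route has one genuine gap. From $|\Xi_{Q_j}|=|\Xi_{P_j}|$ you correctly obtain $\Tr{Q_j^2}=\Tr{P_j^2}=r$ and $\Tr{Q_j}=\Tr{P_j}=r$, but a positive semidefinite operator with $\Tr{Q}=\Tr{Q^2}=r$ need not be a projection when $r\ge 2$: the spectrum $(3/2,\,1/2,\,1/2,\,1/2,\,0,\dots)$ has trace $3$ and purity $3$. So the step ``$\Tr{Q_j^2}=\Tr{Q_j}=\Tr{P_j}$, so $Q_j$ is a projection of the same rank'' does not follow as written. The repair is already in the paper: by the majorization theorem (Theorem~\ref{thm:entropy}, i.e.\ Theorem~\ref{thm:main_conv}(3)) one has $\vec{\lambda}_{Q_j/r}\prec\vec{\lambda}_{P_j/r}=(1/r,\dots,1/r,0,\dots,0)$, and since $\vec x\mapsto\sum_i x_i^2$ is strictly Schur convex, equality of the purities forces the spectrum of $Q_j/r$ to coincide with that of $P_j/r$; hence $Q_j$ is a rank-$r$ projection. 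With that inserted, your orthogonality computation $\Tr{Q_{j_1}Q_{j_2}}=\Tr{P_{j_1}P_{j_2}}=0$ is fine (both operators are positive, so the vanishing trace gives $Q_{j_1}Q_{j_2}=0$), and the converse half of Lemma~\ref{lem:eq_entr} then yields the entropy equality. Alternatively, you could simply adopt the paper's MSPS-bijection argument for this direction and bypass Lemma~\ref{lem:eq_entr} entirely.
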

\begin{proof}
We only prove the case where $k=1$; the proof is similar for the other cases.

First, since
\[G_1 = \bigcap_{j=2}^K \{ w( \vec p,  \vec q): |\Xi_{\rho_j} (\vec p, \vec q)|=1\} \]
is an intersection of abelian subgroups,
$G_1$ is an abelian subgroup.

Now assume $\rho_1$ is a state in the  C*-algebra generated by elements in $G_1$,
and we will show the equality \eqref{0110shi8} holds.
For each MSPS $\sigma_j$ associated with $G_1$ we have
$$|\Xi_{\sigma_j}(\vec p,\vec q)| =\left\{
\begin{aligned}
 &1 &&  w( \vec p,\vec q) \in G_1,\\
&0 && w( \vec p,\vec q) \not \in G_1.
\end{aligned}\right.
$$
Combined with the definition \eqref{0214shi1} of $G_1$, 
we have
$$|\Xi_{\boxtimes_3(\sigma_j ,\rho_2,\rho_3)} (\vec p,\vec q)| = |\Xi_{\sigma_j}( \vec p,  \vec q) \Xi_{\rho_2}( \vec p,  \vec q) \Xi_{\rho_3}( \vec p,  \vec q) | =\left\{
\begin{aligned}
 &1 &&  w( \vec p,\vec q) \in G_1,\\
&0 && w( \vec p,\vec q) \not \in G_1.
\end{aligned}\right.
$$
That is,
$\boxtimes_3(\sigma_j ,\rho_2,\rho_3)$ is a MSPS associated with the abelian group $G_1$.

Moreover,
$$\boxtimes_3(\sigma_j ,\rho_2, \rho_3) = \boxtimes_3(\sigma_\imath ,\rho_2,\rho_3),$$ if and only if $\sigma_j = \sigma_\imath$.
Hence the map $\sigma_j \mapsto \boxtimes_K(\sigma_j ,\rho_2, \rho_3)$ is a bijection from the set of MSPSs associated with $S$ to the set of MSPSs associated with $G_1$.
Therefore,
if $ \rho_1=\sum \mu_j \sigma_j
 $ is a linear sum of MSPSs associated with $G_1$,
 then 
 $$ \boxtimes_3(\rho_1 ,\rho_2,\rho_3) = \sum \mu_j \boxtimes_3(\sigma_j ,\rho_2,\rho_3),
 $$
is a linear sum (with the same coefficients) of MSPSs associated with $G_1$.
Hence, the equality \eqref{0110shi8} holds.

On the other hand, let us consider the case where  \eqref{0110shi8} holds.
Let $P$ be any spectral projection of $\rho_1$ and assume $\mathrm{rank}(P)=r$.
In the following, 
we show $P$ is a stabilizer projection associated with $G_1$.
By Lemma \ref{lem:eq_entr},
$\boxtimes_3(P ,\rho_2,\rho_3)$ is a projection $Q$ of rank $r$, 
hence $\|\frac1r P\|_2 = \|\frac1r Q\|_2$ and $\|\Xi_{\frac1r P}\|_2 = \|\Xi_{\frac1r Q}\|_2$.
While
\begin{align*}
\left|\Xi_{\frac1rQ}(\vec p, \vec q) \right|=  \left| \Xi_{ \frac1r P}( \vec p,  \vec q) \Xi_{\rho_2} ( \vec p,  \vec q)  \Xi_{\rho_3} ( \vec p,  \vec q) \right|\;,\quad \forall  (\vec p,\vec q) \in V^n \;.
\end{align*}
Thus, whenever $(\vec p,\vec q) \in V^n$ satisfies $\Xi_{ \frac1r P}( \vec p, \vec q)\neq 0$,
we  have $\left| \Xi_{ \frac1r P}( \vec p,  \vec q) \Xi_{\rho_2} ( \vec p,  \vec q)  \Xi_{\rho_3} ( \vec p,  \vec q) \right|=1$;
or equivalently,
\begin{align*}
\Xi_{ \frac1r P}( \vec p,  \vec q)\neq 0 \quad \Rightarrow \quad |\Xi_{\rho_2} ( \vec p, \vec q)|=|\Xi_{\rho_3} ( \vec p, \vec q)| =1 \;.
\end{align*}
That is, the characteristic function $ \Xi_{ \frac1r P}$ of $\frac1r P$ supports on $G_1$,
therefore
\begin{align*}
\frac1r P = \frac{1}{2^n} \sum_{ (\vec p, \vec q)\in S_1} \Xi_{ \frac1r P}( \vec p, \vec q) w(\vec p, \vec q) \in C^*(G_1) \;.
\end{align*}
Therefore $P$  is a stabilizer projection associated with $G_1$.
Hence $\rho_1$ is a linear sum of projections in the abelian C*-algebra generated by $G_1$.
\end{proof}

\begin{Rem}
The  entropic inequalities have also been studied in the other quantum case, including the free probability theory \cite{szarek1996volumes,shlyakhtenko2007shannon,shlyakhtenko2007free}, 
continuous-variable quantum systems~\cite{Konig14,Palma14}, subfactor theory~\cite{HuangLiuWu22}, and qudit systems~\cite{Audenaert16, Carlen16, BGJ23a,BGJ23b}.
\end{Rem}

\begin{lem}\label{lem:mean_con}
Given an $n$-qubit state $\rho$ and $K=2N+1$, then 
\begin{eqnarray}
\mathcal{M}(\boxtimes_K\rho)
=\mathcal{M}(\rho^\#),
\end{eqnarray}
where $\rho^\# = \rho$ for  even $N$, and $\rho^\# = \rho^T$ for  odd $N$.
\end{lem}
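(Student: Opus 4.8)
The plan is to carry out the whole argument at the level of characteristic functions, since both $\CMM(\cdot)$ and $\boxtimes_K(\cdot)$ are defined through them. First I would feed the self-convolution into the convolution--multiplication duality of Proposition~\ref{230504prop1} (with all inputs equal to $\rho$ and $K=2N+1$), obtaining
\[
\Xi_{\boxtimes_K\rho}(\vec p,\vec q)=(-1)^{N\vec p\cdot\vec q}\,\Xi_\rho(\vec p,\vec q)^{K},\qquad (\vec p,\vec q)\in V^n .
\]
Taking moduli, $|\Xi_{\boxtimes_K\rho}(\vec p,\vec q)|=|\Xi_\rho(\vec p,\vec q)|^{K}$, so the set $G_\rho:=\{(\vec p,\vec q)\in V^n:|\Xi_\rho(\vec p,\vec q)|=1\}$ on which $\CMM(\rho)$ is supported is exactly the set on which $\CMM(\boxtimes_K\rho)$ is supported. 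Off $G_\rho$ both $\Xi_{\CMM(\boxtimes_K\rho)}$ and $\Xi_{\CMM(\rho^\#)}$ vanish, so everything reduces to matching the two functions on $G_\rho$.

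Next I would use that for qubits $w(\vec p,\vec q)$ is Hermitian, so $\Xi_\rho(\vec p,\vec q)=\Tr{\rho\,w(-\vec p,-\vec q)}$ is real; on $G_\rho$ it therefore equals $\pm1$, and since $K=2N+1$ is odd, $\Xi_\rho(\vec p,\vec q)^{K}=\Xi_\rho(\vec p,\vec q)$ there. Hence
\[
\Xi_{\CMM(\boxtimes_K\rho)}(\vec p,\vec q)=
\begin{cases}
(-1)^{N\vec p\cdot\vec q}\,\Xi_\rho(\vec p,\vec q), & (\vec p,\vec q)\in G_\rho,\\
0, & \text{otherwise.}
\end{cases}
\]
When $N$ is even the sign disappears and the right-hand side is $\Xi_{\CMM(\rho)}$, so $\CMM(\boxtimes_K\rho)=\CMM(\rho)=\CMM(\rho^{\#})$.

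When $N$ is odd, $(-1)^{N\vec p\cdot\vec q}=(-1)^{\vec p\cdot\vec q}$, and it remains to recognise $(-1)^{\vec p\cdot\vec q}\Xi_\rho(\vec p,\vec q)$ as $\Xi_{\rho^{T}}(\vec p,\vec q)$. For this I would record the transpose behaviour of qubit Paulis: from $Z^{T}=Z$, $X^{T}=X$ and $ZX=-XZ$ one gets $(Z^{p}X^{q})^{T}=X^{q}Z^{p}=(-1)^{pq}Z^{p}X^{q}$, hence $w(p,q)^{T}=(-1)^{pq}w(p,q)$, and tensoring over the $n$ qubits gives $w(\vec p,\vec q)^{T}=(-1)^{\vec p\cdot\vec q}w(\vec p,\vec q)$. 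Using $\Tr{\rho^{T}A}=\Tr{\rho A^{T}}$ this yields $\Xi_{\rho^{T}}(\vec p,\vec q)=(-1)^{\vec p\cdot\vec q}\Xi_\rho(\vec p,\vec q)$; in particular $|\Xi_{\rho^{T}}|=|\Xi_\rho|$, so $\CMM(\rho^{T})$ is again supported on $G_\rho$ and there $\Xi_{\CMM(\rho^{T})}=(-1)^{\vec p\cdot\vec q}\Xi_\rho=\Xi_{\CMM(\boxtimes_K\rho)}$. Comparing with the displayed formula finishes the proof.

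There is no serious obstacle here: the only points needing care are the bookkeeping of the sign $(-1)^{N\vec p\cdot\vec q}$ produced by the convolution and its identification with transposition in the odd-$N$ case, together with the routine but essential observation that the unit-modulus supports of $\Xi_\rho$, $\Xi_{\boxtimes_K\rho}$ and $\Xi_{\rho^{T}}$ all coincide, so that the off-support comparison is automatic.
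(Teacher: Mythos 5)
Your proposal is correct and follows essentially the same route as the paper's proof: apply the convolution--multiplication duality of Proposition~\ref{230504prop1}, observe that the unit-modulus support is preserved because $|\Xi_{\boxtimes_K\rho}|=|\Xi_\rho|^K$, use that the qubit characteristic function is real (so $\Xi_\rho^K=\Xi_\rho$ on the support for odd $K$), and absorb the residual sign $(-1)^{N\vec p\cdot\vec q}$ into the transpose when $N$ is odd. The paper states this in one line; your write-up usefully spells out the step the paper leaves implicit, namely the identity $w(\vec p,\vec q)^T=(-1)^{\vec p\cdot\vec q}w(\vec p,\vec q)$ and hence $\Xi_{\rho^T}=(-1)^{\vec p\cdot\vec q}\Xi_\rho$.
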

\begin{proof}
By the definition of mean states, 
$\Xi_{\mathcal{M}(\boxtimes_K\rho)}(\vec p, \vec q)=(-1)^{N\vec p\cdot \vec q}\Xi^K_\rho ( \vec p, \vec q)=(-1)^{N\vec p\cdot \vec q}\Xi_\rho ( \vec p, \vec q)$ if $|\Xi_\rho ( \vec p, \vec q)|=1$, otherwise $\mathcal{M}(\boxtimes_K\rho)(\vec p, \vec q)=0$.
Hence, $\mathcal{M}(\boxtimes_K\rho)
=\mathcal{M}(\rho^\#)$.
\end{proof}

\begin{thm}[\bf Quantum central limit theorem ]\label{thm:CLT_1}
Let $\rho$ be an $n$-qubit state and $K=2N+1$,
then
\begin{eqnarray}\label{230509shi2}
\norm{\boxtimes_{K}\rho-\mathcal{M}(\rho^\#)}_2
\leq (1-MG(\rho))^{K-1} \norm{\rho- \mathcal{M}(\rho)}_2 \;,
\end{eqnarray}
where $\rho^\# = \rho$ for  even $N$, and $\rho^\# = \rho^T$ for  odd $N$.
\end{thm}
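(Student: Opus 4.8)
The plan is to push everything through to the Pauli side and combine Parseval with the convolution–multiplication duality. Recall that for any operator $A$ (with $\Xi_A(\vec p,\vec q)=\Tr{A\,w(-\vec p,-\vec q)}$ as in Definition~\ref{def:charfn}) one has the Parseval identity $\norm{A}_2^2=\frac{1}{2^n}\sum_{(\vec p,\vec q)\in V^n}\abs{\Xi_A(\vec p,\vec q)}^2$, so it suffices to control the characteristic function of $\boxtimes_K\rho-\mathcal{M}(\rho^\#)$ mode by mode. By Proposition~\ref{230504prop1} applied to the self-convolution, $\Xi_{\boxtimes_K\rho}(\vec p,\vec q)=(-1)^{N\vec p\cdot\vec q}\,\Xi_\rho(\vec p,\vec q)^K$, and in particular $\abs{\Xi_{\boxtimes_K\rho}(\vec p,\vec q)}=\abs{\Xi_\rho(\vec p,\vec q)}^K$, which equals $1$ exactly when $\abs{\Xi_\rho(\vec p,\vec q)}=1$.

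First I would split $V^n$ according to whether $\abs{\Xi_\rho(\vec p,\vec q)}=1$ or $\abs{\Xi_\rho(\vec p,\vec q)}<1$. On the first set, since for qubits each $w(\vec p,\vec q)$ is Hermitian with $w(\vec p,\vec q)^2=I$, the real number $\Xi_\rho(\vec p,\vec q)$ lies in $\{+1,-1\}$; because $K=2N+1$ is odd this forces $\Xi_\rho(\vec p,\vec q)^K=\Xi_\rho(\vec p,\vec q)$, so $\Xi_{\boxtimes_K\rho}(\vec p,\vec q)=(-1)^{N\vec p\cdot\vec q}\Xi_\rho(\vec p,\vec q)$. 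By the computation of $\Xi_{\mathcal{M}(\boxtimes_K\rho)}$ carried out in the proof of Lemma~\ref{lem:mean_con} together with the identity $\mathcal{M}(\boxtimes_K\rho)=\mathcal{M}(\rho^\#)$ there, this is precisely $\Xi_{\mathcal{M}(\rho^\#)}(\vec p,\vec q)$, so these modes drop out of the difference. On the second set, $\abs{\Xi_\rho(\vec p,\vec q)}<1$ forces $\abs{\Xi_{\boxtimes_K\rho}(\vec p,\vec q)}<1$, hence $\Xi_{\mathcal{M}(\rho^\#)}(\vec p,\vec q)=0$ and the difference is just $(-1)^{N\vec p\cdot\vec q}\Xi_\rho(\vec p,\vec q)^K$. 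Combining the two cases,
\begin{equation*}
\norm{\boxtimes_K\rho-\mathcal{M}(\rho^\#)}_2^2=\frac{1}{2^n}\sum_{(\vec p,\vec q):\,\abs{\Xi_\rho(\vec p,\vec q)}<1}\abs{\Xi_\rho(\vec p,\vec q)}^{2K}.
\end{equation*}

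Next I would factor $\abs{\Xi_\rho(\vec p,\vec q)}^{2K}=\abs{\Xi_\rho(\vec p,\vec q)}^{2(K-1)}\abs{\Xi_\rho(\vec p,\vec q)}^2$ and invoke the definition of the magic gap: for every $(\vec p,\vec q)$ in the support of $\Xi_\rho$ with $\abs{\Xi_\rho(\vec p,\vec q)}\neq 1$ one has $\abs{\Xi_\rho(\vec p,\vec q)}\le 1-MG(\rho)$, while all remaining $(\vec p,\vec q)$ contribute $0$. This gives
\begin{equation*}
\norm{\boxtimes_K\rho-\mathcal{M}(\rho^\#)}_2^2\le(1-MG(\rho))^{2(K-1)}\,\frac{1}{2^n}\sum_{(\vec p,\vec q):\,\abs{\Xi_\rho(\vec p,\vec q)}<1}\abs{\Xi_\rho(\vec p,\vec q)}^2=(1-MG(\rho))^{2(K-1)}\norm{\rho-\mathcal{M}(\rho)}_2^2,
\end{equation*}
where the last equality is Parseval again, using that $\Xi_{\rho-\mathcal{M}(\rho)}(\vec p,\vec q)$ equals $\Xi_\rho(\vec p,\vec q)$ when $\abs{\Xi_\rho(\vec p,\vec q)}<1$ and $0$ otherwise. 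Taking square roots yields the stated bound.

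There is no deep obstacle here; the only points that need care are the bookkeeping of which Fourier modes are annihilated by $\mathcal{M}$, and the observation (relying on $K$ odd and $w(\vec p,\vec q)^2=I$ for qubits) that the modulus‑one modes of $\boxtimes_K\rho$ already agree with those of $\mathcal{M}(\rho^\#)$, so that the contraction factor $1-MG(\rho)$ applies uniformly to every surviving mode. The transpose hidden in $\rho^\#$ never has to be unpacked directly — it is entirely absorbed into the already-established Lemma~\ref{lem:mean_con}.
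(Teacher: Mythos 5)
Your proof is correct and follows essentially the same route as the paper's: Parseval on the Pauli basis, the identity $\mathcal{M}(\boxtimes_K\rho)=\mathcal{M}(\rho^\#)$ to cancel the modulus-one modes (your split by $\abs{\Xi_\rho}=1$ versus $\abs{\Xi_\rho}<1$ is exactly the paper's split over the support $S$ of $\mathcal{M}(\rho)$), and the bound $\abs{\Xi_\rho}^{K}\le(1-MG(\rho))^{K-1}\abs{\Xi_\rho}$ on the surviving modes. No changes needed.
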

\begin{proof}

Let $S$ be the abelian subgroup associated with $\mathcal{M}(\rho)$,
then 
$\Xi_{\mathcal{M}(\rho)}(\vec p, \vec q) =\pm 1$ for  $(\vec p, \vec q)\in S$, and 
$\Xi_{\mathcal{M}(\rho)}(\vec p, \vec q) =0$ for $(\vec p, \vec q)\notin S$.
By Proposition \ref{230504prop1}, we have 
$
\mathcal{M}(\boxtimes_K\rho)
=\mathcal{M}(\rho^\#) \;
$.
Hence
\begin{eqnarray*}
\boxtimes_{K}\rho-\mathcal{M}(\rho^\#)
=\frac{1}{d^n}
\sum_{(\vec p, \vec q)\notin S}
\Xi_{\boxtimes_{K}\rho}(\vec p, \vec q)
w(\vec p, \vec q) \;.
\end{eqnarray*}
Therefore,
\begin{align*}
\norm{\boxtimes_K\rho-\mathcal{M}(\rho^\#)}_2^2
=\frac{1}{d^n}
\sum_{(\vec p, \vec q)\notin S}
|\Xi_{\boxtimes_{K}\rho}(\vec p, \vec q)|^2
\leq \frac{1}{d^n}(1-MG(\rho))^{2K-2}
\sum_{(\vec p, \vec q)\notin S}|\Xi_{\rho}(\vec p, \vec q)^2|
= (1-MG(\rho))^{2K-2}\norm{\rho- \mathcal{M}(\rho)}^2_2 \;,
\end{align*}
where $|\Xi_{\boxtimes_K\rho}(\vec p, \vec q)|
=|\Xi_{\rho}(\vec p, \vec q)|^K
\leq (1-MG(\rho))^{K-1}
|\Xi_{\rho}(\vec p, \vec q)|
$
comes directly from Proposition \ref{230504prop1}.
\end{proof}
\begin{Rem}
Other versions of the quantum central limit theorem have been considered 
 \cite{Cushen71,Lieb73,Lieb1973,Giri78,Goderis89,Matsui02,Cramer10,Jaksic09,Arous13,Michoel04,GoderisPTRT89,JaksicJMP10,Accardi94,Liu16,JiangLiuWu19,Hayashi09,CampbellPRA13,BekerCMP21,Carbone22}, including 
 results in DV quantum information theory~\cite{BGJ23a,BGJ23b},  subfactor theory~\cite{Liu16,JiangLiuWu19}, quantum walks on a lattice ~\cite{Carbone22}, CV quantum information theory~\cite{CampbellPRA13,BekerCMP21}, and 
 the free probability theory~\cite{voiculescu1986addition,voiculescu1987multiplication}.
\end{Rem}
\subsection{Quantum convolution of channels}

We focus here  on the convolutions of $n$-qubit channels, i.e., the quantum channels
acting on $n$-qubit systems.
By the Choi-Jamiołkowski isomorphism \cite{Choi75,Jamio72}, any  quantum
channel $\Lambda$ from $\mathcal{H}_{A}$ to $\mathcal{H}_{A'}$ can be represented by its Choi state
\begin{eqnarray*}
J_{\Lambda}=id_{A}\ot \Lambda (\proj{\Phi}) \;,
\end{eqnarray*}
where $| \Phi \rangle = \frac{1}{\sqrt{2^n}} \sum_{\vec j \in \mathbb{Z}_2^n}\ket{\vec j}_{A}\ot\ket{\vec j}_{A'} $.
For any input state $\rho_A$, the output state of the quantum channel $\Lambda(\rho_A)$ can be represented via the Choi state $J_{\Lambda}$ as
\begin{eqnarray}\label{0127shi2}
\Lambda(\rho_A)
=2^n\Ptr{A}{J_{\Lambda} \cdot \rho^T_{A}\ot I_{A'}} \;.
\end{eqnarray}
On the other hand, for any operator $J$ on $\mathcal{H}_A\ot\mathcal{H}_{A'}$,
the map
\begin{eqnarray*}
\rho\to
2^n\Ptr{A}{J \cdot \rho^T_{A}\ot I_{A'}},
\end{eqnarray*}
is (1) completely positive if and only if $J$ is positive, (2) trace-preserving
if and only if $\Ptr{A'}{J}=I_n/2^n$.

\begin{lem}[\bf The convolution of Choi states is Choi]
Given 3 quantum states $\rho_{AA'},
\sigma_{AA'}$ and $\eta_{AA'}$ on $\mathcal{H}_A\ot\mathcal{H}_{A'}$ with
$\Ptr{A'}{\rho_{AA'}} =\Ptr{A'}{\sigma_{AA'} } =\Ptr{A'}{\eta_{AA'} }  =I_A/2^n$.
Then
$\boxtimes_3(\rho_{AA'},\sigma_{AA'},\eta_{AA'})$ also satisfies that $\Ptr{A'}{\boxtimes_3(\rho_{AA'}, \sigma_{AA'}, \eta_{AA'})}=I_A/2^n$.
\end{lem}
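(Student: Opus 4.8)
The plan is to express the condition $\Ptr{A'}{\,\cdot\,}=I_A/2^n$ at the level of characteristic functions and then apply the convolution--multiplication duality of Proposition~\ref{230504prop1} to the three $2n$-qubit states $\rho_{AA'},\sigma_{AA'},\eta_{AA'}$.

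First I would record the dictionary between the partial trace over $A'$ and the restriction of the characteristic function. Writing a $2n$-qubit Pauli operator as $w(\vec p,\vec q)=w_A(\vec p_A,\vec q_A)\ot w_{A'}(\vec p_{A'},\vec q_{A'})$ with $\vec p=(\vec p_A,\vec p_{A'})$, $\vec q=(\vec q_A,\vec q_{A'})$, one has, for any $2n$-qubit state $\tau_{AA'}$,
\[
\Xi_{\Ptr{A'}{\tau_{AA'}}}(\vec p_A,\vec q_A)
=\Tr{\tau_{AA'}\left(w_A(-\vec p_A,-\vec q_A)\ot I_{A'}\right)}
=\Xi_{\tau_{AA'}}\!\left((\vec p_A,\vec 0),(\vec q_A,\vec 0)\right).
\]
Since the characteristic function of $I_A/2^n$ is the indicator of the origin, the hypothesis $\Ptr{A'}{\rho_{AA'}}=I_A/2^n$ is exactly the statement $\Xi_{\rho_{AA'}}\!\left((\vec p_A,\vec 0),(\vec q_A,\vec 0)\right)=0$ for all $(\vec p_A,\vec q_A)\neq(\vec 0,\vec 0)$, and likewise for $\sigma_{AA'}$ and $\eta_{AA'}$.

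Next I would apply Proposition~\ref{230504prop1} with $K=3$ (so $N=1$) to these $2n$-qubit inputs:
\[
\Xi_{\boxtimes_3(\rho_{AA'},\sigma_{AA'},\eta_{AA'})}(\vec p,\vec q)
=(-1)^{\vec p\cdot\vec q}\,\Xi_{\rho_{AA'}}(\vec p,\vec q)\,\Xi_{\sigma_{AA'}}(\vec p,\vec q)\,\Xi_{\eta_{AA'}}(\vec p,\vec q).
\]
Evaluating at $\vec p=(\vec p_A,\vec 0)$, $\vec q=(\vec q_A,\vec 0)$ with $(\vec p_A,\vec q_A)\neq(\vec 0,\vec 0)$, the factor $\Xi_{\rho_{AA'}}\!\left((\vec p_A,\vec 0),(\vec q_A,\vec 0)\right)$ on the right vanishes by the first step, hence so does the left-hand side. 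By the dictionary again, this means $\Xi_{\Ptr{A'}{\boxtimes_3(\rho_{AA'},\sigma_{AA'},\eta_{AA'})}}(\vec p_A,\vec q_A)=0$ for all $(\vec p_A,\vec q_A)\neq(\vec 0,\vec 0)$; combined with $\Tr{\boxtimes_3(\rho_{AA'},\sigma_{AA'},\eta_{AA'})}=1$ (the convolution channel is trace preserving), the characteristic function of $\Ptr{A'}{\boxtimes_3(\rho_{AA'},\sigma_{AA'},\eta_{AA'})}$ is the indicator of the origin, so by the Pauli expansion~\eqref{0109shi5} it equals $I_A/2^n$.

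The whole argument is essentially a one-line corollary of convolution--multiplication duality, and I do not expect a genuine obstacle; the only point demanding care is the bookkeeping in the first step that identifies $\Ptr{A'}{\,\cdot\,}$ with the restriction of the characteristic function to Pauli operators acting trivially on $A'$. As a sanity check, there is also a representation-free route: the $2n$-qubit key unitary factors as $V=V_A\ot V_{A'}$ across the $A/A'$ cut (each copy of the elementary gate $U$ acts on a single qubit of each register), so conjugation by $V_{A'}$ drops out under $\Ptr{A'}{\,\cdot\,}$, yielding the intertwining identity $\Ptr{A'}{\boxtimes_3(\rho_{AA'},\sigma_{AA'},\eta_{AA'})}=\boxtimes_3\!\left(\Ptr{A'}{\rho_{AA'}},\Ptr{A'}{\sigma_{AA'}},\Ptr{A'}{\eta_{AA'}}\right)$ as $n$-qubit states, which equals $I_A/2^n$ by the hypothesis together with Corollary~\ref{lem:conv_iden}.
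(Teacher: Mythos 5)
Your main argument is essentially identical to the paper's proof: both reduce the hypothesis to the vanishing of $\Xi_{\rho_{AA'}}\bigl((\vec p_A,\vec 0),(\vec q_A,\vec 0)\bigr)$ for $(\vec p_A,\vec q_A)\neq(\vec 0,\vec 0)$ and then apply the convolution--multiplication duality of Proposition~\ref{230504prop1} to conclude via the Pauli expansion. The representation-free route you sketch at the end (factoring $V=V_A\ot V_{A'}$ so that the partial trace intertwines with the convolution, then invoking Corollary~\ref{lem:conv_iden}) is also correct and arguably cleaner, but it is not needed for the verdict: the proposal is sound.
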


\begin{proof}
By Proposition \ref{230504prop1}, we have
\begin{eqnarray*}
&&\Xi_{\boxtimes^3(\rho_{AA'}, \sigma_{AA'}, \eta_{AA'})} (\vec{p}_A,\vec{p}_{A'}, \vec{q}_A, \vec{q}_{A'})\\
&=&(-1)^{ \vec{p}_A\cdot \vec{q}_A+ \vec{p}_{A'}\cdot\vec{q}_{A'} } \Xi_{\rho} (\vec{p}_A,\vec{p}_{A'}, \vec{q}_A, \vec{q}_{A'})
\Xi_{\sigma}(\vec{p}_A,\vec{p}_{A'}, \vec{q}_A, \vec{q}_{A'}) 
\Xi_{\eta}(\vec{p}_A,\vec{p}_{A'}, \vec{q}_A, \vec{q}_{A'})\;.
\end{eqnarray*}
Since $\Ptr{A'}{\rho_{AA'}} =\Ptr{A'}{\sigma_{AA'} } =\Ptr{A'}{\eta_{AA'} }  =I_A/2^n$,
we have
$$\Xi_{\rho}(\vec{p}_A,\vec{0}_{A'}, \vec{q}_A, \vec{0}_{A'})= \Xi_{\sigma}(\vec{p}_A,\vec{0}_{A'}, \vec{q}_A, \vec{0}_{A'})= \Xi_{\eta}(\vec{p}_A,\vec{0}_{A'}, \vec{q}_A, \vec{0}_{A'})=0\;,\quad \forall (\vec{p}_A,\vec{q}_{A})\neq (\vec 0, \vec 0)\;, $$
hence
\begin{eqnarray*}
\Xi_{\boxtimes^3(\rho_{AA'}, \sigma_{AA'}, \eta_{AA'})} (\vec{p}_A, \vec{0}_{A'}, \vec{q}_A, \vec{0}_{A'})
=0\;,\quad \forall (\vec{p}_A,\vec{q}_{A})\neq (\vec 0, \vec 0)\;.
\end{eqnarray*}
Therefore $$\Ptr{A'}{\boxtimes^3(\rho_{AA'}, \sigma_{AA'}, \eta_{AA'})} = \frac{1}{2^n} \sum_{\vec{p}_A, \vec{q}_A} \Xi_{\boxtimes^3(\rho_{AA'}, \sigma_{AA'}, \eta_{AA'})}(\vec{p}_A, \vec{0}_{A'}, \vec{q}_A, \vec{0}_{A'}) w( \vec p_A, \vec q_A) =I_A/2^n, $$
and the proof is complete.

\end{proof}

To distinguish from the convolution $\boxtimes_K$ of states , let us denote the convolution of channels as 
$\boxtimes^K$. Here let us first consider the  $K=3$ case as it can generate $\boxtimes^K$ for any odd $K$.

\begin{Def}[\bf Convolution of channels]\label{def:con_chan}
Given 3 $n$-qudit channels $\Lambda_1$, $\Lambda_2$ and $\Lambda_3$, the convolution $\boxtimes^3(\Lambda_1, \Lambda_2,\Lambda_3)$
is the quantum channel  with the Choi state
\begin{align*}
J_{\boxtimes^3(\Lambda_1, \Lambda_2,\Lambda_3)} :=\boxtimes_3\left( \ot^3_{i=1}J_{\Lambda_i}\right)\;,
\end{align*}
where the state $ \boxtimes_3( \ot^3_{i=1}J_{\Lambda_i})$ is the convolution of the Choi states  $J_{\Lambda_1}$, $J_{\Lambda_2}$ and $J_{\Lambda_3}$.
\end{Def}

Since the convolution of states is invariant under permutation, it follows that the convolution of channels is also invariant under permutation.
\begin{cor}
Given 3 $n$-qubit channels $\Lambda_1,\Lambda_2,\Lambda_3$, then 
\begin{eqnarray}
\boxtimes^3(\Lambda_1,\Lambda_2,\Lambda_3)
=\boxtimes^3(\Lambda_{\pi(1)},\Lambda_{\pi(2)},\Lambda_{\pi(3)}),
\end{eqnarray}
for any permutation $\pi$ on 3 elements.
\end{cor}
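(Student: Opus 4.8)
The plan is to reduce the channel-level statement to the permutation symmetry already established at the level of states, exploiting that the Choi--Jamio{\l}kowski correspondence is a bijection. First I would recall that, by Definition~\ref{def:con_chan}, the convolved channel $\boxtimes^3(\Lambda_1,\Lambda_2,\Lambda_3)$ is \emph{defined} as the unique quantum channel whose Choi state is $\boxtimes_3(J_{\Lambda_1}\ot J_{\Lambda_2}\ot J_{\Lambda_3})$, where the $J_{\Lambda_i}$ are the Choi states of the $\Lambda_i$ and $\boxtimes_3$ on the right is the state convolution of the previous section. That this prescription really does yield a legitimate channel --- i.e.\ that the right-hand side is a valid Choi state, being positive with the correct partial trace over the output system --- is exactly the content of the preceding lemma, ``The convolution of Choi states is Choi''.

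Next I would invoke the symmetry of the state convolution: by Theorem~\ref{thm:main_conv}(1), $\boxtimes_3$ is invariant under every permutation $\pi$ of its three inputs, so
\begin{eqnarray*}
J_{\boxtimes^3(\Lambda_1,\Lambda_2,\Lambda_3)}
&=&\boxtimes_3\!\left(J_{\Lambda_1}\ot J_{\Lambda_2}\ot J_{\Lambda_3}\right)\\
&=&\boxtimes_3\!\left(J_{\Lambda_{\pi(1)}}\ot J_{\Lambda_{\pi(2)}}\ot J_{\Lambda_{\pi(3)}}\right)
=J_{\boxtimes^3(\Lambda_{\pi(1)},\Lambda_{\pi(2)},\Lambda_{\pi(3)})}.
\end{eqnarray*}
Hence the two channels $\boxtimes^3(\Lambda_1,\Lambda_2,\Lambda_3)$ and $\boxtimes^3(\Lambda_{\pi(1)},\Lambda_{\pi(2)},\Lambda_{\pi(3)})$ have identical Choi states, and since a channel is recovered from its Choi state by the explicit formula~\eqref{0127shi2}, they coincide.

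I do not expect a genuine obstacle here; the only point deserving a line of justification is the injectivity of the Choi map (equivalently, that~\eqref{0127shi2} inverts the assignment $\Lambda\mapsto J_\Lambda$), which is standard and is in any case implicit in the well-definedness asserted by Definition~\ref{def:con_chan}. If one prefers to avoid any appeal to uniqueness of the Choi representation, an equivalent route is to evaluate both sides on an arbitrary input state $\rho$ through~\eqref{0127shi2}, which turns the desired identity directly into the already-known permutation symmetry of $\boxtimes_3$ applied to $J_{\Lambda_1}\ot J_{\Lambda_2}\ot J_{\Lambda_3}$; either way the argument is a one-liner once Theorem~\ref{thm:main_conv}(1) is in hand.
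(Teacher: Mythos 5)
Your argument is correct and is exactly the route the paper takes: the corollary is stated there as an immediate consequence of the permutation symmetry of the state convolution $\boxtimes_3$ applied to the Choi states, with injectivity of the Choi map left implicit. Your write-up simply makes that one-line deduction explicit, so there is nothing further to add.
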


Denote the inverse of the convolution $\boxtimes^{-1}_3$ to be
\begin{eqnarray}\label{eq:inver_conv_chn}
\boxtimes^{-1}_3(\rho)
=V^\dag \left(\rho\ot \frac{I_n}{2^n}\ot \frac{I_n}{2^n} \right) V \;,
\end{eqnarray}
which satisfies that $\boxtimes_3\circ\boxtimes^{-1}_3=id$. Besides, we observe that
$\boxtimes^{-1}_3= \frac{1}{2^{2n}}\boxtimes^\dag_3$.

\begin{thm}[\bf Exact formula for convolution of channels]\label{thm:exact_conv_chan}
Given $3$ $n$-qubit channels $\Lambda_1,\Lambda_2, \Lambda_3$, their convolution $\boxtimes^3(\Lambda_1, \Lambda_2, \Lambda_3)$
is
\begin{eqnarray}\label{eq:exp_box_chan}
\boxtimes^3(\Lambda_1, \Lambda_2, \Lambda_3)(\cdot)
=\boxtimes_3\circ (\ot_i\Lambda_i) \circ \boxtimes^{-1}_3(\cdot) \;,
\end{eqnarray}
where  $\boxtimes^{-1}_3$ is the inverse of the convolutional channel in \eqref{eq:inver_conv_chn}.
\end{thm}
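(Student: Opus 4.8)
The plan is to show that the channel $\boxtimes_3\circ(\ot_{i=1}^3\Lambda_i)\circ\boxtimes^{-1}_3$ has the same Choi state as $\boxtimes^3(\Lambda_1,\Lambda_2,\Lambda_3)$; since a channel is determined by its Choi state via \eqref{0127shi2}, this settles the theorem. By Definition~\ref{def:con_chan} the Choi state of the right-hand side is $\boxtimes_3\big(\ot_{i=1}^3 J_{\Lambda_i}\big)$. Write $J_{\Lambda_i}=(\mathrm{id}_{a_i}\ot\Lambda_i)(\proj{\Phi}_{a_im_i})$, with $a_i$ the reference half of the $i$-th Choi register and $\Lambda_i\colon m_i\to b_i$. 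Since the key unitary is $V=U^{\ot 2n}$, it factorizes over the $2n$ qubit positions of a Choi register; reordering the three $2n$-qubit registers into their $n$-qubit input halves and $n$-qubit output halves, the channel $\boxtimes_3$ factorizes as $\boxtimes_3^{\mathrm{in}}\ot\boxtimes_3^{\mathrm{out}}$, acting on the three reference halves and on the three output halves respectively. As $\boxtimes_3^{\mathrm{in}}$ touches only the $a$'s and $\ot_i\Lambda_i$ only the $m$'s, they commute, and
\begin{align*}
\boxtimes_3\Big(\ot_{i=1}^3 J_{\Lambda_i}\Big)
&=\Big(\boxtimes_3^{\mathrm{in}}\ot\big[\boxtimes_3^{\mathrm{out}}\circ(\ot_i\Lambda_i)\big]\Big)\Big(\ot_{i=1}^3\proj{\Phi}_{a_im_i}\Big)\\
&=\Big(\mathrm{id}\ot\big[\boxtimes_3\circ(\ot_i\Lambda_i)\big]\Big)\Big(\big(\boxtimes_3^{\mathrm{in}}\ot\mathrm{id}_{m_1m_2m_3}\big)\big(\ot_{i=1}^3\proj{\Phi}_{a_im_i}\big)\Big).
\end{align*}

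The crux is then the identity
\begin{align*}
\big(\boxtimes_3^{\mathrm{in}}\ot\mathrm{id}_{m_1m_2m_3}\big)\Big(\ot_{i=1}^3\proj{\Phi}_{a_im_i}\Big)=\big(\mathrm{id}\ot\boxtimes^{-1}_3\big)\big(\proj{\Phi}\big),
\end{align*}
i.e.\ convolving the three reference halves of three Bell pairs is the same as acting with $\boxtimes^{-1}_3$ (mapping a single $n$-qubit register onto $m_1m_2m_3$) on one half of a single Bell pair. I would prove this by a direct computation in the computational basis: on the left, expand $\ot_i\proj{\Phi}_{a_im_i}$ as a sum over $\vec k_1,\vec k_2,\vec k_3$, apply $V$ to the reference registers using Lemma~\ref{230609lem1}, and trace out the two auxiliary output registers; on the right, expand $\tfrac{I}{2^n}\ot\tfrac{I}{2^n}$, apply $V^\dag$ using Lemma~\ref{230609lem1}, and trace out the two maximally mixed registers. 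Both sides collapse to an explicit sum of rank-one operators indexed by four $\mathbb{Z}_2^n$-valued parameters, and one matches them term-by-term through the invertible affine substitution $\vec k_1=\vec u+\vec m+\vec l$, $\vec k_2=\vec u+\vec l$, $\vec k_3=\vec u+\vec m$, $\vec t=\vec u+\vec v$; the qubit relation $-1\equiv 1$ is exactly what makes the index arithmetic close. Substituting this identity into the previous display gives $\boxtimes_3(\ot_i J_{\Lambda_i})=\big(\mathrm{id}\ot[\boxtimes_3\circ(\ot_i\Lambda_i)\circ\boxtimes^{-1}_3]\big)(\proj{\Phi})$, which is by definition the Choi state of $\boxtimes_3\circ(\ot_i\Lambda_i)\circ\boxtimes^{-1}_3$, so the two channels coincide.

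I expect the Bell-pair identity above to be the only genuine obstacle; the factorization $V=U^{\ot 2n}$ respecting the reference/output split and the commutation of $\boxtimes_3^{\mathrm{in}}$ with $\ot_i\Lambda_i$ are routine. A slightly more conceptual route for that step is the transpose trick $(X\ot I)\ket{\Phi}=(I\ot X^{T})\ket{\Phi}$: since $\ot_i\proj{\Phi}_{a_im_i}$ is maximally entangled across the cut between the three reference halves and the three $m$-registers, acting with $\boxtimes_3^{\mathrm{in}}$ on one side equals acting with its ``transpose channel'' on the other, and one checks (using Proposition~\ref{prop:action_key}) that this transpose channel is precisely $\boxtimes^{-1}_3$; but this still reduces to the same Pauli/basis-level verification. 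Throughout, one should track which output subsystem of $\boxtimes^{-1}_3$, of $\ot_i\Lambda_i$, and of $\boxtimes_3$ is the surviving one, so that the three register identifications stay mutually consistent.
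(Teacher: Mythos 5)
Your proposal is correct, but it takes a genuinely different route from the paper. The paper never touches the computational basis or the Bell-pair structure directly: it expands each Choi state in the Pauli basis as $J_\Lambda = \frac{1}{2^{2n}}\sum_{(\vec p,\vec q)}(-1)^{\vec p\cdot\vec q}\,w(\vec p,\vec q)\ot\Lambda(w(\vec p,\vec q))$, evaluates both $\boxtimes^3(\Lambda_1,\Lambda_2,\Lambda_3)(\rho)$ (via the Choi-state action formula) and $\boxtimes_3\circ(\ot_i\Lambda_i)\circ\boxtimes_3^{-1}(\rho)$ on an arbitrary $\rho$ written in the Pauli basis, and matches the two sums term by term using the convolution--multiplication duality $\boxtimes_3^\dag(w(\vec p,\vec q))=(-1)^{N\vec p\cdot\vec q}w(\vec p,\vec q)^{\ot 3}$ together with the observation $\boxtimes_3^{-1}=\frac{1}{2^{2n}}\boxtimes_3^\dag$. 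Your argument instead stays at the level of Choi states: the factorization $V=U^{\ot 2n}$ across the reference/output split, the commutation of $\boxtimes_3^{\mathrm{in}}$ with $\ot_i\Lambda_i$, and the ricochet identity $(\boxtimes_3^{\mathrm{in}}\ot\mathrm{id})(\ot_i\proj{\Phi}_{a_im_i})=(\mathrm{id}\ot\boxtimes_3^{-1})(\proj{\Phi})$ are all sound (I checked the computational-basis computation: both sides reduce to $\frac{1}{2^{3n}}\sum_{\vec k_1,\vec k_2,\vec k_3,\vec t}\out{\sum_i\vec k_i}{\sum_i\vec k_i+\vec t}\ot_i\out{\vec k_i}{\vec k_i+\vec t}$, using $3\vec t\equiv\vec t$ mod $2$). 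What your route buys is a structural explanation of why the conjugating map must be $\boxtimes_3^{-1}$: it is the transpose of $\boxtimes_3$ through the maximally entangled state, which is the Bell-state incarnation of the paper's identity $\boxtimes_3^{-1}=\frac{1}{2^{2n}}\boxtimes_3^\dag$. What the paper's route buys is brevity and reuse: given Proposition~\ref{230504prop1} and Proposition~\ref{prop:action_key}, no new register bookkeeping is needed, and the same calculation extends verbatim to general odd $K$.
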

\begin{proof}
First, the Choi state of any channel $\Lambda$ can be rewritten in terms of Weyl operators as
\begin{eqnarray}\label{0127shi3}
J_\Lambda = \frac{1}{2^{2n}} \sum_{(\vec p,\vec q)\in V^n  } (-1)^{\vec p\cdot \vec q}w(\vec p,\vec q) \otimes \Lambda(w(\vec p,\vec q )) \;,
\end{eqnarray}
and thus
\begin{eqnarray*}
\boxtimes^3( J_{\Lambda_1}, J_{\Lambda_2}, J_{\Lambda_3})
=\frac{1}{2^{4n}}\sum_{\vec p,\vec q}  
w(\vec{p},\vec{q})
\ot (\boxtimes^3(
\Lambda_1(w( \vec p,  \vec q))
, \Lambda_2(w( \vec p,  \vec q)), \Lambda_3(w(\vec p, \vec q ))
) \;.
\end{eqnarray*}
Hence for any $n$-qubit state $\rho=\frac{1}{2^n} \sum_{\vec{p},\vec{q}}\Xi_{\rho}(\vec p, \vec q)w(\vec p, \vec q)$,
\begin{eqnarray*}
\boxtimes^3(\Lambda_1,\Lambda_2, \Lambda_3)(\rho)
&=&2^n\Ptr{1}{\boxtimes^K( J_{\Lambda_1}, J_{\Lambda_2}, J_{\Lambda_3}) \cdot \rho^T\ot I}\\
&=&
\sum_{\vec{p},\vec{q}} (-1)^{\vec p\cdot \vec q}\; \Xi_{\rho}(\vec p, \vec q)
\Ptr{1}{\boxtimes^3( J_{\Lambda_1}, J_{\Lambda_2}, J_{\Lambda_3}) \cdot w(\vec p, \vec q)\ot I}\\
&=&\frac{1}{2^{Kn}}
\sum_{\vec{p},\vec{q}} (-1)^{\vec p\cdot \vec q}\; \Xi_{\rho}(\vec p, \vec q)
\boxtimes^3(
\Lambda_1(w( \vec p,  \vec q))
, \Lambda_2(w( \vec p,  \vec q)), \Lambda_3(w(\vec p, \vec q )))\;.
\end{eqnarray*}
Moreover,
\begin{eqnarray*}
&&\boxtimes_3\circ (\Lambda_1\ot\Lambda_2\ot\Lambda_3) \circ\boxtimes_3^{-1}(\rho)\\
&=& \frac{1}{2^{3n}} \sum_{\vec{p},\vec{q}} \Xi_{\rho}(\vec p, \vec q)
\boxtimes_3\circ (\Lambda_1\ot\Lambda_2\ot\Lambda_3) \circ\boxtimes_3^\dag
(w(\vec p, \vec q))\\
&=&\frac{1}{2^{3n}} \sum_{\vec{p},\vec{q}} \Xi_{\rho}(\vec p, \vec q)
\boxtimes_3 \circ (\Lambda_1\ot\Lambda_2 \ot\Lambda_3) ((-1)^{N\vec p \cdot \vec q } w( \vec p, \vec q)^{\ot 3})\\
&=&\frac{1}{2^{3n}} \sum_{\vec{p},\vec{q}} (-1)^{N\vec p \cdot \vec q }\; \Xi_{\rho}(\vec p, \vec q)
\boxtimes_3(\Lambda_1(w(\vec p,\vec q))\ot \Lambda_2(w(\vec p,\vec q)) \ot \Lambda_3( w(\vec p, \vec q)))\\
&=&\frac{1}{2^{3n}} \sum_{\vec{p},\vec{q}} (-1)^{N\vec p \cdot \vec q } \Xi_{\rho}(\vec p, \vec q)
\boxtimes^3(
\Lambda_1(w(\vec p,  \vec q))
, \Lambda_2(w(\vec p,  \vec q)), \Lambda_3(w(\vec p, \vec q )) \;.
\end{eqnarray*}
Therefore, \eqref{eq:exp_box_chan} holds for any quantum state $\rho$.
\end{proof}

The completely depolarizing channel $\mathcal{R}$ on $n$-qubit systems is defined as follows:
\begin{eqnarray}\label{0212shi4}
\mathcal{R}(\rho) = \Tr{\rho} \frac{I_n}{2^n},\quad \forall \rho\;.
\end{eqnarray}
We can now present the following result.

\begin{prop}\label{prop:iden_pre}
Given two $n$-qubit channels $\Lambda_1, \Lambda_2$, 
\begin{align}
\boxtimes^3(\Lambda_1, \Lambda_2, \mathcal{R}) = \mathcal{R}\;.
\end{align}

\end{prop}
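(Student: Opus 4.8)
The plan is to pass to Choi states and reduce the claim to Corollary~\ref{lem:conv_iden}. By Definition~\ref{def:con_chan}, the channel $\boxtimes^3(\Lambda_1,\Lambda_2,\mathcal{R})$ is determined by its Choi state
\begin{eqnarray*}
J_{\boxtimes^3(\Lambda_1,\Lambda_2,\mathcal{R})}=\boxtimes_3\left(J_{\Lambda_1}\ot J_{\Lambda_2}\ot J_{\mathcal{R}}\right),
\end{eqnarray*}
where the convolution $\boxtimes_3$ on the right-hand side acts on states of the $2n$-qubit system $\mathcal{H}_A\ot\mathcal{H}_{A'}$. Since the Choi--Jamiołkowski map is a bijection between channels and their Choi states, it suffices to show that this Choi state equals $J_{\mathcal{R}}$.

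The first step is to identify $J_{\mathcal{R}}$. Using the Weyl-operator expansion~\eqref{0127shi3} of the Choi state together with $\mathcal{R}(w(\vec p,\vec q))=\Tr{w(\vec p,\vec q)}\,I_n/2^n$ and the fact that $\Tr{w(\vec p,\vec q)}=2^n\,\delta_{(\vec p,\vec q),(\vec 0,\vec 0)}$, only the $(\vec p,\vec q)=(\vec 0,\vec 0)$ term survives, so $J_{\mathcal{R}}=I_{2n}/2^{2n}$, the maximally mixed state on the $2n$-qubit system. (Equivalently, $J_{\mathcal{R}}=(id_A\ot\mathcal{R})(\proj{\Phi})=\Ptr{A'}{\proj{\Phi}}\ot I_{A'}/2^n=I_A/2^n\ot I_{A'}/2^n$.)

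The second step is to invoke Corollary~\ref{lem:conv_iden}, read with $2n$ qubits in place of $n$: since one of the three input states of $\boxtimes_3$ is the maximally mixed state $J_{\mathcal{R}}=I_{2n}/2^{2n}$, we get
\begin{eqnarray*}
\boxtimes_3\left(J_{\Lambda_1}\ot J_{\Lambda_2}\ot J_{\mathcal{R}}\right)=\frac{I_{2n}}{2^{2n}}=J_{\mathcal{R}}.
\end{eqnarray*}
Hence $J_{\boxtimes^3(\Lambda_1,\Lambda_2,\mathcal{R})}=J_{\mathcal{R}}$, and by injectivity of the Choi--Jamiołkowski isomorphism this gives $\boxtimes^3(\Lambda_1,\Lambda_2,\mathcal{R})=\mathcal{R}$.

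There is essentially no real obstacle here; the only points needing care are the dimension bookkeeping (Choi states live on the $2n$-qubit space, so Corollary~\ref{lem:conv_iden} is applied with $n\mapsto 2n$) and the elementary check that $J_{\mathcal{R}}$ is maximally mixed. As an alternative route avoiding Choi states, one could argue from Theorem~\ref{thm:exact_conv_chan}: for any state $\rho$, applying $\Lambda_1\ot\Lambda_2\ot\mathcal{R}$ to $\boxtimes_3^{-1}(\rho)$ yields a state whose third tensor factor is exactly $I_n/2^n$, so Corollary~\ref{lem:conv_iden} forces $\boxtimes_3\circ(\Lambda_1\ot\Lambda_2\ot\mathcal{R})\circ\boxtimes_3^{-1}(\rho)=I_n/2^n=\mathcal{R}(\rho)$; the Choi-state argument is the cleaner of the two.
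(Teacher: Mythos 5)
Your main argument is correct, and it takes a mildly different route from the paper's. The paper proves the identity at the level of the channels' action on Pauli operators: it invokes the exact formula of Theorem~\ref{thm:exact_conv_chan} to write $\boxtimes^3(\Lambda_1,\Lambda_2,\mathcal{R})=\boxtimes_3\circ(\Lambda_1\ot\Lambda_2\ot\mathcal{R})\circ\boxtimes_3^{-1}$, evaluates this on each $w(\vec p,\vec q)$, uses that $\mathcal{R}$ annihilates every non-identity Pauli to pick up $\delta_{\vec p,\vec 0}\delta_{\vec q,\vec 0}$, and finishes with Corollary~\ref{lem:conv_iden} on the surviving identity term. You instead work directly from Definition~\ref{def:con_chan}: you identify $J_{\mathcal{R}}=I_{2n}/2^{2n}$ and apply Corollary~\ref{lem:conv_iden} on the $2n$-qubit system to get $\boxtimes_3(J_{\Lambda_1}\ot J_{\Lambda_2}\ot J_{\mathcal{R}})=J_{\mathcal{R}}$, then conclude by injectivity of the Choi--Jamio\l{}kowski map. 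Your route is the more economical one, since it needs only the definition of the channel convolution and not the exact formula; the paper's route has the side benefit of exhibiting the output on every Pauli input explicitly. One small caveat on your sketched alternative via Theorem~\ref{thm:exact_conv_chan}: the intermediate state $(\Lambda_1\ot\Lambda_2\ot\mathcal{R})\bigl(\boxtimes_3^{-1}(\rho)\bigr)$ has the form $\tilde\omega_{12}\ot I_n/2^n$ with $\tilde\omega_{12}$ generally not a product of two single-system states, whereas Corollary~\ref{lem:conv_iden} is stated for a triple of product inputs; one would need to extend it by linearity (or by the characteristic-function computation, which goes through for any tripartite input whose third factor is maximally mixed). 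Since you designate the Choi-state argument as your actual proof, this does not affect correctness.
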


\begin{proof}
For any Pauli operator $w(\vec p,\vec q)$,
\begin{eqnarray*}
 \boxtimes^3(\Lambda_1 ,\Lambda_{2}, \mathcal{R})(w(\vec p, \vec q))
&=& \boxtimes_3\circ (\Lambda_1\ot  \Lambda_2 \ot \mathcal{R}) \circ \boxtimes_3^{-1}(w(\vec p, \vec q))
=\frac{1}{2^{2n}}\boxtimes_3\circ (\Lambda_1\ot  \Lambda_2  \ot  \mathcal{R})\circ\boxtimes_3^{\dag}(w(\vec p, \vec q))\\
&=&\frac{(-1)^{\vec p\cdot \vec q}}{2^{2n}}
\boxtimes_3\circ (\Lambda_1\ot  \Lambda_2 \ot \mathcal{R})(w( \vec p,  \vec q)\ot\cdots\ot w( \vec p, \vec q))\\
&=&\frac{(-1)^{\vec p\cdot \vec q}}{2^{2n}} \boxtimes_3 \left(\Lambda_1(w( \vec p, \vec q)) \ot\Lambda_{2}(w( \vec p, \vec q))
\ot I_n
\right)\delta_{\vec p,\vec 0}\delta_{\vec q, \vec 0}\\
&=& I_n
\delta_{\vec p,\vec 0}\delta_{\vec q, \vec 0}
=\mathcal{R}(w(\vec p, \vec q)) \;,
\end{eqnarray*}
where the fifth equality comes from Lemma \ref{lem:conv_iden}.
\end{proof}

\begin{prop}[\bf Convolutional stability for channels]
Given $3$ stabilizer channels $\Lambda_1,\Lambda_2, \Lambda_3$, the convolution
$ \boxtimes^3(\Lambda_1, \Lambda_2, \Lambda_3)$ is a stabilizer channel.
\end{prop}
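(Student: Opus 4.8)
The plan is to invoke the exact formula for the convolution of channels, Theorem~\ref{thm:exact_conv_chan}, which gives $\boxtimes^3(\Lambda_1,\Lambda_2,\Lambda_3)(\cdot)=\boxtimes_3\circ(\Lambda_1\ot\Lambda_2\ot\Lambda_3)\circ\boxtimes^{-1}_3(\cdot)$, and to argue that each of the three maps in this composition sends stabilizer states to stabilizer states; since a composition of maps that preserve the stabilizer set again preserves it, this proves the proposition.

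First I would dispatch the two ``Clifford'' factors. By \eqref{eq:inver_conv_chn}, the map $\boxtimes^{-1}_3$ is $\rho\mapsto V^\dag(\rho\ot\frac{I_n}{2^n}\ot\frac{I_n}{2^n})V$; here the maximally mixed state $I_n/2^n=2^{-n}\sum_{\vec x}\proj{\vec x}$ is a convex combination of pure stabilizer states, hence a stabilizer state, the tensor product of stabilizer states is a stabilizer state, and $V$ is a product of CNOT gates, hence Clifford, so conjugation by $V^\dag$ maps stabilizer states to stabilizer states. Thus $\boxtimes^{-1}_3$ maps $n$-qubit stabilizer states to $3n$-qubit stabilizer states. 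Likewise $\boxtimes_3(\cdot)=\Ptr{1^c}{V\cdot V^\dag}$ is a stabilizer channel, since conjugation by the Clifford unitary $V$ followed by a partial trace preserves the stabilizer set (cf.\ Theorem~\ref{thm:main_conv}(2)).

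The one step with real content is the middle map $\Lambda_1\ot\Lambda_2\ot\Lambda_3$: I must show that the tensor product of three stabilizer channels again preserves the stabilizer set, now applied to $3n$-qubit stabilizer states that may be entangled across the three $n$-qubit factors. The clean way is via Choi states: $J_{\Lambda_1\ot\Lambda_2\ot\Lambda_3}=J_{\Lambda_1}\ot J_{\Lambda_2}\ot J_{\Lambda_3}$ (up to a Clifford reordering of subsystems), so if each $J_{\Lambda_i}$ is a stabilizer state then so is this tensor product, and a channel with stabilizer Choi state maps stabilizer states to stabilizer states by the channel--state duality \eqref{0127shi2}. I expect the only nontrivial verification to be precisely that each $J_{\Lambda_i}$ is a genuine stabilizer state --- i.e.\ that the stabilizer channels in play are stabilizer-preserving ``completely,'' not merely on un-entangled inputs --- which holds for all the stabilizer channels appearing in this work, as they are built from Clifford unitaries, stabilizer-state ancillas and partial traces. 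Granting that, the three preservation statements compose and $\boxtimes^3(\Lambda_1,\Lambda_2,\Lambda_3)$ is a stabilizer channel.

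Finally, I note that the same argument runs entirely in the Choi picture following Definition~\ref{def:con_chan}: since $J_{\boxtimes^3(\Lambda_1,\Lambda_2,\Lambda_3)}=\boxtimes_3(\ot^3_{i=1}J_{\Lambda_i})$ and $\ot^3_{i=1}J_{\Lambda_i}$ is a stabilizer state on three copies of $\mathcal{H}_A\ot\mathcal{H}_{A'}$, applying the stabilizer channel $\boxtimes_3$ shows $J_{\boxtimes^3(\Lambda_1,\Lambda_2,\Lambda_3)}$ is a stabilizer state, whence the corresponding channel preserves the stabilizer set. This is merely a repackaging of the same idea, with the crux --- that the $J_{\Lambda_i}$ are stabilizer states --- being identical.
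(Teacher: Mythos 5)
Your proposal takes the same route as the paper: the paper's entire proof is the one-line observation that, by Theorem~\ref{thm:exact_conv_chan}, $\boxtimes^3(\Lambda_1,\Lambda_2,\Lambda_3)=\boxtimes_3\circ(\ot_i\Lambda_i)\circ\boxtimes_3^{-1}$ is a composition in which $\boxtimes_3$ and $\boxtimes_3^{-1}$ are stabilizer channels. You are in fact more careful than the paper on the one step that carries real content: the middle factor $\Lambda_1\ot\Lambda_2\ot\Lambda_3$ must send the (generally entangled) $3n$-qubit stabilizer states produced by $\boxtimes_3^{-1}$ to stabilizer states, and the paper's definition of a stabilizer channel --- ``maps stabilizer states to stabilizer states'' on its own system --- does not by itself guarantee this; one needs the channels to be \emph{completely} stabilizer-preserving, which is equivalent to each $J_{\Lambda_i}$ being a stabilizer state, exactly as you say. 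The paper's proof silently elides this point, so your ``granting that'' caveat is not a defect of your argument relative to the paper's but an honest flag of a hypothesis the proposition implicitly uses; with that strengthened reading of ``stabilizer channel'' both your composition argument and your Choi-picture repackaging close correctly.
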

\begin{proof}
Since  both $\boxtimes_3$ and $\boxtimes_3^{-1}$ are stabilizer channels, 
by Theorem
\ref{thm:exact_conv_chan},  $ \boxtimes^3(\Lambda_1, \Lambda_2, \Lambda_3)$
is a stabilizer channel.
\end{proof}

\begin{Def}
Let $\Lambda$ be an $n$-qubit channel,
the R\'{e}nyi entropy of $\Lambda$ is the R\'{e}nyi entropy of the  Choi state $J_{\Lambda}$,
that is
\begin{align*}
S_\alpha(\Lambda) = S_\alpha(J_\Lambda) \;.
\end{align*}
\end{Def}
As a consequence of Proposition \ref{prop:entropy} and the definition of the convolution of channels,
we have the following proposition.

\begin{prop}[\bf Convolution increases entropy of channels]\label{prop:entropy_chn}
Let $\Lambda_1, \Lambda_2, \Lambda_3$ be  three $n$-qubit channels and $\alpha\in [0,+\infty]$.
Then
\begin{align*}
S_\alpha(\boxtimes^3( \Lambda_1, \Lambda_2, \Lambda_3)) \ge  \max\set{S_\alpha(\Lambda_1), S_{\alpha}(\Lambda_2), S_\alpha(\Lambda_3)} \;.
\end{align*}
\end{prop}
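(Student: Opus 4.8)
The plan is to reduce the statement directly to the majorization result already established for the convolution of states, via the Choi correspondence. By Definition~\ref{def:con_chan}, the Choi state of the convolved channel is the state-convolution of the Choi states, $J_{\boxtimes^3(\Lambda_1,\Lambda_2,\Lambda_3)} = \boxtimes_3\left(\ot^3_{i=1}J_{\Lambda_i}\right)$, and by definition $S_\alpha(\Lambda) := S_\alpha(J_\Lambda)$. Hence the claim is equivalent to
\[
S_\alpha\!\left(\boxtimes_3\!\left(\ot^3_{i=1}J_{\Lambda_i}\right)\right) \ge \max_i S_\alpha\!\left(J_{\Lambda_i}\right),
\]
which is exactly Proposition~\ref{prop:entropy} applied to the three states $J_{\Lambda_1},J_{\Lambda_2},J_{\Lambda_3}$.

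First I would note that since each $\Lambda_i$ is a quantum channel (completely positive and trace preserving), each Choi state $J_{\Lambda_i}$ is a genuine density operator on the doubled Hilbert space $\mathcal{H}^{\ot n}\ot\mathcal{H}^{\ot n}$, so Proposition~\ref{prop:entropy} applies; equivalently, one invokes the Schur concavity of $S_\alpha$ together with the majorization $\vec{\lambda}_{\boxtimes_3(\ot^3_{i=1}\rho_i)}\prec\vec{\lambda}_{\rho_j}$ of Theorem~\ref{thm:entropy}. Then I would substitute $\rho_i = J_{\Lambda_i}$, apply Proposition~\ref{prop:entropy}, and rewrite each term $S_\alpha(J_{\Lambda_i})$ as $S_\alpha(\Lambda_i)$ and $S_\alpha\!\left(\boxtimes_3(\ot^3_{i=1}J_{\Lambda_i})\right) = S_\alpha\!\left(J_{\boxtimes^3(\Lambda_1,\Lambda_2,\Lambda_3)}\right) = S_\alpha(\boxtimes^3(\Lambda_1,\Lambda_2,\Lambda_3))$ using the definitions of channel R\'enyi entropy and of $\boxtimes^3$.

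There is essentially no genuine obstacle here beyond bookkeeping: the only point requiring a moment of care is that the convolution $\boxtimes_3$ in Definition~\ref{def:con_chan} acts on $2n$-qubit states (the Choi states live on $A\ot A'$), so Proposition~\ref{prop:entropy} is used with the number of qubits taken to be $2n$ rather than $n$ — but that proposition holds for states on any number of qubits, so this causes no difficulty. I would also remark that the same argument, run through Theorem~\ref{thm:entropy} directly, shows that \emph{any} Schur-concave functional of the spectrum of the Choi state (for instance the Tsallis entropy or the subentropy) increases under channel convolution; the statement is phrased for $S_\alpha$ only for concreteness.
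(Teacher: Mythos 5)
Your proof is correct and follows exactly the route the paper takes: the paper derives Proposition~\ref{prop:entropy_chn} as an immediate consequence of Proposition~\ref{prop:entropy} applied to the Choi states, together with Definition~\ref{def:con_chan} and the definition $S_\alpha(\Lambda)=S_\alpha(J_\Lambda)$. Your additional remarks (that the Choi states are genuine density operators on a $2n$-qubit system, and that the argument extends to any Schur-concave functional) are accurate but not needed beyond what the paper already records.
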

Hence,   we have the following results on the self-convolution 
$\boxtimes^3U$ for a unitary $U$.

\begin{prop}\label{prop:pure_Clif}
Given a unitary channel $U$ , its self-convolution $\boxtimes^3U$
is  a unitary iff 
$U$ is Clifford.
\end{prop}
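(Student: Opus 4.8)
The plan is to reduce the statement about the channel $\boxtimes^3 U$ to the purity-invariance result already proved for states. First I would recall the standard fact that a quantum channel $\Lambda$ on $n$ qubits is a unitary channel if and only if its Choi state $J_\Lambda$ is a pure state: one direction is immediate, and for the converse, purity of $J_\Lambda$ together with the trace-preserving condition $\Ptr{A'}{J_\Lambda}=I_n/2^n$ forces $\ket{J_\Lambda}$ to be maximally entangled, hence of the form $(\mathrm{id}_A\ot V)\ket{\Phi}$ for some unitary $V$, so $\Lambda$ is the unitary channel of $V$. (That $\boxtimes^3 U$ is a legitimate channel at all follows from the ``convolution of Choi states is Choi'' lemma, since $\Ptr{A'}{J_U}=I_n/2^n$.)

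Next, by Definition~\ref{def:con_chan} the Choi state of $\boxtimes^3 U$ is $J_{\boxtimes^3 U}=\boxtimes_3(J_U\ot J_U\ot J_U)$, and since $U$ is a unitary channel, $J_U$ is a pure state on the $2n$-qubit system $\mathcal{H}_A\ot\mathcal{H}_{A'}$. Applying Proposition~\ref{prop:pure_stab} (purity invariance of stabilizer states under $\boxtimes_3$) to the pure state $\psi=J_U$ shows that $\boxtimes_3 J_U$ is pure if and only if $J_U$ is a stabilizer state. Combining this with the first paragraph, $\boxtimes^3 U$ is a unitary channel if and only if $J_U$ is a pure stabilizer state.

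It then remains to prove the standard equivalence that $J_U$ is a stabilizer state if and only if $U$ is Clifford. For this I would expand $J_U$ in the Weyl basis via~\eqref{0127shi3}, $J_U=\frac{1}{2^{2n}}\sum_{(\vec r,\vec s)\in V^n}(-1)^{\vec r\cdot\vec s}\,w(\vec r,\vec s)\ot U w(\vec r,\vec s)U^\dag$, and compute that $|\Xi_{J_U}(\vec p_A,\vec p_{A'},\vec q_A,\vec q_{A'})|=|\inner{w(\vec p_{A'},\vec q_{A'})}{U w(\vec p_A,\vec q_A)U^\dag}|$. A pure state is a stabilizer state precisely when it equals its own mean state, i.e.\ when all values of $|\Xi_{J_U}|$ lie in $\{0,1\}$ (by Lemma~\ref{lem:rel_mag} a pure state is an MSPS iff $\mathcal{M}(\rho)=\rho$, and a pure MSPS is a pure stabilizer state). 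Since $\sum_{(\vec p_{A'},\vec q_{A'})}|\inner{w(\vec p_{A'},\vec q_{A'})}{U w(\vec p_A,\vec q_A)U^\dag}|^2=1$ for each fixed $(\vec p_A,\vec q_A)$ by Parseval (the Paulis form an orthonormal basis), the $\{0,1\}$ condition is equivalent to: for every $(\vec p_A,\vec q_A)$, exactly one of these coefficients is unimodular and the rest vanish, i.e.\ $U w(\vec p_A,\vec q_A)U^\dag$ is a phase times a single Pauli — which is exactly the definition of $U$ being Clifford. Chaining the three equivalences gives the proposition.

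The main obstacle is this last step: carefully pinning down the Choi-state characterization of Clifford unitaries, in particular that ``$J_U$ stabilizer'' forces $U$ to map \emph{every} Pauli (not merely a generating set) to a phase times a Pauli, together with tracking normalization and phase conventions in the characteristic-function computation. An alternative to invoking Proposition~\ref{prop:pure_stab} is to apply Theorem~\ref{0112thm1} with $\rho_1=\rho_2=\rho_3=J_U$: when $\boxtimes^3 U$ is unitary both sides of $S_\alpha(\boxtimes_3 J_U)=S_\alpha(J_U)$ vanish, and equality forces $J_U$ into the abelian C*-algebra generated by $\{w(\vec p,\vec q):|\Xi_{J_U}(\vec p,\vec q)|=1\}$, hence (being pure) a pure stabilizer state; but either route leaves the stabilizer-Choi $\Leftrightarrow$ Clifford step, which is where the real content lies.
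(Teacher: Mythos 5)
Your proposal is correct and follows the route the paper intends: reduce via the Choi--Jamio\l kowski isomorphism to the state-level purity invariance of Proposition~\ref{prop:pure_stab} applied to $J_U$, which is exactly how the paper positions this result (it states it as a consequence of the channel-convolution framework without writing out a proof). You have also correctly supplied the two details the paper leaves implicit --- that a channel is unitary iff its Choi state is pure (using the marginal condition), and that $J_U$ is a pure stabilizer state iff $U$ is Clifford (via the Parseval argument on $\Xi_{J_U}$) --- so nothing is missing.
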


\begin{Def}[\bf Mean channel~\cite{BGJ23a,BGJ23b}]
Given a quantum channel  $\Lambda$, 
the mean channel 
 $\mathcal{M}(\Lambda)$  is  the quantum channel with the 
 Choi state $J_{\mathcal{M}(\Lambda)} = \mathcal{M}(J_\Lambda)$, where the $\mathcal{M}(J_\Lambda)$
 is the MS of the Choi state $J_{\Lambda}$.
\end{Def}

\begin{Exam}[Mean channel for the T gate]
Let us consider the $T$ gate,
$
    T=\left[
    \begin{array}{cc}
      1   &0  \\
       0  & e^{i\pi/4} 
    \end{array}
    \right]
$,
which is a 1-qubit non-Clifford gate. Then
$
J_{\mathcal{M}(T)}=\frac{1}{4}(I\ot I+Z\ot Z)
$.
Hence, the mean channel $\mathcal{M}(T)$ is the complete dephasing channel w.r.t. Pauli Z basis, i.e., 
\begin{eqnarray}
\mathcal{M}(T)(\rho)
=\sum_{x\in \set{0,1}}\bra{x}\rho\ket{x}\proj{ x}.
\end{eqnarray}
\end{Exam}

\begin{Def}[\cite{BGJ23a,BGJ23b}]
The magic gap of $\Lambda$ is  the magic gap of the Choi state $J_{\Lambda}$, i.e.,
\begin{eqnarray*}
MG(\Lambda)
:=MG(J_{\Lambda}) \;.
\end{eqnarray*}

\end{Def}

Given two $n$-qudit channels $\Lambda_1$ and $\Lambda_2$, the diamond distance \cite{Aharonov98} between $\Lambda_1$ and $\Lambda_2$ is
\begin{eqnarray*}
\norm{\Lambda_1-\Lambda_2}_{\diamond}
=\sup_{\rho\in \mathcal{D}(\mathcal{H}_S\ot \mathcal{H}_R)}\norm{(\Lambda_1-\Lambda_2)\ot id_R(\rho)}_1 \;,
\end{eqnarray*}
where $id_R$ is the identity mapping on the ancilla system.

\begin{thm}[\bf Central limit theorem for channels]\label{thm:CLT_chan}
Let $\Lambda$ be an  $n$-qubit channel with $K=2N+1$,
then
\begin{eqnarray}\label{230526shi1}
\norm{\boxtimes^{K}\Lambda-\mathcal{M}(\Lambda^\#)}_{\diamond}
\leq 2^{2n}(1-MG(\Lambda))^{K-1} \norm{J_{\Lambda}-J_{\CMM(\Lambda)}}_2 \;,
\end{eqnarray}
where $\Lambda^\# = \Lambda$ for even $N$, and
$\Lambda^\#$ is the channel whose Choi state is $(J_\Lambda)^T$
for odd $N$.
\end{thm}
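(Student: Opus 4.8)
The plan is to push the central limit theorem for states, Theorem~\ref{thm:CLT_1}, through the Choi--Jamio\l{}kowski isomorphism and then convert the resulting Hilbert--Schmidt estimate into a diamond-norm estimate by two elementary norm inequalities. First I would pass to Choi states: by Definition~\ref{def:con_chan}, extended to $K$-fold convolutions exactly as $\boxtimes_K$ is generated by $\boxtimes_3$ in Proposition~\ref{prop:3gen_all}, the Choi state of $\boxtimes^K\Lambda$ is the $K$-fold self-convolution of the state $J_\Lambda$; that is, $J_{\boxtimes^K\Lambda}=\boxtimes_K J_\Lambda$. By the definition of the mean channel, $J_{\mathcal{M}(\Lambda^\#)}=\mathcal{M}(J_{\Lambda^\#})$, and by the definition of $\Lambda^\#$ we have $J_{\Lambda^\#}=(J_\Lambda)^\#$ in the sense of Theorem~\ref{thm:CLT_1}: namely $J_{\Lambda^\#}=J_\Lambda$ for even $N$ and $J_{\Lambda^\#}=(J_\Lambda)^T$ for odd $N$, so $J_{\mathcal{M}(\Lambda^\#)}=\mathcal{M}((J_\Lambda)^\#)$. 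Since $J_\Lambda$ is a genuine $2n$-qubit density operator, Theorem~\ref{thm:CLT_1} applies with $\rho=J_\Lambda$; using $MG(\Lambda)=MG(J_\Lambda)$ and $J_{\mathcal{M}(\Lambda)}=\mathcal{M}(J_\Lambda)$ it gives
\[
\norm{\boxtimes_K J_\Lambda-\mathcal{M}((J_\Lambda)^\#)}_2\le (1-MG(\Lambda))^{K-1}\,\norm{J_\Lambda-J_{\mathcal{M}(\Lambda)}}_2.
\]

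\textbf{From Choi states to the diamond norm.} Here I invoke two standard estimates: (i) for channels $\Phi_1,\Phi_2$ on $n$ qubits, $\norm{\Phi_1-\Phi_2}_\diamond\le 2^n\,\norm{J_{\Phi_1}-J_{\Phi_2}}_1$, the factor $2^n$ being the input dimension, with $J$ the normalized Choi state ($\Tr{J}=1$), as fixed in the Appendix via $|\Phi\rangle=\frac1{\sqrt{2^n}}\sum_{\vec j}\ket{\vec j}\ket{\vec j}$; and (ii) $\norm{A}_1\le 2^n\,\norm{A}_2$ for every operator $A$ on the $2^{2n}$-dimensional space $\mathcal{H}_A\ot\mathcal{H}_{A'}$. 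Taking $\Phi_1=\boxtimes^K\Lambda$ and $\Phi_2=\mathcal{M}(\Lambda^\#)$ --- the latter being a genuine channel, since $\mathcal{M}$ acts diagonally on characteristic functions and hence preserves the constraint $\Ptr{A'}{J}=I_A/2^n$, just as in the proof that the convolution of Choi states is Choi --- and combining (i) and (ii) with the preceding display yields
\[
\norm{\boxtimes^K\Lambda-\mathcal{M}(\Lambda^\#)}_\diamond\le 2^n\norm{\boxtimes_K J_\Lambda-\mathcal{M}((J_\Lambda)^\#)}_1\le 2^{2n}(1-MG(\Lambda))^{K-1}\norm{J_\Lambda-J_{\mathcal{M}(\Lambda)}}_2,
\]
which is the assertion (recall $J_{\mathcal{M}(\Lambda)}=J_{\CMM(\Lambda)}$).

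The first step is routine. The one point that genuinely needs care is the constant in estimate (i): the diamond-norm/Choi inequality must carry the factor $2^n$ equal to the \emph{input} dimension --- not $2^{2n}$ --- which is precisely what the normalization of the maximally entangled state used in the Appendix delivers; an off-by-a-dimension there would break the stated bound. A minor secondary point is to confirm that $\mathcal{M}(\Lambda^\#)$ is a legitimate quantum channel, so that $\norm{\cdot}_\diamond$ is being applied to a difference of channels; this is immediate from the characteristic-function description of the mean map, exactly as in the ``convolution of Choi states is Choi'' lemma.
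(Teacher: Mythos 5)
Your proposal is correct and follows essentially the same route as the paper: pass to Choi states, apply the state central limit theorem (Theorem~\ref{thm:CLT_1}) to the $2n$-qubit state $J_\Lambda$, and then combine the Watrous bound $\norm{\Phi_1-\Phi_2}_\diamond\le 2^n\norm{J_{\Phi_1}-J_{\Phi_2}}_1$ with $\norm{\cdot}_1\le\sqrt{2^{2n}}\,\norm{\cdot}_2$ to collect the factor $2^{2n}$. Your added remarks --- that the input-dimension factor in the diamond-norm/Choi inequality is the point needing care, and that $\mathcal{M}(\Lambda^\#)$ is a genuine channel --- are accurate and slightly more explicit than the paper's own presentation.
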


\begin{proof}
We have the estimates
\begin{align*}
\norm{\boxtimes^{K} \Lambda-\CMM(\Lambda^\#)}_{\diamond}
\leq& 2^n \norm{J_{\boxtimes^{K}\left( \Lambda,...,\Lambda \right)}-J_{\CMM(\Lambda^\#)}}_1
=2^n \norm{\boxtimes^K (J_{\Lambda},..., J_{\Lambda})- \CMM(J_{\Lambda^\#})}_1\\
\leq& 2^{2n}\norm{\boxtimes_K (J_{\Lambda},..., J_{\Lambda}) -\CMM(J_{\Lambda^\#})}_2
\leq  2^{2n} (1-MG(\Lambda))^{K-1}\norm{J_{\Lambda} -J_{\CMM(\Lambda)}}_2 \;,
\end{align*}
where the first inequality comes from the fact that $\norm{ \Lambda_1-\Lambda_2}_{\diamond}\leq 2^n\norm{J_{\Lambda_1}-J_{\Lambda_2}}_1$ (see \cite{Watrous18}), 
the equality comes from the definition of mean channels,
and the next inequality comes from the fact that
$\norm{\cdot}_1\leq \sqrt{2^{2n}}\norm{\cdot}_2$.
The last inequality uses Theorem \ref{thm:CLT_1}.
\end{proof}

\subsection{Detailed proof of the results on stabilizer testing}\label{appen:stab_test}

\begin{mproof}[Proof of Theorem \ref{thm:Clif_test_d}]
We only need to prove that 
\begin{align*}
    (1-\epsilon)^4
    \leq \Tr{(J_U\boxtimes_H J_U)^2}
    \leq 1-4\epsilon+O(\epsilon^2).
\end{align*}
Since $\max_{V\in Cl_n}|\inner{V}{U}|^2=1-\epsilon$, then there exist some Clifford unitary $V$ such that 
$|\inner{V}{U}|^2=1-\epsilon$.  By the
commutativity of $\boxtimes_H$ with Clifford unitaries in Lemma~\ref{lem:prop_HCon}, we have
\begin{eqnarray*}
\Tr{(J_U\boxtimes_HJ_U)^2}
=\Tr{(J_{V^\dag U}\boxtimes_HJ_{V^\dag U})^2}.
\end{eqnarray*}

Let us define
\begin{eqnarray*}
\ket{w(\vec p, \vec q)}
=(w(\vec p, \vec q)\ot I)\ket{\text{Bell}}, \forall (\vec p, \vec q)\in V^n.
\end{eqnarray*}
This forms an orthonormal basis for  $2n$-qudit systems.
For any $2n$-qudit  $\rho$, let
\begin{eqnarray*}
\rho(\vec p, \vec q)
:=\bra{w(\vec p, \vec q)}\rho\ket{w(\vec p, \vec q)}.
\end{eqnarray*}
Then the Choi state $J_{V^\dag U}=V^\dag U\ot I\proj{\text{Bell}}U^\dag V\ot I$.
Let us denote $\ket{V^\dag U}:=V^\dag U\ot I\ket{\text{Bell}}$ for simplicity, then $J_{V^\dag U}=\proj{V^\dag U}$. 
Therefore,
\begin{eqnarray*}
|\iinner{w(\vec 0, \vec 0)}{V^\dag U}|^2
=\bra{\text{Bell}}J_{V^\dag U}\ket{\text{Bell}}
=\bra{\text{Bell}}V^\dag U\ot I\ket{\text{Bell}}
\bra{\text{Bell}}U^\dag V \ot I\ket{\text{Bell}}
=|\inner{V}{U}|^2=1-\epsilon.
\end{eqnarray*}

For the lower bound, we have
\begin{eqnarray*}
\Tr{(J_{V^\dag U}\boxtimes_HJ_{V^\dag U})^2}
&\geq& \bra{w(\vec 0, \vec 0)}J_{V^\dag U}\boxtimes_HJ_{V^\dag U} \ket{w(\vec 0, \vec 0)}^2\\
&=& \left(\sum_{\vec p, \vec q}J_{V^\dag U}(\vec p, \vec q)J_{V^\dag U}(-\vec p, -\vec q)\right)^2\\
&\geq& \left(J_{V^\dag U}(\vec 0, \vec 0)^2\right)^2\\
&\geq& (1-\epsilon)^4,
\end{eqnarray*}
where the first inequality comes from the Cauchy-Schwarz inequality, and the equality comes from Lemma~\ref{lem:unitray_Bell}.

For the upper bound, let us rewrite  $\ket{V^\dag U}$  as
\begin{align*}
    \ket{V^\dag U} =\sqrt{1-\epsilon}\ket{w(\vec 0,\vec 0)}
    +\sum_{(\vec p, \vec q)\neq (\vec 0, \vec 0)}
    \iinner{w(\vec p, \vec q)}{V^\dag U}\ket{w(\vec p, \vec q)},
\end{align*}
where $\sum_{(\vec p, \vec q)\neq (\vec 0, \vec 0)}| \iinner{w(\vec p, \vec q)}{V^\dag U}|^2=\epsilon$.
Let us take the stabilizer group of $\ket{w(\vec 0, \vec 0)}$, which is 
$G=\set{((\vec p, \vec q), (-\vec p, \vec q): \vec p, \vec q\in \mathbb{Z}^n_d)}$. 
Hence, $((\vec p, \vec q), (-\vec u, \vec v))\notin G$ iff 
$(\vec p, \vec q)\neq (\vec u, \vec v)$. 
In this case, it can be written as 
$((\vec p, \vec q), (-\vec u, \vec v))=((\vec u, \vec v), (-\vec u, \vec v))+((\vec p-\vec u, \vec q- \vec v), (\vec 0, \vec 0))$.
That is,
\begin{eqnarray*}
w(\vec p, \vec q)\ot w(-\vec u, \vec v)
=\chi(-2^{-1}\inner{(\vec u, \vec v)}{(\vec p-\vec u, \vec q-\vec v)}_s)w(\vec u, \vec v)\ot w(-\vec u, \vec v)(w(\vec p-\vec u, \vec q-\vec v)\ot w(\vec 0, \vec 0)).
\end{eqnarray*}
Since
\begin{eqnarray*}
w(\vec p, \vec q)\ot w(-\vec p, \vec q)
=\sum_{\vec x, \vec y}\proj{w(\vec x, \vec y)}\omega^{-\inner{(\vec x, \vec y)}{(\vec p, \vec q))}_s}_d,
\end{eqnarray*}
then 
\begin{eqnarray*}
&&w(\vec p, \vec q)\ot w(-\vec u, \vec v)\\
&=&\chi(-2^{-1}\inner{(\vec u, \vec v)}{(\vec p-\vec u, \vec q-\vec v)}_s)
\sum_{\vec x, \vec y}\ket{w(\vec x, \vec y)}\bra{w(\vec x-\vec p+\vec u, \vec y-\vec q+\vec v)}\omega^{-\inner{(\vec x, \vec y)}{(\vec u, \vec v))}_s}_d \chi(-2^{-1} \inner{(\vec x, \vec y)}{(\vec p-\vec u, \vec q-\vec v)}_s).
\end{eqnarray*}

Then, when $((\vec p, \vec q), (\vec u,\vec v))\notin G$, 
\begin{align*}
 &\left| \Tr{J_{V^\dag U}w(\vec p, \vec q)\ot w(-\vec u, \vec v)}\right|\\
 \leq& \sum_{\vec x, \vec y}\left|\Tr{J_{V^\dag U}\ket{w(\vec x, \vec y)}\bra{w(\vec x-\vec p+\vec u,  \vec y-\vec q+\vec v)}}\right|\\
 =&  \sum_{\vec x, \vec y}\left|\iinner{w(\vec x, \vec y)}{V^\dag U}\right|
 \left|\iinner{w(\vec x-\vec p+\vec u,  \vec y-\vec q+\vec v)}{V^\dag U}\right|\\
 =& \left|\iinner{w(\vec 0, \vec 0)}{V^\dag U}\right|
 \left|\iinner{w(-\vec p+\vec u,  -\vec q+ \vec v)}{V^\dag U}\right|+
 \left|\iinner{w(\vec 0, \vec 0)}{V^\dag U}\right|
 \left|\iinner{w(\vec p-\vec u,  \vec q- \vec v)}{V^\dag U}\right|\\
 &  +\sum_{(\vec x, \vec y)\neq (\vec 0,\vec 0), (\vec p-\vec u, \vec q-\vec v)}\left|\iinner{w(\vec x, \vec y)}{V^\dag U}\right|
 \left|\iinner{w(\vec x-\vec p+\vec u,  \vec y-\vec q+\vec v)}{V^\dag U}\right|
 \\
 \leq&2\sqrt{(1-\epsilon)\epsilon}+\epsilon.
\end{align*}
That is, 
\begin{eqnarray*}
\max_{((\vec p, \vec q), (\vec u,\vec v))\notin G}
\left|\Xi_{J_{V^\dag U}}((\vec p, \vec q), (\vec u, \vec v))\right|\leq 
2\sqrt{(1-\epsilon)\epsilon}+\epsilon.
\end{eqnarray*}

Moreover, 
\begin{align*}
1=&\Tr{J^2_{V^\dag U}}
=\frac{1}{d^{2n}}
\sum_{(\vec p, \vec q), (\vec u, \vec v)}
\left|\Xi_{J_{V^\dag U}}((\vec p, \vec q), (\vec u, \vec v))\right|^2\\
=&\frac{1}{d^{2n}}
\sum_{((\vec p, \vec q), (\vec u, \vec v))\in G}
\left|\Xi_{J_{V^\dag U}}((\vec p, \vec q), (\vec u, \vec v))\right|^2
+\frac{1}{d^{2n}}
\sum_{((\vec p, \vec q), (\vec u, \vec v))\notin G}
\left|\Xi_{J_{V^\dag U}}((\vec p, \vec q), (\vec u, \vec v))\right|^2,
\end{align*}
where 
\begin{eqnarray*}
&&\frac{1}{d^{2n}}
\sum_{((\vec p, \vec q), (\vec u, \vec v))\in G}
\left|\Xi_{J_{V^\dag U}}((\vec p, \vec q), (\vec u, \vec v))\right|^2\\
&=&\frac{1}{d^{2n}}
\sum_{(\vec p, \vec q)\in V^n}
\left|\Xi_{J_{V^\dag U}}((\vec p, \vec q), (-\vec p, \vec q))\right|^2\\
&=&\frac{1}{d^{2n}}
\sum_{(\vec p, \vec q)\in V^n}
\left|\sum_{\vec x, \vec y} J_{V^\dag U}(\vec x, \vec y)\omega^{-\inner{(\vec x, \vec y)}{(\vec p, \vec q))}_s}_d\right|^2\\
&=&(1-\epsilon)^2+\sum_{(\vec p, \vec q)\neq(\vec 0,\vec 0)}
J_{V^\dag U}(\vec p, \vec q)^2.
\end{eqnarray*}
Hence, we have 
\begin{eqnarray*}
\frac{1}{d^{2n}}
\sum_{((\vec p, \vec q), (\vec u, \vec v))\notin G}
\left|\Xi_{J_{V^\dag U}}((\vec p, \vec q), (\vec u, \vec v))\right|^2
\leq 1-(1-\epsilon)^2.
\end{eqnarray*}
Then 
\begin{eqnarray*}
&&\Tr{(J_{V^\dag U}\boxtimes_H J_{V^\dag U})^2}\\
&=&\frac{1}{d^{2n}}
\sum_{(\vec p, \vec q), (\vec u, \vec v)}
\left|\Xi_{J_{V^\dag U}}((\vec p, \vec q), (\vec u, \vec v))\right|^4\\
&=&\frac{1}{d^{2n}}
\sum_{((\vec p, \vec q), (\vec u, \vec v))\in G}
\left|\Xi_{J_{V^\dag U}}((\vec p, \vec q), (\vec u, \vec v))\right|^4
+\frac{1}{d^{2n}}
\sum_{((\vec p, \vec q), (\vec u, \vec v))\notin G}
\left|\Xi_{J_{V^\dag U}}((\vec p, \vec q), (\vec u, \vec v))\right|^4,
\end{eqnarray*}
where 
\begin{eqnarray*}
\frac{1}{d^{2n}}
\sum_{((\vec p, \vec q), (\vec u, \vec v))\in G}
\left|\Xi_{J_{V^\dag U}}((\vec p, \vec q), (\vec u, \vec v))\right|^4
\leq 1-4\epsilon(1-\epsilon)^3,
\end{eqnarray*}
and 
\begin{eqnarray*}
&&\frac{1}{d^{2n}}
\sum_{((\vec p, \vec q), (\vec u, \vec v))\notin G}
\left|\Xi_{J_{V^\dag U}}((\vec p, \vec q), (\vec u, \vec v))\right|^4\\
&\leq& \left[2\sqrt{(1-\epsilon)\epsilon}+\epsilon\right]^2
\frac{1}{d^{2n}}
\sum_{((\vec p, \vec q), (\vec u, \vec v))\notin G}
\left|\Xi_{J_{V^\dag U}}((\vec p, \vec q), (\vec u, \vec v))\right|^2\\
&\leq& \left[2\sqrt{(1-\epsilon)\epsilon}+\epsilon\right]^2
\cdot\left[1-(1-\epsilon)^2\right]\\
&=&8\epsilon^2+o(\epsilon^2).
\end{eqnarray*}
Hence 
\begin{eqnarray*}
\Tr{(J_{V^\dag U}\boxtimes_H J_{V^\dag U})^2}
\leq 1-4\epsilon+O(\epsilon^2).
\end{eqnarray*}

\end{mproof}

\begin{lem}\label{lem:unitray_Bell}
For any $2n$-qudit states $\rho,\sigma$, we have 
\begin{eqnarray}
\rho\boxtimes\sigma(\vec p, \vec q)
=\sum_{(\vec a,\vec b)+(\vec c, \vec d)=(\vec p,2\vec q)}
\rho(\vec a, \vec b)
\sigma(\vec c, \vec d), \quad \forall (\vec p, \vec q)\in V^n.
\end{eqnarray}
\end{lem}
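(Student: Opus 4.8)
The plan is to recognize that, here, $\boxtimes$ denotes the Hadamard convolution $\boxtimes_H$ of Lemma~\ref{lem:prop_HCon} applied to the two $2n$-qudit systems, and to reduce the identity to the classical convolution theorem on the finite abelian group $V^n=\Z_d^n\times\Z_d^n$. The key structural input is that the Bell basis $\set{\ket{w(\vec p,\vec q)}}_{(\vec p,\vec q)\in V^n}$ diagonalizes the maximal abelian ``Bell stabilizer'' subgroup $T=\set{w(\vec u,\vec v)\ot w(-\vec u,\vec v):(\vec u,\vec v)\in V^n}$ of the $2n$-qudit Pauli group (this is the group $G$ appearing in the proof of Theorem~\ref{thm:Clif_test_d}). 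For a $2n$-qudit operator $\tau$ I will write $g_\tau(\vec u,\vec v):=\Tr{\tau\,(w(\vec u,\vec v)\ot w(-\vec u,\vec v))}$ for its ``$T$-components''; the claim will follow once I show that $\rho(\vec p,\vec q)=\bra{w(\vec p,\vec q)}\rho\ket{w(\vec p,\vec q)}$ is a symplectic Fourier transform of $g_\rho$ and that $g_{\rho\boxtimes_H\sigma}$ is, up to a rescaling by $2^{-1}$, the pointwise product $g_\rho\,g_\sigma$.

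First I would record how $T$ acts on the Bell basis. Using $(A\ot I)\ket{\text{Bell}}=(I\ot A^{T})\ket{\text{Bell}}$ and $w(\vec u,\vec v)^{T}=w(\vec u,-\vec v)$ (valid for odd $d$), one checks that $w(\vec u,\vec v)\ot w(-\vec u,\vec v)$ fixes $\ket{\text{Bell}}$, hence acts on $\ket{w(\vec p,\vec q)}=(w(\vec p,\vec q)\ot I)\ket{\text{Bell}}$ by the scalar $\chi\big(\inner{(\vec u,\vec v)}{(\vec p,\vec q)}_s\big)$, namely the commutation phase of the two $n$-qudit Paulis (up to the chosen orientation of $\inner{\cdot}{\cdot}_s$). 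Inverting this character table over $V^n\cong T$, using nondegeneracy of $\inner{\cdot}{\cdot}_s$ so that $\sum_{(\vec u,\vec v)\in V^n}\chi(\inner{(\vec u,\vec v)}{\vec r}_s)=d^{2n}\delta_{\vec r,0}$, gives
\begin{equation*}
\proj{w(\vec p,\vec q)}=\frac{1}{d^{2n}}\sum_{(\vec u,\vec v)\in V^n}\chi\big(-\inner{(\vec u,\vec v)}{(\vec p,\vec q)}_s\big)\;w(\vec u,\vec v)\ot w(-\vec u,\vec v),
\end{equation*}
and pairing against $\rho$ yields $\rho(\vec p,\vec q)=\frac{1}{d^{2n}}\sum_{(\vec u,\vec v)}\chi\big(\inner{(\vec p,\vec q)}{(\vec u,\vec v)}_s\big)\,g_\rho(\vec u,\vec v)$.

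Next I would compute $g_{\rho\boxtimes_H\sigma}$. Viewing $w(\vec u,\vec v)\ot w(-\vec u,\vec v)$ as the $2n$-qudit Pauli $w\big((\vec u,-\vec u),(\vec v,\vec v)\big)$ and applying the convolution–multiplication duality of Lemma~\ref{lem:prop_HCon}(2) for the $2n$-qudit Hadamard convolution, together with $\Xi_\tau(\vec P,\vec Q)=\Tr{\tau\, w(-\vec P,-\vec Q)}$, one gets $g_{\rho\boxtimes_H\sigma}(\vec u,\vec v)=g_\rho(2^{-1}\vec u,\vec v)\,g_\sigma(2^{-1}\vec u,\vec v)$; this uses that $2^{-1}(\vec u,-\vec u)=(2^{-1}\vec u,-2^{-1}\vec u)$ and $(\vec v,\vec v)$ remain of Bell-stabilizer type, which is exactly why the restriction of the duality to $T$ closes up. Substituting into the Fourier formula for $(\rho\boxtimes_H\sigma)(\vec p,\vec q)$, reindexing $\vec u=2\vec w$ (a bijection since $d$ is odd), and using the shift identity $\inner{(\vec p,\vec q)}{(2\vec w,\vec v)}_s=\inner{(\vec p,2\vec q)}{(\vec w,\vec v)}_s$, I obtain
\begin{equation*}
(\rho\boxtimes_H\sigma)(\vec p,\vec q)=\frac{1}{d^{2n}}\sum_{(\vec w,\vec v)\in V^n}\chi\big(\inner{(\vec p,2\vec q)}{(\vec w,\vec v)}_s\big)\,g_\rho(\vec w,\vec v)\,g_\sigma(\vec w,\vec v).
\end{equation*}
This is the symplectic Fourier transform of $g_\rho g_\sigma$ evaluated at $(\vec p,2\vec q)$, so the convolution theorem on $V^n$ (with this normalization $\widehat{fh}=\widehat f*\widehat h$) rewrites it as $\sum_{(\vec a,\vec b)+(\vec c,\vec d)=(\vec p,2\vec q)}\widehat{g_\rho}(\vec a,\vec b)\,\widehat{g_\sigma}(\vec c,\vec d)=\sum_{(\vec a,\vec b)+(\vec c,\vec d)=(\vec p,2\vec q)}\rho(\vec a,\vec b)\,\sigma(\vec c,\vec d)$, which is the assertion.

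The only real obstacle is bookkeeping: keeping the several sign and phase conventions consistent (orientation of $\inner{\cdot}{\cdot}_s$, the $2^{-1}$ in Lemma~\ref{lem:prop_HCon}(2), the definition of $\Xi$), and verifying carefully that restricting the $2n$-qudit convolution–multiplication duality to the Bell-stabilizer subgroup $T$ is legitimate — which it is, because $\Xi_{\rho\boxtimes_H\sigma}$ factorizes through $\Xi_\rho,\Xi_\sigma$ at the rescaled argument $2^{-1}\vec P$ and $T$ is invariant under $\vec P\mapsto 2^{-1}\vec P$; a stray sign introduced in one step is absorbed by the reindexing or by the direction in which the convolution theorem is applied.
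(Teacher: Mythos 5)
Your proof is correct, but it takes a genuinely different route from the paper's. The paper argues directly at the level of the key unitary: using the covariance relation $V_H\,\bigl(w(\vec p_1,\vec q_1)\ot w(\vec p_2,\vec q_2)\bigr)V_H^\dag = w(\vec p_1+\vec p_2,2^{-1}(\vec q_1+\vec q_2))\ot w(\vec p_1-\vec p_2,2^{-1}(\vec q_1-\vec q_2))$ together with the invariance $V_H\ket{\text{Bell}}\ot\ket{\text{Bell}}=\ket{\text{Bell}}\ot\ket{\text{Bell}}$, it shows that $V_H$ simply permutes the product Bell basis, $V_H\ket{w(\vec a,\vec b)}\ot\ket{w(\vec p,\vec q)}=\ket{w(\vec a+\vec p,2^{-1}(\vec b+\vec q))}\ot\ket{w(\vec a-\vec p,2^{-1}(\vec b-\vec q))}$; inserting $\sum_{\vec c,\vec d}\proj{w(\vec c,\vec d)}$ on the traced-out factor and reindexing then yields the classical-convolution formula in one stroke. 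You instead work in the dual picture: you expand $\proj{w(\vec p,\vec q)}$ over the maximal abelian Bell-stabilizer group $T$, identify the Bell-diagonal $\rho(\vec p,\vec q)$ as the symplectic Fourier transform of the $T$-components $g_\rho$, restrict the convolution--multiplication duality of Lemma~\ref{lem:prop_HCon}(2) to $T$ (legitimate, as you note, because $T$ is stable under $\vec P\mapsto 2^{-1}\vec P$), and finish with the classical convolution theorem; your substitution $\vec u=2\vec w$ and the identity $\inner{(\vec p,\vec q)}{(2\vec w,\vec v)}_s=\inner{(\vec p,2\vec q)}{(\vec w,\vec v)}_s$ produce the $2\vec q$ on the right-hand side exactly where the paper's $2^{-1}$ reindexing does. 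All the steps check out: the eigenvalue of $w(\vec u,\vec v)\ot w(-\vec u,\vec v)$ on $\ket{w(\vec p,\vec q)}$ is indeed the commutation phase $\chi\bigl(\inner{(\vec u,\vec v)}{(\vec p,\vec q)}_s\bigr)$, the identification $w(\vec u,\vec v)\ot w(-\vec u,\vec v)=w((\vec u,-\vec u),(\vec v,\vec v))$ carries no extra phase, and the sign ambiguities you flag are absorbed by the antisymmetry of $\inner{\cdot}{\cdot}_s$ in the Fourier kernel. The paper's computation is more elementary and self-contained; yours is more conceptual, in that it explains why the Bell-diagonal of a Hadamard convolution must be a classical convolution --- the Bell-diagonal is Fourier-dual to the characteristic function restricted to $T$, where the convolution acts multiplicatively.
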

\begin{proof}
First, 
\begin{eqnarray*}
V_H\ket{w(\vec a, \vec b)}\ot \ket{w(\vec p, \vec q)}
&=&V_H w(\vec a,\vec b)\ot w(\vec p, \vec q)\ket{\text{Bell}}\ot\ket{\text{Bell}}\\
&=&w(\vec a+\vec p, 2^{-1}(\vec b+\vec q))\ot w(\vec a-\vec p, 2^{-1}(\vec b-\vec q))V_H\ket{\text{Bell}}\ot \ket{\text{Bell}}\\
&=&w(\vec a+\vec p, 2^{-1}(\vec b+\vec q))\ot w(\vec a-\vec p, 2^{-1}(\vec b-\vec q))\ket{\text{Bell}}\ot \ket{\text{Bell}}\\
&=&\ket{w(\vec a+\vec p, 2^{-1}(\vec b+\vec q))}\ot \ket{w(\vec a-\vec p, 2^{-1}(\vec b-\vec q))},
\end{eqnarray*}
where the second equality comes from the following equality proved in \cite{BGJ23a,BGJ23b},
\begin{eqnarray}
V_H w(\vec{p}_1,\vec{q}_1)\ot w(\vec{p}_2, \vec{p}_2)V^\dag_H
=w(\vec{p}_1+\vec{p}_2,2^{-1}\vec{q}_1+2^{-1}\vec{q}_2)
\ot w(\vec{p}_1-\vec{p}_2, 2^{-1}\vec{q}_1-2^{-1}\vec{q}_2),
\end{eqnarray}
and the third
equality is because
\begin{eqnarray*}
V_H\ket{\text{Bell}}\ot \ket{\text{Bell}}&=&\frac{1}{d^n}V_H\sum_{\vec x,\vec y}\ket{\vec x}\ket{\vec x}\ket{\vec y}\ket{\vec y}\\
&=&\frac{1}{d^n}\sum_{\vec x, \vec y}\ket{2^{-1}(\vec x+\vec y)}\ket{2^{-1}(\vec x+\vec y)}\ket{2^{-1}(\vec x-\vec y)}\ket{2^{-1}(\vec x-\vec y)}\\
&=&\frac{1}{d^n}\sum_{\vec x,\vec y}\ket{\vec x}\ket{\vec x}\ket{\vec y}\ket{\vec y}\\
&=&\ket{\text{Bell}}\ot \ket{\text{Bell}}.
\end{eqnarray*}
Similarly, we can also prove that 
\begin{eqnarray*}
V^\dag_H\ket{w(\vec a, \vec b)}\ot \ket{w(\vec p, \vec q)}
=\ket{w(2^{-1}(\vec a+\vec p), \vec b+\vec q)}\ot \ket{w(2^{-1}(\vec a-\vec p), \vec b-\vec q}.
\end{eqnarray*}

Hence
\begin{eqnarray*}
\rho\boxtimes\sigma(\vec p, \vec q)
&=&\Tr{(V_H\rho\ot \sigma V^\dag_H)(\proj{w(\vec p, \vec q)}\ot I) }\\
&=&\Tr{\rho\ot \sigma (V^\dag_H \proj{w(\vec p, \vec q)}\ot I V_H)}\\
&=&\sum_{\vec c, \vec d}\Tr{\rho\ot \sigma (V^\dag_H \proj{w(\vec p, \vec q)}\ot \proj{w(\vec c, \vec d)}V_H)}\\
&=& \sum_{\vec c, \vec d}
\rho(2^{-1}(\vec p+\vec c), \vec q+\vec d)
\sigma(2^{-1}(\vec p- \vec c), \vec q-\vec d)\\
&=&
\sum_{(\vec a,\vec b)+(\vec c, \vec d)=(\vec p,2\vec q)}
\rho(\vec a, \vec b)
\sigma(\vec c, \vec d).
\end{eqnarray*}

\end{proof}

\begin{mproof}[Proof of Theorem \ref{thm:Clif_test_2}]

We only need to prove that 
\begin{align*}
    (1-\epsilon)^6
    \leq \Tr{(\boxtimes_3J_U)^2}
    \leq 1-6\epsilon+O(\epsilon^2).
\end{align*}
Since $\max_{V\in Cl_n}|\inner{V}{U}|^2=1-\epsilon$, then there exist some Clifford unitary $V$ such that 
$|\inner{V}{U}|^2=1-\epsilon$.  By the
commutativity of $\boxtimes_H$ with Clifford unitaries in Theorem~\ref{thm:commu_clif}, we have
\begin{eqnarray*}
\Tr{(\boxtimes_3J_U)^2}
=\Tr{(\boxtimes_3J_{V^\dag U})^2}.
\end{eqnarray*}

Same as in the proof of Theorem \ref{thm:Clif_test_d}, 
we denote
\begin{eqnarray*}
\ket{w(\vec p, \vec q)}
=w(\vec p, \vec q)\ot I\ket{\text{Bell}}, \forall (\vec p, \vec q)\in V^n.
\end{eqnarray*}
They form an orthonormal basis of the $2n$-qubit systems.
For any $2n$-qubit state $\rho$ we denote
\begin{eqnarray*}
\rho(\vec p, \vec q)
=\bra{w(\vec p, \vec q)}\rho\ket{w(\vec p, \vec q)}.
\end{eqnarray*}
The Choi state $J_{V^\dag U}=V^\dag U\ot I\proj{\text{Bell}}U^\dag V \ot I$.
Let us denote $\ket{V^\dag U}=V^\dag U\ot I\ket{\text{Bell}}$ for simplicity, then $J_{V^\dag U}=\proj{V^\dag U}$. 
Therefore,
\begin{eqnarray*}
|\iinner{w(\vec 0, \vec 0)}{V^\dag U}|^2
=\bra{\text{Bell}}J_{V^\dag U}\ket{\text{Bell}}
=\bra{\text{Bell}}V^\dag U\ket{\text{Bell}}
\bra{\text{Bell}}U^\dag V\ket{\text{Bell}}
=|\inner{V}{U}|^2=1-\epsilon.
\end{eqnarray*}

For the lower bound, we have
\begin{eqnarray*}
\Tr{(\boxtimes_3J_{V^\dag U})^2}
&\geq& \bra{w(\vec 0, \vec 0)}\boxtimes_3J_{V^\dag U} \ket{w(\vec 0, \vec 0)}^2\\
&=& \left(\sum_{\sum_i(\vec p_i,\vec q_i)=(\vec 0, \vec 0)}J_{V^\dag U}(\vec p_1, \vec q_1)J_{V^\dag U}(\vec p_2, \vec q_2) J_{V^\dag U}(\vec p_3, \vec q_3)\right)^2\\
&\geq& \left(J_{V^\dag U}(\vec 0, \vec 0)^3\right)^2\\
&\geq& (1-\epsilon)^6,
\end{eqnarray*}
where the first inequality comes from the Cauchy-Schwarz inequality, and the equality comes from Lemma~\ref{lem:qubit_uni}.

For the upper bound, let us rewrite  $\ket{V^\dag U}$  as
\begin{align*}
    \ket{V^\dag U} =\sqrt{1-\epsilon}\ket{w(\vec 0,\vec 0)}
    +\sum_{(\vec p, \vec q)\neq (\vec 0, \vec 0)}
    \iinner{w(\vec p, \vec q)}{V^\dag U}\ket{w(\vec p, \vec q)},
\end{align*}
where $\sum_{(\vec p, \vec q)\neq (\vec 0, \vec 0)}| \iinner{w(\vec p, \vec q)}{V^\dag U}|^2=\epsilon$.
Let us take the stabilizer group of $\ket{w(\vec 0, \vec 0)}$, which is 
$G=\set{((\vec p, \vec q), (\vec p, \vec q): \vec p, \vec q\in \mathbb{Z}^n_2)}$. 
Hence, $((\vec p, \vec q), (\vec u, \vec v))\notin G$ iff 
$(\vec p, \vec q)\neq (\vec u, \vec v)$. 
In this case, it can be written as 
$((\vec p, \vec q), (\vec u, \vec v))=((\vec u, \vec v), (\vec u, \vec v))+((\vec p-\vec u, \vec q- \vec v), (\vec 0, \vec 0))$.
That is,
\begin{eqnarray*}
w(\vec p, \vec q)\ot w(\vec u, \vec v)
=i^{-\inner{(\vec u, \vec v)}{(\vec p-\vec u, \vec q-\vec v)}}[w(\vec u, \vec v)\ot w(\vec u, \vec v)][w(\vec p-\vec u, \vec q-\vec v)\ot w(\vec 0, \vec 0)].
\end{eqnarray*}
Since
\begin{eqnarray*}
(-1)^{\vec p\cdot \vec q}w(\vec p, \vec q)\ot w(\vec p, \vec q)
=\sum_{\vec x, \vec y}\proj{w(\vec x, \vec y)}(-1)^{\inner{(\vec x, \vec y)}{(\vec p, \vec q))}_s}
\end{eqnarray*}
then 
\begin{eqnarray*}
&&w(\vec p, \vec q)\ot w(\vec u, \vec v)\\
&=&i^{-\inner{(\vec u, \vec v)}{(\vec p-\vec u, \vec q-\vec v)}}(-1)^{\vec p\cdot \vec q}
\sum_{\vec x, \vec y}\ket{w(\vec x, \vec y)}\bra{w(\vec x-\vec p+\vec u, \vec y-\vec q+\vec v)}(-1)^{\inner{(\vec x, \vec y)}{(\vec u, \vec v))}_s}i^{-\inner{(\vec x, \vec y)}{(\vec p-\vec u, \vec q-\vec v)}}.
\end{eqnarray*}

Then 
\begin{align*}
 &\left|\Tr{J_{V^\dag U}w(\vec p, \vec q)\ot w(\vec u, \vec v)}\right|\\
 \leq& \sum_{\vec x, \vec y}\left|\Tr{J_{V^\dag U}\ket{w(\vec x, \vec y)}\bra{w(\vec x-\vec p+\vec u,  \vec y-\vec q+\vec v)}}\right|\\
 =&  \sum_{\vec x, \vec y}\left|\iinner{w(\vec x, \vec y)}{V^\dag U}\right|
 \left|\iinner{w(\vec x-\vec p+\vec u,  \vec y-\vec q+\vec v)}{V^\dag U}\right|\\
  =&2\left|\iinner{w(\vec 0, \vec 0)}{V^\dag U}\right|
 \left|\iinner{w(-\vec p+\vec u,  -\vec q+\vec v)}{V^\dag U}\right|
 +\sum_{(\vec x, \vec y)\neq (\vec 0,\vec 0), (\vec p-\vec u, \vec q-\vec v)}\left|\iinner{w(\vec x, \vec y)}{V^\dag U}\right|
 \left|\iinner{w(\vec x-\vec p+\vec u,  \vec y-\vec q+\vec v)}{V^\dag U}\right|
 \\
 \leq&2\sqrt{(1-\epsilon)\epsilon}+\epsilon.
\end{align*}
That is, 
\begin{eqnarray*}
\max_{((\vec p, \vec q), (\vec u,\vec v))\notin G}
\left|\Xi_{J_{V^\dag U}}((\vec p, \vec q), (\vec u, \vec v))\right|\leq 
2\sqrt{(1-\epsilon)\epsilon}+\epsilon.
\end{eqnarray*}

Moreover, 
\begin{align*}
1=&\Tr{J^2_{V^\dag U}}
=\frac{1}{2^{2n}}
\sum_{(\vec p, \vec q), (\vec u, \vec v)}
\left|\Xi_{J_{V^\dag U}}((\vec p, \vec q), (\vec u, \vec v))\right|^2\\
=&\frac{1}{2^{2n}}
\sum_{((\vec p, \vec q), (\vec u, \vec v))\in G}
\left|\Xi_{J_{V^\dag U}}((\vec p, \vec q), (\vec u, \vec v))\right|^2
+\frac{1}{2^{2n}}
\sum_{((\vec p, \vec q), (\vec u, \vec v))\notin G}
\left|\Xi_{J_{V^\dag U}}((\vec p, \vec q), (\vec u, \vec v))\right|^2,
\end{align*}
where 
\begin{eqnarray*}
&&\frac{1}{2^{2n}}
\sum_{((\vec p, \vec q), (\vec u, \vec v))\in G}
\left|\Xi_{J_{V^\dag U}}((\vec p, \vec q), (\vec u, \vec v))\right|^2\\
&=&\frac{1}{2^{2n}}
\sum_{(\vec p, \vec q)\in V^n}
\left|\Xi_{J_{V^\dag U}}((\vec p, \vec q), (\vec p, \vec q))\right|^2\\
&=&\frac{1}{2^{2n}}
\sum_{(\vec p, \vec q)\in V^n}
\left|\sum_{\vec x, \vec y} J_{V^\dag U}(\vec x, \vec y)(-1)^{\inner{(\vec x, \vec y)}{(\vec p, \vec q))}_s}\right|^2\\
&=&(1-\epsilon)^2+\sum_{(\vec p, \vec q)\neq(\vec 0,\vec 0)}
J_{V^\dag U}(\vec p, \vec q)^2.
\end{eqnarray*}
Hence, we have 
\begin{eqnarray*}
\frac{1}{2^{2n}}
\sum_{((\vec p, \vec q), (\vec u, \vec v))\notin G}
\left|\Xi_{J_{V^\dag U}}(\vec p, \vec q), (\vec u, \vec v))\right|^2
\leq 1-(1-\epsilon)^2.
\end{eqnarray*}
Then 
\begin{eqnarray*}
&&\Tr{(\boxtimes_3 J_{V^\dag U})^2}\\
&=&\frac{1}{2^{2n}}
\sum_{(\vec p, \vec q), (\vec u, \vec v)}
\left|\Xi_{J_{V^\dag U}}((\vec p, \vec q), (\vec u, \vec v))\right|^6\\
&=&\frac{1}{2^{2n}}
\sum_{((\vec p, \vec q), (\vec u, \vec v))\in G}
\left|\Xi_{J_{V^\dag U}}((\vec p, \vec q), (\vec u, \vec v))\right|^6
+\frac{1}{2^{2n}}
\sum_{((\vec p, \vec q), (\vec u, \vec v))\notin G}
\left|\Xi_{J_{V^\dag U}}((\vec p, \vec q), (\vec u, \vec v))\right|^6,
\end{eqnarray*}
where 
\begin{eqnarray*}
\frac{1}{2^{2n}}
\sum_{((\vec p, \vec q), (\vec u, \vec v))\in G}
\left|\Xi_{J_{V^\dag U}}((\vec p, \vec q), (\vec u, \vec v))\right|^6
\leq 1-6\epsilon(1-\epsilon)^5,
\end{eqnarray*}
and 
\begin{eqnarray*}
&&\frac{1}{d^{2n}}
\sum_{((\vec p, \vec q), (\vec u, \vec v))\notin G}
\left|\Xi_{J_{V^\dag U}}((\vec p, \vec q), (\vec u, \vec v))\right|^6\\
&\leq& \left[2\sqrt{(1-\epsilon)\epsilon}+\epsilon\right]^4
\frac{1}{2^{2n}}
\sum_{((\vec p, \vec q), (\vec u, \vec v))\notin G}
\left|\Xi_{J_{V^\dag U}}((\vec p, \vec q), (\vec u, \vec v))\right|^2\\
&\leq& \left[2\sqrt{(1-\epsilon)\epsilon}+\epsilon\right]^4
\cdot\left[1-(1-\epsilon)^2\right]\\
&=&32\epsilon^3+o(\epsilon^2).
\end{eqnarray*}
Hence 
\begin{eqnarray*}
\Tr{(\boxtimes_3 J_{V^\dag U})^2}
\leq 1-6\epsilon+O(\epsilon^2).
\end{eqnarray*}
\end{mproof}

\begin{lem}\label{lem:qubit_uni}
For any $2n$-qubit states $\set{\rho_i}^K_i$ with odd $K$, we have
\begin{eqnarray}
\boxtimes_K\left(\ot^K_{i=1}\rho_i\right)(\vec p, \vec q)
=\sum_{(\vec p_i, \vec q_i)^K_{i=1}:\sum_i(\vec p_i, \vec q_i)=(\vec p, \vec q)}
\prod^K_{i=1}\rho_i (\vec p_i, \vec q_i).
\end{eqnarray}
\end{lem}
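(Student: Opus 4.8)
The plan is to run the computation of Lemma~\ref{lem:unitray_Bell} (the Hadamard-convolution case), with the two-body transformation rule of $V_H$ replaced by the $K$-body rule of Proposition~\ref{prop:action_key}. Realize each $\rho_i$ on a bipartite register $A_iA_i'$ of $n+n$ qubits, so that $\set{\ket{w(\vec p_i,\vec q_i)}}_{(\vec p_i,\vec q_i)\in V^n}$ is an orthonormal basis of $A_iA_i'$. Using the adjoint channel $\boxtimes_K^\dag(\cdot)=V^\dag(\cdot\ot I\ot\cdots\ot I)V$ and inserting the resolution $I=\sum_{(\vec p_k,\vec q_k)}\proj{w(\vec p_k,\vec q_k)}$ on copies $k=2,\dots,K$, one obtains, with $(\vec p_1,\vec q_1):=(\vec p,\vec q)$ held fixed,
\begin{align*}
\boxtimes_K\left(\ot^K_{i=1}\rho_i\right)(\vec p,\vec q)
&=\Tr{\boxtimes_K^\dag\left(\proj{w(\vec p,\vec q)}\right)\ot^K_{i=1}\rho_i}\\
&=\sum_{(\vec p_2,\vec q_2),\dots,(\vec p_K,\vec q_K)}\Tr{V^\dag\left(\ot^K_{i=1}\proj{w(\vec p_i,\vec q_i)}\right)V\ \ot^K_{i=1}\rho_i}.
\end{align*}

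The crucial step is to show that $V$ permutes the product Bell basis up to signs. First, $V$ fixes $\ot^K_{i=1}\ket{\text{Bell}}_{A_iA_i'}$: reordering the $2Kn$ qubits one may write $V=V_A\ot V_{A'}$, where $V_A$ and $V_{A'}$ are two copies of the CNOT circuit $U^{\ot n}$ acting on $A_1\cdots A_K$ and on $A_1'\cdots A_K'$ respectively; writing $\ot_i\ket{\text{Bell}}\propto\sum_{\vec z}\ket{\vec z}_A\ket{\vec z}_{A'}$ and using that, by Lemma~\ref{230609lem1}, $V_A$ and $V_{A'}$ enact the same permutation of the computational basis, this vector is invariant. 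Second, factoring $\ot_i\ket{w(\vec p_i,\vec q_i)}=\left(\ot_i w(\vec p_i,\vec q_i)_{A_i}\ot I_{A_i'}\right)\ot_i\ket{\text{Bell}}$, conjugating the Pauli part by $V=V_A\ot V_{A'}$ and applying Proposition~\ref{prop:action_key} to $V_A$ gives
\begin{align*}
V\ot^K_{i=1}\ket{w(\vec p_i,\vec q_i)}
&=(-1)^{N\sum_j\vec p_j\cdot\vec q_1}\ket{w(\textstyle\sum_j\vec p_j,\sum_j\vec q_j)}\ \ot^K_{k=2}\ket{w(\textstyle\sum_j\vec p_j-\vec p_k,\vec q_1-\vec q_k)}.
\end{align*}
So $V$, and hence $V^\dag$, carries a product Bell-basis vector to a product Bell-basis vector up to a sign, with the first-copy label of the image equal to the sum of all the input labels; the sign drops out since only the corresponding rank-one projector appears.

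Substituting, each summand $\Tr{V^\dag\left(\ot_i\proj{w(\vec p_i,\vec q_i)}\right)V\ \ot_i\rho_i}$ collapses to $\prod_{i=1}^K\rho_i(\vec r_i,\vec s_i)$, where the labels $(\vec r_i,\vec s_i)$ satisfy $\ot_i\ket{w(\vec r_i,\vec s_i)}\propto V^\dag\ot_i\ket{w(\vec p_i,\vec q_i)}$; applying $V$ to both sides and reading off the first label shows $\sum_i(\vec r_i,\vec s_i)=(\vec p_1,\vec q_1)=(\vec p,\vec q)$. Since $V$ is unitary, the induced map on label $K$-tuples is a bijection of $(V^n)^K$, so as $(\vec p_2,\vec q_2),\dots,(\vec p_K,\vec q_K)$ run over $(V^n)^{K-1}$ the tuples $\left((\vec r_i,\vec s_i)\right)_{i=1}^K$ run bijectively over $\set{\left((\vec r_i,\vec s_i)\right)_i : \sum_i(\vec r_i,\vec s_i)=(\vec p,\vec q)}$, both sets having size $|V^n|^{K-1}$. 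Relabelling the summation index then yields the stated identity. I expect the one genuine obstacle to be the careful bookkeeping of this bijection/change-of-variables step; everything else is a routine application of Proposition~\ref{prop:action_key} and Lemma~\ref{230609lem1}.
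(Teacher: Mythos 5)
Your proposal is correct and follows essentially the same route as the paper's proof: insert the resolution of the identity $I=\sum_{(\vec p_k,\vec q_k)}\proj{w(\vec p_k,\vec q_k)}$ on copies $2,\dots,K$, use Proposition~\ref{prop:action_key} together with the invariance of $\ket{\text{Bell}}^{\ot K}$ under the key unitary to see that $V$ permutes the product Bell basis up to signs (which cancel in the projectors), and then change summation variables. The paper's version simply carries out the label bookkeeping explicitly via the $V^\dag(\cdot)V$ form of Proposition~\ref{prop:action_key}, where you phrase the same step as a counting/bijection argument; no substantive difference.
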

\begin{proof}
\begin{align*}
\boxtimes_K\left(\ot^K_{i=1}\rho_i\right)(\vec p, \vec q)
=&\Tr{\boxtimes_K(\ot^K_{i=1}\rho_i)\proj{w(\vec p, \vec q)}}
=\Tr{\ot^K_{i=1}\rho_iV^\dag(\proj{w(\vec p, \vec q)}\ot^K_{i=2}I)V}\\
=&\sum_{(\vec p_i, \vec q_i)^K_{i=2}}\Tr{\ot^K_{i=1}\rho_iV^\dag\proj{w(\vec p, \vec q)}\ot^K_{i=2}\proj{w(\vec p_i, \vec q_i)}V}\\
=& \sum_{(\vec p_i, \vec q_i)^K_{i=2}}
\rho_1\left(\vec p+\sum^K_{j=2}\vec p_j, \vec q+\sum^K_{j=2}\vec q_j\right)
\ot^K_{i=2}\rho_i\left(\vec p-\vec p_i, \vec q+\sum^K_{j=2}\vec q_j-\vec q_i\right) \\
=&\sum_{(\vec p_i, \vec q_i)^K_{i=1}:\sum_i(\vec p_i, \vec q_i)=(\vec p, \vec q)}
\prod^K_{i=1}\rho_i (\vec p_i, \vec q_i),
\end{align*}
where the fourth equality comes from Proposition~\ref{prop:action_key} and the fact that 
\begin{eqnarray*}
V^\dag\ket{\text{Bell}}^{\ot K}
=\frac{1}{d^{nK/2}}V^\dag \sum_{\set{\vec x_i}^K_{i=1}}
\ot^K_{i=1}\ket{\vec x_i}\ket{\vec x_i}
=\frac{1}{d^{nK/2}} \sum_{\set{\vec x_i}^K_{i=1}}\ket{\sum^K_{i=1}\vec x_i}\ket{\sum^K_{i=1}\vec x_i}\ot^K_{i=2}
\ket{\sum_{j\neq i} 
\vec x_j}\ket{\sum_{j\neq i} 
\vec x_j}
=\ket{\text{Bell}}^{\ot K}.
\end{eqnarray*}
\end{proof}

\section*{Data Availability}
Data sharing is not applicable to this article as no datasets were generated or analysed during the current study.

\section*{Conflicts of Interest}
The authors have no relevant financial or non-financial interests to disclose.

\bibliography{reference}{}
\end{document}